\newtheorem{theorem}{Theorem}[section]
\newtheorem{proposition}[theorem]{Proposition}
\newtheorem{lemma}[theorem]{Lemma}
\newtheorem{corollary}[theorem]{Corollary}
\theoremstyle{definition}
\newtheorem{example}[theorem]{Example}
\newtheorem{remark}[theorem]{Remark}
\newcommand{\Pow}{\mathcal{P}}
\newcommand{\src}{s}
\newcommand{\tgt}{t}
\newcommand{\dom}{\mathit{dom}}
\newcommand{\cod}{\mathit{cod}}
\newcommand{\Dom}{\mathit{Dom}}
\newcommand{\Cod}{\mathit{Cod}}
\newcommand{\id}{\mathit{id}}
\newcommand{\Sup}{\bigvee}
\renewcommand{\sup}{\vee}
\newcommand{\Inf}{\bigwedge}
\renewcommand{\inf}{\wedge}
\newcommand{\conv}{\circ}
\newcommand{\inv}{-}
\newcommand{\Cat}{\mathbf{Cat}}
\newcommand{\Rel}{\mathbf{Rel}}
\newcommand{\fDia}[1]{| #1 \rangle}
\newcommand{\bDia}[1]{\langle #1 |}
\newcommand{\fbDia}[1]{\langle #1 \rangle}
\begin{document}

\title{Higher Catoids, Higher Quantales and their Correspondences}
\author{Cameron Calk \and Philippe Malbos \and Damien Pous \and Georg Struth}

\maketitle

\begin{abstract}  We introduce $\omega$-catoids as generalisations of
  (strict) $\omega$-categories and in particular the higher path
  categories generated by computads or polygraphs in
  higher-dimensional rewriting. We also introduce $\omega$-quantales
  that generalise the $\omega$-Kleene algebras recently proposed for
  algebraic coherence proofs in higher-dimensional rewriting. We then
  establish correspondences between $\omega$-catoids and convolution
  $\omega$-quantales. These are related to Jónsson-Tarski-style
  dualities between relational structures and lattices with operators.
  We extend these correspondences to $(\omega,p)$-catoids, catoids
  with a groupoid structure above some dimension, and convolution
  $(\omega,p)$-quantales, using Dedekind quantales above some
  dimension to capture homotopic constructions and proofs in
  higher-dimensional rewriting. We also specialise them to finitely
  decomposable $(\omega, p)$-catoids, an appropriate setting for
  defining $(\omega, p)$-semirings and $(\omega, p)$-Kleene
  algebras. These constructions support the systematic development and
  justification of $\omega$-Kleene algebra and $\omega$-quantale
  axioms, improving on the recent approach mentioned, where axioms for
  $\omega$-Kleene algebras have been introduced in an ad hoc fashion.\\

 \noindent\textbf{Keywords:} higher catoids, higher quantales,
  multisemigroups, convolution algebras, categorification, higher
  rewriting\\

\noindent\textbf{MSC Classification:} 18A05, 18F75, 06F07, 18N30, 68Q42, 68V15

\end{abstract}


\section{Introduction}\label{S:introduction}

Rewriting systems are fundamental models of computation. Their rules
generate computations as sequences or paths of rewriting steps.
Computational structure, such as confluence or Church-Rosser
properties, is modelled geometrically in terms of rewriting diagrams
and algebraically via inclusions between rewriting relations, which
form abstractions of sets of rewriting paths~\cite{Terese03}.
Coherence properties of rewriting systems, like the Church-Rosser
theorem, Newman's lemma or normal form theorems, can be derived in
algebras that support reasoning with binary relations: Kleene
algebras, quantales or relation
algebras~\cite{DoornbosBW97,Struth06,DesharnaisS11,DesharnaisMS11}.

Higher-dimensional rewriting generalises and categorifies this
approach, using computads or polygraphs as higher-dimensional
rewriting systems and the higher-dimensional paths they generate
instead of rewriting relations, relations between rewriting relations,
and so forth~\cite{Street76,Burroni91}, see also~\cite{Polybook2025}
for a recent textbook.  This approach has been developed mainly in the
context of (strict) $\omega$-categories, where higher-dimensional
rewriting diagrams have globular shape. They are filled with
higher-dimensional cells instead of relational inclusions (two-cells
in $\Rel$) as witnesses for the $\forall\exists$-relationships between
the higher-dimensional paths that form their faces.  In fact, strict
$(\omega,p)$-categories, where cells of dimension greater than $p$ are
invertible, are often used in practice, for instance for showing that
all parallel reduction cells of a higher-dimensional rewriting system
are contractible.  Applications of higher-dimensional rewriting range
from string rewriting~\cite{GuiraudMalbos18,GuiraudMalbos12advances}
to the computational analysis of coherence properties and cofibrant
approximations in categorical
algebra~\cite{GuiraudHoffbeckMalbos19,MalbosRen23}.

It has recently been argued that higher Kleene algebras, which support
algebraic reasoning about sets of higher-dimensional rewriting paths,
can be used for calculating categorical coherence proofs in
higher-dimensional rewriting~\cite{CalkGMS20} -- just as Kleene
algebras in the classical case. To capture such properties, their
axioms must reflect the shapes of globular cells of
$(\omega,p)$-categories and their pasting schemes: the relationships
between their face maps and the interchange laws that relate the cell
compositions in different dimensions and directions.  Yet a systematic
construction of these algebras and a systematic justification of their
axioms relative to the underlying $(\omega,p)$-categories and
polygraphs has so far been missing.

Drawing from a seemingly unrelated field, we use the correspondences,
in the sense of modal logic, associated with the Jónsson-Tarski
duality between $(n+1)$-ary relational structures and boolean algebras
with $n$-ary operators -- a Stone-type dual equivalence -- as a
guide. In light of this duality, we might consider the multiplication
of a Kleene algebra, for instance, as a binary modal operator and the
fact that an arrow in a category is a composition of two others as a
ternary relation, and look at correspondences between identities that
hold in these structures, for instance, whether an associativity law
on the relational structure makes the multiplication of the Kleene
algebra associative, and vice versa.  Balancing such correspondences
leads us from $\omega$-categories to $\omega$-catoids, which are
isomorphic to ternary-relational structures with suitable relational
laws, and from higher Kleene algebras to $\omega$-quantales.  Imposing
axioms on $\omega$-catoids then allows us to derive axioms on
$\omega$-quantales and vice versa, until balance is achieved. While
this could be considered for $\omega$-catoids on a set $X$ and
$\omega$-quantales on the powerset $\Pow X$, we generalise this
construction to correspondence triangles for $\omega$-catoids $C$,
$\omega$-quantales $Q$ and convolution $\omega$-quantales on function
spaces $Q^C$, and further to $(\omega,p)$-structures.

We briefly outline the simplest case to supply some basic intuition. A
powerset quantale on a set $X$ is a quantale on $\Pow X$, that is, the
complete lattice $(\Pow X,\subseteq)$ and at the same time a monoid
$(\Pow X,\cdot,1)$ such that the binary operator $\cdot$ preserves
arbitrary sups in both arguments. In every powerset quantale, the
multiplication on singleton sets, the atoms in this structure, can be
arranged into a ternary relation $\{x\}\subseteq \{y\}\cdot \{z\}$,
which satisfies a certain relational associativity law and has the
elements of the set $1$ as relational units (the laws of a monoid
object in $\Rel$). As each element $x\in X$ has exactly one left and
one right relational unit, they can be assigned to $x$ by a source map
$s:X\to X$ and a target map $t:X\to X$. Using a multioperation
$\odot:X\times X\to \Pow X$ instead of the ternary relation
$X\times X\times X\to 2$, so that
$x\in y\odot z\Leftrightarrow \{x\}\subseteq \{y\}\cdot \{z\}$, and
emphasising source and target maps instead of unit elements, leads to
the definition of a catoid \cite{CranchDS20,FahrenbergJSZ21a} as a
structure $(X,\odot,\src,\tgt)$ where, for all $x,y,z\in C$,
\begin{gather*}
  \bigcup\{x\odot v\mid v \in y \odot z\} = \bigcup\{u\odot z\mid u\in
  x \odot y\},\\
  x\odot y \neq \emptyset \Rightarrow \tgt(x)=\src(y),\qquad
  \src(x)\odot x = \{x\},\qquad x\odot \tgt(x)=\{x\}.
\end{gather*}
Powerset quantales on $\Pow X$ thus give rise to catoids:
associativity of $\odot$ is derived using associativity of
$\cdot$. Conversely, starting from a catoid $(X,\odot,s,t)$, one can
construct a quantale on $\Pow X$ with composition
$A\cdot B=\{c \in a\odot b\mid a \in A, b\in B\}$. Tying these
constructions together yields a dual equivalence between the category
of catoids (with suitable morphisms) and the category of powerset
quantales (with suitable homomorphisms), an instance of the
Jónsson-Tarski duality mentioned. The multioperation of the catoid is
thus simply an alternative encoding of a ternary relation, while the
multiplication of the powerset quantale is seen as binary operator on
a boolean algebra. We are, however, not interested in such a duality
itself, but in the correspondences between equations that are typical
for the modal logics and algebras associated with it, as illustrated
in the example of associativity above.

\begin{figure}[t]
  \center
  
  \begin{tikzpicture}[x=2.5cm, y=2.5cm]
    \node (X) at (0,0) {catoid $C$};
    \node (Q) at (1,0) {quantale $Q$};
    \node (R) at (.5,.8) {quantale $Q^C$};
    \path (X) edge[-] node[coordinate, pos=.2] (QX) {} node[coordinate, pos=.8] (XQ) {} (Q);
    \path (X) edge[-] node[coordinate, pos=.2] (RX) {} node[coordinate, pos=.85] (XR) {} (R);
    \path (Q) edge[-] node[coordinate, pos=.2] (RQ) {} node[coordinate, pos=.85] (QR) {} (R);
  \end{tikzpicture}
  \caption{Basic two-out-of-three correspondence between catoid $C$, quantale
    $Q$ and convolution quantale $Q^C$}
  \label{Fig:triangle1}
\end{figure}

In the above construction, the powerset on $X$ corresponds to a map
$X\to 2$ into the quantale $2$ of booleans, and $2$ can be replaced by
an arbitrary quantale $Q$.  This leads to correspondence triangles
between catoids $C$, value quantales $Q$ and convolution quantales
$Q^C$ on function spaces, as depicted in
Figure~\ref{Fig:triangle1}. In the simplest case, if $C$ is a catoid
and $Q$ a quantale, then $Q^C$ is a quantale; if $Q^C$ and $Q$ are
quantales, then $C$ is a catoid; and if $Q^C$ is a quantale and $C$ a
catoid, then $Q$ is a quantale~\cite{DongolHS21,CranchDS21}.  The last
two of these two-out-of-three properties require mild conditions on
$C$ or $Q$ explained in Section~\ref{S:globular-convolution}. In the
construction of the convolution quantale $Q^C$, if
$\odot:C\times C\to \Pow C$ is the multioperation on the catoid $C$,
if $\cdot:Q\times Q\to Q$ is the composition in the value quantale $Q$
and if $f$, $g$ are maps in $C\to Q$, then the quantalic composition
$\ast : Q^C\times Q^C \to Q^C$ is the convolution
\begin{equation*}
  (f\ast g)(x) = \Sup_{x \in y\odot z} f(y)\cdot g(z),
\end{equation*}
where the sup is taken with respect to $y$ and $z$.  Functions
$\delta_x^\alpha:C\to Q$, which map $y\in C$ to $\alpha\in Q$ if $y=x$
and to the minimal element $\bot$ of the quantale $Q$ otherwise, now
replace singleton sets as atoms.  They allow us to obtain equations in
$C$ from those in $Q^C$ and $Q$, as well as $Q$ from $Q^C$ and
$C$. Units in two of these structures then give rise to a
corresponding unit in the third, see~\cite{CranchDS21} and
Section~\ref{S:globular-convolution} for details. To construct a
convolution or power set quantale, it is thus necessary and sufficient
to understand the structure of the underlying catoid (if the value
quantale is fixed).
 
Our correspondence results for $\omega$-catoids and $\omega$-quantales
are based on two extensions of this basic triangle.  The first is a
correspondence triangle between interchange catoids $C$, interchange
quantales $Q$ and convolution interchange quantales
$Q^C$~\cite{CranchDS21}. Catoids are then equipped with two
multioperations and quantales with two monoidal structures that
interact via interchange laws. This first extension helps us to deal
with the interchange laws in $\omega$-catoids which we wish to reflect
within $\omega$-quantales.

The second extension is a correspondence triangle between local
catoids $C$, which exhibit the typical composition pattern of
categories, modal value quantales $Q$ and modal convolution quantales
$Q^C$~\cite{FahrenbergJSZ21a}.  Here, correspondences between
properties of the source and target maps of catoids and the domain and
codomain maps of modal quantales extend the basic ones. This second
extension justifies the laws of modal Kleene
algebras~\cite{DesharnaisS11} and modal
quantales~\cite{FahrenbergJSZ22a} relative to the catoid and category
axioms. In powerset quantales, for instance, the domain and codomain
operations in a powerset quantale arise as the direct images of the
source and target maps of the corresponding catoid, and source and
target maps in a catoid are obtained by restricting the application of
domain and codomain operations of a powerset quantale to singleton
sets. The second extension thus sets up the correspondence between the
source and target structure in $\omega$-catoids and the domain and
codomain structure on $\omega$-catoids.

In combination, the correspondence for interchange laws and that for
source/target and domain/codomain therefore help us to reflect the
full structure of strict $\omega$-categories in powerset or
convolution algebras, using $\omega$-catoids to balance the equational
axioms in the two kinds of structures in correspondence proofs.

\begin{figure}[t]
    \centering
  \begin{tikzpicture}[x=3.2cm, y=3.2cm]
    \node (X) at (0,0) {local $\omega$-catoid $C$};
    \node (Q) at (1,0) {$\omega$-quantale $Q$};
    \node (R) at (.5,.8) {$\omega$-quantale $Q^C$};
    \path (X) edge[-] node[coordinate, pos=.2] (QX) {} node[coordinate, pos=.8] (XQ) {} (Q);
    \path (X) edge[-] node[coordinate, pos=.2] (RX) {} node[coordinate, pos=.85] (XR) {} (R);
    \path (Q) edge[-] node[coordinate, pos=.2] (RQ) {} node[coordinate, pos=.85] (QR) {} (R);
  \end{tikzpicture}
  \quad
   \begin{tikzpicture}[x=3.5cm, y=3.2cm]
    \node (X) at (0,0) {local $(\omega,p)$-catoid $C$};
    \node (Q) at (1,0) {$(\omega,p)$-quantale $Q$};
    \node (R) at (.5,.8) {$(\omega,p)$-quantale $Q^C$};
    \path (X) edge[-] node[coordinate, pos=.2] (QX) {} node[coordinate, pos=.8] (XQ) {} (Q);
    \path (X) edge[-] node[coordinate, pos=.2] (RX) {} node[coordinate, pos=.85] (XR) {} (R);
    \path (Q) edge[-] node[coordinate, pos=.2] (RQ) {} node[coordinate, pos=.85] (QR) {} (R);
  \end{tikzpicture}   
  \caption{Correspondences between local $\omega$-catoid $C$,
    $\omega$-quantale $Q$ and $\omega$-quantale $Q^C$ on the left, and
    local $(\omega,p)$-catoid $C$,
    $(\omega,p)$-quantale $Q$ and $(\omega,p)$-quantale $Q^C$ on the right}
  \label{Fig:triangle2}
\end{figure}

The correspondence triangles between $\omega$-catoids and
$\omega$-quantales, shown in Figure~\ref{Fig:triangle2}, the main
technical results in this article, require first of all definitions of
these two structures. $\omega$-Catoids are introduced in
Section~\ref{S:2-lr-msg}. They are obtained by adding globular shape
axioms to those of local and interchange catoids and then generalising
beyond two dimensions in light of previous axiomatisations of
(single-set)
$\omega$-categories~\cite{BrownH81,Street87,MacLane98,Steiner04}. Based
on the $\omega$-catoid axioms and on previous axioms for globular
$n$-Kleene algebras~\cite{CalkGMS20}, we then introduce
$\omega$-quantales in Section~\ref{S:2-quantales} as our main
conceptual contribution, and justify their axioms through the
correspondence proofs in Section~\ref{S:globular-convolution}. Results
for $n$-structures can be obtained in the standard way by truncation;
correspondence results for powerset $\omega$-quantales are discussed
in Section~\ref{S:msg-quantale}. In this article, $\omega$-catoids, as
simple generalisations of strict $\omega$-categories, are therefore
merely tools for deriving the $\omega$-quantale axioms and vice
versa. They do not constitute any attempt towards infinity categories.

The $\omega$-catoid axioms depend on rather delicate definedness
conditions for compositions, captured multioperationally by mapping to
non-empty sets. These are sensitive to Eckmann-Hilton-style collapses,
as discussed in Appendix~\ref{A:eckmann-hilton}.
They satisfy natural functorial properties, specialise to
$\omega$-category axioms and yield the globular cell structure
expected, see Section~\ref{S:2-lr-msg}. The Isabelle/HOL proof
assistant with its automated theorem provers and counterexample
generators has allowed us to simplify these structural axioms quite
significantly, which in turn simplified the development of
$\omega$-quantale axioms in Section~\ref{S:2-quantales} and the proofs
of correspondence triangles in Section~\ref{S:globular-convolution}.

The extended correspondence triangles for $(\omega,p)$-catoids and
$(\omega,p)$-quantales in Section~\ref{S:n-structures}, shown in
Figure~\ref{Fig:triangle2}, are compositional with respect to the
$\omega$-correspondences and those for groupoids and quantales with a
suitable notion of {involution or converse. These are established in
  Section~\ref{S:dedekind-correspondence}, see
  Figure~\ref{Fig:triangle3}. They adapt Jónsson and Tarski's
  classical duality between groupoids and relation
  algebras~\cite{JonssonT52} and extend it to convolution
  algebras. Catoids specialise automatically to groupoids when the two
  obvious axioms for inverses are added, see
  Section~\ref{S:lr-groupoids}.  Yet we use Dedekind quantales instead
  of relation algebras here. These are involutive
  quantales~\cite{MulveyWP92} equipped with a variant of the Dedekind
  law from relation algebra, see
  Section~\ref{S:modular-quantales}. Apart from the interaction
  between the converse and the modal structure needed for
  $(\omega,p)$-quantales, we also discuss weaker variants of converse
  useful for semirings or Kleene algebras in this section.

  \begin{figure}[t]
    \centering
    \begin{tikzpicture}[x=3.4cm, y=3.2cm]
      \node (X) at (0,0) {groupoid $C$};
      \node (Q) at (1,0) {Dedekind quantale $Q$};
      \node (R) at (.5,.8) {Dedekind quantale $Q^C$};
      \path (X) edge[-] node[coordinate, pos=.2] (QX) {} node[coordinate, pos=.8] (XQ) {} (Q);
      \path (X) edge[-] node[coordinate, pos=.2] (RX) {} node[coordinate, pos=.85] (XR) {} (R);
      \path (Q) edge[-] node[coordinate, pos=.2] (RQ) {} node[coordinate, pos=.85] (QR) {} (R);
    \end{tikzpicture}
    \caption{Correspondence between groupoid $C$,
      Dedekind quantale $Q$ and convolution Dedekind quantale $Q^C$,
      where both quantales are assumed to carry a complete Heyting
    algebra structure}
  \label{Fig:triangle3}
\end{figure}

The content of the remaining sections of this paper is summarised as
follows. In Section~\ref{S:lr-multisemigroups} and
\ref{S:modal-quantales} we recall the basic properties of catoids and
modal quantales. In Section~\ref{S:modalities} we develop the basic
laws for modal box and diamond operators in $\omega$-quantales in
preparation for more advanced future coherence proofs in
higher-dimensional rewriting. In Section~\ref{S:globular-semiring} and
\ref{S:globular-ka} we specialise the $\omega$- and
$(\omega,p)$-quantale axioms to those of $\omega$-semirings and
$\omega$-Kleene algebras and their $(\omega,p)$-variants. For general
convolution algebras, this requires a finite decomposition property on
$\omega$- and $(\omega,p)$-catoids.
Sections~\ref{S:globular-semiring} and~\ref{S:globular-ka} also
contain a detailed comparison of the $\omega$-Kleene algebras and
$(\omega,p)$-Kleene algebras introduced in this paper with previous
globular $n$-Kleene algebras~\cite{CalkGMS20} and their slightly
different axioms. Overall, $\omega$-quantales offer greater
flexibility when reasoning about higher rewriting diagrams than
$\omega$-Kleene algebras. They admit arbitrary suprema and additional
operations such as residuals, and they support reasoning with least
and greatest fixpoints beyond the Kleene star. Already the proof of
the classical Newman's lemma in modal Kleene
algebras~\cite{DesharnaisMS11} assumes certain suprema that are not
present in all Kleene algebras. Nevertheless, convolution semiring and
Kleene algebras have been widely studied in computer
science~\cite{DrosteKV09} and their higher variants therefore
certainly deserve an exploration.

Most results in this paper have been verified with the Isabelle/HOL
proof assistant, but the development of interactive proof support for
higher categories, higher-dimensional rewriting and categorical
algebra remains part of a larger research programme, which requires
substantial additions. Our Isabelle components can be found in the
Archive of Formal Proofs~\cite{Struth23,CalkS23,CalkS24}. The
components contain specifications and basic libraries for $2$-catoids,
groupoids, $2$-quantales and their $\omega$-variants, as well as for
Dedekind quantales, and they cover the basic properties of these
structures that feature in this paper. All extensions to powersets
have been formalised with Isabelle/HOL, but neither the constructions
of convolution algebras nor the full correspondence triangles. In
addition to the Isabelle proofs, we present the most important proofs
for this article in the relevant sections. All other proofs can be
found in Appendix~\ref{A:proofs} or the references given, unless they
are trivial.

Finally, Appendix~\ref{A:diagrams} provides diagrams for the most
important structures used in this articles and their relationships.


\section{Catoids}\label{S:lr-multisemigroups}

In preparation for the $\omega$-catoids in Section~\ref{S:2-lr-msg} we
start with recalling the definitions and basic properties of catoids
and related structures~\cite{FahrenbergJSZ21a}; see also the Isabelle
theories~\cite{Struth23} for details. General background on
multisemigroups can be found in~\cite{KudryavtsevaM15} and the
references given there. All proofs in this section
have been checked with Isabelle.

A \emph{catoid} $(C,\odot,\src,\tgt)$ consists of a set $C$, a
multioperation $\odot:C\times C\to \Pow C$ and \emph{source} and
\emph{target} maps $\src,\tgt:C\to C$. These satisfy, for all
$x,y,z\in C$,
\begin{gather*}
  \bigcup\{x\odot v\mid v \in y \odot z\} = \bigcup\{u\odot z\mid u\in
  x \odot y\},\\
  x\odot y \neq \emptyset \Rightarrow \tgt(x)=\src(y),\qquad \src(x)\odot
  x = \{x\},\qquad x\odot \tgt(x)=\{x\}. 
\end{gather*}

The first catoid axiom expresses multirelational \emph{associativity}.
If we extend $\odot$ from $C\times C\to \Pow C$ to
$\Pow C\times \Pow C\to \Pow C$ such that, for all $X,Y\subseteq C$,
\begin{equation*}
  X\odot Y= \bigcup_{x\in X,y\in Y} x\odot y,
\end{equation*}
and write $x\odot X$ for $\{x\}\odot X$ and likewise, then the
associativity axiom simplifies to $x\odot (y\odot z)= (x\odot y)\odot z$.  A
\emph{multisemigroup} can thus be defined as a set equipped with an
associative multioperation~\cite{KudryavtsevaM15}.  From now on, we
often write $xy$ instead of $x\odot y$. We also write $s(X)$ and $t(X)$ for the
direct images of $X$ under $s$ and $t$, respectively, for instance, $s(x\odot
y)$ or $s(xy)$ as well as $t(x\odot y)$ or $t(xy)$. 

The second catoid axiom, the \emph{weak locality} axiom, states that
the target $\tgt(x)$ of $x$ and the source $\src(y)$ of $y$ are equal
whenever the composite of $x$ and $y$ is defined, that is,
$x\odot y\neq \emptyset$. We write $\Delta(x,y)$ for
$x\odot y \neq \emptyset$ and call $\Delta$ the \emph{domain of
  definition} of $\odot$.

The third and fourth catoid axioms are \emph{left} and \emph{right
  unit} axioms and we refer to $s(x)$ and $t(x)$ as the \emph{left
  unit} and \emph{right unit} of $x$, respectively.

A catoid $C$ is \emph{functional} if $x,x'\in yz$ imply $x=x'$ for all
$y,z\in C$, and \emph{local} if
$\tgt(x)= \src(y)\Rightarrow xy\neq \emptyset$ for all $x,y\in C$.  A
\emph{category} is then a local functional catoid.

Catoids thus generalise categories beyond locality and
functionality. Local functional catoids formalise categories in
single-set style~\cite{MacLane48}, see also~\cite[Chapter
XII]{MacLane98}, and with arrow composition in diagrammatic order.  In
functional catoids, composition is a partial operation that maps
either to singleton sets or to the empty set. Locality imposes the
standard composition pattern
$\Delta(x,y) \Leftrightarrow \tgt(x)=\src(y)$ of arrows in categories.

Composition in a catoid is \emph{total} if $\Delta=C\times C$. In
every total catoid $C$, there is precisely one element which is the
source and target element of every element in $C$.  Total
multioperations are known as \emph{hyperoperations}.  Total operations
are therefore total functional multioperations.

\begin{remark}
  In the definition of catoids and throughout this text, we are using
  ``set'' indifferently for small sets and classes and ignore the well
  known foundational issues, which do not arise in our
  setting. Distinctions can be made as for standard categories. See,
  for instance, Mac Lane's book~\cite{MacLane98} and
  Example~\ref{E:ExampleCatoidsCategories} for a discussion.
\end{remark}

Catoids form a category with respect to several notions of morphism.
A \emph{catoid morphism} $f:C\to D$ between catoids $C$ and $D$ must
preserve compositions, sources and targets:
\begin{equation*}
  f(x\odot_C y)\subseteq f(x)\odot_D f(y),\qquad
  f\circ \src_C = \src_D\circ f,\qquad
  f\circ \tgt_C= \tgt_D\circ f,
\end{equation*}
where, on the left-hand side of the left identity the image of the set
$x\odot_C y$ with respect to $f$ is taken.  A morphism $f:C\to D$ is
\emph{bounded} if $f(x)\in u\odot_D v$ implies that
$x \in y\odot_C z$, $u=f(y)$ and $v=f(z)$ for some $y,z\in C$.  

Morphisms of categories, as local functional catoids, are
functors. The inclusion in the definition of morphisms reflects that
$x\odot_C y=\emptyset $ whenever the composition is undefined. Bounded
morphisms are widely used in modal and substructural logics. 

\begin{example}\label{ex:bounded-morphism}
  Bounded morphisms between catoids need not satisfy
  $f(x\odot_C y) = f(x)\odot_D f(y)$.  Consider the discrete catoid on
  $C=\{a,b\}$ with $\src=\id_{C} = \tgt$ and
  \begin{equation*}
    xy =
    \begin{cases}
      \{x\} & \text{ if } x=y,\\
    \emptyset & \text{ otherwise.}
    \end{cases}
  \end{equation*}
  The constant map $f_b:x\mapsto b$ on $C$ is clearly a catoid
  endomorphism. It is bounded because every $x\in C$ satisfies
  $f_b(x) \in bb$, $x\in xx$ and $b=f_b(x)$. Nevertheless we have
  $f_b(ab) = \emptyset \neq \{b\} = f_b(a) f_b(b)$.  
\end{example}

The \emph{opposite} of a catoid is defined as for categories. It is a
structure in which $\src$ and $\tgt$ are exchanged and so are the
arguments in $\odot$. The opposite of a (local, functional, total) catoid
is again a (local, functional, total) catoid. Properties of catoids
translate through this duality. 

Properties of catoids and related structures have been collected
in~\cite{FahrenbergJSZ21a} and our Isabelle
theories~\cite{Struth23}. Here we list only some that are structurally
interesting or needed in proofs below.

\begin{lemma}\label{lemma:mm-props}\label{lemma:mm-quiver}
  In every catoid,
  \begin{enumerate}
  \item $\src\circ \src = \src$, $\tgt\circ \tgt=\tgt$, $\src\circ \tgt = \tgt$ and
    $\tgt\circ \src = \src,$
    \item $\src(x) = x \Leftrightarrow x = \tgt(x)$,
  \item $\src(x)\src(x)=\{\src(x)\}$ and $\tgt(x)\tgt(x)=\{\tgt(x)\}$,
  \item $\src(x)\tgt(y)=\tgt(y)\src(x)$,
    \item $\src(\src(x)y) = \src(x)\src(y)$ and $\tgt(x\tgt(y))=\tgt(x)\tgt(y)$.
  \end{enumerate}
\end{lemma}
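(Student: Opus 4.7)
The plan is to derive all five statements using only the four catoid axioms, with weak locality serving as the main engine: whenever a unit axiom exhibits a composite as non-empty, weak locality forces an equality between a target and a source, and conversely, whenever two idempotent-like elements fail to match, weak locality forces the relevant composite to be empty. A small number of case distinctions on whether a source and a target coincide will handle the remaining items.

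For part (1), I would first apply weak locality to the left-unit identity $\src(x)\odot x = \{x\}$, which is non-empty, to obtain $\tgt(\src(x)) = \src(x)$; symmetrically, applying it to $x\odot \tgt(x)=\{x\}$ gives $\src(\tgt(x))=\tgt(x)$. This already yields $\tgt\circ\src=\src$ and $\src\circ\tgt=\tgt$. Then $\src\circ\src=\src$ follows by rewriting $\src(\src(x)) = \src(\tgt(\src(x))) = \tgt(\src(x)) = \src(x)$, and $\tgt\circ\tgt=\tgt$ is dual. Part (2) is then immediate from part (1): if $\src(x)=x$ then $\tgt(x)=\tgt(\src(x))=\src(x)=x$, and vice versa. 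For part (3), I would apply the left-unit axiom with argument $\src(x)$ to get $\src(\src(x))\odot \src(x)=\{\src(x)\}$, and substitute $\src(\src(x))=\src(x)$ from part (1); the target statement is dual.

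For part (4), I would split on whether $\src(x)=\tgt(y)$. In the affirmative case, setting $e=\src(x)=\tgt(y)$, part (1) makes $e$ a fixed point of both $\src$ and $\tgt$, and part (3) gives $e\odot e=\{e\}$, so both $\src(x)\tgt(y)$ and $\tgt(y)\src(x)$ equal $\{e\}$. In the negative case, if either composite were non-empty, weak locality combined with part (1) would force $\src(x)=\tgt(y)$, a contradiction; hence both composites are empty. Part (5) proceeds by the same pattern applied to $\src(x)\odot y$: if this composite is non-empty, weak locality gives $\src(x)=\src(y)$, so $\src(x)\odot y = \src(y)\odot y = \{y\}$ by the left-unit axiom, whence $\src(\src(x)\odot y)=\{\src(y)\}$; on the other side, $\src(x)\src(y)=\src(y)\src(y)=\{\src(y)\}$ by part (3). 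If $\src(x)\odot y=\emptyset$, then any non-emptiness of $\src(x)\src(y)$ would again force $\src(x)=\src(y)$ and hence contradict the assumption via the left-unit axiom; so both sides are empty. The target identity is dual.

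The only mild obstacle is bookkeeping: one must remember that the notation $\src(xy)$ denotes the direct image of the set $x\odot y$ under $\src$, so the identities quantify uniformly over possibly empty composites, and the equalities must be read as set equalities. Once the empty case is treated by contradiction through weak locality, the computations reduce to short substitutions using parts (1) and (3).
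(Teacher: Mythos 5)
Your proof is correct: each step follows from the four catoid axioms, with weak locality applied to the non-empty unit composites doing the main work, and the empty/non-empty case splits in (4) and (5) are handled properly as set equalities. The paper itself defers this lemma to \cite{FahrenbergJSZ21a} and the Isabelle theories rather than proving it inline, but your derivation is the standard one and matches what those sources do.
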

Note that direct images with respect to $s$ and $t$ are taken in the
left-hand sides of the identities in (5).  According to (2),
  the set of fixpoints of $s$ equals the set of fixpoints of $t$. We
  henceforth write $C_0$ for this set.

\begin{lemma}\label{lemma:fix-id}
  Let $C$ be a catoid. Then $C_0 = \src(C) =\tgt(C)$.
\end{lemma}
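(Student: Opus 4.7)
The plan is to prove the two set equalities $C_0 = \src(C)$ and $C_0 = \tgt(C)$ by double inclusion, leaning entirely on parts (1) and (2) of Lemma~\ref{lemma:mm-props}. Recall that $C_0$ is defined as the set of fixpoints of $\src$, which by part~(2) coincides with the set of fixpoints of $\tgt$, so I may freely use either characterisation.

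First I would prove $C_0 \subseteq \src(C)$. If $x \in C_0$, then $\src(x) = x$ by definition, and so $x = \src(x) \in \src(C)$. For the converse inclusion $\src(C) \subseteq C_0$, pick $y \in \src(C)$, say $y = \src(x)$ for some $x \in C$. Applying the identity $\src \circ \src = \src$ from Lemma~\ref{lemma:mm-props}(1) gives $\src(y) = \src(\src(x)) = \src(x) = y$, so $y$ is a fixpoint of $\src$, hence $y \in C_0$.

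The argument for $C_0 = \tgt(C)$ is symmetric. For $C_0 \subseteq \tgt(C)$, if $x \in C_0$ then by Lemma~\ref{lemma:mm-props}(2) we have $x = \tgt(x) \in \tgt(C)$. Conversely, if $y = \tgt(x) \in \tgt(C)$, then $\tgt \circ \tgt = \tgt$ from part~(1) yields $\tgt(y) = y$, which by part~(2) also gives $\src(y) = y$, so $y \in C_0$.

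There is no real obstacle here; the statement is essentially a repackaging of the idempotence and fixpoint-exchange identities already recorded in Lemma~\ref{lemma:mm-props}. The only thing worth being careful about is the harmless overloading of notation, since $\src(C)$ and $\tgt(C)$ denote the direct images of the set $C$ under $\src$ and $\tgt$, rather than applications of these maps to a single element.
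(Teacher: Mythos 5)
Your proof is correct and matches the paper's intent exactly: the paper simply declares the result immediate from Lemma~\ref{lemma:mm-props}(1), and your double-inclusion argument via the idempotence identities $\src\circ\src=\src$, $\tgt\circ\tgt=\tgt$ together with the fixpoint equivalence in part~(2) is precisely the spelled-out version of that. No gaps.
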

The proof is immediate from Lemma~\ref{lemma:mm-props}(1). Hence $C_0$
consists of the units in $C$ which we also call \emph{$0$-cells}, by
analogy with categories.  Similarly, the elements of $C$ can be viewed
as \emph{$1$-cells}, the elements of $C_0$ as degenerate $1$-cells
($\src(x)=\tgt(x)$ holds for all $x\in C_0$ by
Lemma~\ref{lemma:mm-props}(2)) and those of $C-C_0$ as \emph{proper}
or \emph{non-degenerate} $1$-cells. Further, units of catoids can be
seen as orthogonal idempotents.
\begin{lemma}\label{lemma:orth-id}
  Let $C$ be a catoid. For all $x,y\in C_0$,
  $\Delta(x,y)\Leftrightarrow x=y$ and 
\begin{equation*}
  x y =
  \begin{cases}
   \{x\} & \text{ if } x = y,\\
    \emptyset &\text{ otherwise}.
  \end{cases}
\end{equation*}
\end{lemma}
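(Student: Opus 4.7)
The plan is to exploit the characterisation of $C_0$ as fixpoints of both $s$ and $t$ (Lemma~\ref{lemma:mm-props}(2), restated in Lemma~\ref{lemma:fix-id}), together with the weak locality axiom and the unit axioms from the definition of a catoid. The statement breaks into two cases depending on whether $x = y$ or not, so I would handle the two directions of the equivalence $\Delta(x,y)\Leftrightarrow x=y$ simultaneously with the computation of $xy$.

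First, suppose $x = y$. Since $x \in C_0$, we have $s(x) = x$, hence $xy = xx = s(x)\odot x = \{x\}$ by the left unit axiom. In particular, $xy \neq \emptyset$, so $\Delta(x,y)$ holds, and we get the first branch of the case split.

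Next, suppose $\Delta(x,y)$, i.e.\ $xy \neq \emptyset$. By the weak locality axiom, $t(x) = s(y)$. Since $x, y \in C_0$, we have $t(x) = x$ and $s(y) = y$, so $x = y$. Contrapositively, if $x \neq y$ then $xy = \emptyset$, which yields the second branch of the case split and completes the equivalence $\Delta(x,y)\Leftrightarrow x=y$.

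There is no real obstacle here: once one recalls that elements of $C_0$ are precisely those fixed by both $s$ and $t$, the weak locality axiom forces $x = y$ as soon as the composite is non-empty, and the left unit axiom immediately evaluates $xx$ to $\{x\}$. The only thing to be careful about is invoking Lemma~\ref{lemma:mm-props}(2) (or equivalently the definition of $C_0$) to get both $s(x) = x = t(x)$ simultaneously, so that weak locality and the unit axiom can be applied without further justification.
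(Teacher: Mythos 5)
Your proof is correct and is precisely the intended argument: the paper omits the proof as trivial (deferring to the Isabelle theories and \cite{FahrenbergJSZ21a}), and the evident route is exactly yours — the left unit axiom evaluates $xx=\src(x)\odot x=\{x\}$ when $x=y$, while weak locality together with $\tgt(x)=x$ and $\src(y)=y$ for $x,y\in C_0$ forces $x=y$ whenever $\Delta(x,y)$. Nothing is missing.
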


The next two lemmas recall an alternative equational characterisation
of locality, which is important for the correspondence with modal
quantales in Section~\ref{S:globular-convolution}.
  
\begin{lemma}\label{lemma:msg-props1}
  In every catoid,
  \begin{enumerate}
  \item $\src (xy) \subseteq \src(x\src(y))$ and $\tgt(xy)\subseteq
    \tgt(\tgt(x)y)$,
  \item $\Delta(x,y)$ implies $\src(xy)=\{\src(x)\}$ and
    $\tgt(xy)=\{\tgt(y)\}$. 
  \end{enumerate}
\end{lemma}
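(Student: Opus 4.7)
The plan is to prove (2) first and then derive (1) from it; the $t$-versions follow by passing to the opposite catoid, which is another catoid by the remarks preceding this lemma. The key manoeuvre is to combine the left unit axiom $\src(z)\odot z = \{z\}$ with multirelational associativity to turn a membership $z\in xy$ into a definedness statement for $\src(z)\odot x$, from which weak locality produces the equality $\src(z)=\src(x)$.

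For the first identity in (2), assume $\Delta(x,y)$ and pick any $z\in xy$. Since $\{z\}=\src(z)\odot z$, we have $z\in \src(z)\odot (xy)$; using the extended form of the associativity axiom, this rewrites as $z\in(\src(z)\odot x)\odot y$. In particular $\src(z)\odot x\neq\emptyset$, so weak locality yields $\tgt(\src(z))=\src(x)$, and Lemma~\ref{lemma:mm-props}(1) simplifies the left-hand side to $\src(z)$, giving $\src(z)=\src(x)$. Hence $\src(xy)\subseteq\{\src(x)\}$, and equality holds because $xy\neq\emptyset$. The second identity of (2) is the statement of the first applied to the opposite catoid.

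To derive (1), split on whether $\Delta(x,y)$ holds. If $xy=\emptyset$ then $\src(xy)=\emptyset$ and the inclusion is vacuous. Otherwise weak locality gives $\tgt(x)=\src(y)$, so the right unit axiom reduces $x\odot \src(y)=x\odot \tgt(x)$ to $\{x\}$; thus $\src(x\odot \src(y))=\{\src(x)\}$, which coincides with $\src(xy)$ by (2). The $\tgt$-inclusion follows again by dualising.

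The one spot that needs care is the associativity step in the proof of (2): one must make sure that the extended identity $\src(z)\odot(x\odot y) = (\src(z)\odot x)\odot y$ really is licensed even when $x\odot y$ is an arbitrary subset rather than a singleton. This is exactly what the set-theoretic reformulation of the associativity axiom given in the text provides, so the argument goes through without further hypothesis, and the rest of the proof is routine bookkeeping with Lemma~\ref{lemma:mm-props}(1) and the unit axioms.
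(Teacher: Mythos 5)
Your proof is correct. The paper does not print its own proof of this lemma (it is recalled from the cited reference and the Isabelle theories), but your argument is exactly the standard one: the chain $z\in xy \Rightarrow z\in(\src(z)\odot x)\odot y \Rightarrow \src(z)\odot x\neq\emptyset \Rightarrow \src(z)=\tgt(\src(z))=\src(x)$ is the same device the paper itself invokes later (e.g.\ in the proof of Proposition~\ref{prop:lr-gpd-cat}), and the reduction of (1) to (2) via weak locality and the right unit axiom, with the $\tgt$-statements obtained by opposition, is sound.
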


Item (2) of the following lemma features the equational
characterisation of locality mentioned.

\begin{lemma}\label{lemma:msg-loc}\label{lemma:msg-props2}\label{lemma:msg-loc-var}
  In every catoid $C$, the following statements are then equivalent:
  \begin{enumerate}
  \item $C$ is local,
  \item $\src(xy)=\src(x\src(y))$ and $\tgt(xy)=\tgt(\tgt(x)y)$, for all $x,y\in C$,
  \item $\Delta(x,y) \Leftrightarrow \tgt(x)\src(y)\neq \emptyset $,
    for all $x,y\in C$.
  \end{enumerate}
\end{lemma}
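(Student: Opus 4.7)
The plan is to prove the cycle $(1) \Rightarrow (2) \Rightarrow (3) \Rightarrow (1)$, leaning on weak locality, the unit axioms, and Lemma~\ref{lemma:msg-props1} throughout. Since the identities in (2) and (3) are self-dual under the opposite catoid, I would only prove the $s$-variants and invoke duality for the $t$-variants.

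For $(1) \Rightarrow (2)$, fix $x,y \in C$ and split on whether $\Delta(x,y)$. If $\Delta(x,y)$, then Lemma~\ref{lemma:msg-props1}(2) gives $s(xy) = \{s(x)\}$, and weak locality gives $t(x) = s(y)$, so $x \odot s(y) = x \odot t(x) = \{x\}$ by the right unit axiom, hence $s(x \odot s(y)) = \{s(x)\}$, matching. If $xy = \emptyset$, I need $s(x \odot s(y)) = \emptyset$, i.e., $x \odot s(y) = \emptyset$. Here locality is used contrapositively: by weak locality, if $x \odot s(y) \neq \emptyset$ then $t(x) = s(s(y)) = s(y)$ by Lemma~\ref{lemma:mm-props}(1); but then locality forces $xy \neq \emptyset$, contradicting the case assumption. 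Thus $x \odot s(y) = \emptyset$ and both sides agree.

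For $(2) \Rightarrow (3)$, the forward implication does not use (2): if $\Delta(x,y)$, weak locality gives $t(x) = s(y)$, so $t(x) \odot s(y) = t(x) \odot t(x) = \{t(x)\}$ by Lemma~\ref{lemma:mm-props}(3). Conversely, suppose $t(x) \odot s(y) \neq \emptyset$. Weak locality applied to this composition, together with $t(t(x)) = t(x)$ and $s(s(y)) = s(y)$ from Lemma~\ref{lemma:mm-props}(1), yields $t(x) = s(y)$. Then $x \odot s(y) = x \odot t(x) = \{x\}$, so $s(x \odot s(y)) = \{s(x)\} \neq \emptyset$. By hypothesis (2), $s(xy) = s(x \odot s(y)) \neq \emptyset$, which forces $xy \neq \emptyset$.

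For $(3) \Rightarrow (1)$, assume $t(x) = s(y)$. Then $t(x) \odot s(y) = s(y) \odot s(y) = \{s(y)\}$ by Lemma~\ref{lemma:mm-props}(3), which is non-empty, so (3) gives $\Delta(x,y)$ as required.

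The only delicate step is the ``empty case'' in $(1) \Rightarrow (2)$: one must not merely observe that $xy = \emptyset$ but actually trace through weak locality and the right unit axiom to conclude that $x \odot s(y)$ is empty as well, since otherwise the equational identity fails. Everything else reduces to weak locality together with the idempotency and fixpoint identities of Lemma~\ref{lemma:mm-props}.
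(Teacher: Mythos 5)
Your proof is correct: the cycle $(1)\Rightarrow(2)\Rightarrow(3)\Rightarrow(1)$ goes through, the delicate empty case in $(1)\Rightarrow(2)$ is handled properly via the contrapositive of weak locality together with $s\circ s=s$, and the duality appeal for the $t$-variants is legitimate since locality and the catoid axioms are preserved under opposition. The paper does not print a proof of this lemma (it is recalled from the cited reference and verified in Isabelle), but your argument is the standard one that the deferred proof would take, using exactly Lemma~\ref{lemma:mm-props} and Lemma~\ref{lemma:msg-props1} as the supporting facts.
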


Many examples of catoids and related structures are listed
in~\cite{FahrenbergJSZ21a}.  Here we mention only a few. First we
summarise the relationship with categories.

\begin{example}
\label{E:ExampleCatoidsCategories}
The category of local functional catoids with (bounded) morphisms is
isomorphic to the category of single-set categories à la
MacLane~\cite[Chapter XII]{MacLane98} with (bounded)
morphisms~\cite[Proposition 3.10]{FahrenbergJSZ21a}. The elements of
single-set categories are arrows or $1$-cells of traditional
categories. The objects of $0$-cells of traditional categories
correspond bijectively to identity $1$-cells and thus to units of
catoids.
  \end{example}
  
\begin{example}\label{ex:free-cat}
  The free category or \emph{path category} generated by a given
  digraph $\src,\tgt:E\to V$, for a set $E$ of edges or $1$-generators
  and a set $V$ of vertices or $0$-generators, is a fortiori a
  catoid. Recall that morphisms of path categories are finite paths
  between vertices, represented as alternating sequences of vertices
  and edges. Source and target maps extend from edges to paths, and we
  write $\pi:v\to w$ for a path $\pi$ with source $v$ and target
  $w$. Paths $\pi_1:u\to v$ and $\pi_2:v\to w$ can be composed to the
  path $vw:u\to w$ by gluing their ends. Identities are constant paths
  of length zero such as $v:v\to v$. See~\cite{MacLane98} for details.

  Digraphs themselves can be modelled as single-set structures
  $(X,\src,\tgt)$ satisfying $\src\circ \src = \src$,
  $\tgt\circ \tgt =\tgt$, $\src \circ \tgt = \tgt$ and
  $\tgt \circ \src = \src$.  These conditions make
  $X_0 = \src(X)=\tgt(X)$ the set of vertices.
\end{example}

In higher-dimensional rewriting, digraphs are referred to as
$1$-computads or $1$-polygraphs; in traditional rewriting, they
correspond to abstract rewriting systems. The free category generated
by a digraph supplies rewriting paths, the main object of study in
rewriting systems.  The recursive construction of
higher-dimensional computads or polygraphs has two-steps: for any
dimension $n\geq 0$, assuming that $k$-generators have been supplied
for all $0\leq k\leq n$, form the free $n$-category on $n$-generators,
then add $(n+1)$-generators over this free category.

The next example presents a catoid which is not a category and which
will recur across this text.

\begin{example}\label{ex:shuffle-catoid}
  Let $\Sigma^\ast$ be the free monoid generated by the finite set
  $\Sigma$.  The \emph{shuffle catoid} $(\Sigma^\ast,\|,\src,\tgt)$ on
  $\Sigma^\ast$ has the total commutative multioperation
  $\|:\Sigma^\ast\times \Sigma^\ast \to \Pow\Sigma^\ast$ as its
  composition. For all $a,b\in \Sigma$ and $v,w\in \Sigma^\ast$, it is
  defined recursively as
  \begin{equation*}
    \varepsilon \| v = \{v\} = v\| \varepsilon\qquad\text{
      and }\qquad (av) \| (bw) = a(v\|(bw))\cup
    b((av)\| w),
  \end{equation*}
  where $\varepsilon$ denotes the empty word.  The source and target
  structure of the shuffle catoid is trivial:
  $\src(w)=\varepsilon=\tgt(w)$ for all $w\in \Sigma^\ast$. It is
  local because of totality and triviality of $\src$ and $\tgt$. It is
  obviously not functional and hence not a category.
\end{example}

\begin{remark}\label{R:multimonoids}
  Catoids can be seen as multimonoids, that is, multisemigroups with
  multiple units, which generalise single-set categories with multiple
  units~{\cite[Chapter I]{MacLane98}} to multioperations. Multimonoid
  morphisms are then unit-preserving multisemigroup morphisms, which
  ignore source- and target-preservation. Categories of catoids and
  multimonoids, both with the obvious morphisms, are
  isomorphic; the functional relationship between elements of
  multimonoids and their left and right units determines source and
  target maps~\cite{CranchDS20},\cite[Proposition
  3.10]{FahrenbergJSZ21a}.  Categories of local partial multimonoids
  are therefore isomorphic to categories. See~\cite{CranchDS20} for
  the appropriate notion of locality.
\end{remark}

\begin{remark}
  Multioperations $X\times X\to \Pow X$ are isomorphic to ternary
  relations $X\to X\to X\to 2$.  Writing $R^x_{yz}$ for
  $x\in y \odot z$ allows us to axiomatise catoids alternatively as
  relational structures $(C,R,\src,\tgt)$ that satisfy the relational
  associativity axiom
  $\exists v.\ R^w_{xv}\land R^v_{yz} \Leftrightarrow \exists u.\
  R^u_{xy} \land R^w_{uz}$, the weak locality axiom
  $\exists z.\ R^z_{xy} \Rightarrow \tgt(x)=\src(z)$ and the
  relational unit axioms $R^x_{\src(x)x}$ and $R^x_{xt(x)}$.  Such
  \emph{relational monoids} are, in fact, monoids in the monoidal
  category $\mathbf{Rel}$ with the standard
  tensor~\cite{Rosenthal97,KenneyP11}. Functionality and locality
  translate readily to relations. Relational variants of catoids have
  been studied in~\cite{CranchDS20}.  Morphisms and bounded morphisms
  of ternary relations are standard for modal and substructural
  logics~\cite{Goldblatt89,Givant14}; categories of catoids and
  relational monoids are once again
  isomorphic~\cite{FahrenbergJSZ21a}.

  Ternary and more generally $(n+1)$-ary relations appear as duals of
  binary and more generally $n$-ary modal operators
  on boolean algebras in Jónsson and Tarski's duality theory
  for boolean algebras with
  operators~\cite{JonssonT51,JonssonT52,Goldblatt89,Givant14}. 
\end{remark}

\begin{remark}
  A partial operation $\hat \odot:\Delta\to C$, where
  $y\mathop{\hat \odot} z$ denotes the unique element that satisfies
  $y\mathop{\hat \odot} z\in y\odot z$ whenever $\Delta(y,z)$, can be
  defined in any functional catoid. Using this partial operation,
  $x\in y\odot z$ if and only if $\Delta(y,z)$ and
  $x= y\mathop{\hat{\odot}} z$. 
\end{remark}


\section{Groupoids}\label{S:lr-groupoids}

Higher-dimensional rewriting usually requires rewriting steps to be
invertible above a certain dimension. This
amounts to using groupoids, see~\cite{Brown87} for a survey.
Particularly relevant to us is the work of Jónsson and
Tarski~\cite[Section 5]{JonssonT52} on the correspondence between
groupoids and relation algebras, which we revisit in the slightly
different setting of Dedekind quantales in
Section~\ref{S:dedekind-correspondence}. Several properties that
feature in this section are theirs. In Section~\ref{S:n-structures},
catoids are combined with groupoids to $(n,p)$-catoids and
$(\omega,p)$-catoids, which specialise to the $(n,p)$-categories and
$(\omega,p)$-categories used in higher-dimensional
rewriting. Interestingly, catoids become groupoids and hence
categories when the natural axioms for inverses are added. All
proofs in this section have been checked with
Isabelle~\cite{Struth23}.

A \emph{groupoid} is a catoid $C$ with an \emph{inversion} operation
$(-)^\inv:C\to C$ such that, for all $x\in C$,
\begin{equation*}
  xx^\inv = \{\src(x)\} \qquad\text{ and }\qquad x^\inv x = \{\tgt(x)\}.
\end{equation*}

To justify this definition, we proceed in two steps to derive locality
and functionality, showing selected proofs only. The remaining ones can be
found in Appendix~\ref{A:proofs}.

\begin{lemma}\label{lemma:lr-mgp-local}\label{lemma:lr-mgp-props}
  Let $X$ be a catoid with operation $(-)^\inv:C\to C$ that satisfies
  $\src(x)\in xx^\inv$ and $\tgt(x)\in x^\inv x$ for all $x\in
  C$. Then
  \begin{enumerate}
    \item $C$ is local, 
  \item $\src (x^\inv) = \tgt(x)$ and $\tgt(x^\inv) = \src(x)$,
  \item $xy=\{\src(x)\}$ implies $x^\inv = y$ and $yx=\{\tgt(x)\}$ implies $x^\inv =y$,
    \item $\src(x)^\inv = \src(x)$ and $\tgt(x)^\inv = \tgt(x)$. 
  \end{enumerate}  
\end{lemma}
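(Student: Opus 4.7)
The plan is to prove the four items in the order (2), (1), (3), (4), since each later item leans on the earlier ones. Throughout I will use the extension of the multioperation to sets and the multirelational form of associativity, together with the catoid lemmas (in particular the unit and weak locality axioms and Lemma~\ref{lemma:mm-props}).

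First I would establish (2). Since $\src(x) \in x x^\inv$, the set $x x^\inv$ is non-empty, so weak locality gives $\tgt(x) = \src(x^\inv)$; dually, $\tgt(x) \in x^\inv x$ yields $\tgt(x^\inv) = \src(x)$. Next I would derive locality (1): assume $\tgt(x)=\src(y)$ and compute
\begin{equation*}
  \{x\} \;=\; x\,\tgt(x) \;=\; x\,\src(y) \;\subseteq\; x\,(y\,y^\inv) \;=\; (x\,y)\,y^\inv,
\end{equation*}
where the inclusion uses $\src(y)\in y\,y^\inv$ and the final equality uses associativity. Since $\{x\}$ is non-empty, $(x y)\, y^\inv$ is non-empty, hence $x y\neq\emptyset$, which is locality.

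For (3), suppose $x y = \{\src(x)\}$. Then $xy\neq\emptyset$, so weak locality gives $\tgt(x)=\src(y)$, and consequently $\tgt(x)\,y = \src(y)\,y = \{y\}$ by the left unit axiom. On the other hand, using associativity and the defining identity $x^\inv x = \{\tgt(x)\}$,
\begin{equation*}
  x^\inv\,(x\,y) \;=\; (x^\inv\,x)\,y \;=\; \tgt(x)\,y \;=\; \{y\}.
\end{equation*}
But by the hypothesis and (2),
\begin{equation*}
  x^\inv\,(x\,y) \;=\; x^\inv\,\src(x) \;=\; x^\inv\,\tgt(x^\inv) \;=\; \{x^\inv\},
\end{equation*}
so $y = x^\inv$. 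The dual implication, from $yx = \{\tgt(x)\}$, follows by the same argument applied on the opposite side, multiplying by $x^\inv$ on the right.

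Finally, for (4), note that by Lemma~\ref{lemma:mm-props}(3) and the fixed-point property $\src\circ\src=\src$ from Lemma~\ref{lemma:mm-props}(1), one has
\begin{equation*}
  \src(x)\,\src(x) \;=\; \{\src(x)\} \;=\; \{\src(\src(x))\}.
\end{equation*}
Applying (3) to this equation (with $\src(x)$ playing both roles) gives $\src(x)^\inv = \src(x)$; the target case is symmetric. The only subtle step is (1): one must be careful to handle the set-extension of $\odot$ correctly and to invoke associativity in its multirelational form, since the standard functional-associativity intuition can mislead. Once this is in place, the remaining items are essentially algebraic manipulations driven by weak locality and the unit axioms.
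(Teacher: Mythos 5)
Your proof is correct in substance and takes essentially the same route as the paper's: (2) from weak locality applied to the non-empty sets $xx^\inv$ and $x^\inv x$, (1) by the unit-plus-associativity computation, (3) by multiplying $xy=\{\src(x)\}$ on the left by $x^\inv$, and (4) by feeding $\src(x)\src(x)=\{\src(x)\}=\{\src(\src(x))\}$ into (3). The one point to repair is in (3): you invoke ``the defining identity $x^\inv x = \{\tgt(x)\}$'', but that is the groupoid axiom, which is strictly stronger than what this lemma assumes --- here only the membership $\tgt(x)\in x^\inv x$ is given (deriving the equalities from these weaker conditions is precisely the purpose of the lemma). Hence the step $(x^\inv x)\,y = \tgt(x)\,y$ is not available; what you do have is the containment $\{y\}=\tgt(x)\,y\subseteq (x^\inv x)\,y = x^\inv(xy) = x^\inv\src(x) = x^\inv\tgt(x^\inv) = \{x^\inv\}$, which still forces $y=x^\inv$. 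Replacing that single equality by this inclusion --- exactly the precaution you already took, correctly, in your proof of (1) --- closes the gap and brings your argument in line with the paper's.
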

\begin{proof}
  For (1), suppose $\tgt(x)=\src(y)$. Then
  $\{x\} = x\tgt(x)= x\src(y) = x (yy^-) = (xy)y^-$. Hence ${x}= uy^-$
  and $u\in xy$ hold for some $u\in C$. Thus $\Delta(x,y)$ and $C$ is
  local.
\end{proof}
For proofs of (2)-(4) see Appendix~\ref{A:proofs}.

\begin{lemma}\label{lemma:lr-groupoid-props}
  In every groupoid,
  \begin{enumerate}
  \item $(x^\inv)^\inv = x$,
  \item $x\in yz \Leftrightarrow y \in xz^\inv \Leftrightarrow z \in y^\inv x$.
  \end{enumerate}
\end{lemma}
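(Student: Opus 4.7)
The plan is to handle the two parts separately, leveraging Lemma~\ref{lemma:lr-mgp-props} and the set-extension of $\odot$ with its associativity and unit laws, treating ``multiplication by inverse'' as the main manipulation on both sides.

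For part (1), my strategy is to appeal to the uniqueness of inverses built into Lemma~\ref{lemma:lr-mgp-props}(3). Applied to the element $x^\inv$ it reads: whenever $x^\inv y = \{\src(x^\inv)\}$, one concludes $(x^\inv)^\inv = y$. By Lemma~\ref{lemma:lr-mgp-props}(2), $\src(x^\inv) = \tgt(x)$, and by the groupoid inversion axiom $x^\inv x = \{\tgt(x)\}$. So setting $y = x$ matches the premise exactly and yields $(x^\inv)^\inv = x$.

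For part (2), I would show the forward implication $x \in yz \Rightarrow y \in xz^\inv$ by a calculation on sets, using associativity of $\odot$ extended to $\Pow C\times\Pow C\to\Pow C$. From $\{x\}\subseteq yz$ I get
\begin{equation*}
  xz^\inv \subseteq (yz)z^\inv = y(zz^\inv) = y\{\src(z)\}.
\end{equation*}
Weak locality gives $\tgt(y)=\src(z)$, so the right unit axiom turns $y\src(z)=y\tgt(y)$ into $\{y\}$; hence $xz^\inv\subseteq\{y\}$. For the reverse inclusion I invoke locality (Lemma~\ref{lemma:lr-mgp-local}(1)): by Lemma~\ref{lemma:msg-props1}(2), $\tgt(x)=\tgt(z)=\src(z^\inv)$ (using Lemma~\ref{lemma:lr-mgp-props}(2) for the last equality), so $\Delta(x,z^\inv)$ and $xz^\inv\neq\emptyset$. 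Combining, $xz^\inv=\{y\}$, so $y\in xz^\inv$. The reverse implication $y\in xz^\inv\Rightarrow x\in yz$ is proved symmetrically, using that $(z^\inv)^\inv=z$ from part~(1), and the equivalence $x\in yz\Leftrightarrow z\in y^\inv x$ is proved by the mirror-image calculation, multiplying the set inclusion on the \emph{left} by $y^\inv$ and using the other groupoid axiom $y^\inv y=\{\tgt(y)\}$ together with the left unit law $\src(z)z=\{z\}$.

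The main subtlety I expect is bookkeeping at the set level: one must consistently lift singletons into compositions, use associativity on $\Pow C$, and then separately verify non-emptiness via locality each time one wants to turn a subset-of-singleton into an equality. Everything else is essentially routine manipulation with the axioms already in hand; part~(1) is a one-line consequence of the uniqueness clause in Lemma~\ref{lemma:lr-mgp-props}(3), and part~(2) is three applications of the same ``multiply by the inverse, simplify with a unit'' pattern.
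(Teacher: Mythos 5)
Your proposal is correct and follows essentially the same route as the paper's proof: part (1) via the uniqueness clause of Lemma~\ref{lemma:lr-mgp-props}(3) together with $\src(x^\inv)=\tgt(x)$, and part (2) via the same ``multiply by $z^\inv$, reduce to a unit, then use locality to upgrade $xz^\inv\subseteq\{y\}$ to equality'' calculation, with the converse obtained from (1) and the second equivalence by the opposite/mirror argument. No gaps.
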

See Appendix~\ref{A:proofs} for proofs.

\begin{proposition}\label{prop:lr-gpd-cat}
  Every groupoid (as defined above) is a category.
\end{proposition}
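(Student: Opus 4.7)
The plan is to show that a groupoid is both local and functional; since a category is defined as a local functional catoid, this gives the result.

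Locality is immediate. The groupoid axioms $xx^{\inv} = \{\src(x)\}$ and $x^{\inv}x = \{\tgt(x)\}$ yield in particular $\src(x)\in xx^{\inv}$ and $\tgt(x)\in x^{\inv} x$, so Lemma~\ref{lemma:lr-mgp-local}(1) applies.

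For functionality, the key reduction is: whenever $x\in y\odot z$, the set $y\odot z$ coincides with the singleton $\{x\}$. I would establish this via the chain
\begin{equation*}
  y\odot z \;=\; y\odot (y^{\inv}\odot x) \;=\; (y\odot y^{\inv})\odot x \;=\; \{\src(y)\}\odot x \;=\; \{\src(x)\}\odot x \;=\; \{x\}.
\end{equation*}
The first equality rests on the auxiliary claim that $y^{\inv}\odot x = \{z\}$. One inclusion is Lemma~\ref{lemma:lr-groupoid-props}(2), which gives $z\in y^{\inv}\odot x$. For the reverse inclusion, I would use the extended (sets) multioperation and associativity:
\begin{equation*}
  y^{\inv}\odot x \;\subseteq\; y^{\inv}\odot (y\odot z) \;=\; (y^{\inv}\odot y)\odot z \;=\; \{\tgt(y)\}\odot z \;=\; \{\src(z)\}\odot z \;=\; \{z\},
\end{equation*}
using the groupoid axiom, weak locality (applied to $y\odot z\neq\emptyset$ to get $\tgt(y)=\src(z)$) and the left unit axiom. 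The remaining equalities in the main chain are, respectively, associativity, the groupoid axiom for $y\odot y^{\inv}$, the identity $\src(y)=\src(x)$ from Lemma~\ref{lemma:msg-props1}(2), and the left unit axiom.

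The one thing requiring a little care is the bookkeeping around nonemptiness: each step needs to confirm that the composites in question are defined before the associativity and unit laws can be applied. Because the groupoid axioms supply the equalities $y\odot y^{\inv}=\{\src(y)\}$ and $y^{\inv}\odot y=\{\tgt(y)\}$ outright, rather than merely providing element witnesses, these definedness obligations are discharged immediately, and the computation goes through cleanly.
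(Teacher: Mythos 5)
Your proposal is correct and follows essentially the same route as the paper's proof: locality via Lemma~\ref{lemma:lr-mgp-local}(1), and functionality by left-composing with $y^\inv$ and reducing $y\odot y^\inv\odot x$ to $\src(y)\odot x=\src(x)\odot x=\{x\}$ using associativity, the groupoid axiom, $\src(y)=\src(x)$ from Lemma~\ref{lemma:msg-props1}(2), and the unit law. The only (harmless) difference is that you prove the full equality $y^\inv\odot x=\{z\}$, of which only the inclusion $z\in y^\inv\odot x$ is actually needed.
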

\begin{proof}
  In light of Lemma~\ref{lemma:lr-mgp-local}(1) it remains to check
  functionality. Suppose $x,x'\in yz$. Then $z\in y^\inv x$ by
  Lemma~\ref{lemma:lr-groupoid-props}(2) and
  $x'\in yy^\inv x = \src(y)x = \src(x)x=\{x\}$ using the first
  assumption and $ v\in xy \Rightarrow \src(v)=\src(x)$, which holds
  in any catoid by Lemma~\ref{lemma:msg-props2}, in the second
  step. Thus $x'= x$.
\end{proof}

\begin{lemma}\label{lemma:lr-gpd-canc}
  The following cancellation properties hold in every
  groupoid:
  \begin{enumerate}
  \item $\src(x)=\tgt(z)=\src(y)$ and $zx=zy$ imply $x=y$,
  \item $\tgt(x)=\src(z)=\tgt(y)$ and $xz=yz$ imply $x=y$.
  \end{enumerate}
\end{lemma}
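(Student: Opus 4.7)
The plan is to exploit the availability of the inverse $z^\inv$ to cancel $z$ from both sides of each equation, using associativity of the multioperation (extended pointwise to subsets of $C$ as noted after the definition of catoids) together with the groupoid and unit axioms.

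For part (1), I would compute $z^\inv \odot (z \odot x)$ in two ways. Using associativity and the defining groupoid identity $z^\inv z = \{\tgt(z)\}$,
\begin{equation*}
  z^\inv \odot (z \odot x) = (z^\inv \odot z) \odot x = \{\tgt(z)\} \odot x = \tgt(z) \odot x.
\end{equation*}
Since by hypothesis $\tgt(z) = \src(x)$, the left unit axiom yields $\src(x) \odot x = \{x\}$. By the same computation, using $\tgt(z) = \src(y)$, one obtains $z^\inv \odot (z \odot y) = \{y\}$. The hypothesis $zx = zy$ then forces $\{x\} = \{y\}$, hence $x = y$.

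For part (2), I would apply the dual argument: multiply both sides of $xz = yz$ on the right by $z^\inv$, invoke $z z^\inv = \{\src(z)\}$, and use the right unit axiom together with $\src(z) = \tgt(x) = \tgt(y)$. Equivalently, one may appeal to the opposite catoid, which is again a groupoid under the same inversion, so that (2) follows from (1) applied to the opposite.

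I do not anticipate any real obstacle beyond the careful book-keeping of set-valued compositions: because the multioperation extends pointwise to subsets and associativity survives this extension, the bracketing in $z^\inv \odot z \odot x$ is harmless, and each intermediate factor collapses to a singleton so that the final identity is just an equation between singletons.
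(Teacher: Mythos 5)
Your proposal is correct and is essentially the paper's own proof: left-multiply by $z^\inv$, use associativity and $z^\inv z = \{\tgt(z)\}$ to reduce both sides to $\src(x)\odot x = \{x\}$ and $\src(y)\odot y = \{y\}$, and obtain part (2) by opposition. No issues.
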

See Appendix~\ref{A:proofs} for proofs. The cancellation properties
correspond to the fact that every morphism in a groupoid is both an epi
and a mono.

\begin{lemma}\label{lemma:lr-gpd-jt}
  In every groupoid,
  \begin{enumerate}
  \item $x \mathop{\hat{\odot}} x^\inv \mathop{\hat{\odot}} x=x$,
  \item $\tgt(x)=\src(y)$ implies $x^\inv \mathop{\hat{\odot}}
    x=y^\inv \mathop{\hat{\odot}}  y$,
   \item $(x\mathop{\hat{\odot}} y)^\inv = y^\inv\mathop{\hat{\odot}} x^\inv$.
  \end{enumerate}
\end{lemma}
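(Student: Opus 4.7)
The plan is to exploit Proposition~\ref{prop:lr-gpd-cat}, which makes every groupoid a local functional catoid and hence gives $\mathop{\hat{\odot}}$ as a genuine partial operation. The key computational facts are the two ``functional'' instances of the groupoid axioms,
\[
x \mathop{\hat{\odot}} x^\inv = \src(x), \qquad x^\inv \mathop{\hat{\odot}} x = \tgt(x),
\]
together with the source/target formulas for inverses from Lemma~\ref{lemma:lr-mgp-local}(2), the relations $\src \circ \tgt = \tgt$ and $\tgt \circ \src = \src$ from Lemma~\ref{lemma:mm-props}(1), and the identity $\src(x \mathop{\hat{\odot}} y) = \src(x)$ from Lemma~\ref{lemma:msg-props1}(2), which holds whenever $\Delta(x,y)$.

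For (1) I would first observe that both compositions are defined, using $\tgt(x^\inv) = \src(x)$ from Lemma~\ref{lemma:lr-mgp-local}(2) together with locality. Then a two-step rewrite closes the proof: $x \mathop{\hat{\odot}} x^\inv \mathop{\hat{\odot}} x = \src(x) \mathop{\hat{\odot}} x = x$ by the first functional identity and the left unit axiom, with associativity handling the bracketing. For (2), both sides collapse via $x^\inv \mathop{\hat{\odot}} x = \tgt(x)$ to their canonical identity elements, and the hypothesis $\tgt(x) = \src(y)$ closes the remaining gap, since on degenerate elements $\src$ and $\tgt$ coincide (Lemma~\ref{lemma:mm-props}(2)).

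For (3) I would invoke the uniqueness of inverses recorded in Lemma~\ref{lemma:lr-mgp-props}(3): it is enough to verify
\[
(x \mathop{\hat{\odot}} y) \mathop{\hat{\odot}} (y^\inv \mathop{\hat{\odot}} x^\inv) = \src(x \mathop{\hat{\odot}} y).
\]
The computation telescopes via the two functional identities above and the unit axioms: the inner factor $y \mathop{\hat{\odot}} y^\inv$ becomes $\src(y) = \tgt(x)$, which then absorbs against $x$ on the left, leaving $x \mathop{\hat{\odot}} x^\inv = \src(x) = \src(x \mathop{\hat{\odot}} y)$ by Lemma~\ref{lemma:msg-props1}(2).

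The main obstacle, small but persistent, is the bookkeeping required to guarantee that every intermediate composition is defined so that rewriting under $\mathop{\hat{\odot}}$ is legal. Locality of groupoids (Lemma~\ref{lemma:lr-mgp-local}(1)) combined with $\src(x^\inv) = \tgt(x)$ and $\tgt(x^\inv) = \src(x)$ discharges this uniformly, and in (3) the assumed definedness of $x \mathop{\hat{\odot}} y$, equivalent to $\tgt(x) = \src(y)$, propagates to $\Delta(y^\inv, x^\inv)$ and to the intermediate terms through the source/target formulas.
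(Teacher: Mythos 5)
Your treatments of (1) and (3) are sound. For (1), the rewrite via $x\mathop{\hat{\odot}} x^\inv=\src(x)$ and the left unit axiom, with the definedness checks you describe, is exactly what is needed. For (3), you argue via uniqueness of inverses (Lemma~\ref{lemma:lr-mgp-props}(3)) applied to the telescoping computation $(x\mathop{\hat{\odot}} y)\mathop{\hat{\odot}}(y^\inv\mathop{\hat{\odot}} x^\inv)=\src(x\mathop{\hat{\odot}} y)$, whereas the paper points instead to the shunting equivalences $x\in yz \Leftrightarrow y\in xz^\inv \Leftrightarrow z\in y^\inv x$ of Lemma~\ref{lemma:lr-groupoid-props}(2); both routes are legitimate and of comparable length, yours trading two applications of the equivalence for one associativity computation plus the uniqueness lemma.

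Part (2) is where your argument breaks down. Your own reduction gives $x^\inv\mathop{\hat{\odot}} x=\tgt(x)$ and $y^\inv\mathop{\hat{\odot}} y=\tgt(y)$, so what remains is $\tgt(x)=\tgt(y)$ --- but the hypothesis only supplies $\tgt(x)=\src(y)$. The appeal to Lemma~\ref{lemma:mm-props}(2) does not bridge this: that lemma says $\src(z)=z\Leftrightarrow z=\tgt(z)$, i.e.\ it identifies the \emph{fixed points} of $\src$ and $\tgt$; it does not give $\src(y)=\tgt(y)$ for an arbitrary $y$. No argument can close this gap, because the identity as printed fails: in the pair groupoid of Example~\ref{ex:pair-groupoid} on $\{1,2,3\}$, take $x=(1,2)$ and $y=(2,3)$; then $\tgt(x)=(2,2)=\src(y)$, yet $x^\inv\mathop{\hat{\odot}} x=(2,2)\neq(3,3)=y^\inv\mathop{\hat{\odot}} y$. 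The intended Jónsson--Tarski composability identity is surely $x^\inv\mathop{\hat{\odot}} x = y\mathop{\hat{\odot}} y^\inv$, whose right-hand side is $\src(y)$ rather than $\tgt(y)$, and for which the hypothesis closes the computation immediately. So either the statement carries a typo that your proof should have detected, or, read literally, your proof of (2) asserts a false step.
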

The last identity follows from Lemma~\ref{lemma:lr-groupoid-props}(2),
the others are straightforward.

The following example is particularly interesting in the context of
Dedekind quantales in Section~\ref{S:modular-quantales}.

\begin{example}\label{ex:pair-groupoid}
 The \emph{pair groupoid} $(X\times X,\odot,s,t,(-)^-)$ on the
    set $X$ is defined, for all $a,b,c,d\in X$,  by
    \begin{gather*}
       (a,b) \odot (c,d) =
    \begin{cases}
      \{(a,d)\} &\text{ if } b=c,\\
      \emptyset &\text{ otherwise},
    \end{cases}\\
    \src((a,b)) = (a,a),\qquad
  \tgt((a,b))= (b,b),\qquad (a,b)^\inv=(b,a).
\end{gather*}
Pair groupoids give rise to algebras of binary relations, see
Example~\ref{ex:rel-dedekind} below.
\end{example}

The final example in this section builds on the path categories from
Example~\ref{ex:free-cat}. It is relevant for modelling higher
homotopies in the context of $(n,p)$-catoids and $(\omega,p)$-catoids
and categories.

\begin{example}\label{ex:free-groupoid}
 The free groupoid generated by a digraph is
  obtained by adding formal inverses to each edge and considering
  paths up to simplification. For instance, if $a,b,c,d$ are
  appropriate edges, then $abc^{-1}cb^{-1}d=ad$, see
  also~\cite[Chapter 4]{Higgins71}.
\end{example}


\section{Modal Quantales}\label{S:modal-quantales}

The second ingredient of the convolution algebras in this article are
quantales, more specifically quantales that extend not only the
composition and unit structure of catoids, categories and groupoids,
but also their source and target structure. It has been shown
in~\cite{FahrenbergJSZ21a} that source and target maps extend to
domain and codomain maps on quantales, which further allow defining
modal diamond and box operators on them. The resulting quantales with
axioms for domain and codomain operators are therefore known as modal
quantales~\cite{FahrenbergJSZ22a}. In this section we recall their
definition and basic properties.  Most of this development has been
formalised with Isabelle~\cite{CalkS23}.

A \emph{quantale} $(Q,\le,\cdot,1)$ is a complete lattice $(Q,\le)$
with an associative composition $\cdot$, which preserves all sups in
both arguments and has a unit $1$.  See~\cite{Rosenthal90} for an
introduction. We write $\Sup$, $\sup$, $\Inf$ and $\inf$ for sups,
binary sups, infs and binary infs in a quantale, and $\bot$ and $\top$
for the smallest and greatest element, respectively. A
\emph{subidentity} of a quantale $Q$ is an element $\alpha\in Q$ such
that $\alpha\le 1$.

A quantale is \emph{distributive} if its underlying lattice is
distributive and \emph{boolean} if its underlying lattice is a boolean
algebra, in which case we write $-$ for boolean complementation.

We henceforth use greek letters for elements of quantales to
distinguish them from elements of catoids, and we often write
$\alpha\beta$ for $\alpha\cdot\beta$.

\begin{example}\label{ex:boolean-quantale}
  We need the \emph{quantale of booleans}
  $2$, which is a boolean quantale with carrier set
  $\{0,1\}$, order $0<1$, $\max$ as binary sup, $\min$ as binary inf
  and composition, $1$ as its unit and $\lambda x.\ 1-x$ as
  complementation. It allows constructing powerset quantales
  over catoids and categories, using $2$ as a value algebra.
\end{example}

A \emph{Kleene star} $(-)^\ast:Q\to Q$ can be defined on any
quantale $Q$, for $\alpha^0 =1$ and $\alpha^{i+1} = \alpha\alpha^i$,
as
\begin{equation*}
  \alpha^\ast = \Sup_{i\ge 0} \alpha^i.
\end{equation*}

A \emph{domain quantale}~\cite{FahrenbergJSZ22a} is a quantale $Q$
with an operation $\dom:Q\to Q$ such that, for all
$\alpha,\beta\in Q$,
\begin{gather*}
  \alpha\le \dom(\alpha)\alpha,\qquad
  \dom(\alpha\dom(\beta))=\dom(\alpha\beta),\qquad \dom(\alpha)\le
  1,\\
  \dom(\bot)=\bot,\qquad \dom(\alpha\lor \beta) = \dom(\alpha)\lor
  \dom(\beta).
\end{gather*}

These domain axioms are known from domain
semirings~\cite{DesharnaisS11}, see also
Section~\ref{S:globular-semiring}. We call the first axiom the
\emph{absorption axiom}. The second expresses \emph{locality} of
domain. The third is the \emph{subidentity} axiom, the fourth the
\emph{bottom} axiom and the final the \emph{(binary) sup} axiom.  Most
properties of interest translate from domain semirings to domain
quantales.  An equational absorption law $\dom(\alpha)\alpha=\alpha$
is derivable.

We define $Q_\dom = \{\alpha\in Q\mid \dom(\alpha)=\alpha\}$.  As for
catoids, $Q_\dom= \dom(Q)$ holds because $\dom\circ\dom =\dom$ is
derivable from the domain quantale axioms. Moreover, $(Q_\dom,\le)$
forms a bounded distributive lattice with $\lor$ as binary sup,
$\cdot$ as binary inf, $\bot$ as least element and $1$ as greatest
element. We call $Q_\dom$ the lattice of \emph{domain elements} or
simply \emph{domain algebra}. In a boolean quantale, $Q_\dom$ is the
set of all subidentities and hence a complete boolean algebra.

Quantales are closed under opposition, which exchanges the arguments
in compositions. A \emph{codomain quantale} $(Q,\cod)$ is then a
domain quantale $(Q^{\mathit{op}},\dom)$.  Further, a \emph{modal
  quantale}~\cite{FahrenbergJSZ22a} is a domain and codomain quantale
$(Q,\le,\cdot,1,\dom,\cod)$ that satisfies the \emph{compatibility}
axioms
\begin{equation*}
  \dom\circ \cod = \cod\qquad\text{ and }\qquad \cod\circ\dom = \dom.
\end{equation*}
These guarantee that $Q_\dom = Q_\cod$, a set which we denote by $Q_0$
by analogy to catoids.
  
\begin{example}\label{ex:catoid-quantale}
  Let $(C,\odot,\src,\tgt)$ be a local catoid. Then $\Pow C$ can be
  equipped with a modal quantale structure. The monoidal structure is
  given by the extended composition
  $\odot: \Pow C\times \Pow C \to \Pow C$ of the catoid and the set
  $C_0$. Its lattice structure is given by $\subseteq$, $\bigcup$ and
  $\bigcap$. Its domain and codomain structure is given by
  $\dom(X)=\src(X)$ and $\cod(X)=\tgt(X)$, the images of any set
  $X\subseteq C$ with respect to $s$ and $t$~\cite{FahrenbergJSZ22a}.
  As a powerset quantale, that is, a quantale on a power set, $\Pow C$
  is in fact an atomic boolean quantale. Note that locality of $C$ is
  needed for locality of $\Pow C$. This result has the following
  instances, among others.
  \begin{enumerate}
  \item Every category $C$ extends to a modal quantale on $\Pow C$.
  \item The path category over any digraph, more specifically,
    extends to a modal quantale at powerset level. Domain and
  codomain elements are sets of vertices, $1$ is the set $V$ of all
  vertices of the digraph. In the context of rewriting, such quantales
  allow reasoning about sets of rewriting paths and in particular
  shapes of rewriting diagrams.
\item Every pair groupoid on the set $X$ extends to a modal quantale
  of binary relations on $X$ with the standard relational domain and
  codomain maps $\dom(R)=\{(x,x)\mid \exists y.\ (x,y)\in R\}$ and
  $\cod(R)=\{(y,y)\mid \exists x.\ (x,y)\in R\}$, and with $\odot$
  extended to the relational composition
  $R\cdot S=\{(x,y)\mid \exists z. (x,z)\in R\land (z,y)\in S\}$.  The
  quantalic unit is the identity relation $\{(x,x)\mid x\in
  X\}$. Quantales of binary relations and similar algebras can
    be used to reason about rewrite relations and once again about
    shapes of rewrite diagrams~\cite{DoornbosBW97,Struth06}.
  \item The shuffle catoid on $\Sigma^\ast$, considered in
    Example~\ref{ex:shuffle-catoid} extends to the commutative
    quantale of shuffle languages on $\Sigma$, a standard model of
    interleaving concurrency in computing. The quantalic composition
    is the shuffle product
    $X\| Y = \bigcup \{x \| y\mid x \in X \land y \in Y\}$ of
    languages, which are subsets of $\Sigma^\ast$. Its monoidal unit
    is $\{\varepsilon\}$, the domain/codomain structure is therefore
    trivial.
  \end{enumerate}
\end{example}
For further examples of modal quantales and their underlying catoids
see Section~\ref{S:globular-convolution} and~\cite{FahrenbergJSZ22a}.

\begin{remark}\label{R:locality}
  Locality of catoids in the form
  $x\odot y \neq \emptyset \Leftrightarrow \tgt(x)\src(y)\neq\emptyset$,
  as in Lemma~\ref{lemma:msg-loc-var}, corresponds to
  \begin{equation*}
    \alpha\cdot \beta\neq \bot \Leftrightarrow \cod(\alpha)\cdot \dom(\beta)\neq \bot
  \end{equation*}
  in modal quantales. This is a consequence of locality of $\dom$ and
  $\cod$ in modal semirings~\cite{DesharnaisS11} and modal
  quantales. In modal quantales, it is even equivalent to
  locality of $\dom$ and $\cod$. Yet the more precise locality
  property
  $\tgt(x)\src(y)\neq\emptyset \Leftrightarrow \tgt(x)=\src(y)$ does not hold in all modal quantales.

  Consider for instance the path category over the digraph
  $v_1\stackrel{e_1}{\longleftarrow}
  v_2\stackrel{e_2}{\longrightarrow} v_3\stackrel{e_3}{\longrightarrow
  } v_4$ and the sets of paths $X=\{(v_2,e_1,v_1),(v_2,e_2,v_3)\}$ and
  $Y=\{(v_3,e_3,v_4)\}$. Then
  $\cod(X)=\{v_1,v_3\}\neq \{v_3\}= \dom(Y)$ whereas
  $X\cdot Y=\{(v_2,e_2,v_3,e_3,v_4)\}\neq \emptyset$.

  Nevertheless, for any local catoid $C$ with elements $a$ and $b$, 
  \begin{align*}
    \cod(\{a\}) \cap \dom(\{b\}) \neq \emptyset
    &\Leftrightarrow \{\tgt(a)\} \cap \{\src(b)\}\neq \emptyset\\
    &\Leftrightarrow \{\tgt(a)\}=\{\src(b)\}\\
    &\Leftrightarrow \cod(\{a\}) = \dom(\{b\})
\end{align*}
holds at least for the atoms $\{a\}$ and $\{b\}$ in the quantale
$\Pow C$.
\end{remark}


\section{Dedekind Quantales}\label{S:modular-quantales}

In the previous section we have seen how catoids and categories give
rise to modal quantales at powerset level. One may therefore wonder
how the inverse structure of groupoids is reflected in quantales. In a
slightly different setting, J\'onsson and Tarski have already given an
answer, extending groupoids to (powerset) relation
algebras~\cite{JonssonT52} along the lines outlined in the previous
section, so that the inverse of the groupoid corresponds to the
converse of the relation algebra. Yet this correspondence ignores the
source/target and domain/codomain structures. To translate their
results to quantales, we consider Dedekind quantales: quantales with
an involution that satisfies the Dedekind law from relation
algebra. Interestingly, domain and codomain operations can be defined
explicitly in Dedekind quantales, whereas they need to be axiomatised
in weaker kinds of quantales. For applications in higher-dimensional rewriting
along the lines of~\cite{CalkGMS20}, quantales are combined with
Dedekind quantales in Section~\ref{S:n-structures}. This yields
$(n,p)$-quantales and $(\omega,p)$-quantales, which are related to
$(n,p)$- and $(n,\omega)$-catoids via correspondence proofs.

Dedekind quantales are single-object versions of the modular
quantaloids studied by Rosenthal~\cite{Rosenthal96}, but much of the
material introduced in this section is new.  All proofs in this
section (except for Lemma~\ref{lemma:residual-law}) can be found in
Appendix~\ref{A:proofs} and our Isabelle theories~\cite{CalkS23}
(including the proofs for this Lemma).  Isabelle has also been
instrumental in finding the counterexamples in this section.

An \emph{involutive quantale}~\cite{MulveyWP92} is a quantale $Q$ with
an operation $(-)^\conv:Q\to Q$ that satisfies
  \begin{equation*}
    \alpha^{\conv\conv} = \alpha,\qquad
     (\Sup A)^\conv = \Sup\{\alpha^\conv \mid \alpha \in A\},\qquad
    (\alpha\beta)^\conv = \beta^\conv \alpha^\conv.
  \end{equation*}

  Involution thus formalises opposition within the language of
  quantales.

\begin{remark}\label{rem:backhouse}
  Replacing the first two axioms by
  $\alpha^\conv \le \beta \Leftrightarrow \alpha \le
  \beta^\conv$ yields an equivalent
  axiomatisation. Involution is therefore self-adjoint.
  \end{remark}

\begin{lemma}\label{lemma:invol-props}
   In every involutive quantale, the following properties hold:
  \begin{enumerate}
  \item $\alpha\le \beta\Rightarrow \alpha^\conv \le \beta^\conv$,
  \item $(\alpha\sup \beta)^\conv = \alpha^\conv \sup \beta^\conv$,
  \item $(\Inf A)^\conv = \Inf\{\alpha^\conv \mid \alpha \in A\}$ and
    $(\alpha\inf \beta)^\conv = \alpha^\conv \inf \beta^\conv$,
  \item $\bot^\circ =\bot$, $1^\circ = 1$ and $\top^\circ = \top$,
    \item $\alpha^\conv \inf  \beta = \bot \Leftrightarrow \alpha \inf \beta^\conv =
      \bot$,
\item $\alpha^{\ast\conv} = \alpha^{\conv\ast}$.
  \end{enumerate}
\end{lemma}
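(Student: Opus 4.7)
The plan is to derive the six items essentially in the order (2), (1), (3), (4), (5), (6), since each builds on the previous. All six reduce to elementary manipulations using only the three defining axioms of involution plus the monoid and lattice structure; there is no deep obstacle, only some bookkeeping.

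For (2), I would simply apply the infinitary sup axiom $(\Sup A)^\conv = \Sup\{\alpha^\conv\mid\alpha\in A\}$ to the two-element set $A=\{\alpha,\beta\}$. Item (1) then follows from (2) via the standard trick: $\alpha\le\beta$ iff $\alpha\sup\beta=\beta$, so $\beta^\conv=(\alpha\sup\beta)^\conv=\alpha^\conv\sup\beta^\conv$, which gives $\alpha^\conv\le\beta^\conv$. For the first half of (3), I would argue directly: if $\beta$ is a lower bound of $\{\alpha^\conv\mid\alpha\in A\}$ then applying (1) and $\alpha^{\conv\conv}=\alpha$ shows that $\beta^\conv$ is a lower bound of $A$, hence $\beta^\conv\le\Inf A$, and applying (1) once more yields $\beta\le(\Inf A)^\conv$. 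Combined with the fact that (1) makes $(\Inf A)^\conv$ itself a lower bound of $\{\alpha^\conv\mid\alpha\in A\}$, this identifies it as the infimum. The binary case is the specialisation to a two-element set. (Alternatively, one can invoke Remark~\ref{rem:backhouse}: self-adjoint maps preserve both sups and infs.)

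For (4), $\bot^\conv=\bot$ follows from $\bot=\Sup\emptyset$ and the sup axiom. For $1^\conv=1$, I would exploit uniqueness of the monoid unit: from $\alpha=\alpha\cdot 1$ we get $\alpha^\conv = 1^\conv\cdot\alpha^\conv$, and symmetrically $\alpha^\conv=\alpha^\conv\cdot 1^\conv$; since $(-)^\conv$ is a bijection on $Q$, this says $1^\conv$ is a two-sided unit, hence equals $1$. For $\top^\conv=\top$, use $\top=\Sup Q$ plus the fact that $(-)^\conv$ is a bijection, so $\{\alpha^\conv\mid\alpha\in Q\}=Q$. Item (5) is then immediate: apply $(-)^\conv$ to the equation $\alpha^\conv\inf\beta=\bot$, use (3) in binary form and $\bot^\conv=\bot$ from (4), together with $\alpha^{\conv\conv}=\alpha$.

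Finally, (6) follows from the definition $\alpha^\ast=\Sup_{i\ge 0}\alpha^i$ and the sup axiom, provided $(\alpha^i)^\conv=(\alpha^\conv)^i$ for every $i$, which I would prove by a quick induction: the base case uses $1^\conv=1$ from (4), and the step uses $(\alpha^{i+1})^\conv=(\alpha\cdot\alpha^i)^\conv=(\alpha^i)^\conv\cdot\alpha^\conv$ together with the observation that $\alpha$ and its powers commute, so that $(\alpha^\conv)^i\cdot\alpha^\conv=(\alpha^\conv)^{i+1}$. The only mildly subtle step in the whole proof is (3), since preservation of infs is not assumed axiomatically; everything else is a direct computation, and the entire lemma has been verified in Isabelle.
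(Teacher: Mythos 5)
Your proof is correct, and for items (1), (2), (4) and (5) it matches the paper's argument in the appendix essentially step for step. Two points of (minor) divergence are worth noting. For (3), the paper runs the Galois connection of Remark~\ref{rem:backhouse} directly, chaining $\gamma\le (\alpha\inf\beta)^\conv \Leftrightarrow \gamma^\conv\le\alpha\inf\beta \Leftrightarrow \dots$; your lower-bound argument via (1) and $\alpha^{\conv\conv}=\alpha$ is the same idea unfolded, and you even flag the adjunction as an alternative, so nothing is gained or lost. For (6), the paper's primary proof goes through the Kleene-algebra star unfold and induction axioms (proving $\alpha^\ast\le\alpha^{\conv\ast\conv}$ and its converse), whereas you use the explicit definition $\alpha^\ast=\Sup_{i\ge 0}\alpha^i$ together with an induction showing $(\alpha^i)^\conv=(\alpha^\conv)^i$; the paper explicitly mentions this as "an alternative inductive proof", so your route is sanctioned. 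Your induction does quietly rely on the auxiliary fact $\alpha\,\alpha^i=\alpha^i\alpha$ to reconcile $(\alpha^\conv)^i\cdot\alpha^\conv$ with the convention $\beta^{i+1}=\beta\beta^i$, but you acknowledge this and it is a one-line induction, so the argument is complete. Your uniqueness-of-unit argument for $1^\conv=1$ is also sound (surjectivity of $(-)^\conv$ makes $1^\conv$ a two-sided unit on all of $Q$), though the shorter computation $1=(1\cdot 1^\conv)^{\conv\conv\;}$-style manipulation used implicitly by the paper gets there faster.
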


A \emph{Dedekind quantale} is an involutive quantale in which the
\emph{Dedekind law}
\begin{equation*}
  \alpha\beta\inf \gamma \le (\alpha \inf \gamma\beta^\conv)(\beta\inf \alpha^\conv \gamma).
\end{equation*}
holds. It is standard in relation algebra. Next we present an
alternative definition.

\begin{lemma}\label{lemma:modular-dedekind}
  An involutive quantale is a Dedekind quantale if and only if
  the following \emph{modular law} holds:
  \begin{equation*}
    \alpha\beta\inf \gamma \le (\alpha \inf \gamma\beta^\conv)\beta.
  \end{equation*}
\end{lemma}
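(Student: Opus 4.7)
The equivalence splits into two implications, one trivial and one requiring a small trick with involution.

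For the easy direction (Dedekind implies modular), since $\beta \inf \alpha^\conv\gamma \le \beta$ and composition is monotone in its right argument (as a sup-preserving operation), we get $(\alpha \inf \gamma\beta^\conv)(\beta \inf \alpha^\conv\gamma) \le (\alpha \inf \gamma\beta^\conv)\beta$. Chaining with the Dedekind law yields the modular law at once.

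For the converse, the plan is to first establish a ``right-handed'' version of the modular law, namely
\begin{equation*}
  \alpha\beta \inf \gamma \le \alpha(\beta \inf \alpha^\conv\gamma),
\end{equation*}
by substituting $\alpha \mapsto \beta^\conv$, $\beta\mapsto \alpha^\conv$, $\gamma\mapsto \gamma^\conv$ in the modular law, applying $(-)^\conv$ to both sides, and simplifying using Lemma~\ref{lemma:invol-props}(1)--(3) together with the involutive quantale axioms $\alpha^{\conv\conv}=\alpha$ and $(\alpha\beta)^\conv = \beta^\conv\alpha^\conv$. This step is the only real computation and should be entirely routine; monotonicity of $(-)^\conv$ is what lets us transport the inequality.

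With both forms available, the argument for Dedekind is a short chase. Set $\delta = \alpha \inf \gamma\beta^\conv$. The modular law gives $\alpha\beta \inf \gamma \le \delta\beta$, and since also $\alpha\beta \inf \gamma \le \gamma$, we have $\alpha\beta \inf \gamma \le \delta\beta \inf \gamma$. Applying the right-handed modular law to $\delta\beta \inf \gamma$ produces $\delta\beta \inf \gamma \le \delta(\beta \inf \delta^\conv\gamma)$. Finally, $\delta \le \alpha$ implies $\delta^\conv \le \alpha^\conv$ by Lemma~\ref{lemma:invol-props}(1), hence $\beta \inf \delta^\conv\gamma \le \beta \inf \alpha^\conv\gamma$; monotonicity of composition then yields $\delta(\beta \inf \delta^\conv\gamma) \le (\alpha \inf \gamma\beta^\conv)(\beta \inf \alpha^\conv\gamma)$, which is the Dedekind law.

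The only potential obstacle is the derivation of the right-handed modular law; once it is in place, the combination with the original modular law is a two-line monotonicity argument. No Heyting structure or distributivity is invoked, so the proof goes through in any involutive quantale.
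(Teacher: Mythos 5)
Your proof is correct and follows essentially the same route as the paper's: the easy direction by monotonicity, and the converse by deriving the dual (right-handed) modular law via involution and then chaining the two modular laws together. The only cosmetic difference is the order in which the two laws are applied (the paper uses the dual form first, then the original; you do the reverse), which yields the same Dedekind inequality by a mirror-image computation.
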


The modular law is standard in relation algebra as well. 

An extensive list of properties of Dedekind quantales can be found in
our Isabelle theories. Here we only list some structurally important
ones.

\begin{lemma}\label{lemma:modular-props}
  The following properties hold in every Dedekind quantale:
  \begin{enumerate}
  \item the strong Gelfand property
    $\alpha \le \alpha\alpha^\conv \alpha$,
  \item \emph{Peirce's law} $\alpha\beta\inf \gamma^\conv = \bot \Leftrightarrow \beta\gamma \inf
    \alpha^\conv= \bot$,
  \item the \emph{Schröder laws}
    $\alpha\beta\inf \gamma = \bot \Leftrightarrow \beta \inf
    \alpha^\conv \gamma=\bot \Leftrightarrow \alpha \inf
    \gamma\beta^\conv = \bot$.
  \end{enumerate}
\end{lemma}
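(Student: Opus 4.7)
The plan is to derive (1) by a single instantiation of the Dedekind law, prove the Schröder equivalences (3) by combining two Dedekind applications with the fact that $\bot$ absorbs composition, and obtain Peirce's law (2) as a corollary of (3) via Lemma~\ref{lemma:invol-props}(5).

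For the strong Gelfand property, I would instantiate the Dedekind law with $\beta = 1$ and $\gamma = \alpha$. Using $1^\conv = 1$ from Lemma~\ref{lemma:invol-props}(4), this gives
\[ \alpha \;=\; \alpha \cdot 1 \inf \alpha \;\le\; (\alpha \inf \alpha \cdot 1^\conv)(1 \inf \alpha^\conv \alpha) \;=\; \alpha\,(1 \inf \alpha^\conv \alpha) \;\le\; \alpha\alpha^\conv \alpha, \]
where the final step uses $1 \inf \alpha^\conv\alpha \le \alpha^\conv\alpha$ together with monotonicity of composition in its right argument.

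For the Schröder laws, the key observation is that in any quantale the bottom element absorbs composition, so $\mu\nu \neq \bot$ forces $\mu \neq \bot$ and $\nu \neq \bot$. Applied contrapositively to the Dedekind inequality $\alpha\beta \inf \gamma \le (\alpha \inf \gamma\beta^\conv)(\beta \inf \alpha^\conv \gamma)$, this already gives the two implications $\alpha \inf \gamma\beta^\conv = \bot \Rightarrow \alpha\beta \inf \gamma = \bot$ and $\beta \inf \alpha^\conv \gamma = \bot \Rightarrow \alpha\beta \inf \gamma = \bot$. The converse directions arise from two further instantiations of Dedekind: the substitution $(\alpha,\beta,\gamma) \mapsto (\gamma,\beta^\conv,\alpha)$ together with $\beta^{\conv\conv}=\beta$ produces $\alpha \inf \gamma\beta^\conv \le (\gamma \inf \alpha\beta)(\beta^\conv \inf \gamma^\conv \alpha)$, so the left-hand side being non-bottom forces $\alpha\beta \inf \gamma \neq \bot$, and the symmetric substitution $(\alpha,\beta,\gamma) \mapsto (\alpha^\conv,\gamma,\beta)$ yields $\beta \inf \alpha^\conv\gamma \le (\alpha^\conv \inf \beta\gamma^\conv)(\gamma \inf \alpha\beta)$, handling the second equivalence. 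Finally, Peirce's law follows by replacing $\gamma$ with $\gamma^\conv$ in the Schröder equivalence $\alpha\beta \inf \gamma = \bot \Leftrightarrow \alpha \inf \gamma\beta^\conv = \bot$, which yields $\alpha\beta \inf \gamma^\conv = \bot \Leftrightarrow \alpha \inf (\beta\gamma)^\conv = \bot$, and then invoking Lemma~\ref{lemma:invol-props}(5) to rewrite the latter as $\alpha^\conv \inf \beta\gamma = \bot$, i.e.\ $\beta\gamma \inf \alpha^\conv = \bot$.

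No step is genuinely hard; the only delicate matter is spotting the correct Dedekind substitutions, which become natural once one observes that the single upper bound in Dedekind's inequality converts into a two-sided biconditional precisely because $\bot$ absorbs composition on both sides, so that zero on the right forces zero on the left, and symmetric instantiations supply the reverse direction.
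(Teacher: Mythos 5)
Your proof is correct, but it reverses the paper's order of derivation for (2) and (3) and uses slightly different machinery throughout. For the strong Gelfand property the paper instantiates the \emph{modular} law ($\alpha = 1\alpha\inf\alpha \le \top\alpha\inf\alpha \le (\top\inf\alpha\alpha^\conv)\alpha$), whereas you instantiate the Dedekind law directly with $\beta=1$, $\gamma=\alpha$; both are one-line arguments. The more substantive difference is in (2) and (3): the paper proves Peirce's law first, using the \emph{equational} modular laws $\alpha\beta\inf\gamma = (\alpha\inf\gamma\beta^\conv)\beta\inf\gamma$ and its dual established inside the proof of Lemma~\ref{lemma:modular-dedekind}, and then obtains the Schröder laws as corollaries of Peirce plus properties of converse. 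You instead prove the Schröder equivalences directly from four instantiations of the Dedekind inequality together with the observation that $\bot$ absorbs composition (which is legitimate, since composition preserves all sups and hence the empty sup), and then recover Peirce by substituting $\gamma^\conv$ for $\gamma$ and invoking Lemma~\ref{lemma:invol-props}(5). Your route has the small advantage of needing only the Dedekind law as stated, without importing the auxiliary equational forms from the earlier proof; the paper's route keeps each step a direct application of a previously displayed identity. All your substitutions check out, including $(\alpha,\beta,\gamma)\mapsto(\gamma,\beta^\conv,\alpha)$ and $(\alpha,\beta,\gamma)\mapsto(\alpha^\conv,\gamma,\beta)$, so there is no gap.
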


The strong Gelfand property (the name has been borrowed
from~\cite{PalmigianoR11}) has been used previously by Ésik and
co-workers to axiomatise relational converse in semigroups and Kleene
algebras, where infs are not
available~\cite{BloomES95,EB95}. Similarly, and for the same reason,
it appears in globular $n$-Kleene algebras and their
applications in higher-dimensional rewriting~\cite{CalkGMS20} . 
This is our main reason for
including it here and revisiting it in
Section~\ref{S:globular-ka}. Peirce's law and the Schröder laws are
standard in relation algebra. Indeed, Dedekind quantales bring us
close to relation algebras~\cite{Tarski41,HirschH02,Maddux06}; but see
our more precise comparison below.

\begin{example}\label{ex:rel-dedekind}
  Let $G$ be a groupoid. Then $\Pow G$ forms a relation algebra over
  $G$~\cite{JonssonT52}. As a powerset algebra, the underlying lattice
  of $\Pow G$ is complete (even boolean and atomic). Hence every
  groupoid extends to a Dedekind quantale in which
  $\dom(X)= G_0 \cap XX^\conv$ and $\cod(X)=G_0\cap X^\conv X$ for
  every $X\subseteq G$.  The derivation of the Dedekind law follows
  more or less that of Jónsson and Tarski for relation
  algebras~\cite{JonssonT52}. It needs neither the boolean algebra
  structure present in relation algebras nor the completeness of the
  lattice of the Dedekind quantale. We present a more general
  derivation of this law in
  Theorem~\ref{theorem:groupoid-dedekindquantale} and revisit this
  example more formally in Corollary~\ref{corollary:groupoid-lift}.

  Again there are interesting instances.
  \begin{enumerate}
  \item Each free groupoid generated by some digraph extends to a
    Dedekind quantale on sets of paths.  The converses in this
    quantale are sets of formal inverses in the groupoid.
  \item The pair groupoid on a set $X$ extends to the Dedekind
    quantale of binary relations on $\Pow (X\times X)$ with standard
    relational converse $R^\conv = \{(y,x)\mid (x,y)\in R\} $ extended
    from the inverse operation on the pair groupoid. 
  \end{enumerate}
\end{example}

Next we turn to the domain and codomain structure on Dedekind
quantales. By contrast to general quantales, it can be defined
explicitly in involutive quantales using
\begin{equation*}
  \dom(\alpha)= 1 \inf \alpha\alpha^\conv\qquad\text{ and } \qquad \cod(\alpha)= 1\inf
  \alpha^\conv \alpha,
\end{equation*}
or alternatively $\dom(\alpha)=1\inf \alpha\top$ and
$\cod(\alpha)=1\inf \top \alpha$, as in relation algebra.

But only Dedekind quantales are expressive enough to make these two
definitions coincide, derive the natural domain and codomain laws
needed for defining suitable modal operators (as in
Section~\ref{S:modalities} below) and establish the correspondence
with respect to groupoids in Sections~\ref{S:dedekind-correspondence}
and \ref{S:msg-quantale}.

\begin{proposition}\label{prop:modular-modal}
  Every Dedekind quantale is a modal quantale.
\end{proposition}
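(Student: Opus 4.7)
The plan is to take $\dom(\alpha) = 1 \inf \alpha\alpha^\conv$ and $\cod(\alpha) = 1 \inf \alpha^\conv\alpha$ as definitions and verify the modal quantale axioms from the Dedekind axiom. I would build the proof on three preparatory observations. First, a subidentity toolkit: for $\delta \le 1$, the strong Gelfand property (Lemma~\ref{lemma:modular-props}(1)) forces $\delta^\conv = \delta$, $\delta\delta = \delta$, and on subidentities meet agrees with composition, $\delta_1 \inf \delta_2 = \delta_1\delta_2$. Second, the alternative form $\dom(\alpha) = 1 \inf \alpha\top$, whose nontrivial direction $1\inf\alpha\top \le \alpha\alpha^\conv$ follows from a single application of Dedekind with $a=\alpha$, $b=\top$, $c = 1 \inf \alpha\top$. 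Third, absorption $\alpha \le \dom(\alpha)\alpha$, which is directly the modular law (Lemma~\ref{lemma:modular-dedekind}) with $a = 1$, $b = \alpha$, $c = \alpha$: $\alpha = 1\cdot\alpha \inf \alpha \le (1 \inf \alpha\alpha^\conv)\alpha$. These last two ingredients yield an adjunction-style characterization $\dom(\gamma) \le \delta \Leftrightarrow \gamma \le \delta\gamma$ for any subidentity $\delta$; the forward direction is absorption, the reverse uses $\gamma\gamma^\conv \le \delta\gamma\gamma^\conv \le \delta\top$ together with $\dom(\delta) = \delta$.

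With these tools in hand, most of the domain axioms are almost immediate. Subidentity and $\dom(\bot) = \bot$ are trivial from the definition. The binary sup axiom falls out cleanly from the adjunction: since
\begin{equation*}
(\dom(\alpha) \sup \dom(\beta))(\alpha \sup \beta) \ge \dom(\alpha)\alpha \sup \dom(\beta)\beta = \alpha \sup \beta,
\end{equation*}
the adjunction applied to the subidentity $\dom(\alpha)\sup\dom(\beta)$ yields $\dom(\alpha \sup \beta) \le \dom(\alpha) \sup \dom(\beta)$, and monotonicity of $\dom$ gives the reverse inclusion.

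The main obstacle is the locality axiom $\dom(\alpha\dom(\beta)) = \dom(\alpha\beta)$. Rewriting via the toolkit gives $\dom(\alpha\dom(\beta)) = 1 \inf \alpha\dom(\beta)\alpha^\conv$ (using $\dom(\beta)^\conv = \dom(\beta) = \dom(\beta)^2$), so the $\le$ inclusion is immediate from $\dom(\beta) \le \beta\beta^\conv$. For the reverse inclusion, absorption in its equational form gives $\alpha\beta = \alpha\dom(\beta)\beta$, so $\alpha\beta\beta^\conv\alpha^\conv = \alpha\dom(\beta)\beta\beta^\conv\alpha^\conv \le \alpha\dom(\beta)\top$, and taking meet with $1$ together with the alternative form of $\dom$ yields $\dom(\alpha\beta) \le 1 \inf \alpha\dom(\beta)\top = \dom(\alpha\dom(\beta))$. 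The codomain axioms transfer through the involution, which swaps sides and exchanges $\dom$ with $\cod$, and the compatibility axioms reduce to the toolkit observation that $\cod(\alpha)$ and $\dom(\alpha)$ are subidentities, hence fixed points of both $\dom$ and $\cod$.
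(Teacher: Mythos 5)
Your proposal is correct and follows the same strategy as the paper: define $\dom(\alpha)=1\inf\alpha\alpha^\conv$ and $\cod(\alpha)=1\inf\alpha^\conv\alpha$ explicitly, establish the alternative form $1\inf\alpha\top$ and absorption via the modular/Dedekind law, and verify the remaining axioms from there (your locality argument is essentially the paper's, which passes through $\dom(\beta)\top=\beta\top$). The only real divergence is local: for the binary sup axiom and the compatibility laws the paper computes directly with the fixpoint property $\dom(\dom(\alpha)\sup\dom(\beta))=\dom(\alpha)\sup\dom(\beta)$ and the strong Gelfand property, whereas you route both through the subidentity toolkit and the adjunction $\dom(\gamma)\le\delta\Leftrightarrow\gamma\le\delta\gamma$ — a slightly more conceptual packaging of the same facts, and both are fine.
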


A proof using the explicit definitions of $\dom$ and $\cod$ can be
found in Appendix~\ref{A:proofs}.  In addition, we list some
properties that are not available in modal quantales.

\begin{lemma}\label{lemma:quantale-dom-props}
  In every Dedekind quantale, the following properties hold:
  \begin{enumerate}
  \item $\dom(\alpha)=1\inf \alpha\top$,
  \item $\dom(\alpha)\top = \alpha\top$,
  \item $(\dom(\alpha))^\conv = \dom(\alpha)$,
  \item  $\dom(\alpha^\conv) = \cod(\alpha)$. 
  \end{enumerate}
\end{lemma}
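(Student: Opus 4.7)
The plan is to prove the four identities in order, exploiting the explicit definition $\dom(\alpha)=1\inf\alpha\alpha^\conv$ together with the Dedekind/modular law. Items (3) and (4) are essentially bookkeeping, whereas (1) is the one place where the Dedekind law really bites; (2) then follows quickly once (1) is in hand.

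For (3), I would apply the involution to the defining equation, using the conjugation laws of Lemma~\ref{lemma:invol-props}: part (3) of that lemma gives $(1\inf\alpha\alpha^\conv)^\conv = 1^\conv \inf (\alpha\alpha^\conv)^\conv$, and part (4) together with $(\alpha\beta)^\conv=\beta^\conv\alpha^\conv$ and involutive idempotence reduce this to $1\inf \alpha\alpha^\conv = \dom(\alpha)$. For (4), I would just substitute into the definition: $\dom(\alpha^\conv)=1\inf\alpha^\conv(\alpha^\conv)^\conv = 1\inf\alpha^\conv\alpha = \cod(\alpha)$.

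The main work is (1). One inclusion is monotone: since $\alpha^\conv\le\top$, we have $\alpha\alpha^\conv\le\alpha\top$, hence $\dom(\alpha)=1\inf\alpha\alpha^\conv\le 1\inf\alpha\top$. For the reverse inclusion $1\inf\alpha\top\le\alpha\alpha^\conv$, I would instantiate the Dedekind law with $\beta=\top$ and $\gamma=1$:
\begin{equation*}
\alpha\top\inf 1 \le (\alpha\inf 1\cdot\top^\conv)(\top\inf\alpha^\conv\cdot 1).
\end{equation*}
Using $\top^\conv=\top$ (Lemma~\ref{lemma:invol-props}(4)) together with $\alpha\le\top$ and $\alpha^\conv\le\top$, both factors on the right collapse, giving $\alpha\cdot\alpha^\conv$. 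Intersecting with $1$ then yields $1\inf\alpha\top\le 1\inf\alpha\alpha^\conv=\dom(\alpha)$, which is the content of (1). The hard part is really just spotting this instantiation of the Dedekind law; once chosen, the simplification is mechanical.

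Finally, for (2), the forward inequality uses (1): $\dom(\alpha)\le\alpha\top$, so $\dom(\alpha)\top\le\alpha\top\cdot\top=\alpha\top$ because $\top\top=\top$ in any unital quantale. The reverse inequality uses the equational absorption law $\dom(\alpha)\alpha=\alpha$ available in every domain quantale: multiplying on the right by $\top$ gives $\alpha\top=\dom(\alpha)\alpha\top\le\dom(\alpha)\top$. Combining the two directions yields (2), completing the lemma.
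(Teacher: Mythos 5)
Your proof is correct and follows essentially the same route as the paper: items (3) and (4) are the same direct computations from the explicit definition of $\dom$, item (2) uses the same absorption argument, and for item (1) you instantiate the Dedekind law at $\beta=\top$, $\gamma=1$, which is exactly the instantiation the paper makes of its (equivalent) dual equational modular law, merely split into two inequalities instead of a single chain of equalities.
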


Next we present a natural example of a modal quantale that is not
Dedekind.

\begin{example}\label{ex:path}
  The law (1) fails in the modal path quantales from
  Example~\ref{ex:catoid-quantale}, where formal inverses are not
  assumed in the underlying catoid. Recall that, in this model, $1$
  is the set $V$ of all vertices of the digraph. Thus
  $V\cap P\top =\emptyset$ unless the set $P$ of paths contains a path
  of length one and $\dom(P)=\emptyset$ if and only if
  $P=\emptyset$. 
\end{example}

The next lemma shows more directly that Dedekind quantales are more
expressive than involutive quantales

\begin{lemma}\label{lemma:invol-props-counter}
  Neither the strong Gelfand property nor the modular law holds in all
  involutive quantales.
\end{lemma}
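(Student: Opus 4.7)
The plan is to exhibit a small explicit involutive quantale in which both properties fail simultaneously. I would take the three-element chain $Q = \{\bot, a, 1\}$ with $\bot < a < 1$, composition determined by requiring $1$ to be the unit, $\bot$ to be absorbing, and $a\cdot a = \bot$, together with the identity involution $\alpha^\conv = \alpha$. First I would verify that this is a bona fide involutive quantale: associativity is immediate since any product involving two copies of $a$ collapses to $\bot$; preservation of all sups reduces, on a finite chain, to preservation of binary sups together with $\alpha\cdot\bot = \bot$, which is routine case analysis; and the identity is a valid involution since $\cdot$ is evidently commutative, so $(\alpha\beta)^\conv = \alpha\beta = \beta\alpha = \beta^\conv\alpha^\conv$.

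Then the strong Gelfand property fails at $\alpha = a$, because $\alpha\alpha^\conv\alpha = a\cdot a\cdot a = \bot\cdot a = \bot$ while $a\not\le \bot$. The modular law fails at $\alpha = 1$, $\beta = \gamma = a$, since the left-hand side $\alpha\beta \inf \gamma$ equals $a \inf a = a$, whereas the right-hand side $(\alpha \inf \gamma\beta^\conv)\beta$ reduces to $(1 \inf a\cdot a)\cdot a = (1\inf \bot)\cdot a = \bot$, so $a\not\le \bot$ witnesses the failure.

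If one prefers a counterexample already in play in the paper, the shuffle quantale of Example~\ref{ex:shuffle-catoid} (equipped with identity involution, which is legal since shuffle is commutative) serves the same purpose: strong Gelfand fails at $\alpha = \{a\}$ because $\{a\}\|\{a\}\|\{a\} = \{aaa\}$ does not contain the word $a$, and the modular law fails at $\alpha = \beta = \{a\}$, $\gamma = \{aa\}$, where $(\{a\}\|\{a\})\cap\{aa\} = \{aa\}$ but $(\{a\}\cap (\{aa\}\|\{a\}))\|\{a\} = (\{a\}\cap \{aaa\})\|\{a\} = \emptyset$. The only real work is routine axiom verification; as the authors themselves note that Isabelle was instrumental in locating such counterexamples, no conceptual obstacle is expected, and the main judgement call is simply which of these two candidate structures to present.
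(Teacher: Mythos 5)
Your primary counterexample is exactly the one the paper uses: the three-element chain $\bot < a < 1$ with $aa=\bot$ and identity involution, with the strong Gelfand property failing at $a$ and the modular law failing at $\alpha=1$, $\beta=\gamma=a$. The verifications are correct, so the proposal matches the paper's proof; the shuffle-quantale alternative is a valid bonus but not needed.
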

\begin{proof}
  In the involutive quantale $\bot < a < \top = 1$ with multiplication
  $aa= \bot$ (the rest being fixed) and $(-)^\conv = \mathit{id}$, the
  strong Gelfand property fails because $aa^\conv a =\bot < a$. The
  modular and (Dedekind law) fail in this involutive quantale because
  \begin{equation*}
    (1 \inf aa)a (1 \inf aa)(a \inf 1a)= 0 <1a\inf a = a.\qedhere
    \end{equation*}
\end{proof}
The following example refines this result. 
\begin{example}
  Adding the strong Gelfand property to the involutive quantale axioms
  does not imply the modular law. The involutive quantale defined by
  \begin{equation*}
    \begin{tikzcd}[column sep = tiny, row sep = tiny]
      & \top &\\
      1\arrow[ur,-] && a\arrow[ul,-]\\
      &\bot\arrow[ul,-]\arrow[ur,-] &
    \end{tikzcd}
    \qquad\qquad
    \begin{array}{c|cccc}
      \cdot & \bot  & 1 & a & \top\\
      \hline
      \bot & \bot & \bot &\bot  &\bot\\
      1 & \bot & 1 &a &\top \\
      a & \bot & a&a & a\\
      \top & \bot & \top & a & \top
    \end{array}
  \end{equation*}
  and $(-)^\conv = \mathit{id}$ satisfies the strong Gelfand property,
  but
  \begin{equation*}
    1a\inf \top = a> \bot = (1 \inf \top a)a. 
  \end{equation*}
\end{example}

The domain and codomain operations of modal semirings, even finite
ones, need not be uniquely determined~\cite[Lemma
6.4]{DesharnaisS11}. A modal semiring can therefore carry several
domain/codomain structures. Yet they are uniquely determined in
modal semirings over boolean algebras. Finite modal semirings and
finite modal quantales are the same. One may therefore ask whether
there can be other domain/codomain structures on Dedekind quantales
than that given by the explicit definitions above. The answer is the
same as for modal semirings.

For the sake of this argument, we call \emph{modal Dedekind
  quantale} a Dedekind quantale that is also a modal quantale, that
is, it is equipped with a map $\delta^-$ (domain) and a map
$\delta^+$ (codomain) that satisfy axioms from
Section~\ref{S:modal-quantales}. We start with a technical lemma.

\begin{lemma}\label{lemma:dom-conv}
  In every modal Dedekind quantale,
  \begin{equation*}
    \delta^-(\alpha)^\conv =\delta^-(\alpha)\qquad\text{
      and }\qquad \delta^+(\alpha)^\conv = \delta^+(\alpha).
    \end{equation*}
\end{lemma}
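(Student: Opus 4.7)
The plan is to prove the stronger claim that every subidentity in a Dedekind quantale is self-converse, and then invoke the subidentity axiom $\delta^{-}(\alpha),\delta^{+}(\alpha) \le 1$ from the modal quantale structure. Because the statement concerns an abstract modal Dedekind quantale (in which $\delta^{-}$ and $\delta^{+}$ are not yet known to coincide with the explicit domain and codomain operations $\dom$ and $\cod$ of Lemma~\ref{lemma:quantale-dom-props}), one cannot just quote part~(3) of that lemma; the argument has to rely only on the domain/codomain axioms and on identities valid in every Dedekind quantale.

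Concretely, fix a subidentity $\sigma \le 1$. The strong Gelfand property from Lemma~\ref{lemma:modular-props}(1) gives
\begin{equation*}
  \sigma \le \sigma\,\sigma^{\conv}\,\sigma.
\end{equation*}
Since multiplication is monotone in both arguments in a quantale and $\sigma \le 1$, the right-hand side is bounded above by $1\cdot\sigma^{\conv}\cdot 1 = \sigma^{\conv}$, so $\sigma \le \sigma^{\conv}$. Applying converse to this inequality and using monotonicity of $(-)^{\conv}$ together with $\sigma^{\conv\conv} = \sigma$ (Lemma~\ref{lemma:invol-props}(1) and the involution axioms) yields $\sigma^{\conv} \le \sigma$. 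Hence $\sigma^{\conv} = \sigma$.

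It then suffices to note that $\delta^{-}(\alpha) \le 1$ and $\delta^{+}(\alpha) \le 1$, which are instances of the subidentity axiom for domain and codomain in a modal quantale (Section~\ref{S:modal-quantales}, where the corresponding axiom for codomain is obtained by opposition). Applying the self-converse result to $\sigma := \delta^{-}(\alpha)$ and $\sigma := \delta^{+}(\alpha)$ gives the two claimed identities. The only mildly delicate point is recognising that, even though the two candidate domain maps on the quantale may differ, the argument uses none of the explicit definition of $\dom$, only the abstract axioms and strong Gelfand; there is no real obstacle beyond keeping that distinction straight.
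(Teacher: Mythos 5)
Your proof is correct and is essentially the paper's own argument: the paper likewise applies the strong Gelfand property to the domain element, bounds the outer factors by $1$ using the subidentity axiom to get $\delta^-(\alpha)\le\delta^-(\alpha)^\conv$, and then obtains the reverse inequality (via the self-adjunction of Remark~\ref{rem:backhouse}, which is interchangeable with your monotonicity-plus-involution step). Your extra framing of the argument as ``every subidentity is self-converse'' is a harmless mild generalisation of the same computation.
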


\begin{lemma}\label{lemma:dom-not-unique}
 There is a modal distributive Dedekind quantale
 in which $\delta^-\neq \dom$ and $\delta^+\neq \cod$. 
\end{lemma}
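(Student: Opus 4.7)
The plan is to exhibit a small concrete counterexample: a three-element chain carrying a commutative Dedekind quantale structure that admits a modal structure distinct from the canonical one. This mirrors the Desharnais--Struth non-uniqueness result for modal semirings, and is possible because absorption $\alpha\le\delta^-(\alpha)\alpha$ only pins down $\delta^-(\alpha)$ up to a ``choice'' among those subidentities that act as a left unit on $\alpha$.

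Concretely, I would take $Q=\{\bot,a,1\}$ with the linear order $\bot<a<1$ (so the lattice is distributive), composition determined by $1\alpha=\alpha 1=\alpha$, $aa=a$ and $\bot$ absorbing, and involution $(-)^\conv=\id_Q$. The first step is to verify that $Q$ is an involutive quantale: commutativity plus a trivial involution makes the three involution axioms automatic, and associativity and sup-preservation reduce to a handful of cases. The second step is to check the modular law of Lemma~\ref{lemma:modular-dedekind}, which by that lemma is equivalent to the Dedekind law; since the only nontrivial multiplication is $aa=a$ and $a$ lies below $1$, all instances collapse to $a\le a$ or to trivial inequalities.

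Once $Q$ is established as a distributive Dedekind quantale, I would compute the canonical domain from the explicit formula and record $\dom(\bot)=\bot$, $\dom(a)=1\inf aa=a$, $\dom(1)=1$ (and symmetrically for $\cod$). For the alternative modal structure I would set
\[
  \delta^-(\bot)=\bot,\qquad \delta^-(a)=1,\qquad \delta^-(1)=1,
\]
and take $\delta^+$ equal to $\delta^-$. Then I would verify, case by case, the domain quantale axioms of Section~\ref{S:modal-quantales}: absorption holds because $1\cdot\alpha=\alpha$; subidentity, bottom and binary sup preservation are immediate; locality $\delta^-(\alpha\delta^-(\beta))=\delta^-(\alpha\beta)$ reduces to checking that both sides are $\bot$ when $\alpha=\bot$ or $\beta=\bot$ and $1$ otherwise, which holds because $Q$ has no nonzero zero divisors. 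The compatibility axioms $\delta^-\circ\delta^+=\delta^+$ and $\delta^+\circ\delta^-=\delta^-$ are trivial since $\delta^-=\delta^+$ and their image is contained in the fixpoint set $\{\bot,1\}$. Because the involution is the identity, the self-conjugacy condition of Lemma~\ref{lemma:dom-conv} is automatic. Finally $\delta^-(a)=1\neq a=\dom(a)$ witnesses the claim, and the dual statement follows at once.

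The main obstacle is the verification of the modular law in the presence of the idempotent subidentity $a$, since it is precisely the existence of this ``second unit on $a$'' that both makes the alternative $\delta^-$ work and could, in principle, clash with the Dedekind law. I expect this to remain a short finite case check but it is the one place where the construction could fail; if this particular quantale turned out to violate modularity I would fall back on a slightly larger distributive candidate, e.g.\ by adjoining a top element above $1$ with $\top\top=\top$, and rerun the same verification, with Isabelle~\cite{CalkS23} used as a backstop for the routine case analysis.
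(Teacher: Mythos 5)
Your proposal is correct and is essentially the paper's own proof: the same three-element chain $\bot<a<\top=1$ with $aa=a$, identity involution, and $\delta^-(a)=\delta^+(a)=1$, contrasted with $\dom(a)=1\inf aa^\conv=a$. The modular-law worry you flag is harmless, since your multiplication is just binary inf, so the modular law holds with equality.
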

\begin{proof}
  In the modal distributive
  Dedekind quantale with $\bot < a < \top = 1$, multiplication $aa=a$,
  $\delta^-(a)=1=\delta^+(a)$ and $(-)^\conv = \id$, we have
  $\delta^-(a) = \delta^+(a) =\top \neq a = 1 \inf aa^\conv = 1 \inf a^\conv
  a$.
\end{proof}

\begin{remark}
  In any boolean modal quantale $Q$, the set $Q_0$ equals the
  boolean subalgebra of all subidentities, and domain and codomain are
  uniquely defined.  The proof for modal semirings~\cite[proof of
  Theorem 6.12]{DesharnaisS11} translates directly. Thus, in any
  boolean modal Dedekind quantale,
  $\delta^-(\alpha) = 1 \inf \alpha\alpha^\conv= \dom(x)$ and
  $\delta^+(\alpha)= 1 \inf \alpha^\conv \alpha= \cod(x)$.
\end{remark}

Boolean Dedekind quantales are strongly related to relation algebras.

\begin{lemma}\label{lemma:residual-law}
  In any boolean Dedekind quantale, $(-\alpha)^\conv = -(\alpha^\conv)$ and the
  \emph{residual law}  $\alpha^\conv -(\alpha\beta) \le -\beta$ is derivable.
\end{lemma}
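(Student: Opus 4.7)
The plan is to treat the two claims separately, deriving each from earlier results in the paper plus the standard characterisation of complements in a boolean algebra.

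For the identity $(-\alpha)^\conv = -(\alpha^\conv)$, I would exploit the uniqueness of the boolean complement: it suffices to show that $(-\alpha)^\conv$ satisfies the two defining equations for the complement of $\alpha^\conv$, namely $\alpha^\conv \inf (-\alpha)^\conv = \bot$ and $\alpha^\conv \sup (-\alpha)^\conv = \top$. Both follow by pushing the involution through the lattice operations: by Lemma~\ref{lemma:invol-props}(3) and (2), $\alpha^\conv \inf (-\alpha)^\conv = (\alpha \inf -\alpha)^\conv = \bot^\conv$ and $\alpha^\conv \sup (-\alpha)^\conv = (\alpha \sup -\alpha)^\conv = \top^\conv$, and Lemma~\ref{lemma:invol-props}(4) gives $\bot^\conv = \bot$ and $\top^\conv = \top$.

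For the residual law $\alpha^\conv -(\alpha\beta) \le -\beta$, I would rephrase the inequality in boolean form: in any boolean algebra, $\gamma \le -\beta$ is equivalent to $\gamma \inf \beta = \bot$. So the goal reduces to showing $\alpha^\conv -(\alpha\beta) \inf \beta = \bot$. This is an instance of the middle Schröder law from Lemma~\ref{lemma:modular-props}(3), $\beta \inf \alpha^\conv \gamma = \bot \Leftrightarrow \alpha\beta \inf \gamma = \bot$, taken with $\gamma := -(\alpha\beta)$: the right-hand side then becomes $\alpha\beta \inf -(\alpha\beta) = \bot$, which holds by definition of the boolean complement.

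I do not anticipate a real obstacle here; both parts are short once the right tools are identified. The only subtlety worth flagging is that the first claim needs involution to be a full lattice homomorphism (preserving both $\inf$ and $\sup$ and the bounds $\bot,\top$), which is exactly what Lemma~\ref{lemma:invol-props} provides, and that the Schröder laws used in the second part rely on the Dedekind (rather than merely involutive) structure, so neither claim can be weakened to general involutive quantales.
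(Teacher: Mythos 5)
Your proof is correct and follows the route the paper intends (the paper defers this proof to its Isabelle theories, but your argument is the standard relation-algebra one built from exactly the lemmas the paper provides): uniqueness of boolean complements together with Lemma~\ref{lemma:invol-props}(2)--(4) gives $(-\alpha)^\conv=-(\alpha^\conv)$, and the middle Schr\"oder equivalence of Lemma~\ref{lemma:modular-props}(3) with $\gamma:=-(\alpha\beta)$, combined with $\gamma\le-\beta\Leftrightarrow\gamma\inf\beta=\bot$, gives the residual law. One small inaccuracy in your closing remark: the identity $(-\alpha)^\conv=-(\alpha^\conv)$ uses only the involution axioms and so already holds in any \emph{boolean involutive} quantale; it is only the residual law that genuinely needs the Dedekind structure via the Schr\"oder laws.
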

A residual law appears in Tarski's original axiomatisation of relation
algebra~\cite{Tarski41}. A boolean Dedekind quantale is thus a
relation algebra over a complete lattice, and a relation algebra a
boolean Dedekind quantale in which only finitary sups and infs are
required to exist. Relation algebras are formed over boolean algebras
that need not be complete.

Finally we relate the explicit definition of $\dom$ and $\cod$ in
Dedekind quantales with a definition previously used in higher-dimensional
rewriting in a Kleene-algebraic structure where infs are not
available.

\begin{remark}
  The conditions $\dom(\alpha)\le \alpha\alpha^\conv$ and
  $\cod(\alpha)\le \alpha^\conv \alpha$ have been used for higher-dimensional
  rewriting with globular $2$-Kleene algebras~\cite{CalkGM21}, see
  also Section~\ref{S:globular-ka}.  They are consequences of the
  explicit definition of domain and codomain in Dedekind quantales. In
  involutive modal quantales, each of them implies the strong Gelfand
  property, $\dom(\alpha)\le \alpha\alpha^\conv$ is equivalent to
  $\dom(\alpha)= 1 \inf \alpha\alpha^\conv$ and
  $\cod(\alpha) = 1 \inf \alpha^\conv \alpha$ is equivalent to
  $\cod(\alpha)\le \alpha^\conv \alpha$. Yet none of these laws need
  to hold in such quantales: in the modal distributive Dedekind
  quantale used in the proof of Lemma~\ref{lemma:dom-not-unique},
  \begin{equation*}
    dom(a)=\cod(a)=\top > a = aa^\conv = a^\conv a.
  \end{equation*}
\end{remark}


\section{Higher catoids}\label{S:2-lr-msg}

In this section we present our first conceptual contribution: axioms
for $n$-catoids and $\omega$-catoids that generalise definitions of
strict $n$-categories and $\omega$-categories. These are the
structures from which we develop axioms for higher quantales in
Section~\ref{S:2-quantales}, using the proofs in
Section~\ref{S:globular-convolution}. Mac Lane has outlined
axiomatisations of single-set $2$-categories, $n$-categories and
$\omega$-categories, imposing a $2$-category structure on each pair of
single-set categories $C_i$ and $C_j$ for
$0\le i<j < \omega$~{\cite[Chapter XII]{MacLane98}}.  Here,
$\omega$-category means strict globular $\infty$-category. Similar
single-set approaches appear, for instance,
in~\cite{BrownH81,Street87,Steiner04}. We adapt MacLane's axioms to
catoids. We start from a uniform axiomatisation that includes the case
of $n$ or $\omega$, but then focus mainly on
$\omega$-catoids. As previously, most of the material in this
  section has been formalised with Isabelle~\cite{CalkS24}, and
  Isabelle has been instrumental in analysing and reducing this
  axiomatisation.

For an ordinal $\alpha\in \{0,1,\dots,\omega\}$, an \emph{$\alpha$-catoid}
is a structure $(C,\odot_i,\src_i,\tgt_i)_{0\le i<\alpha}$ such that
each $(C,\odot_i,\src_i,\tgt_i)$ is a catoid and these structures
interact as follows:
\begin{itemize}
\item for all $i\neq j$, 
\begin{gather*}
  \src_i\circ \src_j = \src_j\circ \src_i,\qquad
  \src_i\circ \tgt_j = \tgt_j\circ \src_i,\qquad
  \tgt_i\circ \tgt_j = \tgt_j\circ \tgt_i,\\
   \src_i(x\odot_j y) \subseteq \src_i(x) \odot_j \src_i(y),\qquad
   \tgt_i(x\odot_j y) \subseteq \tgt_i(x) \odot_j \tgt_i(y),
\end{gather*}
\item and for all $i<j$, 
\begin{gather*}
  (w \odot_j x)\odot_i (y\odot_j z) \subseteq (w\odot_i y)\odot_j (x\odot_i z),\\
  \src_j\circ \src_i = \src_i,\qquad \src_j\circ \tgt_i = \tgt_i,\qquad
  \tgt_j\circ \src_i = \src_i,\qquad \tgt_j\circ \tgt_i = \tgt_i.
\end{gather*}
\end{itemize}

An \emph{$\alpha$-category} is a local functional $\alpha$-catoid,
that is, each $(C,\odot_i,\src_i,\tgt_i)$ is local and
functional.

As the $(C,\odot_i,\src_i,\tgt_i)$ are catoids,
$\src_i(C)=\tgt_i(C)$ for each $i<\alpha$ by
Lemma~\ref{lemma:fix-id} . We write $C_i$ for this set of \emph{$i$-cells} of $C$.
We also write $\Delta_i$
for the domain of definition of $\odot_i$ and refer to the source and
target cells of cells as (\emph{lower} and \emph{upper})
\emph{faces}.

The axioms after the first bullet point impose that source and target
maps at each dimension $i$ are catoid (endo)morphisms of the catoid
$(C,\odot_j,\src_j,\tgt_j)$ at each dimension $j\neq i$.  In the local
functional case of $\alpha$-categories, these morphisms become
functors, as expected.  (In the $\omega$-category of globular sets,
the axioms in the first line are known as \emph{globular laws}.)

The \emph{interchange} axioms in the first line of the second bullet
point ensure that $\odot_i$ is a catoid
bi-morphism with respect to $\odot_j$, for all $i<j$.

The \emph{whisker} axioms in the second line of this bullet point,
together with the catoid laws in Lemma~\ref{lemma:mm-props}(1), imply
that $\src_ j(x) = \tgt_j(x) =x$ for all $x\in C_i$, and thus
$x\in C_j$ for all $j\ge i$. Lower dimensional cells thus remain
units in higher dimensions and
\begin{equation*}
  C_0\subseteq C_1\subseteq C_2\subseteq \dots \subseteq C.
\end{equation*}
We may thus regard $i$-cells as degenerate cells or whiskers in which
sources and targets at each dimension greater than $i-1$
coincide. Further, Lemma~\ref{lemma:orth-id} implies that all lower
dimensional cells are orthogonal idempotents with respect to higher
compositions.  For all $i,j\le k$,
\begin{equation*}
    \src_i(x) \odot_k \src_j(y) =
    \begin{cases}
      \{\src_i(x)\} & \text{ if } \src_i(x)=\src_j(y),\\
      \emptyset & \text{ otherwise}.
    \end{cases}
\end{equation*}
All higher-dimensional compositions of lower dimensional cells are
therefore trivial.  The structure of units across dimensions can be
seen in Example~\ref{ex:local-2-catoid-not-strong} below.

We now turn to $\omega$-catoids and mention $n$-catoids only
occasionally.  In this context, one often adds an axiom guaranteeing
that for all $x\in C$ there exists and $i, \omega$ such that
$s_i(x)=x=t_i(x)$, that is all cells in $C$ have finite dimension. In
the special case of strict $\omega$-categories, for instance,
Steiner~\cite{Steiner04}, adds such an axiom, whereas
Street~\cite{Street87} and Mac Lane~\cite{MacLane98} do not feature
this condition and Brown and Higgins mention both
variants~\cite{BrownH81}. Our $\omega$-categories, as local functional
$\omega$-catoids, are therefore the same as $\omega$-categories in the
sense of Street and Mac Lane.  We need the above condition in the
following construction, but not beyond this paragraph.  Because of the
whisker axioms and the catoid axioms, the chain
$C_0\hookrightarrow C_1\hookrightarrow C_2\hookrightarrow\dots
\hookrightarrow C$ is a filtration of $\omega$-catoids (when
$\alpha=\omega$). Catoid $C$ is the (co)limit of an increasing chain
of sub-$\omega$-catoids $C_n$, which themselves are $n$-catoids.
See~\cite[Proposition 2.3]{Steiner04} for a related discussion on
$\omega$-categories.

$\omega$-Catoids, form a category in several ways.  Their morphisms
are catoid morphisms that preserve all source and target maps $\src_i$
and $\tgt_i$ and all compositions $\odot_i$ at each dimension
$i<\omega$. These kind of categories are also categories with respect
to bounded morphisms.

As in the one-dimensional case, $\omega$-catoids and $n$-catoids
generalise $\omega$-categories and $n$-categories to multioperational
compositions, and beyond functionality and locality.  Yet the
underlying $\omega$-graphs or $n$-graphs, where compositions are
forgotten, are the same for both.  The globular cell shape of strict
higher categories is therefore present in the corresponding catoids,
too.

\begin{lemma}\label{lemma:2-msg-props}
  In every $\omega$-catoid, the \emph{globular laws} hold. 
  For all $0\le i<j< \omega$,
\begin{equation*}
     \src_i\circ \src_j = \src_i,\qquad \src_i\circ \tgt_j =
     \src_i,\qquad \tgt_i\circ \tgt_j = \tgt_i,\qquad \tgt_i\circ \src_j= \tgt_i.
   \end{equation*}
 \end{lemma}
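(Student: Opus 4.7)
The plan is straightforward: each identity is obtained by first applying the commutation axioms from the first bullet point of the $\omega$-catoid definition (which hold for all $i\neq j$, hence in particular for $i<j$ with either order), and then applying the whisker axioms from the second bullet point (which hold for $i<j$) to collapse the higher-dimensional map. The key observation is that the commutation axioms move the $j$-dimensional map $\src_j$ or $\tgt_j$ to the outside of the composition, where it lands on a map of dimension $i$; the whisker axioms then state that exactly in this situation the $j$-dimensional map acts as the identity.

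Concretely, for the first identity, I would compute
\begin{equation*}
  \src_i \circ \src_j = \src_j \circ \src_i = \src_i,
\end{equation*}
using the commutation axiom $\src_i\circ\src_j = \src_j\circ\src_i$ for $i\neq j$, followed by the whisker axiom $\src_j\circ\src_i = \src_i$ for $i<j$. The remaining three identities are handled identically:
\begin{equation*}
  \src_i \circ \tgt_j = \tgt_j \circ \src_i = \src_i,\qquad
  \tgt_i \circ \tgt_j = \tgt_j \circ \tgt_i = \tgt_i,\qquad
  \tgt_i \circ \src_j = \src_j \circ \tgt_i = \tgt_i,
\end{equation*}
where each first equality is a commutation axiom (using in the last case that the axiom $\src_i\circ\tgt_j = \tgt_j\circ\src_i$ for $i\neq j$ also yields $\tgt_i\circ\src_j = \src_j\circ\tgt_i$ by instantiating with $i$ and $j$ swapped), and each second equality is one of the four whisker axioms for $i<j$.

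There is no real obstacle here; the lemma is essentially a restatement of the whisker axioms after one use of commutation. The only point requiring a word of care is the last identity, where one must note that the commutation axiom $\src_k\circ\tgt_\ell = \tgt_\ell\circ\src_k$ quantifies over $k\neq\ell$, so it may be used both in the form stated and with $k,\ell$ exchanged, supplying the version $\tgt_i\circ\src_j = \src_j\circ\tgt_i$ needed above.
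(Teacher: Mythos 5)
Your proof is correct and takes essentially the same route as the paper, whose entire proof is the remark that the identities are immediate from the globular (commutation) axioms and the whisker axioms --- exactly the two-step rewriting you carry out. Your observation that the last identity requires instantiating the commutation axiom $\src_k\circ\tgt_\ell = \tgt_\ell\circ\src_k$ with the roles of $k$ and $\ell$ exchanged is the one genuine subtlety, and you handle it correctly.
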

 The proofs are immediate from the globular and whisker axioms.

 \begin{example}\label{ex:globular-shape}
   The globular cell shape of $\omega$-catoids can be visualised,
   in two dimensions, as
 \begin{equation*}
 \begin{tikzcd}
  \src_0(x) \arrow[r, bend left, "\src_1(x)", ""{name=U,inner sep=1pt,below}]
  \arrow[r, bend right, "\tgt_1(x)"{below}, ""{name=D,inner sep=1pt}]
  & \tgt_0(x)
  \arrow[Rightarrow, from=U, to=D, "x"]
\end{tikzcd}
\end{equation*}
The relationships between $0$-cells and $1$-cells in this diagram can
be calculated using Lemma~\ref{lemma:2-msg-props}.
\end{example}

In light of Example~\ref{ex:bounded-morphism}, the morphism laws of
$\omega$-categories are rather strong.
 \begin{lemma}\label{lemma:omega-category-strong}
   In every $\omega$-category, for $0\le i< j<\omega$, the following
   \emph{strong morphism} laws hold: 
   \begin{equation*}
     \src_j(x\odot_i y) = \src_j(x) \odot_i \src_j(y) \qquad\text{ and }\qquad
  \tgt_j(x\odot_i y) =\tgt_j(x) \odot_i \tgt_j(y).
  \end{equation*}
 \end{lemma}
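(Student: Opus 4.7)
The forward inclusions $\src_j(x\odot_i y) \subseteq \src_j(x)\odot_i \src_j(y)$ and $\tgt_j(x\odot_i y) \subseteq \tgt_j(x)\odot_i \tgt_j(y)$ are immediate instances of the $\alpha$-catoid morphism axioms (the fifth line in the first bullet point of the definition) applied to the pair of dimensions $i\neq j$. What remains is to establish the reverse inclusions in the locally functional setting of an $\omega$-category. The plan is to use functionality at dimension $i$ to reduce both equalities to a biconditional on definedness.

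Since $(C,\odot_i,\src_i,\tgt_i)$ is functional, each of $x\odot_i y$, $\src_j(x)\odot_i \src_j(y)$ and $\tgt_j(x)\odot_i \tgt_j(y)$ contains at most one element, and hence so do the direct images $\src_j(x\odot_i y)$ and $\tgt_j(x\odot_i y)$. Combined with the forward inclusion just noted, it therefore suffices to prove that $\src_j(x)\odot_i \src_j(y)\neq \emptyset$ implies $x\odot_i y\neq \emptyset$, and analogously for $\tgt_j$; equality of singletons (or of empty sets) then follows automatically.

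For this definedness implication, invoke locality at dimension $i$, which holds because $C$ is an $\omega$-category. By Lemma~\ref{lemma:msg-loc-var}, non-emptiness of $\src_j(x)\odot_i \src_j(y)$ gives $\tgt_i(\src_j(x)) = \src_i(\src_j(y))$. The globular laws of Lemma~\ref{lemma:2-msg-props}, applied for $i<j$, yield $\tgt_i\circ \src_j = \tgt_i$ and $\src_i\circ \src_j = \src_i$, so this identity collapses to $\tgt_i(x)=\src_i(y)$. Locality then gives $x\odot_i y\neq \emptyset$, as needed. The argument for $\tgt_j$ is strictly symmetric, using $\tgt_i\circ \tgt_j=\tgt_i$ and $\src_i\circ \tgt_j=\src_i$ from the same lemma.

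The main obstacle is purely one of bookkeeping: identifying which instances of the globular identities in Lemma~\ref{lemma:2-msg-props} collapse $\tgt_i\circ \src_j$ to $\tgt_i$ and $\src_i\circ \src_j$ to $\src_i$ when $i<j$, and correctly matching the symmetric pair for $\tgt_j$. Once these are in place, the proof reduces to the clean observation that, in a local functional catoid, a forward inclusion between at-most-singleton composition sets combined with a matching biconditional on definedness forces equality.
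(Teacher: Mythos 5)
Your proof is correct and follows essentially the same route as the paper's: case-split on whether $\src_j(x)\odot_i\src_j(y)$ is empty, use weak locality plus the globular laws to transfer definedness down to $\tgt_i(x)=\src_i(y)$, invoke locality of $\odot_i$ to get $\Delta_i(x,y)$, and conclude by functionality together with the weak morphism inclusion (with the $\tgt_j$ case by symmetry/opposition). The only nitpick is that the implication $\Delta_i(\src_j(x),\src_j(y))\Rightarrow \tgt_i(\src_j(x))=\src_i(\src_j(y))$ is the weak locality axiom of catoids itself rather than Lemma~\ref{lemma:msg-loc-var}, but this does not affect the argument.
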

 \begin{proof}
   First we derive the law for $\src_j$. Suppose the right-hand side
   is equal to $\emptyset$.  Then
   $\src_j(x\odot_i y) = \src_j(x) \odot_i \src_j(y)$, using the
   morphism axiom
   $\src_j(x\odot_i y)\subseteq \src_j(x)\odot_i
   \src_i(y)$. Otherwise, if $\Delta_i(\src_j(x),\src_j(y))$, then
   $\tgt_i(\src_j(x))= \src_i(\src_j(x))$ and therefore
   $\tgt_i(x)=\src_i(y)$ by Lemma~\ref{lemma:2-msg-props}. Locality
   then implies that $\Delta_i(x,y)$, and
   $\src_j(x\odot_i y) = \src_j(x) \odot_i \src_j(y)$ follows from the
   morphism axiom for $\src_j$ and functionality. The proof for
   $\tgt_j$ follows by opposition.
 \end{proof}
 By contrast to the other morphism axioms, which require that if
 left-hand sides are defined, then so are right-hand sides, the strong
 morphism laws state that one side is defined if and only if the
other is. Accordingly, an $\omega$-catoid is \emph{strong} if it
satisfies the strong morphism laws.

\begin{corollary}\label{corollary:omega-category-strong}
  In every $\omega$-category, $\Delta_i(x,y) \Leftrightarrow
  \Delta_i(\src_j(x),\src_j(y)) \Leftrightarrow
  \Delta_i(\tgt_j(x),\tgt_j(y))$ for all $0\le i<j <\omega$. 
\end{corollary}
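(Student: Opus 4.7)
The plan is to derive the corollary as an essentially immediate consequence of the strong morphism laws established in Lemma~\ref{lemma:omega-category-strong}, together with the trivial observation that for any function $f$, the direct image $f(A)$ is empty if and only if the set $A$ is empty.

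First I would expand the definition of $\Delta_i$ and chain equivalences:
\begin{equation*}
\Delta_i(x,y) \iff x\odot_i y \neq \emptyset \iff \src_j(x\odot_i y) \neq \emptyset \iff \src_j(x) \odot_i \src_j(y) \neq \emptyset \iff \Delta_i(\src_j(x),\src_j(y)).
\end{equation*}
The first and last equivalences are just the definition of the definedness predicate. The middle equivalence involving direct images uses that $\src_j$ is a total map, so its image of a set is nonempty exactly when the set is. The central equivalence is precisely the strong morphism law $\src_j(x\odot_i y) = \src_j(x) \odot_i \src_j(y)$ from Lemma~\ref{lemma:omega-category-strong}, which is available because we are in an $\omega$-category. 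The equivalence $\Delta_i(x,y) \iff \Delta_i(\tgt_j(x),\tgt_j(y))$ follows by the same argument applied to the $\tgt_j$ half of Lemma~\ref{lemma:omega-category-strong}, or directly by invoking opposition.

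There is no real obstacle here: all the work has already been done inside Lemma~\ref{lemma:omega-category-strong}, where locality of $\odot_i$ and the globular laws of Lemma~\ref{lemma:2-msg-props} were combined to upgrade the morphism inclusion to an equality. The corollary merely records the ``definedness'' shadow of that equality. If anything, the one place to be careful is to note that the equality of sets on both sides of the strong morphism law can be read as an equivalence of nonemptiness conditions, which is legitimate precisely because $\src_j$ and $\tgt_j$ are ordinary functions and not multi-valued.
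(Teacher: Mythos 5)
Your proof is correct and matches the paper's (implicit) argument: the corollary is stated without proof precisely because it is the nonemptiness shadow of the strong morphism laws of Lemma~\ref{lemma:omega-category-strong}, read through the fact that the direct image of a set under the total map $\src_j$ (resp.\ $\tgt_j$) is empty exactly when the set is. Nothing further is needed.
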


\begin{example}\label{ex:strong-morphism}
  The strong morphism laws of $\omega$-categories can be explained
  diagrammatically for $2$-categories and in particular in $\Cat$,
  the category of all small categories, 
  where $\odot_1$ is the vertical composition of $2$-cells or natural
  transformations, and $\odot_0$ the horizontal
  one. Lemma~\ref{lemma:omega-category-strong},
  Corollary~\ref{corollary:omega-category-strong} and locality imply
  that the horizontal composition $x\odot_0 y$ is defined if and only
  if the horizontal compositions $\src_1(x)\odot_0 \src_1(y)$ and
  $\tgt_1(x)\odot_0 \tgt_1(y)$ are defined. In all cases, this means
  that $\tgt_0(x)=\src_0(y)$.
  \begin{equation*} 
    \begin{tikzcd}
      \src_0(x) \arrow[r, bend left, "\src_1(x)", ""{name=U,inner sep=1pt,below}]
      \arrow[r, bend right, "\tgt_1(x)"{below}, ""{name=D,inner sep=1pt}] 
      & \tgt_0(x) \arrow[Rightarrow, from=U, to=D, "x"] \arrow[r, bend left, "\src_1(y)", ""{name=UU,inner sep=1pt,below}]
      \arrow[r, bend right, "\tgt_1(y)"{below}, ""{name=DD,inner sep=1pt}]
      & \tgt_0(y) \arrow[Rightarrow, from=UU, to=DD, "y"]
    \end{tikzcd}
  \end{equation*}
  The morphism axioms
  $\src_1(x\odot_0 y) \subseteq \src_1(x)\odot_0\src_1(y)$ and
  $\tgt_1(x\odot_0 y) \subseteq \tgt_1(x)\odot_0\tgt_1(y)$ then become
  strong because the compositions $x\odot_0 y$,
  $\src_1(x)\odot_0\src_1(y)$ and $\tgt_1(x)\odot_0\tgt_1(y)$ are
  functional, each yields at most a single cell. The upper and lower
  faces $\src_1(x\odot_0 y)$ and $\tgt_1(x\odot_0 y)$ of $x\odot_0 y$
  must thus be equal to $\src_1(x)\odot_0\src_1(y)$ and
  $\tgt_1(x)\odot_0\tgt_1(y)$, respectively.
\end{example}

The next two examples explain the weak morphism laws in the absence of
locality or functionality.
 
\begin{example}\label{ex:local-2-catoid-not-strong}
  In the local $2$-catoid on the set $\{a,b,c\}$ with
  \begin{equation*}
    \begin{array}{c|cccc}
      & \src_0 & \tgt_0 & \src_1 & \tgt_1\\
      \hline
      a & b & b & a & a\\
      b & b & b & b & b\\
      c & b & b &  a & a
    \end{array}
    \qquad\qquad
    \begin{array}{c|ccc}
      \odot_0 & a & b & c\\
      \hline
      a & \{a,b\} & \{a\} & \{c\}\\
      b& \{a\} & \{b\} &\{c\}\\
      c & \{c\} &\{c\} & \{c\}
    \end{array}
    \qquad\qquad
    \begin{array}{c|ccc}
      \odot_1 & a & b & c\\
      \hline
      a & \{a\} & \emptyset & \{c\}\\
      b& \emptyset & \{b\} &\emptyset\\
      c & \{c\} &\emptyset & \{c\}
    \end{array}
  \end{equation*}
  we have
  $\src_1(a \odot_0 c) = \src_1(\{c\})= \{a\}\subset \{a,b\} =
  \{a\}\odot_0\{a\} = \src_1(a)\odot_0 \src_1(a)$ and the same
  inequality holds for $\tgt_1$ because $\src_1=\tgt_1$.  In this
  example, $c$ is a non-degenerate $2$-cell with $1$-faces $a$
  and $0$-faces $b$, while $a$ is a non-degenerate $1$-cell with
  $0$-faces $b$ and $b$ is a $0$-cell. So the strong morphism
  laws fail because $\odot_0$ may map to more than one cell. This
  degenerate situation can be depicted as
  
  \vspace{-1cm}
  \begin{equation*}
    \begin{tikzcd}
      b \arrow[loop, out=30, in=-30, distance=8em, swap, "a"] &&
      \arrow[loop, Rightarrow, out=20, in=-20, distance=5em,
      "c"]
    \end{tikzcd}
  \end{equation*}
  \vspace{-1cm}
  
  \noindent This $2$-catoid is functional with respect to
  $\odot_1$, but not with respect to $\odot_0$.
  
  It is also worth considering the unit structure given by the
  source and target maps and the whisker axioms, and its
  effect on $\odot_0$ and $\odot_1$. They determine all
  compositions except $a\odot_0 a$, $c\odot_0 c$ and $c\odot_1
  c$. The composition $\odot_1$ is trivial because of the
  whisker axioms.
\end{example}

\begin{example}\label{ex:functional-2-catoid-not-strong}
  In the functional $2$-catoid on the set $\{a,b\}$ with
  \begin{equation*}
    \begin{array}{c|cccc}
      & \src_0 &\tgt_0 & \src_1 & \tgt_1\\
      \hline
      a & b & b & b &b\\
      b & b & b & b & b
    \end{array}
    \qquad\qquad
    \begin{array}{c|ccc}
      \odot_0 & a & b\\
      \hline
      a & \emptyset& \{a\}\\
      b& \{a\} & \{b\} 
    \end{array}
    \qquad\qquad
    \begin{array}{c|ccc}
      \odot_1 & a & b\\
      \hline
      a & \{a\} & \{a\}\\
      b& \{a\} & \{b\}
    \end{array}
  \end{equation*}
  we have
  $\src_1(a\odot_0 a) = \src_1(\emptyset) = \emptyset \subset \{b\}=
  \{b\}\odot_0 \{b\} = \src_1(a)\odot\src_1(a)$. The corresponding
  inequality holds for $\tgt_1$ because $\src_1=\tgt_1$. This example
  is a monoid as a category with respect to $\odot_1$, and a ``broken
  monoid'', hence simply a graph, with respect to $\odot_0$, as
  $a\odot_0 a$ is undefined. Now, the strong morphism laws fail
  because the broken monoid is not local: $\src_0(a)=b=\tgt_0(a)$,
  but $a\odot_0 a=\emptyset$. So $x\odot_0y$ and therefore
  $\src_1(x\odot_0 y)$ may be $\emptyset$, whereas
  $\src_1(x)\odot_0\src_1(y)$ or $\tgt_1(x)\odot_0\tgt_1(y)$ are not.
\end{example}

Next we explain the weakness of the remaining morphism axioms.
First, we need a lemma.
\begin{lemma}\label{lemma:omega-cat-props}
  In every $\omega$-catoid, for all $0\le i < j <\omega$, if
  $\Delta_j(x,y)$, then
  \begin{equation*}
    \src_i(x\odot_j y) = \{\src_i(x)\}=\{\src_i(y)\}\qquad\text{ and }\qquad
    \tgt_i(x\odot_j y) = \{\tgt_i(x)\}=\{\tgt_i(y)\}. 
  \end{equation*}
\end{lemma}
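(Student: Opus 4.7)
Assume $\Delta_j(x,y)$, i.e.\ $x\odot_j y\neq \emptyset$. The strategy is to squeeze $\src_i(x\odot_j y)$ between a subset given by the morphism axiom and a singleton obtained from nonemptiness. The proof for $\tgt_i$ will follow by opposition, so I focus on $\src_i$.

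\textbf{Step 1: Identify the lower faces.} Weak locality for the catoid $(C,\odot_j,\src_j,\tgt_j)$ applied to the assumption $\Delta_j(x,y)$ gives $\tgt_j(x)=\src_j(y)$. Apply $\src_i$ to both sides and invoke the globular laws of Lemma~\ref{lemma:2-msg-props} (which apply because $i<j$): $\src_i\circ \tgt_j=\src_i$ and $\src_i\circ \src_j=\src_i$. Hence $\src_i(x)=\src_i(y)$.

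\textbf{Step 2: Evaluate the upper bound.} The morphism axiom for $\src_i$ with respect to $\odot_j$ gives
\begin{equation*}
  \src_i(x\odot_j y)\subseteq \src_i(x)\odot_j \src_i(y).
\end{equation*}
By the whisker axiom $\src_j\circ \src_i=\src_i$ (and its target counterpart), both $\src_i(x)$ and $\src_i(y)$ lie in $C_j$, the set of fixpoints of $\src_j$ and $\tgt_j$. Since $\src_i(x)=\src_i(y)$ by Step 1, Lemma~\ref{lemma:orth-id} applied within the catoid $(C,\odot_j,\src_j,\tgt_j)$ yields $\src_i(x)\odot_j \src_i(y)=\{\src_i(x)\}$. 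Thus $\src_i(x\odot_j y)\subseteq \{\src_i(x)\}$.

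\textbf{Step 3: Nonemptiness gives the reverse inclusion.} Since $x\odot_j y\neq\emptyset$, there exists $z\in x\odot_j y$, and then $\src_i(z)\in \src_i(x\odot_j y)$, so $\src_i(x\odot_j y)$ is nonempty. Combined with the upper bound of Step 2, this forces equality $\src_i(x\odot_j y)=\{\src_i(x)\}$, and Step 1 further gives $\{\src_i(x)\}=\{\src_i(y)\}$. The identities for $\tgt_i$ are obtained by the same argument in the opposite catoid, exchanging $\src$ and $\tgt$ and using the globular laws $\tgt_i\circ \tgt_j=\tgt_i$ and $\tgt_i\circ \src_j=\tgt_i$.

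The argument is essentially routine; the only mildly subtle point is recognising that $\src_i(x)$ and $\src_i(y)$ are $j$-units, so that the orthogonal-idempotent lemma for catoids applies to collapse the product $\src_i(x)\odot_j \src_i(y)$ to a singleton. No obstacle is expected.
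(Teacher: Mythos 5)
Your proof is correct and follows essentially the same route as the paper's: the morphism axiom $\src_i(x\odot_j y)\subseteq \src_i(x)\odot_j\src_i(y)$, nonemptiness of the left-hand side, and the triviality of $j$-compositions of $i$-cells. The only cosmetic difference is that you establish $\src_i(x)=\src_i(y)$ up front via weak locality and the globular laws, whereas the paper reads it off from the nonemptiness of the right-hand side together with the orthogonal-idempotent property of units.
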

A proof can be found in Appendix~\ref{A:proofs}.

\begin{example}\label{ex:hom2-weak}
  In $2$-catoids, for instance, weakness of
  $\src_0(x\odot_1 y)\subseteq \src_0(x)\odot_1\src_0(y)$ and
  $\tgt_0(x\odot_1 y)\subseteq \tgt_0(x)\odot_1\tgt_0(y)$ allows
  $x\odot_1 y$ to be undefined while both $\src_0(x)\odot_1\src_0(y)$ or
  $\tgt_0(x)\odot_1\tgt_0(y)$ are defined. The last two compositions
  are defined if $\src_0(x)$ is equal to $s_0(y)$, and $\tgt_0(x)$
  is equal to $\tgt_0(y)$, respectively. For the first one,
  $\tgt_1(x)$ must be equal to $\src_1(y)$, from which
  $\src_0(x)=\src_0(\tgt_1(x))=\src_0(\src_1(y))= s_0(y)$ and
  $\tgt_0(x)=\tgt_0(\tgt_1(x))= \tgt_0(\src_1(y))=\tgt_0(y)$ follow
  using the globular laws in Lemma~\ref{lemma:2-msg-props}. The left
  diagram below shows a situation where
  $\src_0(x\odot_1 y)\subset \src_0(x)\odot_1\src_0(y)$ because
  $\src_0(x)=\src_0(y)$ whereas $\tgt_1(x)\neq \src_1(y)$. The right
  diagram shows the opposite situation were
  $\tgt_0(x\odot_1 y)\subset\tgt_0(x)\odot_1\tgt_0(y)$ because
  $\tgt_0(x)=\tgt_0(y)$ whereas $\tgt_1(x)\neq \src_1(y)$. The
  middle diagram shows a situation where both sides are defined
  because $\tgt_1(x)= \src_1(y)$. The globular structure is imposed
  by Lemma~\ref{lemma:omega-cat-props}.
  
  \begin{equation*}
    \begin{tikzcd}
      & \tgt_0(x)\\
      \src_0(x) \arrow[ur, bend left=30, "\src_1(x)", ""{name=Ux, inner
        sep=3pt, below}]\arrow[ur, bend right=30, "\tgt_1(x)"{swap},
      ""{name=Dx, inner sep=2pt}]\arrow[dr, bend
      left=30, "\src_1(y)", ""{name=Uy, inner sep=3pt, below}]\arrow[dr, bend
      right=30, "\tgt_1(y)"{swap}, ""{name=Dy, inner
        sep=2pt}]\arrow[Rightarrow, from=Ux,to=Dx, "x"]\arrow[Rightarrow, from=Uy,to=Dy, "y"{swap}]&\\
      & \tgt_0(y)
    \end{tikzcd}
    \qquad
    \begin{tikzcd}
      \src_0(x) \arrow[r, bend left=70, "\src_1(x)", ""{name=U,inner
        sep=3pt,below}]
      \arrow[r, ""{name=M, inner sep=3pt}]\arrow[r, ""{name=M1, inner
        sep=3pt, below}]
      \arrow[r, bend right=70, "\tgt_1(y)"{swap}, ""{name=D,inner sep=3pt}] 
      & \tgt_0(y) \arrow[Rightarrow, from=U, to=M, "x"] \arrow[Rightarrow,
      from=M1, to=D, "y"] 
    \end{tikzcd}
    \qquad
    \begin{tikzcd}
      \src_0(x) \arrow[dr, bend left=30,  "\src_1(x)", ""{name=Ux,inner
        sep=3pt,below}]\arrow[dr, bend
      right=30,  "\tgt_1(x)"{swap},  ""{name=Dx, inner sep=2pt}]&\\
      & \tgt_0(x) \\
      \src_0(y) \arrow[ur, bend left=30,  "\src_1(y)", ""{name=Uy, inner sep=3pt, below}]\arrow[ur, bend
      right=30, "\tgt_1(y)"{swap}, ""{name=Dy, inner
        sep=2pt}]
      \arrow[Rightarrow, from=Ux,to=Dx, "x" {swap}]\arrow[Rightarrow, from=Uy,to=Dy, "y"]
    \end{tikzcd}
  \end{equation*}
  The weak morphism axioms are thus consistent with vertical
  compositions of globes.  
\end{example}

\begin{example}\label{ex:interchange-weak}
  In $2$-catoids, weakness of
  $(w\odot_1 x)\odot_0 (y\odot_1 z)\subseteq (w\odot_0 y)\odot_1
  (x\odot_0 z)$ allows the right-hand side of the interchange axiom
  to be defined while the left-hand side is undefined. The
  right-hand side is defined if $\tgt_1(w\odot_0 y)$ is equal to
  $\src_1(x\odot_0 z)$ and both sets are nonempty. The globular laws
  in Lemma~\ref{lemma:2-msg-props} and Lemma~\ref{lemma:msg-props1}
  then imply that
  $\src_0(w) = \src_0(w\odot_0 y)= \src_0(\tgt_1(w\odot_0 y))=
  \src_0(\src_1(x\odot_0 z)) = \src_0(x\odot_0 z)= \src_0(x)$, and
  $\tgt_0(y)=\tgt_0(z)$ holds for similar reasons.  The left-hand
  side is defined if $\tgt_0(w\odot_1 x)$ is equal to
  $\src_0(y \odot_1 z)$ and both sets are nonempty. Therefore
  $\src_0(w)=\src_0(w\odot_1 x)= \src_0(x)$ and
  $\tgt_0(y)=\tgt_0(y\odot_1 z)= \tgt_0(z)$, but also
  $\tgt_0(w)=\tgt_0(w\odot_1 x)= \tgt_0(x)$ and
  $\src_0(y)=\src_0(y\odot_1 z)= \src_0(y)$ by
  Lemma~\ref{lemma:omega-cat-props}. Thus in particular
  $\tgt_0(w)=\tgt_0(x)=\src_0(y)=\src_0(z)$. The right diagram below
  shows a situation where
  $(w\odot_1 x)\odot_0 (y\odot_1 z)\subset (w\odot_0 y)\odot_1
  (x\odot_0 z)$, because all compositions on the right are defined,
  but $\tgt_0(w)=\src_0(y)\neq \tgt_0(x)=\src_0(z)$. The left
  diagram below shows a situation where both sides are defined.
  \begin{equation*} 
    \begin{tikzcd}
      \src_0(x) \arrow[r, bend left=70, "\src_1(w)", ""{name=U,inner
        sep=3pt, below}]
      \arrow[r, ""{name=M, inner sep=3pt}]\arrow[r, ""{name=M1, inner
        sep=3pt, below}]
      \arrow[r, bend right=70, "\tgt_1(x)"{below}, ""{name=D,inner sep=3pt}] 
      & \tgt_0(x) \arrow[Rightarrow, from=U, to=M, "w"] \arrow[Rightarrow,
      from=M1, to=D, "x"] \arrow[r, bend left=70, "s_1(y)",
      ""{name=UU,inner sep=3pt,below}]
      \arrow[r, ""{name=MM,inner sep=3pt}] \arrow[r, ""{name=MM1,inner sep=3pt,below}]
      \arrow[r, bend right=70, "\tgt_1(z)"{below}, ""{name=DD,inner sep=3pt}]
      & \tgt_0(y)
      \arrow[Rightarrow, from=UU, to=MM, "y"]\arrow[Rightarrow, from=MM1, to=DD, "z"]
    \end{tikzcd}
    \qquad
    \begin{tikzcd}
      \src_0(x)\arrow[r, ""{name=Ux, inner sep=3pt,below}]\arrow[rr,bend
      left=40, "\src_1(w)", ""{name=Uw, inner sep=3pt, below}]\arrow[r, bend
      right=70, "\tgt_1(x)"{below}, 
      ""{name=Dx, inner sep=3pt}] \arrow[Rightarrow,
      from=Ux, to=Dx, "x"] &
      \tgt_0(x) \arrow[r]\arrow[rr, bend right=40, "\tgt_1(z)"{below}, ""{name=Dz, inner sep=3pt}] \arrow[Rightarrow, from=Uw, "w"] & \src_0(y) \arrow[r,
      ""{name=Dy, inner sep =3pt}]\arrow[r,
      bend left=70, "\src_1(y)", ""{name=Uy, inner sep=3pt, below}]\arrow[Rightarrow,
      from=Uy, to=Dy, "y"]\arrow[Rightarrow, to=Dz, "z"]& \tgt_0(y)
    \end{tikzcd}
  \end{equation*}
  The interchange axiom is thus consistent with horizontal and
  vertical compositions of globes. The difference to the standard
  equational interchange laws of category theory is that, using
  multioperations, we express partiality by mapping to the empty set.
\end{example}

The following example confirms that the interchange axiom and the
morphism axioms for $\src_0$ and $\tgt_0$ remain inclusions in
$\omega$-categories.

\begin{example}\label{ex:cat-inclusions}
 Consider the  $2$-category with $X=\{a,b\}$ and
 \begin{equation*}\label{ex:interchange-eq}
     \begin{array}{c|cccc}
     & \src_0 & \tgt_0 &\src_1 & \tgt_1\\
     \hline
     a & b & b & a & a\\
     b& b & b& b& b
     \end{array}
     \qquad\qquad
  \begin{array}{c|cc}
    \odot_0 & a & b\\
    \hline
    a &\{b\}&\{a\}\\
    b & \{a\}&\{b\}
  \end{array}
\qquad\qquad
   \begin{array}{c|cc}
    \odot_1 & a & b\\
    \hline
    a &\{a\}&\emptyset\\
    b & \emptyset &\{b\}
   \end{array}
 \end{equation*}
 It is actually a monoid as a category with $1$-cell $a$, $0$-cell
 $b$, and composition $a\odot_0a=b$, where $b$ is seen as a unit
 arrow.  Further,  $\odot_1$ is trivial because of the
 whisker axioms. Because of this,
 $(b\odot_1 a) \odot_0 (b\odot_1 a) = \emptyset \subset \{b\} =
 (b\odot_0 b)\odot_1(a\odot_0a)$ as well as
 $\src_0(a\odot_1 b) = \emptyset \subset \{b\} =
 \src_0(a)\odot_1\src_0(b)$ and
 $\tgt_0(a\odot_1 b) = \emptyset \subset \{b\} =
 \tgt_0(a)\odot_1\tgt_0(b)$.
\end{example}

\begin{remark}
  The inclusions in the morphism axioms for $\src_i$ and $\tgt_i$
  cannot be strengthened to equations. We show in
  Appendix~\ref{A:eckmann-hilton} that this would collapse the entire
  structure. A similar collapse happens with an equational interchange
  law.
\end{remark}

The $\alpha$-catoid axioms contain redundancy. We have used Isabelle's
SAT-solvers and automated theorem provers to analyse them. For
irredundancy of a formula $\varphi$ with respect to a set $X$ of
formulas, we ask the SAT-solvers for a model of
$X\cup\{\neg \varphi\}$. For redundancy, we ask the automated theorem
provers for a proof of $X\vdash\varphi$. This proofs-and-refutations
game often succeeds in practice. Because of the set-up of
$\omega$-catoids as pairs of $2$-catoids, an analysis of $2$-catoids
suffices.

\begin{proposition}\label{prop:reduced-props}
  The following $\alpha$-catoid axioms are irredundant and imply the
  other $\alpha$-catoid axioms from the beginning of this
  section. For all $0\le i < j \le \alpha$,
  \begin{gather*}
    \src_j(x\odot_i y)\subseteq \src_j(x) \odot_i \src_j(y),\qquad
    \tgt_j(x\odot_i
    y) \subseteq \tgt_j(x) \odot_i \tgt_j(y),\\
    (w \odot_j x)\odot_i (y\odot_j z) \subseteq (w\odot_i y)\odot_j
    (x\odot_i z).
  \end{gather*}
\end{proposition}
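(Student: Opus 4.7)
The proposition has two parts: sufficiency, namely that the three listed axioms together with the per-dimension catoid axioms imply the remaining $\alpha$-catoid axioms, and irredundancy, that no one of the three is derivable from the other two.

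For sufficiency, I would derive the four whisker axioms first, because the commutativity laws and the remaining morphism axioms (for $\src_i$ and $\tgt_i$ on $\odot_j$ with $i<j$) follow comparatively cleanly once whisker is available. To tackle whisker, set $y = \src_i(x)$ and note that $y\odot_i y = \{y\}$ by Lemma~\ref{lemma:mm-props}(3) at dimension $i$. Applying the reduced morphism axiom for $\src_j$ on $\odot_i$ yields $\src_j(y) \in \src_j(y)\odot_i \src_j(y)$, so $\src_j(y)$ is an $i$-loop by weak locality; the $\tgt_j$-companion gives the same for $\tgt_j(y)$. The interchange axiom must then be invoked with a carefully chosen substitution (such as $w = \src_j(y)$, $x' = y$, $y' = y$, $z' = \tgt_j(y)$) to force $\src_j(y) = y$: the idea is to exploit the $j$-unit laws on one side of interchange so that the composition collapses to $\{y\}$, and the $i$-loop property on the other, to collapse a derived membership into an equality via Lemma~\ref{lemma:msg-props1}(2) applied at both dimensions. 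Once whisker is in hand, commutativity follows: for $i<j$, applying the reduced morphism axiom for $\src_j$ on $\odot_i$ to $\src_i(x)\odot_i x = \{x\}$, substituting $\src_j(\src_i(x)) = \src_i(x)$ via whisker, and then using weak locality at dimension $i$ gives $\src_i(\src_j(x)) = \tgt_i(\src_i(x)) = \src_i(x)$. The remaining morphism axioms $\src_i(x\odot_j y) \subseteq \src_i(x)\odot_j \src_i(y)$ for $i<j$ then follow because whisker turns $\src_i(x)$ and $\src_i(y)$ into $j$-units, which by Lemma~\ref{lemma:orth-id} form orthogonal idempotents under $\odot_j$, and the right-hand side can be matched against $\src_i(x\odot_j y)$ computed via Lemma~\ref{lemma:msg-props1}(2) at dimension $j$ and the commutativity just derived.

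For irredundancy, I would exhibit three finite $2$-catoids in the spirit of Examples~\ref{ex:local-2-catoid-not-strong} and~\ref{ex:functional-2-catoid-not-strong}, each satisfying the per-dimension catoid axioms and two of the three reduced axioms but violating the third. These counter-models are exactly what Isabelle's model-finder Nitpick delivers when invoked on the negation of each axiom, as the authors indicate in the preceding paragraph; in writing, one simply exhibits the tables for $\odot_0$, $\odot_1$, $\src_i$ and $\tgt_i$ and verifies axiom satisfaction by inspection.

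The main obstacle is the whisker derivation. Since the reduced morphism axioms produce only set-memberships and catoids lack cancellation, one cannot read off $\src_j(y) = y$ directly from the membership $\src_j(y) \in \src_j(y)\odot_i \src_j(y)$; the interchange axiom must be harnessed through a substitution that simultaneously collapses $j$-compositions to singletons via the $j$-unit laws and forces the equality of the resulting $i$-loop with $y$ itself. Identifying this substitution is precisely where Sledgehammer and the proof/refutation game the authors describe become indispensable, and is the reason the reduced axiomatisation was found only after an interactive Isabelle analysis.
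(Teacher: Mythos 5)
Your proposal follows essentially the same route as the paper's proof: the interchange substitution you name, $(\src_j(\src_i(x)),\src_i(x),\src_i(x),\tgt_j(\src_i(x)))$ applied to the unit decomposition $\src_i(x)\odot_i\src_i(x)=\{\src_i(x)\}$, is exactly the engine of the paper's whisker derivation, after which commutativity and the remaining morphism axioms are obtained in the same order and irredundancy is discharged by finite counter-models from the automated tools, just as the paper does. The only cosmetic difference is that the paper completes the final collapse via the definedness conditions propagated through interchange, weak locality, and orthogonality of $j$-units (Lemma~\ref{lemma:orth-id}), rather than via your preliminary $i$-loop observation, which turns out not to be needed.
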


A proof can be found in Appendix~\ref{A:proofs}. This reduction is
convenient for relating structures, and it streamlines our
correspondence proofs below. More generally, the single-set approach
makes $n$-categories accessible to SMT-solvers and first-order
automated theorem provers, using $\Delta$ and $\hat{\odot}$ in
specifications like in Section~\ref{S:lr-multisemigroups}.

\begin{remark}
We cannot replace the morphism laws for $\src_j$ and $\tgt_j$ by
  those for $\src_i$ and $\tgt_i$ in the reduced axiomatisation for
  $\alpha$-catoids. Otherwise the first morphism laws would no longer
  be derivable: Isabelle produces counterexamples. Likewise, if we use
  only the whisker axioms and the morphism axioms in the first line of
  the non-reduced axiomatisation that commute source and target maps,
  we can neither derive the interchange axiom nor the morphism axioms
  for $\src_1$ and $\tgt_1$ in the special case of
  $2$-categories. Isabelle produces once again
  counterexamples. Coincidentally, the morphism laws for $\src_0$ and
  $\tgt_0$ are derivable; proofs can be found in our Isabelle
  theories.
\end{remark}

\begin{example}\label{ex:shuffle}
  Let $(\Sigma^\ast,\odot_0,\varepsilon)$ denote the free monoid
  generated by the finite alphabet $\Sigma$, with word concatenation
  $\odot_0$ modelled as a multirelation and the empty word
  $\varepsilon$. It can be viewed as a category with
  $\src_0(w)=\varepsilon =\tgt_0(w)$ for all $w\in \Sigma^\ast$.
  Further, let $(\Sigma^\ast \odot_1,\varepsilon)$ be the shuffle
  multimonoid on $\Sigma^\ast$ from Example~\ref{ex:shuffle-catoid}
  with $\odot_1=\|$.  Then
  $(\Sigma^\ast,\odot_0,\odot_1,\{\varepsilon\})$ forms a
  $2$-catoid. It has one single $0$-cell, $\{\varepsilon\}$, which is
  also the only $1$-cell. The source/target structure is therefore
  trivial, but an interchange law between word concatenation and word
  shuffle holds.
\end{example}


\section{Higher Quantales}\label{S:2-quantales}

As our main conceptual contribution, we now define the quantalic
structures that match higher catoids in Jónsson-Tarski-type
correspondences. We start with an axiomatisation of
$\alpha$-quantales, which correspond to $\alpha$-catoids, but then
turn our attention mainly to $\omega$-quantales.  Once again we have
checked all proofs in this section with Isabelle~\cite{CalkS24}.

An \emph{$\alpha$-quantale} is a structure
$(Q,\le,\cdot_i,1_i,\dom_i,\cod_i)_{0\le i < \alpha}$, for an ordinal
$\alpha \in \{0,1,\dots,\omega\}$, such that each
$(Q,\le,\cdot_i,1_i,\dom_i,\cod_i)$ is a modal quantale and the
structures interact as follows: 
\begin{itemize}
 \item  for all $i\neq j$,
  \begin{equation*}
    \dom_i(\alpha\cdot_j \beta) \le  \dom_i(\alpha)\cdot_j
    \dom_i(\beta)\qquad\text{ and }\qquad
 \cod_i(\alpha\cdot_j \beta) \le \cod_i(\alpha)\cdot_j
 \cod_i(\beta),
\end{equation*}
\item and for all $i<j$
\begin{gather*}
  (\alpha\cdot_j \beta) \cdot_i (\gamma\cdot_j \delta) \le (\alpha \cdot_i \gamma)\cdot_j (\beta\cdot_i
  \delta)\qquad\text{ and }\qquad
 \dom_j(\dom_i(\alpha)) = \dom_i(\alpha).
\end{gather*}
\end{itemize}

An $\alpha$-quantale is \emph{strong} if for all $i <j$, 
\begin{equation*}
  \dom_j(\alpha\cdot_i \beta) = \dom_j(\alpha)\cdot_i
  \dom_j(\beta) \qquad\text{ and }\qquad
 \cod_j(\alpha\cdot_i \beta) = \cod_j(\alpha)\cdot_i \cod_j(\beta).
\end{equation*}

These axiom systems are already reduced and irredundant in the sense
of Section~\ref{S:2-lr-msg}.

\begin{example}\label{ex:d10-counter}
  In the double modal quantale on $\bot<a<\top$ with
  $\cdot_0= \cdot_1= \land$, $1_0=1_1=\top$, $\dom_0=\mathit{id}$ and
  $\dom_1(a)=\top=\cod_1(a)$ (the rest being fixed), the first five
  globular $2$-quantale axioms hold, but the last one does not:
  $\dom_1(\dom_0(a)) = \top \neq a = \dom_0(a)$.

  Irredundancy of the two weak morphism laws is established using
  similar counterexamples with $5$-elements found by Isabelle. Their
  particular form is of little interest.
\end{example}

\begin{remark}
  The strong resemblance of the $\alpha$-quantale axioms and the
  $\alpha$-catoid ones is caused by correspondences that are developed
  in the following sections.  Nevertheless, there is a mismatch:
  $\dom_j\circ \dom_i = \dom_i$ is an $\alpha$-quantale axiom while
  $\src_j\circ \src_i = \src_i$ is derivable in $\alpha$-catoids and
  the same holds for the two morphism axioms of
  $\alpha$-quantales. For $\dom_j\circ \dom_i = \dom_i$ with $i<j$,
  this can be explained as follows. Our proof of
  $\src_j\circ \src_i = \src_i$ relies on
  $\Delta_k(x,y)\Rightarrow \tgt_k(x)=\src_k(y)$, but
  Remark~\ref{R:locality} shows that the corresponding property is not
  available for quantales. The related properties
  $\alpha\cdot_k\beta\neq \bot \Rightarrow \cod_k(\alpha)\inf
  \dom_k(\beta)\neq \bot$ \emph{are} available, but too weak to
  translate the proof of $\src_j\circ \src_i = \src_i$ to quantales.
  Any formal proof is of course ruled out by
  Example~\ref{ex:d10-counter}.
\end{remark}

We now turn to $\omega$-quantales.

\begin{lemma}\label{lemma:2-quantale-props}
  In every $\omega$-quantale $Q$, for $0\le i<j<\omega$, 
  \begin{enumerate}
  \item $\dom_j\circ \cod_i = \cod_i$, $\cod_j\circ \dom_i = \dom_i$
    and $\cod_j\circ \cod_i = \cod_i$,
  \item $1_j\le 1_j\cdot_i 1_j$, $1_j\cdot_i 1_j= 1_j$ if $Q$ is
    strong, and $1_i\cdot_j 1_i = 1_i$,
  \item $1_i\le 1_j$,
  \item $\dom_j (1_i) = 1_i$, $\dom_i (1_j) = 1_j$, $\cod_j (1_i) = 1_i$
    and $\cod_i(1_j) = 1_j$,
  \item $\dom_i\circ \dom_j = \dom_j \circ  \dom_i$, $\dom_i\circ
    \cod_j = \cod_j\circ\dom_i$, $\cod_i\circ \dom_j = \dom_j\circ
    \cod_i$ and $\cod_i\circ\cod_j = \cod_j\circ \cod_i$,
  \item $\dom_i(\alpha\cdot_j \beta) = \dom_i (\alpha\cdot_j \dom_j(\beta))$ and
    $\cod_i(\alpha\cdot_j \beta) = \cod_i(\cod_j(\alpha)\cdot_j \beta)$. 
  \end{enumerate}
\end{lemma}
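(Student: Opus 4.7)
The plan is to work through the six items in order, leveraging three recurring devices: modal-quantale compatibility $\dom_k\circ\cod_k=\cod_k$ and $\cod_k\circ\dom_k=\dom_k$ at each dimension $k$; the $\omega$-quantale axiom $\dom_j\circ\dom_i=\dom_i$ for $i<j$, which ensures every element in the image of $\dom_i$ (and dually $\cod_i$) is a subidentity at dimension $j$; and the fact from Section~\ref{S:modal-quantales} that $\cdot_k$ restricted to $Q_{\dom_k}$ coincides with meet. Item (1) follows by writing $\cod_i(\alpha)=\dom_i(\cod_i(\alpha))$ via modal compatibility at level $i$ and then collapsing the outer $\dom_j$ by the $\omega$-quantale axiom; the two $\cod_j$-variants use that $\dom_i$-images lie in $Q_{\dom_j}=Q_{\cod_j}$, so $\cod_j$ fixes them. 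For (3), compute $\dom_j(1_i)=\dom_j(\dom_i(1_i))=\dom_i(1_i)=1_i$ using the axiom and the modal-quantale fact $\dom_i(1_i)=1_i$, then invoke the subidentity axiom at level $j$ to conclude $1_i\le 1_j$. The first two cases of (4) are this same computation together with its $\cod_j$-counterpart obtained via (1); the remaining two follow by a sandwich combining monotonicity of $\dom_i$ applied to $1_i\le 1_j$ with the subidentity axiom at level $i$. For (2), the identity $1_i\cdot_j 1_i=1_i$ is the meet of $1_i$ with itself inside the lattice $Q_{\dom_j}$; the inequality $1_j\le 1_j\cdot_i 1_j$ follows from the modal-quantale absorption $1_j\le\dom_i(1_j)\cdot_i 1_j$ combined with monotonicity, and in the strong case the strong morphism law $\dom_j(1_j\cdot_i 1_j)=\dom_j(1_j)\cdot_i\dom_j(1_j)=1_j\cdot_i 1_j$ exhibits $1_j\cdot_i 1_j$ as a $j$-subidentity, giving the reverse inequality.

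The substantive work is in (5). I would establish $\dom_i\circ\dom_j=\dom_i$, so that composing with the axiom $\dom_j\circ\dom_i=\dom_i$ yields commutativity. One direction, $\dom_i(\alpha)\le\dom_i(\dom_j(\alpha))$, is obtained by applying $\dom_i$ to the absorption identity $\dom_j(\alpha)\cdot_j\alpha=\alpha$, invoking the morphism inequality for $\dom_i$ across $\cdot_j$, and collapsing the resulting $\cdot_j$-product of two elements of $Q_{\dom_j}$ to a meet. For the reverse direction, apply the $\dom_j$-morphism law to $\dom_j(\alpha)=\dom_j(\dom_i(\alpha)\cdot_i\alpha)$ to deduce $\dom_j(\alpha)\le\dom_i(\alpha)\cdot_i\dom_j(\alpha)$, then apply $\dom_i$, use locality of $\dom_i$ at level $i$ to replace $\dom_j(\alpha)$ inside by $\dom_i(\dom_j(\alpha))$, and collapse the $\cdot_i$-product of two elements of $Q_{\dom_i}$ to a meet, yielding $\dom_i(\dom_j(\alpha))\le\dom_i(\alpha)$. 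The three other commutativities in (5) follow by the same template, swapping $\cod$ for $\dom$ as appropriate and using (1) to dispose of the mixed cases. Then (6) is immediate: apply $\dom_i$ to the locality identity $\dom_j(\alpha\cdot_j\beta)=\dom_j(\alpha\cdot_j\dom_j(\beta))$ for the modal quantale at level $j$, and use $\dom_i\circ\dom_j=\dom_i$ from (5); the codomain form is dual.

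The main obstacle is the reverse inequality in (5): no single axiom directly yields $\dom_i(\dom_j(\alpha))\le\dom_i(\alpha)$, and the argument has to route through $\dom_j(\alpha)\le\dom_i(\alpha)\cdot_i\dom_j(\alpha)$ and then use locality of $\dom_i$ at level $i$ together with the meet collapse on $Q_{\dom_i}$ in order to pull $\dom_j(\alpha)$ under $\dom_i$ without enlarging it. Everything else is routine reshuffling of modal-quantale identities together with the cross-dimensional morphism and absorption laws.
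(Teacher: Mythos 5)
Your proposal is correct, and on the items that carry the real weight it takes the same route as the paper: for (5) the paper likewise proves the two inequalities between $\dom_i\circ\dom_j$ and $\dom_j\circ\dom_i=\dom_i$ by chains built from absorption, the weak morphism law across the other composition, the axiom $\dom_j\circ\dom_i=\dom_i$, and export/locality at the inner dimension; the only cosmetic difference is that where you collapse a product of two fixpoints to their meet, the paper bounds the spare factor by $1_i\le 1_j$ and applies the unit law. Items (1) and (6) are handled identically. Where you genuinely diverge is in (2) and (3): the paper obtains $1_i\le 1_j$ and $1_j\le 1_j\cdot_i 1_j$ from the interchange law applied to products of units, whereas you get $1_i=\dom_j(1_i)\le 1_j$ directly from the subidentity axiom at level $j$ once $\dom_j(1_i)=1_i$ is known, and $1_j\le\dom_i(1_j)\cdot_i 1_j\le 1_j\cdot_i 1_j$ from absorption together with (3). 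Your versions avoid the interchange axiom entirely and are, if anything, cleaner; both are valid, and the rest of (2) (the strong case via $\dom_j(1_j\cdot_i 1_j)=1_j\cdot_i 1_j\le 1_j$, and $1_i\cdot_j 1_i=1_i$ via idempotency on $Q_{\dom_j}$) matches the paper.

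One caveat on (4): your sandwich $1_i=\dom_i(1_i)\le\dom_i(1_j)\le 1_i$ establishes $\dom_i(1_j)=1_i$ (and dually $\cod_i(1_j)=1_i$), not the displayed $\dom_i(1_j)=1_j$. That is the right thing to prove: the displayed identity would give $1_j\le 1_i$ and hence $1_i=1_j$, contradicting Example~\ref{ex:id-id}, and the paper's own proof of (4) derives exactly $\dom_i(1_j)\le 1_i$ and $1_i\le\dom_i(1_j)$. So the second and fourth identities in the statement should be read as $\dom_i(1_j)=1_i$ and $\cod_i(1_j)=1_i$; your argument is the intended one, but you should say explicitly that this is what you are proving.
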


A proof can be found in Appendix~\ref{A:proofs}. By (1), the sets
$Q_i = Q_{\dom_i}$ form a chain:
$Q_0\subseteq Q_1\subseteq Q_2 \subseteq \dots \subseteq Q$. Each
$Q_i$ is a complete distributive lattice with $+$ as binary sup and
$\odot_i$ as binary inf according to the properties of domain
quantales recalled in Section~\ref{S:modal-quantales}. Similarly to
the situation for $\omega$-catoids in Section~\ref{S:2-lr-msg}, the
elements of $Q_i$ remain domain elements in all higher dimensions,
hence each $Q_i$ is a distributive sublattice of $Q_j$ for all
$j\ge i$ and of $Q$. We have
\begin{align*}
  \dom_i(x)\odot_k \dom_j(y) &= dom_i(x) \inf \dom_j(y),\\
  \cod_i(x)\odot_k \cod_j(y) &= cod_i(x) \inf \cod_j(y)
\end{align*}
for all $i,j\le k$. At the same time, all truncations $Q_i$ of $Q$ are
$i$-quantales, so that the chain of the $Q_i$ is a filtration of
$\omega$-quantales. The $\omega$-quantale $Q$ is the union of the
quantales $Q_i$ if we add an axiom guaranteeing that for all $x\in Q$
there exists and $i\le \omega$ such that $\dom_i(x)=x=\cod_i(x)$. As
for $\omega$-catoids, we keep this optional and do not require this
property in the considerations that follow. The same results hold for
strong $\omega$-quantales.

\begin{remark}\label{R:cka}
  An interchange law
  $(\alpha\cdot_1 \beta) \cdot_0 (\gamma\cdot_1 \delta) \le (\alpha
  \cdot_0 \gamma)\cdot_1(\beta\cdot_0\delta)$ features in concurrent
  semirings~\cite{Gischer88} and concurrent Kleene algebras and
  quantales~\cite{HoareMSW11}. It has often been contrasted with the
  seemingly equational interchange laws of category theory. Yet this
  ignores the weak nature of equality in categories, which may depend
  on definedness conditions of terms, and which is captured
  explicitly and precisely by the multioperational language.
  Example~\ref{ex:cat-inclusions} shows that the interchange laws of
  $\omega$-categories are as weak as those of $\omega$-quantales and
  $\omega$-Kleene algebras (defined below), of which concurrent
  quantales and Kleene algebras are special cases. See
  Appendix~\ref{A:eckmann-hilton} for pitfalls of strong interchange
  laws and related morphisms.
\end{remark}
  
\begin{example}\label{ex:id-id}
  The identity $1_i=1_j$ need not hold for $i<j$ in strong
  $\omega$-quantales.  There is a strong $2$-quantale on $0<1_0<1_1$
  in which $1_1\cdot_0 1_1=1_1$ and $1_0\cdot_1 1_1=1_1$ (the rest is
  fixed), and $\dom_0(1_1)=1_0=\cod_0(1_1)$ and
  $\dom_1(1_0)=1_0=\cod_1(1_0)$ (the rest is again fixed). This makes
  strong $2$-quantales different from the original concurrent
  quantales mentioned and prevents smaller interchange laws with two
  or three variables.  See~\cite{CranchDS21} for a discussion of how
  the condition $1_0=1_1$ leads to a partial Eckmann-Hilton-style
  collapse.
\end{example}

Next we consider the interactions of the Kleene stars with the
$\omega$-structure.
\begin{lemma}\label{lemma:2-quantale-star-props}
  In every $\omega$-quantale $Q$, for $0\le i<j<\omega$, 
  \begin{enumerate}
  \item
    $\dom_i (\alpha) \cdot_i \beta^{\ast_j} \le (\dom_i
    (\alpha)\cdot_i \beta)^{\ast_j}$ and $\alpha^{\ast_j}\cdot_i \cod_i(\beta) \le (\alpha\cdot_i \cod_i
    (\beta))^{\ast_j}$, 
     \item
    $\dom_j(\alpha) \cdot_i \beta^{\ast_j} \le (\dom_j
    (\alpha)\cdot_i \beta)^{\ast_j}$ and $\alpha^{\ast_j}\cdot_i \cod_ j(\beta) \le (\alpha\cdot_i \cod_j
    (\beta))^{\ast_j}$ if $Q$ is strong,
 \item $(\alpha \cdot_j \beta)^{\ast_i} \le \alpha^{\ast_i} \cdot_j \beta^{\ast_i}$. 
  \end{enumerate}
\end{lemma}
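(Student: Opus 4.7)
The plan is to prove all three items by a uniform strategy: induct on the exponent $n$ to establish the finitary analogue of each inequality, then take the supremum, using that composition in a quantale preserves all sups in both arguments. The key algebraic tool throughout is the interchange law of $\omega$-quantales, together with the fact that domain elements at level $i$ remain domain elements at every higher level $j$ and are therefore $\cdot_j$-idempotent, since $\cdot_j$ acts as binary inf on $Q_j$.

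For item (1), I would first establish by induction on $n$ that $\dom_i(\alpha) \cdot_i \beta^n \le (\dom_i(\alpha)\cdot_i \beta)^n$, where powers are taken with respect to $\cdot_j$. The base $n=0$ reduces to $\dom_i(\alpha) \cdot_i 1_j \le 1_j$, which follows from $\dom_i(\alpha) \le 1_i$ together with $1_i$ being the unit of $\cdot_i$. For the inductive step the crucial idempotence is $\dom_i(\alpha) \cdot_j \dom_i(\alpha) = \dom_i(\alpha)$, which comes from $\dom_j(\dom_i(\alpha)) = \dom_i(\alpha)$. One application of interchange then yields
\[
\dom_i(\alpha) \cdot_i (\beta \cdot_j \beta^n) = (\dom_i(\alpha) \cdot_j \dom_i(\alpha)) \cdot_i (\beta \cdot_j \beta^n) \le (\dom_i(\alpha)\cdot_i \beta) \cdot_j (\dom_i(\alpha)\cdot_i \beta^n),
\]
and the induction hypothesis closes the step. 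Taking suprema and using sup-preservation of $\cdot_i$ gives the first inequality; the second, involving codomain, is dual.

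Item (2) runs along identical lines with $\dom_j(\alpha)$ in place of $\dom_i(\alpha)$. The one point requiring care, and the main conceptual obstacle of the lemma, is the base case $\dom_j(\alpha) \cdot_i 1_j \le 1_j$: from $\dom_j(\alpha) \le 1_j$ one only obtains $\dom_j(\alpha) \cdot_i 1_j \le 1_j \cdot_i 1_j$, and collapsing the right-hand side to $1_j$ is precisely where the strong axiom enters, via Lemma~\ref{lemma:2-quantale-props}(2). This explains the asymmetry between (1), which holds unconditionally, and (2), which demands strongness.

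For item (3) I would induct on $n$ to show $(\alpha \cdot_j \beta)^n \le \alpha^n \cdot_j \beta^n$, now with powers taken with respect to $\cdot_i$. The base $1_i \le 1_i \cdot_j 1_i = 1_i$ holds without strongness by Lemma~\ref{lemma:2-quantale-props}(2), since here the inner dimension is the larger one. The inductive step is a single application of interchange,
\[
(\alpha \cdot_j \beta) \cdot_i (\alpha^n \cdot_j \beta^n) \le (\alpha \cdot_i \alpha^n) \cdot_j (\beta \cdot_i \beta^n) = \alpha^{n+1} \cdot_j \beta^{n+1}.
\]
Taking suprema then gives $(\alpha \cdot_j \beta)^{\ast_i} \le \Sup_n (\alpha^n \cdot_j \beta^n) \le \Sup_{m,n}(\alpha^m \cdot_j \beta^n) = \alpha^{\ast_i} \cdot_j \beta^{\ast_i}$ by sup-preservation of $\cdot_j$ in both arguments. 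Overall, the only real subtlety across the three items is locating exactly where strongness is required and where it is not; otherwise all three are short inductions driven by interchange.
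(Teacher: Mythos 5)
Your proposal is correct and follows essentially the same route as the paper: induction on the exponent using the $\cdot_j$-idempotence of $\dom_i(\alpha)$ (via $\dom_j\circ\dom_i=\dom_i$) plus one application of interchange per step, then passing to suprema by sup-preservation. You also locate the role of strongness exactly where the paper does, namely in collapsing $1_j\cdot_i 1_j$ to $1_j$ in the base case of (2) via Lemma~\ref{lemma:2-quantale-props}(2).
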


See Appendix~\ref{A:proofs} for proofs. The properties in (1) and (2)
feature as axioms of globular $n$-Kleene algebras
in~\cite{CalkGMS20}. In sum, all axioms of these $n$-Kleene algebras
have now been derived from our smaller, but slightly different set of
axioms for $\omega$-quantales and $n$-quantales.  However, these
quantales presuppose that arbitrary joins and meets exist, while
globular $n$-Kleene algebras are based on globular $n$-semirings,
where only finite sups are assumed to exist and meets are not part of
the language. See Sections~\ref{S:globular-semiring} and
\ref{S:globular-ka} for a detailed discussion. We summarise
  this discussion as follows.

  \begin{proposition}\label{P:omega-quantale-vs-globular-ka}
    Every strong $\omega$-quantale is a globular $\omega$-Kleene
    algebra à la~\cite[Definition 3.2.7]{CalkGMS20} with Kleene stars
    $\alpha^{\ast_j} = \Sup_{i\ge 0} \alpha^{i_j}$.
  \end{proposition}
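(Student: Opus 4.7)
The plan is to verify, axiom by axiom, that a strong $\omega$-quantale $Q$ equipped with the Kleene stars $\alpha^{\ast_j} = \Sup_{i\ge 0}\alpha^{i_j}$ (which are well-defined by quantalic completeness) satisfies every clause of \cite[Definition~3.2.7]{CalkGMS20}. The argument proceeds dimension-by-dimension and then across dimensions, mirroring exactly the structure of the $\omega$-quantale axioms and the derived properties collected earlier in this section.

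First I would handle the intra-dimensional data. At each dimension $i$, the reduct $(Q,\le,\cdot_i,1_i,\dom_i,\cod_i)$ is a modal quantale. Restricting its lattice operations to binary joins gives a modal semiring with $+ = \sup$ and additive zero $\bot$, and all domain/codomain axioms of a globular $n$-Kleene algebra (absorption, locality, subidentity, bottom, and additive compatibility) are then precisely the domain quantale axioms of Section~\ref{S:modal-quantales}. The Kleene star axioms of \cite{CalkGMS20} — the unfold laws and the star induction rules — are immediate consequences of the definition $\alpha^{\ast_j} = \Sup_{i\ge 0}\alpha^{i_j}$ in any quantale: the unfolds follow from the sup being preserved by $\cdot_j$ in both arguments, and the inductions from the ordinal proof that $\alpha\cdot_j \gamma \le \gamma$ (resp.\ $\gamma\cdot_j\alpha\le\gamma$) lifts to $\alpha^{\ast_j}\cdot_j\gamma\le\gamma$ by joining over powers.

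Next I would transfer the cross-dimensional axioms. The interchange law $(\alpha\cdot_j\beta)\cdot_i(\gamma\cdot_j\delta)\le (\alpha\cdot_i\gamma)\cdot_j(\beta\cdot_i\delta)$ for $i<j$ is literally an $\omega$-quantale axiom. The strong morphism laws for $\dom_j$ and $\cod_j$ with respect to $\cdot_i$ are exactly what the strong hypothesis provides, so the equational morphism conditions required by \cite[Definition~3.2.7]{CalkGMS20} hold; the weaker inclusions for $\dom_i$ and $\cod_i$ with respect to $\cdot_j$ are $\omega$-quantale axioms too. The globular laws between $\dom_i,\cod_i$ and $\dom_j,\cod_j$ come from Lemma~\ref{lemma:2-quantale-props}(1),(5), and the star–subidentity interactions
\[
\dom_j(\alpha)\cdot_i \beta^{\ast_j} \le (\dom_j(\alpha)\cdot_i\beta)^{\ast_j},\qquad \alpha^{\ast_j}\cdot_i\cod_j(\beta)\le (\alpha\cdot_i\cod_j(\beta))^{\ast_j},
\]
together with their $\dom_i,\cod_i$ analogues, are supplied by Lemma~\ref{lemma:2-quantale-star-props}(1),(2) and the descent of stars through $\cdot_j$ in (3).

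The main obstacle, and essentially the only non-bookkeeping point, is the discrepancy noted in the paragraph preceding the proposition: globular $n$-Kleene algebras are based on semirings with only finite sups and no meets, whereas $\omega$-quantales carry all sups and infs. The translation therefore requires verifying that each globular $n$-Kleene algebra axiom can be stated using only finite joins, compositions and Kleene stars, so that the richer quantalic operations are not needed in the target algebra; this is the case for all axioms in \cite[Definition~3.2.7]{CalkGMS20}, so the strong $\omega$-quantale structure restricts to the required signature without loss. A careful cross-check against the list in \cite{CalkGMS20} — confirming that no axiom there uses meets or infinitary joins in an essential way beyond the star — will complete the proof.
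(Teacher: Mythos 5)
Your proposal is correct and follows essentially the same route as the paper: the paper's justification for this proposition is precisely the axiom-by-axiom comparison you outline, with the quantalic star satisfying the unfold/induction laws, the interchange and domain/codomain axioms being either $\omega$-quantale axioms or derived in Lemmata~\ref{lemma:2-quantale-props} and~\ref{lemma:mod-props}, and the star--(co)domain interaction laws supplied by Lemma~\ref{lemma:2-quantale-star-props}(1)--(3), where part~(2) is exactly the point at which the strong hypothesis is needed. Your closing observation about restricting the quantalic signature to finite joins matches the paper's own caveat preceding the proposition.
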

  Strictly speaking, only globular $n$-Kleene algebras are considered
  in~\cite{CalkGMS20}, but the axioms for $\omega$ are the same.

Finally we list further properties of domains and codomains that are
useful below.
\begin{lemma}\label{lemma:mod-props}
  In every $\omega$-quantale, for $0\le i< j<\omega$, the following properties hold:
  \begin{enumerate}
  \item $\dom_i(\alpha)\cdot_j \dom_i(\alpha) = dom_i(\alpha)$ and
    $\cod_i(\alpha)\cdot_j \cod_i(\alpha) = cod_i(\alpha)$,
  \item $\dom_i (\alpha \cdot_j \beta) = \dom_i ( \alpha \cdot_j \dom_j(\beta))$ and
    $\cod_i (\alpha \cdot_j\beta) = \cod_i (\cod_j(\alpha) \cdot_j \beta)$,
  \item $\dom_i (\alpha \cdot_j \beta) = \dom_i ( \cod_j(\alpha) \cdot_j \beta)$ and
    $\cod_i (\alpha\cdot_j \beta) = \cod_i (\alpha \cdot_j \dom_j(\beta))$,
  \item $\dom_i (\alpha \cdot_i \beta) = \dom_i (\alpha \cdot_i \dom_j(\beta))$ and
    $\cod_i (\alpha \cdot_i \beta) = \cod_i ( \cod_j(\alpha) \cdot_i \beta)$,
  \item $\dom_i (\alpha \cdot_i \beta) \le \dom_i (\cod_j(\alpha)\cdot_i \beta)$ and
    $\cod_i (\alpha \cdot_i \beta) \le \cod_i (\alpha \cdot_i
    \dom_j(\beta))$,  and equalities hold if $Q$ is strong,
  \item
    \begin{align*}
      \dom_i(\alpha) \cdot_i (\beta \cdot_j \gamma) &\le (\dom_i(\alpha) \cdot_i \beta)
                                                      \cdot_j (\dom_i \alpha)\cdot_i \gamma),\\
      (\alpha\cdot_j \beta)\cdot_i \dom_i(\gamma)
                                                    &\le (\alpha\cdot_i \dom_i(\gamma))\cdot_j (\beta \cdot_i
                                                      \dom_i(\gamma)),
    \end{align*}
  \item
    $\dom_i(\dom_j(\alpha)\cdot_j \beta) \le \dom_i(\alpha)\cdot_j
    \dom_i(\beta)$ and
    $\cod_i(\alpha\cdot_j \cod_j(\beta)) \le \cod_i(\alpha)\cdot_j
    \cod_i(\beta)$,
  \item $\dom_j(\dom_i(\alpha) \cdot_j \beta) = \dom_i(\alpha)\cdot_j
    \dom_j(\beta)$ and $\cod_j(\alpha\cdot_j \cod_i(\beta)) = \cod_i(\alpha)\cdot_i
    \cod_j(\beta)$,
  \item $\dom_i(\alpha)\cdot_j \dom_i(\beta) = dom_i(\alpha)\cdot_i \dom_i(\beta)$ and
    $\cod_i(\alpha)\cdot_j \cod_i(\beta) = cod_i(\alpha)\cdot_i \cod_i(\beta)$,
  \item
    \begin{multline*}
      (\dom_i(\alpha)\cdot_j \dom_i(\beta))\cdot_i
      (\dom_i(\gamma)\cdot_j\dom_i(\delta)) \\= (\dom_i(\alpha)\cdot_i
      \dom_i(\gamma))\cdot_j (\dom_i(\beta)\cdot_i\dom_i(\delta)),\\
      (\cod_i(\alpha)\cdot_j \cod_i(\beta))\cdot_i
      (\cod_i(\gamma)\cdot_j\cod_i(\delta)) = (\cod_i(\alpha)\cdot_i
      \cod_i(\gamma))\cdot_j (\cod_i(\beta)\cdot_i\cod_i(\delta)).
    \end{multline*}
  \end{enumerate}
\end{lemma}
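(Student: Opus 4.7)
The plan is to prove each item by combining a handful of building blocks derived from the $\omega$-quantale axioms and Lemma~\ref{lemma:2-quantale-props}: the compatibility identities $\dom_j\circ\dom_i=\dom_i$ (axiom) together with its companions $\dom_i\circ\dom_j=\dom_i$, $\dom_i\circ\cod_j=\dom_i$, $\cod_j\circ\dom_i=\dom_i$, $\cod_i\circ\cod_j=\cod_i$ and their duals, all derivable from Lemma~\ref{lemma:2-quantale-props}(1,5); the inclusion $Q_{\dom_i}\subseteq Q_{\dom_j}$, which makes $\cdot_j$ restricted to $Q_{\dom_i}$ coincide with the lattice meet (and in particular commutative and idempotent there); locality of $\dom_i$ and $\cod_i$ for $\cdot_i$ in the underlying modal quantale; the morphism inclusions for $\dom_i,\cod_i$ with respect to $\cdot_j$; and the interchange inequality paired with the absorption laws $\alpha=\alpha\cdot_j\cod_j(\alpha)=\dom_j(\alpha)\cdot_j\alpha$.

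With these in hand, items (1)--(4) and (6)--(10) are essentially mechanical, and codomain halves follow by the obvious opposition argument. For (1), $\dom_i(\alpha)\in Q_{\dom_j}$, so $\dom_i(\alpha)\cdot_j\dom_i(\alpha)=\dom_i(\alpha)$ by idempotence of the meet. Item (2) is Lemma~\ref{lemma:2-quantale-props}(6). Item (3) is the chain
\[
\dom_i(\alpha\cdot_j\beta)=\dom_i(\cod_j(\alpha\cdot_j\beta))=\dom_i(\cod_j(\cod_j(\alpha)\cdot_j\beta))=\dom_i(\cod_j(\alpha)\cdot_j\beta),
\]
where $\dom_i\circ\cod_j=\dom_i$ is used at the ends and locality of $\cod_j$ for $\cdot_j$ in the middle. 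Item (4) combines locality of $\dom_i$ for $\cdot_i$ with $\dom_i\circ\dom_j=\dom_i$. Item (6) rewrites $\dom_i(\alpha)=\dom_i(\alpha)\cdot_j\dom_i(\alpha)$ using (1) and applies the interchange inequality. Item (7) follows immediately from the morphism inclusion together with $\dom_i\circ\dom_j=\dom_i$. Item (8) uses locality of $\dom_j$ for $\cdot_j$ and the observation that $\dom_i(\alpha)\cdot_j\dom_j(\beta)$ is a $\dom_j$-fixed point. Items (9) and (10) reduce to the fact that both $\cdot_i$ and $\cdot_j$ coincide with the lattice meet on $\dom_i$-elements, combined with commutativity and associativity of meet.

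The main obstacle is item (5). For the inequality I decompose $\alpha=\alpha\cdot_j\cod_j(\alpha)$ and $\beta=\dom_j(\beta)\cdot_j\beta$ by absorption and apply interchange to obtain
\[
\alpha\cdot_i\beta \le (\alpha\cdot_i\dom_j(\beta))\cdot_j(\cod_j(\alpha)\cdot_i\beta).
\]
Applying $\dom_i$ (monotone) and then the morphism inclusion for $\cdot_j$ bounds $\dom_i(\alpha\cdot_i\beta)$ above by $\dom_i(\alpha\cdot_i\dom_j(\beta))\cdot_j\dom_i(\cod_j(\alpha)\cdot_i\beta)$, and by (4) the first factor equals $\dom_i(\alpha\cdot_i\beta)$. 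Since both factors lie in $Q_{\dom_i}\subseteq Q_{\dom_j}$, their $\cdot_j$-product equals their lattice meet and is therefore $\le\dom_i(\cod_j(\alpha)\cdot_i\beta)$, which yields the claimed inequality. For the equality in the strong case, strongness gives
\[
\cod_j(\alpha\cdot_i\beta)=\cod_j(\alpha)\cdot_i\cod_j(\beta)=\cod_j(\cod_j(\alpha)\cdot_i\beta),
\]
and applying $\dom_i\circ\cod_j=\dom_i$ to both ends produces the equality $\dom_i(\alpha\cdot_i\beta)=\dom_i(\cod_j(\alpha)\cdot_i\beta)$. The second half of (5) and the remaining codomain variants are handled by the opposite arguments.
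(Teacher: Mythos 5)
Your proposal is essentially correct and, for items (1)--(4), (6), (7) and the opposition/strong-case bookkeeping, it follows the same route as the paper (locality, the derived compatibilities $\dom_i\circ\cod_j=\dom_i$ etc., the weak morphism laws, and interchange). Two items are handled genuinely differently. For (5) the paper simply writes $\dom_i(\alpha\cdot_i\beta)=\dom_i(\cod_j(\alpha\cdot_i\beta))\le\dom_i(\cod_j(\alpha)\cdot_i\cod_j(\beta))$ and then uses locality of $\dom_i$ together with $\dom_i\circ\cod_j=\dom_i$ to absorb $\cod_j(\beta)$; your route via the absorption decompositions $\alpha=\alpha\cdot_j\cod_j(\alpha)$, $\beta=\dom_j(\beta)\cdot_j\beta$, interchange, the $\cdot_j$-morphism law for $\dom_i$, item (4), and the subidentity bound $p\cdot_j q\le q$ is longer but valid, and it has the mild virtue of not invoking the $\cod_j$-morphism axiom for $\cdot_i$ at all. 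For (9) the paper gives a two-sided computation (absorption plus the weak morphism law for one inequality, item (1) plus interchange for the other), whereas you appeal to both $\cdot_i$ and $\cdot_j$ being the lattice meet on $\dom_i$-elements.

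That appeal to meets is the one place you should tighten. The paper only records that $\cdot_k$ is the binary inf \emph{on the sub-poset} $Q_{\dom_k}$. Reading it that way, $p\cdot_i q=\inf_{Q_{\dom_i}}\{p,q\}$ and $p\cdot_j q=\inf_{Q_{\dom_j}}\{p,q\}$ live in different sub-posets, and the inclusion $Q_{\dom_i}\subseteq Q_{\dom_j}$ only yields $p\cdot_i q\le p\cdot_j q$ for free (a lower bound in the smaller poset is one in the larger). To get the reverse inequality you must additionally check that $p\cdot_j q\in Q_{\dom_i}$ for $p,q\in Q_{\dom_i}$, so that it competes as a lower bound in $Q_{\dom_i}$. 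This does hold: $\dom_i(p\cdot_j q)\le\dom_i(p)\cdot_j\dom_i(q)=p\cdot_j q$ by the weak morphism axiom, and conversely $p\cdot_j q=\dom_i(p\cdot_j q)\cdot_i(p\cdot_j q)\le\dom_i(p\cdot_j q)$ since $p\cdot_j q\le p\le 1_i$. (Equivalently, one can show that $p\cdot_k q$ is the meet of $p,q$ in all of $Q$ whenever $p,q$ are $\dom_k$-elements, using $r\le\dom_k(r)\le p\cdot_k q$ for any lower bound $r$.) With that one-line closure argument supplied, your treatment of (9) and hence (10) goes through; without it, the phrase ``both coincide with the lattice meet'' is asserting precisely the content of (9) rather than proving it.
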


See again Appendix~\ref{A:proofs} for proofs. The laws in (2)-(5) are
extended locality laws, those in (6) are weak distributivity laws for
compositions, those in (7) and (8) extended export laws. Note that
export laws $\dom(\dom(\alpha)\beta) = \dom(\alpha)\dom(\beta)$ and
$\cod(\alpha\cod(\beta))= \cod(\alpha)\cod(\beta)$ hold in any modal
semiring and quantale. Finally, the laws in (9) are useful for proving
the strong interchange laws in (10).

\begin{example}\label{ex:shuffle-language}
  The shuffle $2$-catoid on $\Sigma$ in Example~\ref{ex:shuffle}
  extends to the shuffle language $2$-quantale on $\Sigma$ under the
  standard language product
  \begin{equation*}
    XY=\{vw\mid v\in X,w\in Y\}
  \end{equation*}
  and the shuffle product of languages discussed in
  Example~\ref{ex:catoid-quantale}. The domain/codomain structures are
  trivial, as the empty word language $\{\varepsilon\}$ is the joint
  identity of the two underlying quantales, but an interchange law
  $(W\| X)\cdot (Y\| Z)\subseteq (W\cdot Y)\| (X\cdot Z)$ at language
  level can be derived. $2$-Quantales satisfying such more restrictive
  conditions are known as \emph{interchange
    quantales}~\cite{CranchDS21}.
\end{example}


\section{Higher Convolution Quantales and their
  Correspondences}\label{S:globular-convolution}

We can now extend the correspondence triangles between local catoids
$C$, modal quantales $Q$ and modal convolution quantales
$Q^C$~\cite{FahrenbergJSZ21a} as well as those for interchange
multimonoids, interchange quantales and interchange convolution
quantales~\cite{CranchDS21}, which we mentioned briefly in the
introduction, to local $\omega$-catoids, $\omega$-quantales and
convolution $\omega$-quantales, as well as their truncations at
dimension $n$.  These constitute the main
technical contribution in this article.  In
Section~\ref{S:msg-quantale} we specialise these results to modal
powerset $\omega$-quantales, which, at dimension $n$, have
applications in higher-dimensional rewriting~\cite{CalkGMS20}.  We
start with a formal summary of the $1$-dimensional and
$2$-dimensional cases considered in~\cite{FahrenbergJSZ21a,
  CranchDS21}.

Let $C$ be a catoid and $Q$ a quantale,  henceforth called
\emph{value} or \emph{weight quantale}. We write $Q^C$ for the set of
functions from $C$ to $Q$.  We define, for $f,g:C\to Q$, the
\emph{convolution operation} $\ast:Q^C\times Q^C\to Q^C$ as
\begin{equation*}
  (f\ast g)(x) = \Sup_{x\in y\odot z} f(y)\cdot g(z)
\end{equation*}
and the \emph{unit function} $\id_0:C\to Q$ as

\begin{equation*}
  \id_0(x)=
  \begin{cases}
    1 & \text{ if } x\in Q_0,\\
    \bot & \text{ otherwise}.
  \end{cases}
\end{equation*}
We define $\Sup F:C\to Q$ by pointwise extension,
$(\Sup F)(x) = \Sup \{f(x)\mid f \in F\}$ for $F\subseteq Q^C$, and in
particular $\bot:C\to Q$ by $\bot(x)=\bot$ for all $x\in C$,
overloading notation.  We also extend the order on $Q$ pointwise to a
relation on $Q^C$. It is consistent with the standard order on the
lattice $C\to Q$.

Following~\cite{CranchDS21,FahrenbergJSZ21a} we introduce further
notation. For any predicate $P$ we define
\begin{equation*}
  [P]=
  \begin{cases}
    1 & \text{ if } P,\\
    \bot & \text{ otherwise},
  \end{cases}
\end{equation*}
and then $\delta_x(y) = [x=y]$.  Any $f:C\to Q$ can now be written as
\begin{equation*}
  f =\bigvee_{x\in C} f(x)\cdot \delta_x,
\end{equation*}
where $\cdot$ is a module-style action between the ``scalars''
$f(x) \in Q$ and functions $\delta_x\in Q^C$.  More generally, we often
write $\delta^\alpha_x$ for $\alpha\cdot \delta_x$. Then
\begin{align*}
  \id_{Q_0} &= [e\in Q_0] =\Sup_{e\in Q_0}\delta_e,\\
\Sup F &= \Sup_{x \in C}\Sup\{f(x)\mid f \in F\} \cdot \delta_x,\\
f\lor g &= \Sup_{x\in C} (f(x)\lor g(x))\cdot \delta_x.
\end{align*}
Also, for
convolution,
\begin{align*}
  (f\ast g)(x) &= \Sup_{y,z\in C} f(y)\cdot g(z)\cdot [x\in y\odot
  z],\\
f\ast g &= \Sup_{x,y,z\in C} f(y)\cdot g(z)\cdot [x\in y\odot z]\cdot
\delta_x.
\end{align*}

Finally, whenever $Q$ is a modal quantale, we define
$\Dom,\Cod:Q^C\to Q^C$ as
\begin{align*}
  \Dom(f) &= \Sup_{x\in C} \dom(f(x))\cdot
            \delta_{\src(x)},\\ \Cod(f) &= \Sup_{x\in C} \cod(f(x))\cdot
                                          \delta_{\tgt(x)}.
\end{align*}

Correspondence triangles for relational monoids and quantales as well
as relational interchange monoids and interchange quantales are
already known~\cite{CranchDS21}. They have been extended to local
catoids and modal quantales~\cite{FahrenbergJSZ21a}.  The resulting
$2$-out-of-$3$ laws between catoids $C$, value algebras $Q$ and
convolution algebras $Q^C$ require mild non-degeneracy conditions on
$C$ or $Q$. They have previously been given at a fine level of
granularity to explain correspondences between individual laws in $C$,
$Q$ and $Q^C$. Here we only summarise those results relevant to
higher-dimensional extensions.

The following fact translates results for relational monoids and
related structures to the setting of catoids.

\begin{theorem}[{\cite[Theorem 4.8]{DongolHS21},\cite[Proposition 16, Corollary 21]{CranchDS21}}]\label{theorem:quantale-corresp}~
  \begin{enumerate}
  \item Let $C$ be a catoid and $Q$ a quantale. Then $Q^C$ is a
    quantale with the convolution and unit structure defined
    above.
  \item Let $X$ be a set, let $Q^X$ and $Q$ be quantales such that
    $\bot\neq 1$ and $\alpha\cdot (\beta\cdot \gamma) \neq \bot$ for
    some $\alpha,\beta,\gamma\in Q$. Then $X$ can be
    equipped with a catoid structure.
  \item Let $Q$ be a complete lattice equipped with a
    multiplication that preserves arbitrary sups and has a unit.
    Let $Q^C$ be a quantale and $C$ a catoid such that
    $C_0\neq \emptyset$ and $w\in (x \odot y)\odot z$ for some
    $w,x,y,z\in C$. Then $Q$ is a quantale.
  \end{enumerate}
\end{theorem}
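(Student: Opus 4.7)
The plan is to verify each of the three claims by checking the relevant quantale or catoid axioms, exploiting throughout that every $f\in Q^C$ decomposes as $f = \Sup_{x\in C} f(x)\cdot \delta_x$, so that structural properties of $Q^C$ can be tested against, and in converse directions transferred from, the ``atoms'' $\delta^\alpha_x$.

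For (1), $Q^C$ inherits its complete lattice structure pointwise from $Q$, so it suffices to verify sup-preservation and associativity of $\ast$ together with the unit laws for $\id_0$. Sup-preservation follows from sup-preservation of $\cdot$ in $Q$ together with the fact that sups commute with sups. For associativity, expanding both $((f\ast g)\ast h)(x)$ and $(f\ast(g\ast h))(x)$ and interchanging the order of summation reduces the claim to associativity of $\cdot$ in $Q$ combined with the catoid associativity axiom $\bigcup\{x\odot v\mid v\in y\odot z\} = \bigcup\{u\odot z\mid u\in x\odot y\}$. The unit laws then collapse, via the catoid unit identities $\src(x)\odot x=\{x\}$ and $x\odot\tgt(x)=\{x\}$, to point evaluations.

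For (2), I would recover a catoid on $X$ as follows. Define the multioperation by $y\odot z = \{x\in X\mid (\delta_y\ast\delta_z)(x)\neq \bot\}$, and read off source and target maps from the support of the unit function in $Q^X$. Associativity and the unit axioms for $\odot$ then reflect the corresponding axioms for $\ast$ applied to atoms $\delta^\alpha_x$. The non-degeneracy hypotheses $\bot\neq 1$ in $Q$ and $\alpha\cdot(\beta\cdot\gamma)\neq\bot$ for some $\alpha,\beta,\gamma$ ensure that these atomic identities are detected nontrivially; otherwise convolution could collapse to $\bot$ everywhere and yield no information about $X$.

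For (3), fix any $e\in C_0$ (available since $C_0\neq\emptyset$) and consider the embedding $\alpha\mapsto \delta_e^\alpha$ from $Q$ into $Q^C$. The catoid unit axioms at $e$ give $\delta_e^\alpha\ast\delta_e^\beta = \delta_e^{\alpha\cdot\beta}$, so convolution restricts on the image of this embedding to the multiplication of $Q$. Associativity and sup-preservation of $\cdot$ in $Q$ then lift from the corresponding properties of $\ast$. The hypothesis $w\in(x\odot y)\odot z$ for some witnesses supplies a nondegenerate triple composition in $C$, needed so that associativity of $\ast$ on atoms genuinely forces, rather than merely permits, associativity of $\cdot$. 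The main obstacle across parts (2) and (3) is precisely this bookkeeping: one must check carefully that the stated non-degeneracy hypotheses prevent any collapse of atomic identities, so that the algebraic structure actually transports back from $Q^X$ to $X$ and from $Q^C$ to $Q$; a detailed treatment can be found in the cited references.
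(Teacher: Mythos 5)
The paper gives no proof of this theorem: it is recalled verbatim from the cited references, so the only comparison available is with the atom-based methodology those references (and the paper's own later correspondence proofs, Theorems~7.5--7.9 and Lemma~7.7) employ — and your sketch follows exactly that methodology, decomposing $f=\Sup_x f(x)\cdot\delta_x$ and testing axioms on the atoms $\delta_x^\alpha$. Parts (1) and (3) are correct as outlined; in (3) your embedding $\alpha\mapsto\delta_e^\alpha$ at a unit $e$, which is multiplicative because $e\odot e=\{e\}$ (Lemma~2.3), is clean and in fact shows that the hypothesis $w\in(x\odot y)\odot z$ is subsumed by $C_0\neq\emptyset$. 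Two small points in (2) deserve care. First, your definition $y\odot z=\{x\mid(\delta_y\ast\delta_z)(x)\neq\bot\}$ only recovers a catoid if the multiplication on $Q^X$ is understood to be a convolution with respect to some multioperation on $X$ (the intended reading in the cited sources); for a completely arbitrary sup-preserving $\ast$ on the function space, $\ast$ need not commute with the scalar action $\alpha\cdot\delta_x$, and associativity of $\odot$ would not follow. Second, the support of $\id_0$ only yields the set $X_0$ of units, not the maps $\src,\tgt$; to obtain these one must additionally derive, from the unit laws applied to atoms together with $1\neq\bot$, that each element has a unique left and a unique right unit (the multimonoid-to-catoid step of Remark~2.9). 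Neither point is a fatal gap, but both are steps your sketch elides rather than proves.
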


We henceforth refer to quantales on function spaces, such as $Q^C$, as
\emph{convolution quantales}. The two-out-of-three correspondence in
Theorem~\ref{theorem:quantale-corresp} is illustrated in
Figure~\ref{Fig:triangle1} in the introduction.

In the following lemma, \emph{double catoid} refers to a set equipped
with two catoid structures that do not interact. Likewise,
\emph{double quantale} refers to a complete lattice equipped with two
monoidal structures that do not interact. In particular, there are no
interchange laws.

\begin{theorem}[{\cite[p,~934]{CranchDS21}}]\label{theorem:interchange-corresp}
  Let $(C,\cdot_0,\src_0,\tgt_0,\cdot_1,\src_1,\tgt_1)$ be a double
  catoid, and further let $(Q,\le,\cdot_0,1_0,\cdot_1,1_1)$ be a double quantale.
  Let $(Q^C,\le,\ast_0,\id_0,\ast_1,\id_1)$ be the associated double convolution quantale.
\begin{enumerate}
\item The interchange law holds in $Q^C$ if it holds in $C$ and $Q$.
\item The interchange law holds in $C$ if it holds in $Q$ and $Q^C$,
  and if $(\alpha\cdot_1 \beta)\cdot_0 (\gamma\cdot_1 \delta)\neq \bot$ for some $\alpha,\beta,\gamma,\delta\in Q$.
\item The interchange law holds in $Q$ if it holds in $C$ and $Q^C$,
  and if $y \in u\odot_1 v$, $z \in w \odot_1 x$ and $\Delta(y,z)$
  hold for some $u,v,w,x,y,z\in C$.
\end{enumerate}
\end{theorem}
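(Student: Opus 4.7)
The plan is to handle item (1) by a direct pointwise computation of the convolutions, and items (2) and (3) by the familiar atomic technique: evaluate the $Q^C$-interchange law on the test functions $\delta^\alpha_x = \alpha\cdot\delta_x$ and read off the resulting inequality as either a scalar inequality in $Q$ or a relational inclusion in $C$.

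For (1), fix $f,g,h,k\in Q^C$ and $a\in C$. Expanding the convolutions and using that $\cdot_0$ preserves sups in both arguments,
\begin{equation*}
((f\ast_1 g)\ast_0(h\ast_1 k))(a) = \Sup_{\substack{a\in u\cdot_0 v\\ u\in p\cdot_1 q,\ v\in r\cdot_1 s}} (f(p)\cdot_1 g(q))\cdot_0(h(r)\cdot_1 k(s)).
\end{equation*}
Applying the interchange law of $Q$ inside the supremum majorises each summand by $(f(p)\cdot_0 h(r))\cdot_1(g(q)\cdot_0 k(s))$, while the sum condition $a\in(p\cdot_1 q)\cdot_0(r\cdot_1 s)$ combined with the interchange law of $C$ forces $a\in(p\cdot_0 r)\cdot_1(q\cdot_0 s)$. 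Refolding the suprema along the new witnesses recovers $((f\ast_0 h)\ast_1(g\ast_0 k))(a)$, proving the pointwise inequality and hence the inequality in $Q^C$.

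For (2), fix $w,x,y,z\in C$, and let $\alpha,\beta,\gamma,\delta$ be scalars supplied by the existence hypothesis with $(\alpha\cdot_1\beta)\cdot_0(\gamma\cdot_1\delta)\neq\bot$. A direct unfolding using the definition of $\ast_0,\ast_1$ and the identity $\delta^\alpha_u(p)\cdot_i \delta^\beta_v(q) = (\alpha\cdot_i\beta)[p=u][q=v]$ yields
\begin{equation*}
((\delta^\alpha_w\ast_1\delta^\beta_x)\ast_0(\delta^\gamma_y\ast_1\delta^\delta_z))(a) = [a\in(w\cdot_1 x)\cdot_0(y\cdot_1 z)]\cdot((\alpha\cdot_1\beta)\cdot_0(\gamma\cdot_1\delta)),
\end{equation*}
and the symmetric formula for the other side carries scalar $(\alpha\cdot_0\gamma)\cdot_1(\beta\cdot_0\delta)$ and indicator $[a\in(w\cdot_0 y)\cdot_1(x\cdot_0 z)]$. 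For any $a\in(w\cdot_1 x)\cdot_0(y\cdot_1 z)$, the left-hand value equals the non-$\bot$ scalar, so the $Q^C$-interchange law forces the right-hand indicator to equal $1$, which is exactly $a\in(w\cdot_0 y)\cdot_1(x\cdot_0 z)$. Item (3) is dual: take $f=\delta^\alpha_u$, $g=\delta^\beta_v$, $h=\delta^\gamma_w$, $k=\delta^\delta_x$ with the distinguished witnesses $y\in u\cdot_1 v$, $z\in w\cdot_1 x$, $\Delta(y,z)$, and pick any $a\in y\cdot_0 z$ so that the left-hand indicator equals $1$; the interchange law in $C$ promotes this to $a\in(u\cdot_0 w)\cdot_1(v\cdot_0 x)$, so the right-hand indicator equals $1$, and the $Q^C$-interchange law at $a$ reduces to the desired scalar inequality $(\alpha\cdot_1\beta)\cdot_0(\gamma\cdot_1\delta)\le(\alpha\cdot_0\gamma)\cdot_1(\beta\cdot_0\delta)$.

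The principal obstacle is bookkeeping rather than creativity: the nested suprema must be manipulated without losing the definedness information encoded by the $\Delta_i$, and the indicator formalism must be carefully tracked so that the atomic evaluations in (2) and (3) factor cleanly into scalar times indicator. This is also precisely why the non-triviality hypotheses are needed: without $(\alpha\cdot_1\beta)\cdot_0(\gamma\cdot_1\delta)\neq\bot$ in (2) the $Q^C$-inequality at the chosen atoms would be satisfied vacuously by both sides being $\bot$, and without the three witnesses in (3) the left-hand test function could be identically $\bot$ and yield no information about $Q$.
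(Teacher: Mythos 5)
Your proposal is correct and follows essentially the same route the paper takes: the pointwise supremum computation for (1) is the one reproduced verbatim in the proof of Theorem~\ref{theorem:globular-corresp1}, and the evaluation on the atoms $\delta^\alpha_x$ for (2) and (3) is exactly the mechanism of Lemma~\ref{lemma:corresp-lem} and Theorems~\ref{theorem:globular-corresp2}--\ref{theorem:globular-corresp3} (the paper itself defers the proof of this particular statement to \cite{CranchDS21}, which uses the same technique). Your reading of the non-degeneracy hypotheses — that they prevent both sides from collapsing to $\bot$ at the chosen atoms — also matches the paper's notion of ``sufficiently supported'' structures.
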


Once again we have translated the original statement
in~\cite{CranchDS21} from relational monoids to catoids along the
isomorphism between these structures. This two-out-of-three
correspondence is illustrated in Figure~\ref{Fig:triangle4}. Double
quantales in which the interchange law holds are known as
\emph{interchange quantales}. The results in~\cite{CranchDS21}
establish in fact correspondences between $2$-catoids and interchange
quantales, but only the interchange law of the double catoid
contributes to the interchange law in the double quantale and vice
versa.

\begin{figure}[t]
    \centering
  \begin{tikzpicture}[x=4cm, y=3cm]
    \node (X) at (0,0) {interchange catoid $C$};
    \node (Q) at (1,0) {interchange quantale $Q$};
    \node (R) at (.5,.8) {interchange quantale $Q^C$};
    \path (X) edge[-] node[coordinate, pos=.2] (QX) {} node[coordinate, pos=.8] (XQ) {} (Q);
    \path (X) edge[-] node[coordinate, pos=.2] (RX) {} node[coordinate, pos=.85] (XR) {} (R);
    \path (Q) edge[-] node[coordinate, pos=.2] (RQ) {} node[coordinate, pos=.85] (QR) {} (R);
  \end{tikzpicture}
  \caption{Correspondence between interchange catoid $C$,
    interchange quantale $Q$ and interchange quantale $Q^C$
    from~\cite{CranchDS21}}
  \label{Fig:triangle4}
\end{figure}

\begin{theorem}[{\cite[Theorems 7.1, 8.4, 8.5]{FahrenbergJSZ21a}}]\label{theorem:modal-corresp}
  Let $(C,\odot,\src,\tgt)$ be a catoid and $(Q,\le,\cdot,1)$ be a quantale.
  Let $(Q^C,\le,\ast,\id_0)$ be the associated convolution quantale.
  \begin{enumerate}
  \item $Q^C$ is modal if $C$ is local and $Q$ modal.
    \item $C$ is local if $Q$ and $Q^C$ are modal, and if $1\neq \bot$
      in $Q$.
    \item $Q$ is modal if $C$ is local and $Q^C$ modal, and if
      $\Delta(\ell(x), r(y))$ and $\Delta(z,w)$ for some
      $w,x,y,z\in C$.
  \end{enumerate}
\end{theorem}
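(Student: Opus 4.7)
The plan is to establish the three implications separately, in each case working at the convolution level and exploiting the atomic decomposition $f = \bigvee_{x\in C} f(x)\cdot \delta_x$ recalled just before the theorem statement.

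For part (1), I assume $C$ is local and $Q$ is modal, and verify that the operations $\Dom$ and $\Cod$ defined on $Q^C$ above satisfy all five domain axioms of Section~\ref{S:modal-quantales}, their codomain counterparts, and the two compatibility laws. The bottom, subidentity, and binary sup axioms reduce to pointwise computations in $Q$: in particular $\Dom(f)\le \id_0$ because each summand $\dom(f(x))\cdot \delta_{\src(x)}$ is supported on a source element $\src(x)\in C_0$ and bounded by $1 = \id_0(\src(x))$. For absorption $f \le \Dom(f)\ast f$, I evaluate the convolution at each $v\in C$ and use the catoid unit law $\src(v)\odot v = \{v\}$ to exhibit the lower summand $\dom(f(v))\cdot f(v)\ge f(v)$. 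The delicate step is the locality axiom $\Dom(f\ast \Dom(g)) = \Dom(f\ast g)$: after expanding both sides and matching $\delta_u$-coefficients, this reduces to the equational form of locality $\src(yz) = \src(y\,\src(z))$ from Lemma~\ref{lemma:msg-loc}, combined with the locality of $\dom$ in $Q$. The codomain axioms follow symmetrically; compatibility $\Dom\circ\Cod = \Cod$ reduces to compatibility in $Q$ together with $\src\circ\tgt = \tgt$ from Lemma~\ref{lemma:mm-props}(1).

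For part (2), I assume $Q$ and $Q^C$ are modal with $1\neq \bot$, and derive locality of $C$ in its equational form given by Lemma~\ref{lemma:msg-loc}(2). The idea is to apply $\Dom$ to the atomic products $\delta_x\ast \delta_y$ and $\delta_x\ast \Dom(\delta_y)$ and read off the resulting coefficients on each $\delta_u$. The non-degeneracy $1\neq \bot$ is precisely what allows scalars $1$ and $\bot$ to be distinguished at each atom, so that equality of the resulting functions in $Q^C$ forces the equality $\src(x\odot y) = \src(x\odot \src(y))$ of sets in $C$. A dual argument handles the target side.

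For part (3), I assume $C$ is local and $Q^C$ is modal, and recover a modal structure on $Q$. The scalar operations $\dom$ and $\cod$ on $Q$ are extracted by evaluating $\Dom$ and $\Cod$ on atoms of the form $\delta_x^\alpha = \alpha\cdot \delta_x$ for witnesses supplied by the two non-degeneracy hypotheses, and reading off the coefficient at the appropriate atom. Each modal axiom in $Q$ is then obtained by plugging a carefully chosen family of test atoms into the corresponding axiom of $Q^C$, in the spirit of the two-out-of-three arguments of~\cite{CranchDS21,FahrenbergJSZ21a}. The main obstacle is this part (3): one must define $\dom$ and $\cod$ on $Q$ in a way that is well-defined independently of the choice of witness, and verify each modal axiom by selecting the right atomic test function; this is where both non-degeneracy hypotheses genuinely matter. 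Parts (1) and (2) are structurally routine once the atomic decomposition and the equational form of locality are deployed.
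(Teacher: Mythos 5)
Your proposal is correct and follows essentially the same route as the paper, which defers the proof to \cite{FahrenbergJSZ21a} but deploys exactly this technique (the atomic decomposition via $\delta_x^\alpha$, the equational characterisation of locality from Lemma~\ref{lemma:msg-loc}, and coefficient-extraction under the non-degeneracy hypotheses) in its own $\omega$-dimensional generalisations, Theorems~\ref{theorem:globular-corresp1}--\ref{theorem:globular-corresp3} and Lemma~\ref{lemma:corresp-lem}. Your identification of catoid locality as the ingredient needed for the $\Dom$-locality axiom in part (1), and of $1\neq\bot$ as what lets one read set equalities off from function equalities in part (2), matches the paper's intent precisely.
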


This two-out-of-three correspondence is illustrated in
Figure~\ref{Fig:triangle5}.

\begin{figure}[t]
    \centering
  \begin{tikzpicture}[x=3.4cm, y=3cm]
    \node (X) at (0,0) {local catoid $C$};
    \node (Q) at (1,0) {modal quantale $Q$};
    \node (R) at (.5,.8) {modal quantale $Q^C$};
    \path (X) edge[-] node[coordinate, pos=.2] (QX) {} node[coordinate, pos=.8] (XQ) {} (Q);
    \path (X) edge[-] node[coordinate, pos=.2] (RX) {} node[coordinate, pos=.85] (XR) {} (R);
    \path (Q) edge[-] node[coordinate, pos=.2] (RQ) {} node[coordinate, pos=.85] (QR) {} (R);
  \end{tikzpicture}
  \caption{Correspondence between local catoid $C$, modal quantale $Q$
    and modal convolution quantale $Q^C$}
  \label{Fig:triangle5}
  \end{figure}

\begin{example}[\cite{CranchDS21,FahrenbergJSZ21a}]\label{ex:cat-quantale}
 Theorems~\ref{theorem:interchange-corresp}
 and~\ref{theorem:modal-corresp} have the following 
 instances that generalise Example~\ref{ex:catoid-quantale}. Let $Q$
 be a modal value quantale. 
 \begin{enumerate}
 \item Every category $C$ extends to a modal convolution quantale
   $Q^C$. It is similar to a category algebra, using a quantale as
   value algebra instead of a ring or field and extending the source
   and target structure of the category, which category algebras
   ignore.
 \item The modal powerset quantales from
   Example~\ref{ex:catoid-quantale} are convolution quantales with
   $Q=2$.
 \item Every path category over a digraph extends to a modal convolution
   quantale of $Q$-valued paths.
 \item Every pair groupoid extends to a modal convolution quantale of
   $Q$-valued relation.
 \item Every shuffle $2$-catoid extends to a convolution $2$-quantale
   of $Q$-weighted shuffle languages if $Q$ is a $2$-quantale.
 \end{enumerate}
\end{example}

We now extend the constructions in the proofs of
Theorems~\ref{theorem:interchange-corresp}
and~\ref{theorem:modal-corresp} to proofs of $2$-out-of-$3$
correspondence triangles between $\omega$-catoids and
$\omega$-quantales, as shown in the diagram on the left of
Figure~\ref{Fig:triangle2}. Given the proofs in these theorems it
remains to consider the globular
structure. Theorem~\ref{theorem:interchange-corresp} guarantees a
$2$-out-of-$3$ correspondence between the interchange laws in
$\omega$-catoids and $\omega$-quantales, while
Theorem~\ref{theorem:modal-corresp} supplies such a correspondence
between the source and target structure in $\omega$-catoids and the
domain and codomain structure in $\omega$-quantales in each
dimension. Locality of $\omega$-catoids, in particular, is needed to
reflect locality of the domain and codomain structure. This, in turn,
is needed for defining modal operators as actions in
Section~\ref{S:modalities}.

In the $\omega$-setting, our notation for $[P]$ or $\delta$-functions
requires dimension indices, strictly speaking.  But in practice, we
can always pick and swap such indices. Hence we usually suppress them,
as will become clear in the proofs below.

\begin{theorem}\label{theorem:globular-corresp1}
  Let $C$ be a local $\omega$-catoid and $Q$ an
  $\omega$-quantale. Then $Q^C$ is an $\omega$-quantale. 
\end{theorem}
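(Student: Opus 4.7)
The plan is to define, for each $i < \omega$, the convolution operation $\ast_i$ on $Q^C$ using $\odot_i$ and $\cdot_i$, the unit $\id_i$ using $C_i$ and $1_i$, and the domain/codomain maps $\Dom_i, \Cod_i$ using $\src_i, \tgt_i$ and $\dom_i, \cod_i$, exactly as in the general definitions of convolution recalled at the start of Section~\ref{S:globular-convolution}. I then verify each clause of the $\omega$-quantale definition for $Q^C$ by invoking the already-proved one-dimensional correspondences in each fixed dimension, and by direct calculation for the interaction axioms that involve two different dimensions.

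First I would dispose of all axioms that live in a single dimension $i$. Since $C$ is a local $\omega$-catoid, each $(C,\odot_i,\src_i,\tgt_i)$ is a local catoid, and each $(Q,\cdot_i,1_i,\dom_i,\cod_i)$ is a modal quantale. Theorem~\ref{theorem:quantale-corresp}(1) gives that $(Q^C,\ast_i,\id_i)$ is a quantale, and Theorem~\ref{theorem:modal-corresp}(1) upgrades this to a modal quantale with $\Dom_i$ and $\Cod_i$. This takes care of the first bullet in the $\omega$-quantale definition, at each $i$, without further work. Next, for each pair $i<j$, the interchange law $(f\ast_j g)\ast_i(h\ast_j k)\le (f\ast_i h)\ast_j(g\ast_i k)$ follows from Theorem~\ref{theorem:interchange-corresp}(1) applied to the double catoid $(C,\odot_i,\odot_j)$ and the double quantale $(Q,\cdot_i,\cdot_j)$: both satisfy the interchange law by hypothesis, hence so does the convolution.

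What remains are the cross-dimensional axioms for $\Dom_i$ and $\Cod_i$: the weak morphism laws $\Dom_i(f\ast_j g)\le \Dom_i(f)\ast_j \Dom_i(g)$ (and the $\Cod_i$-variant) for $i\neq j$, and the whisker law $\Dom_j\circ\Dom_i = \Dom_i$ for $i<j$ (with the $\Cod_j/\Cod_i$-variants). These I would attack directly using the $\delta_x^\alpha$-decomposition. For the morphism law, I unfold
\begin{equation*}
  \Dom_i(f\ast_j g) = \Sup_{x\in C}\dom_i\!\Bigl(\Sup_{x\in y\odot_j z} f(y)\cdot_j g(z)\Bigr)\cdot \delta_{\src_i(x)},
\end{equation*}
use that $\dom_i$ preserves sups and that $\dom_i(\alpha\cdot_j\beta)\le \dom_i(\alpha)\cdot_j\dom_i(\beta)$ in $Q$, and then reindex the sum using the catoid morphism law $\src_i(y\odot_j z)\subseteq \src_i(y)\odot_j\src_i(z)$ so that each summand is absorbed into a term of $\Dom_i(f)\ast_j \Dom_i(g)$; the $\Cod_i$ case is dual. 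For the whisker law, the same unfolding combined with $\dom_j\circ\dom_i=\dom_i$ in $Q$ and $\src_j\circ\src_i=\src_i$ in $C$ (Lemma~\ref{lemma:2-msg-props}) collapses $\Dom_j(\Dom_i(f))$ back to $\Dom_i(f)$ term by term.

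The main obstacle is the bookkeeping in the morphism laws: after pushing $\dom_i$ inside the convolution sup one obtains summands indexed by triples $(x,y,z)$ with $x\in y\odot_j z$, but the target side $\Dom_i(f)\ast_j\Dom_i(g)$ is a sup over triples $(x',y',z')$ with $x'\in \src_i(y)\odot_j\src_i(z)$ appearing after the $\delta_{\src_i(\cdot)}$ coefficients. Matching the two supremum structures requires the catoid morphism axiom to be used in exactly the right direction and in conjunction with locality (needed to discard degenerate summands where the right-hand composite would be defined but the left-hand one is not). Once this reindexing is carried out cleanly the inequalities fall out, and the remaining axioms, including $1_i\le 1_j$ in the form $\id_i\le \id_j$, follow from the corresponding $\omega$-catoid facts $C_i\subseteq C_j$ proved in Section~\ref{S:2-lr-msg}.
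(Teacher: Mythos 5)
Your proposal is correct and follows essentially the same route as the paper: reduce the single-dimension and interchange axioms to Theorems~\ref{theorem:quantale-corresp}, \ref{theorem:modal-corresp} and \ref{theorem:interchange-corresp}, then verify the remaining cross-dimensional morphism laws and $\Dom_j\circ\Dom_i=\Dom_i$ by unfolding the convolution sups and applying the corresponding catoid and quantale laws ($\src_i(y\odot_j z)\subseteq\src_i(y)\odot_j\src_i(z)$, $\dom_i(\alpha\cdot_j\beta)\le\dom_i(\alpha)\cdot_j\dom_i(\beta)$, $\src_j\circ\src_i=\src_i$, $\dom_j\circ\dom_i=\dom_i$). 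The only cosmetic differences are that the paper writes out the $\Dom_j(f\ast_i g)$ case and obtains the others by re-indexing and opposition, and that $\id_i\le\id_j$ need not be checked separately since $1_i\le 1_j$ is derivable in any $\omega$-quantale.
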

\begin{proof}
  In light of the proof of Theorems~\ref{theorem:interchange-corresp}
  and \ref{theorem:modal-corresp} it remains to extend the morphism
  axioms as well as the axiom $\Dom_j\circ \Dom_i= \Dom_i$ for
  $0\le i<j<\omega$. Although the proof of the interchange law is
  covered, for $2$-catoids and $2$-quantales, by
  Theorem~\ref{theorem:interchange-corresp}, we state it explicitly
  for the case of $\omega$.  We omit indices related to square
  brackets as mentioned. For $0\le i< j < \omega$,
   \begin{align*}
    &((f\ast_j g) \ast_i (h\ast_j k))(x)\\
    &= \Sup_{y,z} (f\ast_j g)(y) \cdot_i (h\ast_j k)(z) \cdot [x \in y
      \odot_i z]\\
       &= \Sup_{y,z} \left(\Sup_{t,u}f(t)\cdot_j g(u)\cdot [y
         \in t\odot_j u]\right) \cdot_i \left(\Sup_{v,w}h(v)\cdot_j
         k(w)\cdot [z \in v\odot_j w] \right) \cdot [x \in y
         \odot_i z]\\
         &= \Sup_{t,u,v,w} (f(t) \cdot_j g(u))\cdot_i (h(v)\cdot_j
           k(w)) \cdot  [x \in (t \odot_j u)\odot_i(v\odot_j w)]\\
    &\le \Sup_{t,u,v,w} (f(t) \cdot_i h(v))\cdot_j (g(u)\cdot_i
           k(w)) \cdot [x \in (t \odot_i v)\odot_j(u\odot_i w)]\\
          &= \Sup_{y,z} \left(\Sup_{t,v}f(t)\cdot_i h(v)\cdot [y
         \in t\odot_i v]\right) \cdot_j \left(\Sup_{u,w}g(u)\cdot_i
         k(w)\cdot [z \in u\odot_i w]\right) \cdot [x \in y
            \odot_j z]\\
    & = ((f \ast_i h)\ast_j (g\ast_i k))(x).
   \end{align*}
   
   For the morphism axiom
   $\Dom_j(f\ast_i g) \le \Dom_j(f)\ast_i \Dom_j(g)$,
\begin{align*}
  &\Dom_j(f\ast_i g)(x)\\
  &=\Sup_u\dom_j\left(\Sup_{v,w}f(v)\cdot_i g(w))\cdot [u\in v\odot_i w]\right)\cdot
  \delta_{\src_j(u)}(x)\\
&=\dom_j\left(\Sup_{v,w}f(v)\cdot_i g(w)\right)\cdot [x\in \src_j(v\odot_i w)]\\
&=\Sup_{v,w}\dom_j(f(v)\cdot_i g(w))\cdot [x\in \src_j(v\odot_i w)]\\
&\le \Sup_{v,w}\dom_j(f(v))\cdot_i \dom_j(g(w))\cdot [x\in
  \src_j(v)\odot_i \src_j(w)]\\
&=\Sup_{t,u} \left(\Sup_v\dom_j(f(v))\cdot_j\delta_{\src_j(v)}(t)\right)\cdot_i
  \left(\Sup_w\dom_j(g(w))\cdot_j \delta_{\src_j(w)}(u)\right)\cdot  [x\in t\odot_i u]\\
&= (\Dom_j(f)\ast_i \Dom_j(g))(x).
       \end{align*}
 The proof of $\Cod_j(f\ast_i g) \le \Cod_j(f)\ast_i \Cod_j(g)$ follows by opposition.

 The proofs of
 \begin{align*}
   \Dom_i(f\ast_j g) &\le \Dom_i(f)\ast_j \Dom_i(g),\\
   \Cod_i(f\ast_j g) &\le \Cod_i(f)\ast_j \Cod_i(g)
 \end{align*}
 are obtained by re-indexing these proofs.

 Finally, for $\Dom_j\circ \Dom_i = \Dom_i$,
  \begin{align*}
    \Dom_j(\Dom_i(f))
     & = \Sup_u \dom_j\left(\Sup_v \dom_i(f(v))\cdot
       \delta_{\src_i(v)}(u) \right)\cdot \delta_{\src_j(u)}\\
    &= \Sup_v\dom_j(\dom_i(f(v))) \cdot \delta_{\src_j(\src_i(v))}\\
    &= \Sup_v\dom_i(f(v))\cdot \delta_{\src_i(v)}\\
    &= \Dom_i(f).\qedhere
  \end{align*}
\end{proof}

For strong $\omega$-catoids, and hence $\omega$-categories, we obtain a
stronger result.

\begin{corollary}\label{corollary:globular-corresp1}
  Let $C$ be a strong local $\omega$-catoid and $Q$ a strong
  $\omega$-quantale. Then $Q^C$ is a strong $\omega$-quantale.
\end{corollary}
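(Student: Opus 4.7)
The plan is to re-run the proof of Theorem~\ref{theorem:globular-corresp1} and verify that, under the strong hypotheses on $C$ and $Q$, the one non-equality step in that argument upgrades to an equality. Recall that the strong axioms are precisely what is needed: for $i<j$, $C$ satisfies $\src_j(v\odot_i w)=\src_j(v)\odot_i\src_j(w)$ and $\tgt_j(v\odot_i w)=\tgt_j(v)\odot_i\tgt_j(w)$, while $Q$ satisfies $\dom_j(\alpha\cdot_i\beta)=\dom_j(\alpha)\cdot_i\dom_j(\beta)$ and $\cod_j(\alpha\cdot_i\beta)=\cod_j(\alpha)\cdot_i\cod_j(\beta)$.

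First I would inspect the computation for $\Dom_j(f\ast_i g)(x)$ in the proof of Theorem~\ref{theorem:globular-corresp1}. Every equation there is an unconditional identity coming from the definitions of $\ast_i$, $\Dom_j$, and the pointwise sup structure; the only inequality is the single step
\begin{equation*}
\Sup_{v,w}\dom_j(f(v)\cdot_i g(w))\cdot [x\in \src_j(v\odot_i w)] \le \Sup_{v,w}\dom_j(f(v))\cdot_i \dom_j(g(w))\cdot [x\in \src_j(v)\odot_i \src_j(w)].
\end{equation*}
Under the strong $\omega$-quantale axiom for $Q$ (applied with the roles of the indices matching $i<j$), $\dom_j(f(v)\cdot_i g(w))=\dom_j(f(v))\cdot_i\dom_j(g(w))$. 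Under the strong $\omega$-catoid axiom for $C$, $\src_j(v\odot_i w)=\src_j(v)\odot_i \src_j(w)$, so the two Iverson brackets agree. Hence the $\le$ becomes $=$ term by term, and the remaining equalities in the original computation close the chain, giving
\begin{equation*}
\Dom_j(f\ast_i g)=\Dom_j(f)\ast_i \Dom_j(g).
\end{equation*}

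The dual identity $\Cod_j(f\ast_i g)=\Cod_j(f)\ast_i \Cod_j(g)$ then follows by opposition, replacing $\dom_j$ by $\cod_j$ and $\src_j$ by $\tgt_j$ throughout; the strong axioms hold equally for $\cod_j$ and $\tgt_j$, so the same argument goes through verbatim.

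No step is truly delicate; the whole content of the corollary is that the two weak-to-strong upgrades line up — the only thing to check is that the indices match (the strong axioms are for $i<j$, and the morphism inequality being upgraded is exactly at index pair $(i,j)$ with $i<j$). Since Theorem~\ref{theorem:globular-corresp1} has already established that $Q^C$ is an $\omega$-quantale, the corollary follows from this single observation without any further work.
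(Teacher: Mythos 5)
Your proposal is correct and is essentially the paper's own proof: the paper likewise just replays the computation from Theorem~\ref{theorem:globular-corresp1}, upgrading the single inequality (the fourth step in the $\Dom_j$ chain, which uses the weak morphism law in $Q$ together with the inclusion $\src_j(v\odot_i w)\subseteq\src_j(v)\odot_i\src_j(w)$ in $C$) to an equality under the strong hypotheses, and handles $\Cod_j$ by opposition. Your additional remark that the index pattern $(i,j)$ with $i<j$ lines up exactly with the strong axioms is a correct and worthwhile sanity check, but adds nothing beyond what the paper does.
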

\begin{proof}
  It suffices to replay the proofs for the two strong morphism laws
  for $\Dom_j$ and $\Cod_j$ with an equality step in the fourth proof
  step for $\Dom_j$ above. The proof for $\Cod_j$ follows by
  opposition.
\end{proof}

For the following two theorems we tacitly assume the non-degeneracies
needed for Theorem~\ref{theorem:interchange-corresp} and
\ref{theorem:modal-corresp}, calling the respective structures
\emph{sufficiently supported}, and mention those on $C$ and $Q$
explicitly in the proof. See~\cite{CranchDS21} for the general
construction and more detailed explanations of the mechanics of
proofs.  First we recall three properties from~\cite[Lemma 8.2]{FahrenbergJSZ21a}
that generalise readily beyond one dimension.

By analogy to the double catoids and quantales considered in
Theorem~\ref{theorem:interchange-corresp}, we define an $\omega$-fold
catoid as a set equipped with $\omega$ catoid structures which do not
interact.  Likewise, an \emph{$\omega$-fold quantale} is a complete
lattice equipped with $\omega$ sup-preserving monoidal structures
which do not interact.

\begin{lemma}\label{lemma:corresp-lem}
  Let $C$ be an $\omega$-fold catoid and $Q$ be an $\omega$-fold modal
  quantale, all without globular structure.
  Let $Q^C$ be the associated $\omega$-fold modal quantale.
  Then, for $i\neq j$,
\begin{enumerate}
 \item $\Dom_i(\delta^\alpha_x) = \Sup_y \dom_i(
  \delta^\alpha_x(y))\delta_{\src_i(y)} = \dom_i(\alpha) \cdot
  \delta_{\src_i(x)}$,
\item
  $\Dom_i(\delta^\alpha_x\ast_j \delta^\beta_y)(z) = \dom_i(\alpha
  \cdot_j \beta) \cdot_i [z\in \src(x\odot_j y)]$,
 \item $ (\Dom_i(\delta^\alpha_x)\ast_j \Dom_i(\delta^\beta_y))(z) =
  \dom_i(\alpha) \cdot_j \dom_i(\beta)\cdot_j [z\in \src_i(x)\odot_j
  \src_i(y)]$,
\item $(\delta_x^\alpha\ast \delta_y^\beta)(z) = \alpha\cdot
  \beta\cdot [z \in x \odot y]$. 
\end{enumerate}
\end{lemma}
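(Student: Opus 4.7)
The plan is to prove each of the four identities by direct unfolding of the definitions of $\Dom_i$, the convolutions $\ast_j$, and the $\delta$-functions, exploiting the fact that $\delta^\alpha_x(y)$ equals $\alpha$ when $y=x$ and $\bot$ otherwise, so that every supremum over $C$ collapses to a single summand. Because the $\omega$-fold structures on $C$ and $Q$ are assumed not to interact, no globular, whisker or interchange axioms are invoked: the only quantale laws used are the bottom and sup axioms $\dom_i(\bot)=\bot$ and $\dom_i(\alpha\sup\beta)=\dom_i(\alpha)\sup\dom_i(\beta)$ of the domain structure, together with sup-preservation of $\cdot_j$ in both arguments.

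First I would dispatch item~(4) by expanding the convolution,
\[
  (\delta^\alpha_x\ast \delta^\beta_y)(z) = \Sup_{z\in u\odot v} \delta^\alpha_x(u)\cdot \delta^\beta_y(v),
\]
and observing that only the pair $(u,v)=(x,y)$ contributes a nonzero term, yielding $\alpha\cdot \beta\cdot [z\in x\odot y]$. Item~(1) follows the same pattern: the definition of $\Dom_i$ gives
\[
  \Dom_i(\delta^\alpha_x) = \Sup_{y\in C} \dom_i(\delta^\alpha_x(y))\cdot \delta_{\src_i(y)},
\]
and every summand with $y\neq x$ vanishes since $\dom_i(\bot)=\bot$, leaving $\dom_i(\alpha)\cdot \delta_{\src_i(x)}$, which is exactly the middle and right expressions of the identity.

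Items~(2) and (3) are obtained by chaining the two base computations. For~(2), apply~(4) to rewrite the convolution $\delta^\alpha_x\ast_j\delta^\beta_y$ pointwise as $u\mapsto (\alpha\cdot_j\beta)\cdot [u\in x\odot_j y]$, substitute this into the definition of $\Dom_i$, push $\dom_i$ through the sup via the sup and bottom axioms (thus discarding summands where $u\notin x\odot_j y$), and collect the remaining terms into the form $\dom_i(\alpha\cdot_j\beta)\cdot_i [z\in \src_i(x\odot_j y)]$. For~(3), apply~(1) to each factor to rewrite the left-hand side as $\delta^{\dom_i(\alpha)}_{\src_i(x)}\ast_j \delta^{\dom_i(\beta)}_{\src_i(y)}$ evaluated at $z$, and then apply~(4) directly to this convolution, producing $\dom_i(\alpha)\cdot_j\dom_i(\beta)\cdot_j [z\in \src_i(x)\odot_j \src_i(y)]$.

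The only subtlety, such as it is, is notational: one must track which unit $1_i$ or $1_j$ is implicit in each bracket $[P]$ and each $\delta_e$-function as they migrate between $\ast_j$, $\ast_i$, $\cdot_j$ and $\cdot_i$. As remarked at the start of this section, these indices can be picked and swapped freely for the manipulations at hand, so no genuine obstacle arises. No catoid interaction law enters, so the arguments are uniform in the unordered pair $i\neq j$, and the corresponding identities for $\Cod_i$ follow by opposition.
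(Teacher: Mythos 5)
Your proposal is correct and is exactly the direct definition-unfolding computation that the paper relies on: the paper prints no proof of this lemma, citing the one-dimensional version in [Lemma 8.2, FahrenbergJSZ21a] and asserting it "generalises readily", and your four computations (collapsing the sups via $\delta$, using only $\dom_i(\bot)=\bot$, sup-preservation of composition, and the unit laws, then chaining (1) and (4) to get (2) and (3)) are precisely that routine generalisation. The remark about freely choosing the suppressed indices on $[P]$ and $\delta_e$ matches the paper's own convention stated just before the lemma.
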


\begin{theorem}\label{theorem:globular-corresp2}
  Let $X$ be a set, let $Q^X$ and $Q$ be $\omega$-quantales
  with $Q$ sufficiently supported. Then $X$ can be equipped
    with a local $\omega$-catoid structure.
\end{theorem}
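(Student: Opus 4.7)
The plan is to construct a local $\omega$-catoid structure on $X$ dimension by dimension, using the one-dimensional correspondences recorded in Theorems~\ref{theorem:quantale-corresp}(2), \ref{theorem:modal-corresp}(2), and \ref{theorem:interchange-corresp}(2), and then to lift the remaining globular and whisker axioms via the $\delta$-function techniques of Lemma~\ref{lemma:corresp-lem}. The hypothesis that $Q$ is sufficiently supported is understood to supply the required non-degeneracy---non-bottom units and products---at each dimension and at each pair of dimensions.

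First, for each $i<\omega$, I would apply Theorem~\ref{theorem:quantale-corresp}(2) to the pair of quantales $(Q^X,\ast_i)$ and $(Q,\cdot_i,1_i)$ to obtain a catoid $(X,\odot_i,\src_i,\tgt_i)$, where $\odot_i$ is extracted via $z\in x\odot_i y \Leftrightarrow (\delta_x\ast_i \delta_y)(z)\neq \bot$ and $\src_i,\tgt_i$ are the unit maps supplied by the catoid axioms. Because the same set $X$ is used at every dimension, these catoid structures automatically share a common carrier. Next, Theorem~\ref{theorem:modal-corresp}(2) applied at each $i$ promotes each catoid to a local catoid, using the modal structure of $Q$ and $Q^X$ in dimension $i$. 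For each pair $i<j$, Theorem~\ref{theorem:interchange-corresp}(2) applied to the double catoid and double quantale formed by restricting to dimensions $i$ and $j$ then yields the interchange axiom $(w\odot_j x)\odot_i (y\odot_j z) \subseteq (w\odot_i y)\odot_j (x\odot_i z)$ on $X$.

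What remains are the commutation axioms, the weak morphism axioms, and the whisker axioms connecting different dimensions. For $\src_i\circ \src_j = \src_j\circ \src_i$ (and the three analogues obtained by replacing $\src$ with $\tgt$), I would evaluate the corresponding $\omega$-quantale identities from Lemma~\ref{lemma:2-quantale-props}(5) on $\delta^\alpha_x$ with $\alpha\neq \bot$; by Lemma~\ref{lemma:corresp-lem}(1), applied twice, both sides reduce to non-zero scalar multiples of $\delta_{\src_i(\src_j(x))}$ and $\delta_{\src_j(\src_i(x))}$, forcing these indices to coincide. For the weak morphism axiom $\src_i(x\odot_j y) \subseteq \src_i(x)\odot_j \src_i(y)$ (for $i\neq j$), I would evaluate the pointwise inequality $\Dom_i(\delta_x\ast_j \delta_y) \le \Dom_i(\delta_x)\ast_j \Dom_i(\delta_y)$, which holds because $Q^X$ is an $\omega$-quantale, at an arbitrary $z\in X$ using Lemma~\ref{lemma:corresp-lem}(2,3); the resulting inequality of Iverson-weighted scalars translates, after a non-bottom scalar is cancelled using sufficient support, to the required set inclusion. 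The whisker axioms for $i<j$, such as $\src_j\circ \src_i = \src_i$, would be lifted from the $\omega$-quantale axiom $\dom_j\circ \dom_i = \dom_i$ by the same mechanism: applying $\Dom_j\circ \Dom_i = \Dom_i$ to $\delta^\alpha_x$ and invoking Lemma~\ref{lemma:corresp-lem}(1) yields $\dom_j(\dom_i(\alpha))\cdot \delta_{\src_j(\src_i(x))} = \dom_i(\alpha)\cdot \delta_{\src_i(x)}$, whence $\src_j(\src_i(x))=\src_i(x)$.

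The main obstacle will be the bookkeeping around non-degeneracy. Every $\delta$-function argument depends on the ability to choose a scalar $\alpha\in Q$ whose (iterated) modal image is still non-bottom, so that comparing supports genuinely reflects identities on $X$ rather than collapsing trivially; similarly, the interchange step invokes Theorem~\ref{theorem:interchange-corresp}(2), which requires a non-bottom four-fold product in $Q$ at the relevant pair of dimensions. The phrase ``$Q$ sufficiently supported'' must therefore encode all these requirements uniformly in $i,j$. Once that is arranged, the proof introduces no genuinely new techniques beyond those already used in the preceding one-dimensional and two-dimensional correspondences.
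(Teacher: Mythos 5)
Your proposal is correct and follows essentially the same route as the paper: reduce to Theorems~\ref{theorem:interchange-corresp}(2) and \ref{theorem:modal-corresp}(2) for the per-dimension local catoids and the interchange law, then extract the remaining cross-dimensional morphism axioms by evaluating the $\omega$-quantale inequalities on $\delta^\alpha_x$ via Lemma~\ref{lemma:corresp-lem} and cancelling a non-bottom scalar supplied by sufficient support. The only difference is that you separately verify the commutation and whisker axioms, which the paper omits because Proposition~\ref{prop:reduced-props} shows they are derivable from the morphism and interchange axioms alone; this extra work is harmless but unnecessary.
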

\begin{proof}
  Given Theorems~\ref{theorem:interchange-corresp} and
  \ref{theorem:modal-corresp} we need to check the homomorphism
  axioms. As in Theorem~\ref{theorem:globular-corresp1}, all proofs
  are similar and we show only one. Suppose
  \begin{equation*}
    \Dom_i(\delta^\alpha_x\ast_j\delta^\beta_y)(z) \le
    (\Dom_i((\delta^\alpha_x)\ast_j \Dom_i(\delta^\beta_y))(z)
  \end{equation*}
  holds in $Q^C$ and
  $\dom_i(\alpha\ast_j\beta) \le \dom_i(\alpha)\ast_j \dom_i(\beta)$
  in $Q$. Suppose also, for non-degeneracy, that
  $\dom_i(\alpha\cdot_j\beta) \neq \bot$.  Then
  $\src_i(x\odot_j y)\le \src_i(x)\odot_j \src_i(y)$ using
  Lemma~\ref{lemma:corresp-lem}(2) and (3).
\end{proof}

Once again we get stronger results for strong quantales. The proofs
are obvious.

\begin{corollary}\label{corollary:globular-corresp2}
 Let $Q^C$ and $Q$ be strong $\omega$-quantales with $Q$ sufficiently
  supported. Then $C$ is a strong local $\omega$-catoid.
\end{corollary}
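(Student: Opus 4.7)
My plan is to extend the argument of Theorem~\ref{theorem:globular-corresp2}, which already equips $C$ with a local $\omega$-catoid structure. What remains is to upgrade the weak morphism axioms to the strong morphism laws $\src_j(x \odot_i y) = \src_j(x) \odot_i \src_j(y)$ and $\tgt_j(x \odot_i y) = \tgt_j(x) \odot_i \tgt_j(y)$ for $0 \le i < j < \omega$; by opposition it suffices to treat the $\src_j$-case.

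The strategy is to evaluate $\Dom_j$ on convolutions of $\delta$-functions and read off the strong morphism law. Lemma~\ref{lemma:corresp-lem}(2),(3) yields
\begin{align*}
  \Dom_j(\delta_x^\alpha \ast_i \delta_y^\beta)(z) &= \dom_j(\alpha \cdot_i \beta) \cdot [z \in \src_j(x \odot_i y)], \\
  (\Dom_j(\delta_x^\alpha) \ast_i \Dom_j(\delta_y^\beta))(z) &= \dom_j(\alpha) \cdot_i \dom_j(\beta) \cdot [z \in \src_j(x) \odot_i \src_j(y)].
\end{align*}
Strongness of $Q^C$ equates the two left-hand sides pointwise in $z$, and strongness of $Q$ equates the coefficients via $\dom_j(\alpha \cdot_i \beta) = \dom_j(\alpha) \cdot_i \dom_j(\beta)$. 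Taking $\alpha = \beta = 1_j$ — which is permissible since sufficient support on $Q$ ensures $1_j \neq \bot$, and in the strong case Lemma~\ref{lemma:2-quantale-props}(2) gives $1_j \cdot_i 1_j = 1_j$ so that $\dom_j(1_j \cdot_i 1_j) = 1_j \neq \bot$ — the common coefficient is nonzero.

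The final step is to cancel this coefficient and conclude set equality. Each Iverson bracket lies in $\{1, \bot\}$, and since $\cdot_i$ sends $\bot$ to $\bot$, a mismatch between the two bracket factors would force $\dom_j(\alpha \cdot_i \beta) = \bot$, contradicting the witness just produced. Hence $[z \in \src_j(x \odot_i y)] = [z \in \src_j(x) \odot_i \src_j(y)]$ for every $z \in C$, yielding the desired set equality; the $\tgt_j$-case follows by opposition. The only delicate point — really a sanity check — is this cancellation step, and it is precisely what the sufficient-support hypothesis is there for, so no assumptions beyond those already present in Theorem~\ref{theorem:globular-corresp2} are required.
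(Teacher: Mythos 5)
Your proof is correct and follows the route the paper intends: replay the argument of Theorem~\ref{theorem:globular-corresp2} with the inequalities upgraded to equalities, using Lemma~\ref{lemma:corresp-lem}(2),(3) and a non-degenerate witness to cancel the common coefficient. The paper dismisses this as obvious; your explicit instantiation $\alpha=\beta=1_j$, justified via Lemma~\ref{lemma:2-quantale-props}(2) and $1_j\neq\bot$ from sufficient support, is a legitimate way to make the cancellation step precise.
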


Additional assumptions are needed to obtain $\omega$-categories. We do
not explain them in this article.

\begin{theorem}\label{theorem:globular-corresp3}
  Let $Q$ be a complete lattice equipped with a multiplication
    that preserves arbitrary sups and has a unit. Let $Q^C$ be an
  $\omega$-quantale and $C$ a local $\omega$-catoid that is
  sufficiently supported. Then $Q$ is an $\omega$-quantale.
\end{theorem}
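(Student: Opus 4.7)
The plan is to lift each $\omega$-quantale axiom from $Q^C$ to $Q$ by evaluating the corresponding identity or inequality on weighted $\delta$-functions $\delta_x^\alpha$ at judiciously chosen points in $C$, following the pattern of Theorems~\ref{theorem:quantale-corresp}(3), \ref{theorem:interchange-corresp}(3) and \ref{theorem:modal-corresp}(3). The per-dimension quantale and modal quantale structure of $Q$, as well as the interchange inequality between any pair of dimensions $i<j$, are obtained by applying those three theorems dimension by dimension (respectively, to each pair of dimensions) of the $\omega$-structure, where the sufficient-support hypothesis on $C$ supplies the required non-degenerate witnesses.

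It then remains to transfer the cross-dimensional axioms: the weak morphism laws $\dom_i(\alpha\cdot_j\beta)\le \dom_i(\alpha)\cdot_j\dom_i(\beta)$ and $\cod_i(\alpha\cdot_j\beta)\le \cod_i(\alpha)\cdot_j\cod_i(\beta)$ for $i\neq j$, and the compatibility $\dom_j\circ\dom_i=\dom_i$ for $i<j$. For the domain morphism law, fix $x,y\in C$ with $\Delta_j(x,y)$ (supplied by sufficient support) and choose $z=\src_i(w)$ for some $w\in x\odot_j y$; then $z\in\src_i(x\odot_j y)$ and, by the catoid-level weak morphism axiom, $z\in\src_i(x)\odot_j\src_i(y)$. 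Evaluating the $Q^C$-inequality $\Dom_i(\delta_x^\alpha\ast_j\delta_y^\beta) \le \Dom_i(\delta_x^\alpha)\ast_j\Dom_i(\delta_y^\beta)$ at $z$ and applying Lemma~\ref{lemma:corresp-lem}(2) on the left and Lemma~\ref{lemma:corresp-lem}(3) on the right collapses the inequality to $\dom_i(\alpha\cdot_j\beta)\le\dom_i(\alpha)\cdot_j\dom_i(\beta)$ in $Q$. The $\cod_i$ version follows by opposition. For $\dom_j\circ\dom_i=\dom_i$ with $i<j$, pick any $x\in C$; two applications of Lemma~\ref{lemma:corresp-lem}(1), together with the whisker identity $\src_j\circ\src_i=\src_i$ from Lemma~\ref{lemma:2-msg-props}, yield
\begin{equation*}
\Dom_j(\Dom_i(\delta_x^\alpha)) = \dom_j(\dom_i(\alpha))\cdot\delta_{\src_i(x)}\qquad\text{and}\qquad \Dom_i(\delta_x^\alpha) = \dom_i(\alpha)\cdot\delta_{\src_i(x)}.
\end{equation*}
Evaluating the $Q^C$-identity $\Dom_j\circ\Dom_i=\Dom_i$ at the $i$-cell $\src_i(x)$, where $\delta_{\src_i(x)}$ evaluates to $1$ by Lemma~\ref{lemma:orth-id}, gives $\dom_j(\dom_i(\alpha))=\dom_i(\alpha)$ in $Q$ for every $\alpha\in Q$.

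The main obstacle is the bookkeeping of the sufficient-support hypotheses: for each axiom one must exhibit a configuration of elements in $C$ at which the convolutional suprema in $Q^C$ collapse to a single nondegenerate term in $Q$, so that the translation introduces no extraneous factors and neither side of the lifted (in)equality is artificially inflated to $\bot$. These configurations coincide with those already required for Theorems~\ref{theorem:interchange-corresp}(3) and \ref{theorem:modal-corresp}(3), so no new hypotheses on $Q$ are needed beyond sup-preserving multiplication and a unit per dimension, and no new hypotheses on $C$ beyond local $\omega$-catoid structure with sufficient support; the strong variant of the conclusion follows by replaying the argument with the equality step in the fourth line of the morphism-law calculation, exactly as in Corollary~\ref{corollary:globular-corresp1}.
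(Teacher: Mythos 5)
Your proposal is correct and follows essentially the same route as the paper's proof: reduce the per-dimension and interchange structure to the earlier correspondence theorems, then lift the cross-dimensional morphism laws and $\dom_j\circ\dom_i=\dom_i$ by evaluating the $Q^C$-laws on $\delta$-functions at a nondegenerate point and collapsing via Lemma~\ref{lemma:corresp-lem}(1)--(3). Your choice of witness $z\in\src_i(x\odot_j y)$ (hence also in $\src_i(x)\odot_j\src_i(y)$ by the catoid morphism axiom) is, if anything, slightly more careful than the paper's; only note that $\src_j\circ\src_i=\src_i$ is a whisker axiom rather than part of Lemma~\ref{lemma:2-msg-props}.
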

\begin{proof}
  Given Theorems~\ref{theorem:interchange-corresp} and
  \ref{theorem:modal-corresp} we need to check the homomorphism axioms
  and $\dom_j\circ \dom_i=\dom_i$. As in
  Theorem~\ref{theorem:globular-corresp1}, all proofs of homomorphism
  axioms are similar and we show only one. Suppose
  $\Dom_i(\delta^\alpha_x\ast_j \delta^\beta_y)(z) \le
  (\Dom_i(\delta^\alpha_x)\ast_j \Dom_i(\delta^\beta_y))(z)$ holds in
  $Q^C$ and $\src_i(x\odot_j y) \le \src_i(x)\odot_j \src_i(y)$ in
  $C$. Suppose also, for non-degeneracy, that
  $z \in \src_i(x)\odot_j \src_i(y)$. Then, using
  Lemma~\ref{lemma:corresp-lem}(2) and (3),
\begin{align*}
  \dom_i(\alpha\cdot_j \beta) &= \Dom_i(\delta^\alpha_x\ast_j \delta^\beta_y)(z)\\
  &\le (\Dom_i(\delta^\alpha_x)\ast_j
    \Dom_i(\delta^\beta_y))(z)\\
  &=\dom_i(\alpha) \cdot_j \dom_i(\beta). 
\end{align*}

Finally, using Lemma~\ref{lemma:corresp-lem}(1),
\begin{equation*}
  \dom_j(\dom_i(\alpha)) = \Dom_j (\Dom_i(\delta^\alpha_x))(\src_j(\src_i(x))) =
  \Dom_i(\delta^\alpha_x)(\src_i(x)) = \dom_i(\alpha).
\end{equation*}
\end{proof}

The two-out-of-three correspondence captured by
Theorems~\ref{theorem:globular-corresp1},
\ref{theorem:globular-corresp2} and \ref{theorem:globular-corresp2} is
depicted in the left diagram of Figure~\ref{Fig:triangle2} in the
introduction.

\begin{corollary}\label{corollary:globular-corresp3}
  Let $Q^C$ be a strong $\omega$-quantale and $C$ a strong local
  $\omega$-catoid that is sufficiently supported. Then $Q$ is a strong
  $\omega$-quantale.
\end{corollary}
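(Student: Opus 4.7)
The plan is to adapt the proof of Theorem~\ref{theorem:globular-corresp3} by replacing, at each step where a weak morphism inequality for $\dom_j$ and $\cod_j$ (with $i<j$) was extracted, the corresponding inequality in $Q^C$ by the equality available from the strong $\omega$-quantale hypothesis on $Q^C$, and the weak morphism inequality in $C$ by the equality available from strongness of the catoid. The non-globular quantale axioms for $Q$ have already been handled in Theorem~\ref{theorem:globular-corresp3}, so only the two strong morphism identities
\begin{equation*}
  \dom_j(\alpha\cdot_i\beta)=\dom_j(\alpha)\cdot_i\dom_j(\beta), \qquad \cod_j(\alpha\cdot_i\beta)=\cod_j(\alpha)\cdot_i\cod_j(\beta),
\end{equation*}
for $0\le i<j<\omega$, remain to be verified.

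First I would fix $\alpha,\beta\in Q$ and pick $x,y\in C$ together with a witness $z\in \src_j(x)\odot_i \src_j(y)$, using the sufficient support assumption on $C$ (exactly as in Theorem~\ref{theorem:globular-corresp3}, where a witness $z\in \src_i(x)\odot_j \src_i(y)$ was used, with the roles of $i$ and $j$ swapped). Since $C$ is strong, the strong morphism law $\src_j(x\odot_i y)=\src_j(x)\odot_i\src_j(y)$ guarantees that such a $z$ also satisfies $z\in \src_j(x\odot_i y)$, so both bracket terms $[z\in\src_j(x\odot_i y)]$ and $[z\in\src_j(x)\odot_i\src_j(y)]$ equal $1$.

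Next I would apply Lemma~\ref{lemma:corresp-lem}(2) and (3) with the indices reindexed so that $\Dom_j$ sits outside a $\ast_i$-product (the lemma is stated for $i\neq j$, so this is immediate). On the left we obtain
\begin{equation*}
  \Dom_j(\delta^\alpha_x\ast_i\delta^\beta_y)(z)=\dom_j(\alpha\cdot_i\beta)\cdot[z\in\src_j(x\odot_i y)]=\dom_j(\alpha\cdot_i\beta),
\end{equation*}
and on the right
\begin{equation*}
  (\Dom_j(\delta^\alpha_x)\ast_i\Dom_j(\delta^\beta_y))(z)=\dom_j(\alpha)\cdot_i\dom_j(\beta)\cdot[z\in\src_j(x)\odot_i\src_j(y)]=\dom_j(\alpha)\cdot_i\dom_j(\beta).
\end{equation*}
Since $Q^C$ is a strong $\omega$-quantale, the two left-hand sides are equal, and the required identity for $\dom_j$ follows. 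The identity for $\cod_j$ is obtained by opposition. The only subtle point, and the place where sufficient support is essential, is ensuring that the bracket expressions are nonzero so that the atomic translation provided by Lemma~\ref{lemma:corresp-lem} actually yields the equation in $Q$; this is the same mechanism as in Theorem~\ref{theorem:globular-corresp3} and presents no additional obstacle.
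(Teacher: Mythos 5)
Your proposal is correct and takes essentially the approach the paper intends: the paper states Corollary~\ref{corollary:globular-corresp3} without an explicit proof, following the pattern of Corollaries~\ref{corollary:globular-corresp1} and~\ref{corollary:globular-corresp2} where the strong variants are obtained by replaying the corresponding theorem's proof with equality steps, and your replay of Theorem~\ref{theorem:globular-corresp3} via Lemma~\ref{lemma:corresp-lem}(2) and (3) is exactly that argument. Your attention to the bracket terms is the right subtlety to flag, and using strongness of $C$ to pass from $z\in\src_j(x)\odot_i\src_j(y)$ to $z\in\src_j(x\odot_i y)$ is a valid way to make both brackets equal $1$ (one could equivalently start from a witness $z\in\src_j(x\odot_i y)$ and use the weak inclusion).
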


\begin{example}\label{ex:omega-conv-quantale}~
  \begin{enumerate}
  \item The category $\Cat$ extends to a
    convolution $2$-quantale $Q^\Cat$ for every value $2$-quantale
    $Q$.
  \item The $\omega$-category of globular sets with the standard
    globular compositions extends to a convolution $\omega$-quantale
    for every value $\omega$-quantale $Q$.
  \end{enumerate}
\end{example}

Applications of the powerset case $Q=2$ relevant to
higher-dimensional rewriting are discussed in
Sections~\ref{S:msg-quantale}, \ref{S:globular-ka} and
\ref{S:n-structures}. A convolution $2$-quantale relevant to
concurrency theory based on weighted languages of isomorphism
classes of labelled posets, equipped with a so-called serial and a
parallel composition is discussed in~\cite{CranchDS21}. In this
case, the only unit is the empty poset, and the domain/codomain
structure is trivial, as for weighted languages with shuffle.


\section{Dedekind Convolution Quantales and their
  Correspondences}\label{S:dedekind-correspondence}

In this section we study correspondence triangles between groupoids,
Dedekind value quantales and Dedekind convolution quantales, adapting
the correspondence triangles between groupoids and relation algebras
established by Jónsson and Tarski~\cite[Section 5]{JonssonT52}.

We define, for every $f:C\to Q$ from a groupoid $C$ into an involutive
quantale $Q$, 
\begin{equation*}
  f^\conv(x) = (f(x^\inv))^\conv.
\end{equation*}
Alternatively, we write
$f^\conv= \Sup_{x\in C} (f(x))^\conv \cdot \delta_{x^\inv}$.

\begin{proposition}\label{prop:groupoid-involquantale}
Let $C$ be a groupoid and $Q$ an involutive quantale. Then $Q^C$ is an
involutive quantale. 
\end{proposition}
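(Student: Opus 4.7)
The plan is to invoke Theorem~\ref{theorem:quantale-corresp}(1) to get the quantale structure of $Q^C$ for free (using that groupoids are catoids by Proposition~\ref{prop:lr-gpd-cat}), and then verify directly the three involution axioms from Section~\ref{S:modular-quantales}: $f^{\conv\conv}=f$, $(\Sup F)^\conv = \Sup\{f^\conv\mid f\in F\}$, and $(f\ast g)^\conv = g^\conv\ast f^\conv$. The involution on $Q^C$ is defined pointwise by $f^\conv(x)=(f(x^\inv))^\conv$, so the first two axioms will reduce to pointwise applications of the corresponding axioms in $Q$ together with $(x^\inv)^\inv = x$ from Lemma~\ref{lemma:lr-groupoid-props}(1). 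Concretely, $f^{\conv\conv}(x)=(f^\conv(x^\inv))^\conv = ((f(x^{\inv\inv}))^\conv)^\conv = f(x)$, and likewise for sups.

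The real work is the contravariant compatibility with convolution. Unfolding the definitions,
\begin{equation*}
(f\ast g)^\conv(x) = ((f\ast g)(x^\inv))^\conv = \Bigl(\Sup_{x^\inv\in y\odot z} f(y)\cdot g(z)\Bigr)^\conv = \Sup_{x^\inv\in y\odot z} (g(z))^\conv\cdot (f(y))^\conv,
\end{equation*}
using the sup- and product-compatibility of involution on $Q$. On the other side,
\begin{equation*}
(g^\conv\ast f^\conv)(x) = \Sup_{x\in u\odot v} g^\conv(u)\cdot f^\conv(v) = \Sup_{x\in u\odot v} (g(u^\inv))^\conv\cdot (f(v^\inv))^\conv.
\end{equation*}

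The bridge between these two suprema is an index substitution $y=v^\inv$, $z=u^\inv$, which (by Lemma~\ref{lemma:lr-groupoid-props}(1)) is a bijection on $C\times C$, together with the key equivalence $x\in u\odot v \Leftrightarrow x^\inv\in v^\inv\odot u^\inv$. The latter is precisely the content of Lemma~\ref{lemma:lr-gpd-jt}(3): since a groupoid is a local functional catoid (Proposition~\ref{prop:lr-gpd-cat}), the composite $u\odot v$ is either empty or the singleton $\{u\mathop{\hat\odot} v\}$, and similarly for $v^\inv\odot u^\inv$; the identity $(u\mathop{\hat\odot} v)^\inv = v^\inv\mathop{\hat\odot} u^\inv$ then transforms membership in one composite to membership in the other. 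Applying this substitution rewrites the second supremum into the first.

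I expect the only subtle point to be handling the definedness conditions when invoking Lemma~\ref{lemma:lr-gpd-jt}(3), since that lemma is stated with the partial operation $\mathop{\hat\odot}$ under the hypothesis $\tgt(u)=\src(v)$; one should note that if $x\in u\odot v$ then by weak locality $\tgt(u)=\src(v)$, so the hypothesis is automatic, and conversely the empty case contributes $\bot$ on both sides of the comparison. Once this book-keeping is in place, the two iterated suprema are literally the same (re-indexed) sum, and the three involution axioms are verified, completing the proof.
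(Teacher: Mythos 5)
Your proposal is correct and follows essentially the same route as the paper's proof: reduce the quantale structure to Theorem~\ref{theorem:quantale-corresp}(1), verify the three involution axioms with the first two being pointwise consequences of $x^{\inv\inv}=x$, and handle contravariance by the re-indexing $y\mapsto y^\inv$, $z\mapsto z^\inv$ together with $(y\odot z)^\inv = z^\inv\odot y^\inv$. The only difference is presentational — you compute pointwise where the paper uses the $\delta$-function decomposition $f^\conv=\Sup_x (f(x))^\conv\cdot\delta_{x^\inv}$ — and your remark that locality guarantees $u\odot v$ and $v^\inv\odot u^\inv$ are simultaneously nonempty correctly disposes of the definedness book-keeping.
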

\begin{proof}
  Given Theorem~\ref{theorem:quantale-corresp} it remains to check the
  three involution axioms.

For sup-preservation, 
\begin{align*}
  (\Sup F)^\conv
  &= \Sup_x ((\Sup F)(x))^\conv \cdot \delta_{x^\inv}\\
  &= \Sup_x ((\Sup \{f(x)\mid f\in F\})^\conv \cdot \delta_{x^\inv}\\
  &= \Sup \{\Sup_xf(x)^\conv \cdot \delta_{x^\inv} \mid f\in F\}\\
    &= \Sup \{f^\conv\mid f\in F\}.
\end{align*}

For involution  proper, 
\begin{align*}
  f^{\conv\conv}
  &= \Sup_x\left(\Sup_y (f(y))^\conv \cdot
    \delta_{y^\inv}(x)\right)^\conv \cdot \delta_{x^\inv}\\
   &= \left(\Sup_y (f(y))^\conv \cdot
     \delta_{y^{\inv\inv}}\right)^\conv\\
  &= \Sup_y (f(y))^{\conv\conv} \cdot \delta_{y^{\inv\inv}}\\
  &= \Sup_y (f(y))\cdot \delta_{y}\\
  &= f.
\end{align*}

For contravariance, 
\begin{align*}
  (f\ast g)^\conv
  &= \Sup_x \left(\Sup_{y,z}f(y)\cdot g(z)\right)^\circ \cdot [x\in
    y\odot z]\cdot \delta_{x^\inv}\\
  &= \Sup_{x,y,z} (f(y)\cdot
    g(z))^\conv\cdot[x \in (y\odot z)^\inv]\cdot \delta_{x}\\
  &= \Sup_{x,y,z}g(z)^\conv\cdot
    f(y)^\conv\cdot [x \in z^\inv \odot y^\inv]\cdot \delta_{x}\\
   &= \Sup_{x,y,z}g(z^\inv)^\conv\cdot
    f(y^\inv)^\conv\cdot [x \in z \odot y]\cdot \delta_{x}\\
  &= \Sup_x\left(\Sup_{z,y} g^\conv(z)\cdot
    f^\conv(y)\cdot [x\in z\odot y]\right)\cdot \delta_x\\
  & = \Sup_x (g^\circ \ast f^\circ)(x)\cdot \delta_x\\
  & = (g^\circ \ast f^\circ).
\end{align*}
\end{proof}

\begin{theorem}\label{theorem:groupoid-dedekindquantale}
  Let $C$ be a groupoid and $Q$ a Dedekind quantale in which binary
  inf distributes over all sups. Then $Q^C$ is a Dedekind quantale (in
  which binary inf distributes over all sups).
\end{theorem}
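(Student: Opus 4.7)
The plan is to verify the Dedekind law in $Q^C$ by way of the equivalent modular law of Lemma~\ref{lemma:modular-dedekind}, working pointwise at each $x\in C$. Proposition~\ref{prop:groupoid-involquantale} already equips $Q^C$ with the structure of an involutive quantale, so the three involution axioms are in hand. The parenthetical claim that binary inf distributes over all sups in $Q^C$ is immediate: both inf and sup in $Q^C$ are pointwise, so this property transfers unchanged from $Q$.

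At a fixed $x\in C$, I would expand
\[
((f\ast g)\inf h)(x) \;=\; \Sup_{x\in y\odot z}\bigl(f(y)\cdot g(z) \inf h(x)\bigr),
\]
using the pointwise definition of $\inf$ and the distributivity of $\inf$ over $\Sup$ in $Q$. The modular law in $Q$ then bounds each summand by $(f(y) \inf h(x)\cdot g(z)^\conv)\cdot g(z)$.

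The crux of the proof is to recognise $h(x)\cdot g(z)^\conv$ as a summand of $(h\ast g^\conv)(y)$; this is where the groupoid structure bites. By Lemma~\ref{lemma:lr-groupoid-props}(2), $x\in y\odot z$ implies $y\in x\odot z^\inv$, and the definition of convolutional converse combined with $(z^\inv)^\inv = z$ gives $g^\conv(z^\inv) = g(z)^\conv$. Hence the index $(x,z^\inv)$ contributes $h(x)\cdot g(z)^\conv$ to the sup defining $(h\ast g^\conv)(y)$, so $(h\ast g^\conv)(y) \ge h(x)\cdot g(z)^\conv$, and taking inf with $f(y)$ yields $(f\inf h\ast g^\conv)(y) \ge f(y)\inf h(x)\cdot g(z)^\conv$. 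Multiplying on the right by $g(z)$ preserves the inequality.

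Combining these estimates, each summand of the left-hand side is dominated by $(f\inf h\ast g^\conv)(y)\cdot g(z)$, which appears as a summand of $((f\inf h\ast g^\conv)\ast g)(x) = \Sup_{x\in u\odot v}(f\inf h\ast g^\conv)(u)\cdot g(v)$ via the pair $(u,v)=(y,z)$ satisfying $x\in y\odot z$. Taking sups over all such pairs establishes the modular law at $x$, hence in $Q^C$. The main obstacle is not any deep computation but the combinatorial bookkeeping of matching indices through the groupoid inversion identities; once Lemma~\ref{lemma:lr-groupoid-props}(2) is brought to bear, everything falls into place.
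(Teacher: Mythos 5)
Your proposal is correct and follows essentially the same route as the paper's proof: reduce to the modular law via Lemma~\ref{lemma:modular-dedekind}, expand the convolution pointwise using distributivity of $\inf$ over sups, apply the modular law of $Q$ to each summand, and use Lemma~\ref{lemma:lr-groupoid-props}(2) together with $g^\conv(z^\inv)=g(z)^\conv$ to recognise $h(x)\cdot g(z)^\conv$ as a summand of $(h\ast g^\conv)(y)$. The only (harmless) difference is that you make explicit the appeal to Proposition~\ref{prop:groupoid-involquantale} and the pointwise transfer of the complete Heyting condition, which the paper leaves implicit.
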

\begin{proof}
\begin{align*}
  f\ast g \inf h
  &= \left(\left(\Sup_{y,z}f(y)\cdot g(z)\cdot [x\in y\odot z]\right) \inf
    h(x)\right)\cdot \delta_x\\
  &= \Sup_{x,y,z}(f(y)\cdot g(z) \inf h(x))\cdot [x\in y \odot z]\cdot
  \delta_x\\
  &\le  \Sup_{y,z} (f(y) \inf h(x) \cdot g(z)^\conv )\cdot
    g(z) \cdot [x \in y\odot z]\\
  &= \Sup_{x,y,z} (f(y) \inf h(x) \cdot g^\conv(z^\inv) )\cdot
    g(z)\cdot [x \in y\odot z] \cdot \delta_x\\
  & \le \Sup_{x,y,z} \left(f(y) \inf \Sup_{v,w} h(v) \cdot
    g^\conv(w)\cdot [y \in v\odot w]\right)\cdot
    g(z)\cdot [x\in y\odot z]\cdot \delta_x \\
  &= \Sup_{x,y,z} (f(y)\inf (h\ast g^\conv)(y))\cdot g(z)\cdot [x \in
    y\odot z]\cdot \delta_x\\
  &= \Sup_{x,y,z} (f\inf h\ast g^\conv)(y)\cdot g(z)\cdot [x \in y \odot
    z]\cdot \delta_x\\
  &=  (f\inf h\ast g^\conv) \ast g.
\end{align*}
The distributivity law is used in the first step of the proof.  The
sixth step works because $x\in y\odot z$ if and only if
$y\in x\odot z^\inv$ by Lemma~\ref{lemma:lr-groupoid-props}(2), so
that the pair $(x,z^\inv)$ is considered in the sup introduced.
\end{proof}

The condition that binary infs distribute over all sups is well known
from frames or locales, that is, complete Heyting algebras. It holds
in every power set Dedekind quantale. 

\begin{theorem}\label{theorem:conv-quantale-groupoid}
  Let $X$ be a set, let $Q^X$ and $Q$ be Dedekind quantales
  such that $\bot\neq 1$ in $Q$. Then $X$ can be equipped
    with  a groupoid structure.
\end{theorem}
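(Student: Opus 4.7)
The plan is to extend the catoid extraction of Theorem~\ref{theorem:quantale-corresp}(2) by using the involution on $Q^X$ to build an inversion operation on $X$, and then to verify the two groupoid axioms. I proceed in three steps.

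First, I apply Theorem~\ref{theorem:quantale-corresp}(2) to produce a catoid structure $(X,\odot,\src,\tgt)$. The hypotheses of that theorem are met: $\bot\neq 1$ is given, and taking $\alpha=\beta=\gamma=1$ in $Q$ yields $\alpha\cdot(\beta\cdot\gamma)=1\neq\bot$. This installs the multioperation, source, and target from which the groupoid will be built.

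Second, I use the involution $(-)^\conv$ on $Q^X$ to define an inversion $(-)^\inv:X\to X$. The idea is that for each $x\in X$, the function $(\delta_x^1)^\conv\in Q^X$ should again be a ``delta'' of the form $\delta_y^1$, and we set $x^\inv:=y$. To justify this, I use that involution is a sup-preserving order-isomorphism (Lemma~\ref{lemma:invol-props}(1),(2)) and that $1^\conv=1$ (Lemma~\ref{lemma:invol-props}(4)), together with the algebraic characterisation of $\delta_x^1$ as a minimal nonzero ``convolution atom''---captured by the convolution identity of Lemma~\ref{lemma:corresp-lem}(4) specialised to one dimension, namely $(\delta_x^1\ast\delta_y^1)(z)=[z\in x\odot y]$. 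Since $(-)^\conv$ is an anti-automorphism of $Q^X$ that fixes $1\in Q$, it must preserve this class of functions, yielding a well-defined $x^\inv\in X$. Contravariance $(f\ast g)^\conv=g^\conv\ast f^\conv$ then translates into $(x\odot y)^\inv=y^\inv\odot x^\inv$ at the level of $X$.

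Third, I verify $x\odot x^\inv=\{\src(x)\}$ and $x^\inv\odot x=\{\tgt(x)\}$. I evaluate Lemma~\ref{lemma:quantale-dom-props}(1) inside $Q^X$, namely $\Dom(\delta_x^1)=\id_0\inf (\delta_x^1\ast(\delta_x^1)^\conv)$, and compare with the explicit formula $\Dom(\delta_x^1)=\delta_{\src(x)}^1$ from Lemma~\ref{lemma:corresp-lem}(1). Unfolding $(\delta_x^1)^\conv=\delta_{x^\inv}^1$ via step two and using Lemma~\ref{lemma:corresp-lem}(4) gives $(\delta_x^1\ast\delta_{x^\inv}^1)(z)=[z\in x\odot x^\inv]$, and equating these two descriptions at every $z\in X$, with the non-degeneracy $\bot\neq 1$ preventing collapse, forces $x\odot x^\inv=\{\src(x)\}$. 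The identity for $\tgt$ is symmetric, using $\Cod$ and $\cod(\alpha)=1\inf\alpha^\conv\alpha$.

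The main obstacle will be step two: showing that the involution on $Q^X$ actually sends a function of the form $\delta_x^1$ to one of the same form. In the powerset case this is the standard atom-preservation of Jónsson--Tarski, but for a general Dedekind quantale $Q^X$ one cannot appeal to lattice-theoretic atomicity and must instead rely on an algebraic characterisation of $\delta$-functions, combined with $1^\conv=1$ to preserve the ``weight.'' Once this is in place, the verification of the groupoid equations is a routine unfolding using Lemma~\ref{lemma:corresp-lem} and the Dedekind formulas for $\dom$ and $\cod$.
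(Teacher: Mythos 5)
Your overall route is the paper's: extract the catoid from Theorem~\ref{theorem:quantale-corresp}(2), obtain the inversion on $X$ from the converse of $\delta$-functions, and verify $x\odot x^\inv=\{\src(x)\}$ by instantiating the Dedekind domain formula at $\delta_x$ and comparing with $\Dom(\delta_x)=\delta_{\src(x)}$ via Lemma~\ref{lemma:corresp-lem}. Your step two --- that $(\delta_x^1)^\conv$ is again of the form $\delta_y^1$, which is what defines $x^\inv$ --- is indeed the crux, and you are right that atomicity is unavailable in a general $Q^X$; but you leave it as an announced obstacle rather than an argument. (The paper's own proof simply writes $(\delta_x)^\conv$ as $\delta_{x^\inv}$ without comment, so you are in good company, but the point remains open in your text.)

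The concrete gap is in your step three. Equating $\delta_{\src(x)}^1$ with $\id_0\inf(\delta_x^1\ast\delta_{x^\inv}^1)$ pointwise gives $[z=\src(x)]=[z\in X_0]\inf[z\in x\odot x^\inv]$, which only says that $\src(x)$ is the unique \emph{unit} in $x\odot x^\inv$; it does not exclude non-unit elements, since for $z\notin X_0$ both sides are $\bot$ whether or not $z\in x\odot x^\inv$. So this does not ``force'' $x\odot x^\inv=\{\src(x)\}$. The paper closes the argument differently: it asserts the equality $\Dom(\delta_x)=\delta_x\ast(\delta_x)^\conv$, i.e.\ that $\delta_x\ast(\delta_x)^\conv\le\id_0$ ($\delta_x$ is ``functional''), which immediately yields $x\odot x^\inv\subseteq X_0$ and hence the desired identity. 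This functionality of the $\delta$-atoms is precisely the hypothesis made explicit in the powerset version (Corollary~\ref{corollary:groupoid-lift}(2), following J\'onsson--Tarski); your argument needs it as well, and the formula $\dom(\alpha)=1\inf\alpha\alpha^\conv$ alone will not supply it.
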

\begin{proof}
  We show that $x \odot x^\inv = \{\src(x)\}$. Suppose
  $\Dom(\delta_x^\alpha) \le \delta_x^\alpha \ast
  (\delta_x^\alpha)^\conv$ holds in $Q^C$ and
  $\dom(\alpha)\le \alpha\cdot \alpha^\conv$ holds in $Q$. Suppose
  also for non-degeneracy that $\dom(\alpha)\neq \bot$; we can pick
  $\alpha=1$ and simply require $\bot\neq 1$.  By
  Lemma~\ref{lemma:corresp-lem}(1) and (4),
  \begin{equation*}
    \delta_{\src(x)}(y) = \Dom(\delta_x)(y) \le (\delta_x \ast
    (\delta_x)^\conv)(y) = [y \in x\odot x^\inv].
  \end{equation*}
  Thus $\src(x) \in x \odot x^\inv$. Moreover,
  $\{\src(x)\}= x \odot x^\inv$ because, in fact,
  $\Dom(\delta_x) = \delta_x \ast (\delta_x)^\conv$, that is,
  $\delta_x$ is functional.
\end{proof}

\begin{theorem}\label{theorem:conv-quantale-corresp2}
  Let $Q$ be a complete lattice with a binary multiplication that
  preserves sups in both arguments and has a unit. Let $Q^C$ be a
  Dedekind quantale and $C$ a groupoid. Then $Q$ is a Dedekind
  quantale.
\end{theorem}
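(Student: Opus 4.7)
The plan is to mirror the reverse-direction strategy used in Theorems~\ref{theorem:conv-quantale-groupoid} and~\ref{theorem:globular-corresp3}: transfer the Dedekind structure of $Q^C$ down to $Q$ by means of $\delta$-functions supported at a fixed $e \in C_0$.

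First, I would observe that $Q$ is already a quantale by Theorem~\ref{theorem:quantale-corresp}(3): since $C$ is a (nonempty) groupoid, $C_0 = \src(C)$ is nonempty by Lemma~\ref{lemma:fix-id}, and any $e \in C_0$ together with the unit laws yields $e \in (e \odot e) \odot e$, providing the triple-composition non-degeneracy required. Hence the associativity, sup-preservation, and unit axioms on $Q$ all transfer. It remains to build the involution on $Q$ and verify the modular law.

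Second, I would define the involution on $Q$ by fixing $e \in C_0$ and considering the embedding $\phi_e : Q \to Q^C$, $\phi_e(\alpha) = \delta_e^\alpha$. Because $e \odot e = \{e\}$ (Lemma~\ref{lemma:orth-id}), Lemma~\ref{lemma:corresp-lem}(4) gives $\delta_e^\alpha \ast \delta_e^\beta = \delta_e^{\alpha\beta}$, so $\phi_e$ is a unit- and composition-preserving, sup-preserving map onto a sub-quantale of $Q^C$. Using the abstract involution on $Q^C$, set
\[
\alpha^\conv := (\delta_e^\alpha)^\conv(e).
\]
The three involutive-quantale axioms of Section~\ref{S:modular-quantales} are then derived by applying the corresponding axioms in $Q^C$ to images under $\phi_e$ and projecting back via evaluation at $e$: sup-preservation is pointwise; self-inversion $\alpha^{\conv\conv}=\alpha$ uses $(\delta_e^\alpha)^{\conv\conv}=\delta_e^\alpha$ together with an argument that $(\delta_e^\alpha)^\conv$ is determined on the loop set at $e$; and contravariance $(\alpha\beta)^\conv = \beta^\conv \alpha^\conv$ uses $(\delta_e^\alpha \ast \delta_e^\beta)^\conv = (\delta_e^\beta)^\conv \ast (\delta_e^\alpha)^\conv$ together with Lemma~\ref{lemma:corresp-lem}(4) applied to both sides.

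Third, the modular (equivalently Dedekind) law in $Q$ is extracted in the same style. Starting from $\delta_e^\alpha \ast \delta_e^\beta \inf \delta_e^\gamma \le (\delta_e^\alpha \inf \delta_e^\gamma \ast (\delta_e^\beta)^\conv) \ast \delta_e^\beta$ in $Q^C$, I would compute both sides pointwise using Lemma~\ref{lemma:corresp-lem}(4), the multiplicativity of $\phi_e$, and distributivity of binary infs over the convolution sups (which hold in $Q^C$ as a Dedekind quantale), and then evaluate at $e$ to recover $\alpha\beta \inf \gamma \le (\alpha \inf \gamma\beta^\conv)\beta$ in $Q$.

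The main obstacle is the well-definedness claim in the second step. A priori, $(\delta_e^\alpha)^\conv$ is only supported on the vertex loop set $\{x \in C : \src(x) = e = \tgt(x)\}$, which may strictly contain $\{e\}$, so the scalar $(\delta_e^\alpha)^\conv(e)$ might not capture all of $(\delta_e^\alpha)^\conv$, and the translations of $\alpha^{\conv\conv}=\alpha$ and contravariance then require extra care. Resolving this either relies on a mild non-degeneracy assumption of the same flavour used in Theorem~\ref{theorem:conv-quantale-groupoid}, or exploits the strong Gelfand property and the identity $\dom(f) = 1 \inf f \cdot f^\conv$ in $Q^C$ to force $(\delta_e^\alpha)^\conv = \delta_e^{\alpha^\conv}$, so that $\phi_e$ is genuinely involution-preserving. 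Once that is settled, everything else is a direct pointwise computation on $\delta$-functions.
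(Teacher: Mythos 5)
Your strategy is the same as the paper's: both arguments push the Dedekind axioms of $Q^C$ down to $Q$ by testing them on $\delta$-functions and reading off scalars via Lemma~\ref{lemma:corresp-lem}(4). The paper differs only cosmetically in that it does not pin everything to a single unit $e\in C_0$: it tests involution on $\delta_x^\alpha$ and evaluates at $x^{\inv\inv}=x$, tests contravariance on $\delta_y^\alpha\ast\delta_z^\beta$ and evaluates at $x^\inv$ for some $x$ with non-degeneracy $z\in x\odot y$, and tests the modular law on $\delta_u^\alpha,\delta_v^\beta,\delta_x^\gamma$ with $x\in u\odot v$, using $x\in u\odot v\Leftrightarrow u\in x\odot v^\inv$ (Lemma~\ref{lemma:lr-groupoid-props}(2)) to evaluate the inner convolution. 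Your specialisation to a unit $e$ with $e\odot e=\{e\}$ and $e^\inv=e$ would extract the same equations in $Q$.

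The one place where you genuinely diverge is the issue you flag at the end, and you leave it open. In the paper's correspondence framework the converse on $Q^C$ is not an abstract involution: it is the canonical one $f^\conv=\Sup_{x\in C}(f(x))^\conv\cdot\delta_{x^\inv}$ built from a given (not yet law-abiding) operation $(-)^\conv$ on $Q$ and the inversion of $C$, exactly as in Proposition~\ref{prop:groupoid-involquantale}. Under that reading $(\delta_x^\alpha)^\conv=\delta_{x^\inv}^{\alpha^\conv}$ holds by definition, the support of $(\delta_e^\alpha)^\conv$ is $\{e\}$ for free, and no non-degeneracy or strong-Gelfand argument is needed: the content of the theorem is that the \emph{laws} transfer, not that a converse on $Q$ can be manufactured from an arbitrary involution on $Q^C$. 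If one does insist on an arbitrary involution, your worry is real --- $(\delta_e^\alpha)^\conv$ is a priori only constrained, via $\dom(f^\conv)=\cod(f)$ and Lemma~\ref{lemma:quantale-dom-props}, to live on the loop set at $e$ --- and neither of your two proposed escape routes is actually carried out, so your second step is not closed as written. Make the canonical-converse reading explicit and your proof coincides with the paper's.
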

\begin{proof}
  We start with the involutive quantale axioms. First, suppose
  $f^{\conv\conv} = f$ in $Q^C$ and $x^{\inv\inv} = x$ in $X$. Then
  $\alpha^{\conv\conv} = (\delta_x^\alpha)^{\conv\conv}(x^{\inv\inv})
  = \delta_x^\alpha(x)=\alpha$.

Second, suppose 
$(\Sup \{\delta_x^{\alpha}\mid \alpha \in A)^\conv =
\Sup\{(\delta_x^{\alpha})^\conv\mid \alpha\in A\}$ in $Q^X$. Then
\begin{equation*}
(\Sup A)^\conv =
(\Sup\{\delta_x^\alpha(x^\inv)\mid \alpha\in A\})^\conv =
\Sup\{(\delta_x^\alpha)^\conv(x^\inv)\mid \alpha\in A\} =
\Sup\{\alpha^\conv\mid \alpha \in A\}.
\end{equation*}

Third, suppose
$(\delta_y^\alpha \ast \delta_z^\beta)^\conv = (\delta_z^\beta)^\conv
\ast (\delta_y^\alpha)^\conv$ in $Q^C$ and
$(y\odot z)^\inv = z^\inv \odot y^\inv$ in $C$. Also assume
$z\in x \odot y$ for non-degeneracy. Then
\begin{equation*}
  (\alpha\cdot \beta)^\conv = (\delta_y^\alpha \ast
\delta_z^\beta)^\conv(x^\inv) = ((\delta_y^\beta)^\conv \ast
(\delta_z^\alpha)^\conv))(x^\inv) = \beta^\conv\cdot \alpha^\conv.
\end{equation*}

Finally, for the modular law, suppose $\delta_u^\alpha\ast \delta_v^\beta
\inf \delta_w^\gamma\le (\delta_u^\alpha \inf
\delta_w^\gamma\ast (\delta_v^\beta)^\conv)\ast
\delta_v^\beta$. Assume, for non-degeneracy,  that $x\in u\odot
v$. Then
\begin{align*}
  \alpha \cdot \beta \inf \gamma
  &= \alpha\cdot \beta\cdot [x\in u\odot v] \inf 
    \gamma \cdot \delta_x(x)\\
  &= (\delta_u^\alpha\ast \delta_v^\beta\inf \delta_x^\gamma)(x)\\
  &\le ((\delta_u^\alpha \inf
    \delta_x^\gamma\ast (\delta_v^\beta)^\conv)\ast \delta_v^\beta)(x)\\
  &= (\alpha \inf \gamma\cdot \beta^\conv \cdot [u\in
    x\odot v^\inv]) \cdot \beta\cdot [x\in u\odot v]\\
  &= (\alpha \inf \gamma\cdot \beta^\inv) \cdot \beta,
\end{align*}
where $x\in u\odot v \Leftrightarrow u \in x \odot v^\inv$, which
holds by Lemma~\ref{lemma:lr-groupoid-props}(2), is used in the fourth step.
\end{proof}

It is straightforward to check that $Q$ carries a complete Heyting
structure if $Q^C$
does. Theorem~\ref{theorem:groupoid-dedekindquantale} and
\ref{theorem:conv-quantale-corresp2} clearly depend on the quantales
$Q$ or $Q^C$ in their hypotheses being Dedekind, while
Theorem~\ref{theorem:conv-quantale-groupoid} uses slightly more
general properties of the shape $\dom(x)\le xx^\conv$ and
$\cod(x)\le x^\conv x$, which are further discussed in the context of
$\omega$-semirings and $\omega$-Kleene algebras in
Section~\ref{S:globular-ka}.

The two-out-of-three correspondence captured by
Theorems~\ref{theorem:groupoid-dedekindquantale},
\ref{theorem:conv-quantale-groupoid} and
\ref{theorem:conv-quantale-corresp2} is illustrated by the left
diagram of Figure~\ref{Fig:triangle3} in the introduction.

\begin{example}\label{ex:weighted-relational-dedekind-quantales}
  Example~\ref{ex:rel-dedekind} generalises from powerset quantales
  to convolution quantales. Let $Q$ be a Dedekind value quantale. 
  \begin{enumerate}
  \item Every free groupoid extends to a modal $Q$-valued path
    Dedekind quantale in which formal inverses extend to converses.
  \item Every pair groupoid extends to a modal Dedekind quantale of
    $Q$-valued relations.
  \end{enumerate}
\end{example}


\section{$(\omega,p)$-Catoids and
  $(\omega,p)$-Quantales}\label{S:n-structures}

In this section we combine the structures of $\omega$-catoids and
$\omega$-quantales from Sections~\ref{S:2-lr-msg}, \ref{S:2-quantales}
and \ref{S:globular-convolution} with the structures of groupoids and
Dedekind quantales from Sections~\ref{S:lr-groupoids},
\ref{S:modular-quantales} and \ref{S:dedekind-correspondence} to
define the corresponding $(\omega,p)$-structures in higher-dimensional
rewriting theory.  We also show how $2$-out-of-$3$ correspondence
triangles for the $(\omega,p)$-structures can be obtained from those
of their component structures, see also Figure~\ref{Fig:triangle2} in
the introduction. We start from $\omega$-catoids which have a groupoid
structure above some dimension $p<\omega$. These $(\omega,p)$-catoids
and the corresponding $(\omega,p)$-quantales, which are Dedekind
quantales above dimension $p$, are suitable for defining higher
homotopies.
 
An \emph{$(\omega,p)$-catoid} is an $\omega$-catoid $C$ with
operations $(-)^{\inv_i}:C\to C$ for all $p< i \leq \omega$ such that
$(x)^{\inv_i} \in C_i$ for all $x\in C_i$ and the groupoid axioms
hold, for all $p< i\leq \omega$ and $x\in C$:
 \begin{equation*}
 x \odot_{i-1} x^{\inv_i} = \{\src_{i-1}(x)\}
 \quad\text{and}\quad
 x^{\inv_i} \odot_{i-1} x = \{\tgt_{i-1}(x)\}.
\end{equation*}
An \emph{$(\omega,p)$-category} is a local functional
$(\omega,p)$-catoid.

Likewise, an \emph{$(\omega,p)$-quantale} is an $\omega$-quantale $Q$
with operations $(-)^{\conv_i}:Q\to Q$ for all $p< i\le\omega $ such
that $(\alpha)^{\conv_j}\in Q_i$ for all $\alpha\in Q_i$ and that
satisfy the involution axioms and the modular law with respect to the
composition $\cdot_{i-1}$.  Strong $(\omega,p)$-quantales are defined
as for higher quantales in Section~\ref{S:2-quantales}.

All inverses in $(\omega,p)$-catoids and all converses in
$(\omega,p)$-quantales are trivial on elements of lower dimensions.
It follows from Lemma~\ref{lemma:lr-mgp-local} that $x^{\inv_ j} = x$
for all $x\in C_i$ with $i< j$. Similarly, it follows from
Lemma~\ref{lemma:dom-conv} that $\alpha^{\conv_ j} = \alpha$ for all
$\alpha\in Q_i$ with $i< j$.
 
The correspondence results from
Sections~\ref{S:globular-convolution} and
\ref{S:dedekind-correspondence} can then be combined. We use the
notion of sufficient support as for $\omega$-catoids,
$\omega$-groupoids and $\omega$-quantales defined in these sections.

\begin{theorem}\label{theorem:np-correspondences}~
  \begin{enumerate}
  \item Let $C$ be a local $(\omega,p)$-catoid and $Q$ an
    $(\omega,p)$-quantale in which binary inf distributes over all
    sups. Then $Q^C$ is an $(\omega,p)$-quantale.
  \item Let $X$ be a set, let $Q^X$ be an $(\omega,p)$-quantale and
    $C$ a local $(\omega,p)$-catoid that is sufficiently
    supported. Then $X$ can be equipped with an $(\omega,p)$-catoid
    structure.
  \item Let $Q$ be a complete lattice equipped with a
    multiplication that preserves arbitrary sups and has a unit. Let
    $Q^C$ and $Q$ be $(\omega,p)$-quantales with $Q$ sufficiently
    supported. Then $C$ is a local $(\omega,p)$-quantale.
  \end{enumerate}
\end{theorem}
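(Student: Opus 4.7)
The plan is to obtain each of the three statements by combining, dimension by dimension, the $\omega$-correspondences of Section~\ref{S:globular-convolution} with the groupoid/Dedekind correspondences of Section~\ref{S:dedekind-correspondence}. An $(\omega,p)$-structure is by definition an $\omega$-structure equipped with, for each $p<i\le\omega$, an additional groupoid layer on the $(i-1)$-composition of the catoid (respectively, an involutive and modular layer on the $(i-1)$-composition of the quantale). Once the underlying $\omega$-correspondence is in place, what remains is to replay at each such dimension $i$ the relevant one-dimensional groupoid/Dedekind correspondence.

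For (1), I would first invoke Theorem~\ref{theorem:globular-corresp1} to equip $Q^C$ with its $\omega$-quantale structure. Then, for each $p<i\le\omega$, I would define $f^{\conv_i}(x)=(f(x^{\inv_i}))^{\conv_i}$ exactly as in Section~\ref{S:dedekind-correspondence}. Proposition~\ref{prop:groupoid-involquantale}, applied to the groupoid $(C,\odot_{i-1},\src_{i-1},\tgt_{i-1},(-)^{\inv_i})$ and the involutive quantale $(Q,\cdot_{i-1},1_{i-1},(-)^{\conv_i})$, then yields the involution axioms for $(Q^C,\ast_{i-1},\id_{i-1},(-)^{\conv_i})$, and Theorem~\ref{theorem:groupoid-dedekindquantale} supplies the modular law for $\ast_{i-1}$. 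Distributivity of binary inf over arbitrary sups lifts pointwise from $Q$ to $Q^C$, so the hypothesis of Theorem~\ref{theorem:groupoid-dedekindquantale} is available.

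For (2), I would first apply Theorem~\ref{theorem:globular-corresp2} to obtain a local $\omega$-catoid structure on $X$, and then, for each $p<i\le\omega$, apply Theorem~\ref{theorem:conv-quantale-groupoid} to $(Q^X,\ast_{i-1},(-)^{\conv_i})$ and $(Q,\cdot_{i-1},(-)^{\conv_i})$ in order to produce an inversion $(-)^{\inv_i}:X\to X$ satisfying the groupoid axioms with respect to $\odot_{i-1},\src_{i-1},\tgt_{i-1}$; the required non-degeneracy $\bot_{i-1}\neq 1_{i-1}$ comes from sufficient support at dimension $i-1$. Part (3) is symmetric: I would invoke Theorem~\ref{theorem:globular-corresp3} first to make $Q$ an $\omega$-quantale, and then at each dimension $i>p$ apply Theorem~\ref{theorem:conv-quantale-corresp2} to transfer involution and the modular law from $Q^C$ to $(Q,\cdot_{i-1})$.

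The step I expect to be hardest is the bookkeeping around the dimension-preservation clauses of the $(\omega,p)$-structures, namely $x^{\inv_i}\in C_i$ whenever $x\in C_i$ and $\alpha^{\conv_i}\in Q_i$ whenever $\alpha\in Q_i$. In (1) this lifts pointwise because $\Dom_i$ and $\Cod_i$ are pointwise operators and the underlying $\omega$-structure already respects the filtration. In (2) and (3) one must instead verify that the inverses and converses extracted from the $\delta$-function construction genuinely respect the filtrations $C_0\subseteq C_1\subseteq\cdots$ and $Q_0\subseteq Q_1\subseteq\cdots$ rather than escape them. Once this compatibility is secured, the rest is a routine dimension-indexed replay of correspondences already established.
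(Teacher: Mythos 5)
Your proposal is correct and follows essentially the same route as the paper, which gives no separate proof of this theorem but explicitly presents it as the dimension-by-dimension combination of the $\omega$-correspondences (Theorems~\ref{theorem:globular-corresp1}--\ref{theorem:globular-corresp3}) with the groupoid/Dedekind correspondences (Proposition~\ref{prop:groupoid-involquantale} and Theorems~\ref{theorem:groupoid-dedekindquantale}--\ref{theorem:conv-quantale-corresp2}) applied to the $(i-1)$-compositions for each $p<i\le\omega$. Your attention to the filtration-preservation clauses is a reasonable extra check; the paper handles the lower-dimensional part of this via Lemmas~\ref{lemma:lr-mgp-local} and~\ref{lemma:dom-conv} immediately after the definitions.
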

These results specialise to correspondence triangles for strong
$(\omega,p)$-catoids and strong $(\omega,p)$-quantales as
usual. The complete Heyting structure can be used as part of the
definition of $(\omega,p)$-quantale.  Further, we obtain
correspondences for powerset quantales as instances.

\begin{corollary}\label{corollary:np-powerset}~
  \begin{enumerate}
  \item Let $C$ be a local $(\omega,p)$-catoid. Then $\Pow C$ is an
    $(\omega,p)$-quantale.
  \item Let $X$ be a set and $\Pow C$ an $(\omega,p)$-quantale in
    which $\id_i\neq \emptyset$ and the atoms are functional. Then $X$
    can be equipped with a local $(\omega,p)$-catoid structure.
  \end{enumerate}
\end{corollary}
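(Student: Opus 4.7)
The plan is to obtain both parts of the corollary as instances of Theorem~\ref{theorem:np-correspondences}, taking the value quantale $Q$ to be the boolean quantale $2$ from Example~\ref{ex:boolean-quantale} and identifying $\Pow X$ with the function space $2^X$ via characteristic functions. Under this identification, convolution unrolls to the usual pointwise extension $A \cdot_i B = \bigcup\{a \odot_i b \mid a \in A,\, b \in B\}$; the constant function $\id_i$ becomes the set $C_i$ of $i$-cells; $\Dom_i$ and $\Cod_i$ become the direct images of $\src_i$ and $\tgt_i$; and $(-)^{\conv_j}$ becomes the direct image of $(-)^{\inv_j}$ for $j > p$, consistently with Example~\ref{ex:catoid-quantale} and Example~\ref{ex:rel-dedekind}.

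The first step is to check that $2$ itself is an $(\omega,p)$-quantale in which binary inf distributes over arbitrary sups. I set $\cdot_i = \inf$, $1_i = 1$, $\dom_i = \cod_i = \id$ in every dimension $i$, and $(-)^{\conv_j} = \id$ for every $j > p$. All globular laws, morphism laws, interchange laws, involution axioms and the modular law then degenerate to obvious identities in the two-element distributive lattice, and $2$ is trivially a complete Heyting algebra, so the remaining hypothesis of Theorem~\ref{theorem:np-correspondences}(1) is satisfied. Applying that theorem to $C$ and $Q = 2$ immediately yields part~(1).

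For part~(2), I would apply Theorem~\ref{theorem:np-correspondences}(2) with $Q^X = \Pow X$ and $Q = 2$. The value-side non-degeneracy $\bot \neq 1$ holds automatically in $2$. On the catoid side, the sufficient-support hypothesis aggregates the existential witnesses required by Theorems~\ref{theorem:interchange-corresp}, \ref{theorem:modal-corresp} and~\ref{theorem:conv-quantale-groupoid}; these are delivered precisely by $\id_i \neq \emptyset$ in every dimension (non-trivial units) together with functionality of the atoms $\{x\}$ of $\Pow X$ (which guarantees that $\{x\}\cdot_i \{y\}$ and the images of singletons under $\Dom_i$, $\Cod_i$ and $(-)^{\conv_j}$ are singletons or $\emptyset$, and hence descend to a multioperation, source and target maps, and inverses on the underlying set $X$).

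The main obstacle is the bookkeeping in the converse direction: one must verify that the atom-extraction $\{x\} \mapsto x$ turns every axiom established for $\Pow X$ by Theorem~\ref{theorem:np-correspondences}(2) into the corresponding axiom on $X$. Locality follows because its verification only exploits non-degeneracy of unit atoms via Remark~\ref{R:locality}; the globular source/target interactions pass through the functionality of atoms; and the groupoid identities $x \odot_{i-1} x^{\inv_i} = \{\src_{i-1}(x)\}$ and $x^{\inv_i} \odot_{i-1} x = \{\tgt_{i-1}(x)\}$ fall out because $\{x\}\cdot_{i-1}\{x\}^{\conv_i}$ is the atom $\{\src_{i-1}(x)\}$ in $\Pow X$, as in the proof of Theorem~\ref{theorem:conv-quantale-groupoid}. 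No new argument beyond these translations is needed.
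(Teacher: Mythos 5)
Your proposal is correct and follows exactly the route the paper intends: the corollary is obtained as the instance $Q=2$ of Theorem~\ref{theorem:np-correspondences}, with $\Pow X\cong 2^X$, the trivial modal/involutive structure on $2$ (which is plainly an $(\omega,p)$-quantale and a complete Heyting algebra), and the hypotheses $\id_i\neq\emptyset$ and atom functionality supplying the sufficient support and the functionality needed to recover the multioperations, face maps and inverses on atoms, just as in Corollaries~\ref{corollary:globular-lift} and~\ref{corollary:groupoid-lift}. Your verification that convolution, $\Dom_i$, $\Cod_i$ and $(-)^{\conv_j}$ specialise to the set-theoretic operations is the only bookkeeping the paper leaves implicit, and you handle it correctly.
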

Again there are special cases for strong structures, and
Proposition~\ref{P:omega-quantale-vs-globular-ka} can be extended to
show that every strong $(\omega,p)$-quantale is a globular
$(\omega,p)$-Kleene algebra à la~\cite[Definition
4.4.2]{CalkGMS20}. The laws of converse used in the definition of
globular $(\omega,p)$-Kleene algebras are subsumed by the laws of of
Dedekind quantales, as shown in Section~\ref{S:modular-quantales}; see
Sections~\ref{S:globular-semiring} and~\ref{S:globular-ka} for further
discussion of globular Kleene algebras.


\section{Correspondences for Powerset Quantales}\label{S:msg-quantale}

The correspondence triangles of Section~\ref{S:globular-convolution}
and \ref{S:dedekind-correspondence} specialise to powerset algebras
for the value quantale $Q=2$ of booleans
(Example~\ref{ex:boolean-quantale}), using the isomorphism between the
map $C\to 2$ and $\Pow C$.  We have already seen several examples of
powerset extensions in previous sections. Here we give standalone
correspondence proofs for the globular and the converse structure,
because they are interesting for higher-dimensional
rewriting. Most of the results in this sections have also been
  checked with Isabelle~\cite{CalkS24}.

The first corollary is an immediate instance of
Theorems~\ref{theorem:globular-corresp1} and
\ref{theorem:globular-corresp2}. 

\begin{corollary}\label{corollary:globular-lift}~
  \begin{enumerate}
  \item Let $C$ be a local $\omega$-catoid. Then
    $(\Pow C,\subseteq,\odot_i,C_i,\src_i,\tgt_i)_{0\le i<\omega}$ is
    an $\omega$-quantale.
  \item Let $X$ be a set and $\Pow X$ be an $\omega$-quantale in which
    $\id_i\neq \emptyset$. Then $X$ can be equipped with a local
    $\omega$-catoid structure.
    \item If $C$ is strong, then $\Pow C$ is strong and vice versa. 
  \end{enumerate}
\end{corollary}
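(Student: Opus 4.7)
The plan is to derive Corollary~\ref{corollary:globular-lift} as the powerset instance of Theorems~\ref{theorem:globular-corresp1} and~\ref{theorem:globular-corresp2} (and their strong variants) by taking the value quantale $Q$ to be the boolean quantale $2$ from Example~\ref{ex:boolean-quantale}, equipped with an evident trivial $\omega$-quantale structure, and then transporting the correspondence along the standard bijection $2^C \cong \Pow C$ given by indicator functions.

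First I would verify that $2$ carries an $\omega$-quantale structure in which, for every $i<\omega$, $\cdot_i = \inf$, $1_i = 1$ and $\dom_i = \cod_i = \id$. The interchange laws reduce to $(\alpha \inf \beta)\inf(\gamma\inf\delta) = (\alpha\inf\gamma)\inf(\beta\inf\delta)$, which follows from commutativity and associativity of $\inf$; the morphism laws for $\dom_i$ and $\cod_i$ become trivial equalities, and $\dom_j\circ\dom_i = \dom_i$ is simply $\id\circ\id = \id$. The same calculations show that $2$ is strong. I would then identify the convolution $\omega$-quantale $2^C$ with the powerset $\omega$-quantale $(\Pow C, \subseteq, \odot_i, C_i, \src_i, \tgt_i)$ by means of $A \mapsto \chi_A$. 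Under this bijection, $(\chi_A \ast_i \chi_B)(x) = \Sup_{x \in y \odot_i z}\chi_A(y)\inf\chi_B(z) = 1$ iff $x\in A\odot_i B$, so convolution becomes the extension of $\odot_i$ to $\Pow C\times \Pow C$; the unit function $\id_i = \Sup_{e\in C_i}\delta_e$ corresponds to the characteristic function of $C_i$; and $\Dom_i(\chi_A) = \Sup_{x\in A}\dom_i(1)\cdot\delta_{\src_i(x)}$ evaluates to $1$ on exactly the set $\src_i(A)$, and similarly for $\Cod_i$.

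With this dictionary in place, part (1) is immediate from Theorem~\ref{theorem:globular-corresp1} applied to $Q = 2$, and part (2) from Theorem~\ref{theorem:globular-corresp2}. For part (2) one needs to match the abstract non-degeneracy hypotheses of the latter theorem with the concrete hypothesis $\id_i \neq \emptyset$: the requirement $1 \neq \bot$ in $Q$ is automatic in $2$, while the non-triviality conditions needed at each dimension to recover the catoid multioperation and source/target maps from $\delta$-functions reduce, under the indicator identification, to the requirement that each $C_i = \id_i$ be inhabited. Part (3) follows in the same way from Corollaries~\ref{corollary:globular-corresp1} and~\ref{corollary:globular-corresp2}.

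I do not expect any real obstacle here: the content is the $Q = 2$ specialisation of results already proved in Section~\ref{S:globular-convolution}. The only care required is the bookkeeping verification that the non-degeneracy hypotheses phrased abstractly in terms of $Q$, $Q^C$ and $C$ in Theorems~\ref{theorem:globular-corresp1}--\ref{theorem:globular-corresp2} translate precisely to $\id_i \neq \emptyset$ in the powerset formulation, and that the convolution formulas collapse cleanly to the usual pointwise operations on $\Pow C$ once $\cdot$ is replaced by $\inf$ and $\Sup$ by set-theoretic union.
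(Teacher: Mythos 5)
Your proposal is correct, and it is in fact the route the paper itself flags in the main text: immediately after stating the corollary, the authors remark that part (1) is ``an immediate instance'' of Theorems~\ref{theorem:globular-corresp1} and~\ref{theorem:globular-corresp2}. However, the proof the paper actually writes out (in Appendix~\ref{A:proofs}) is not this specialisation but a standalone set-theoretic verification: it checks the interchange law, the morphism laws for $\src_j$, $\tgt_j$, $\src_i$, $\tgt_i$, and $\src_j\circ\src_i=\src_i$ directly by element-chasing in $\Pow C$, and for part (2) reads the catoid laws off the singletons. The authors say explicitly that they do this ``because it might provide additional intuition'' and to ``emphasise the one-to-one relationship between laws of $\omega$-catoids and those of $\omega$-quantales.'' So the two arguments buy different things: yours is shorter and reuses the general convolution machinery, at the cost of having to set up the dictionary $2^C\cong\Pow C$ and check that $2$ with $\cdot_i=\inf$, $1_i=1$, $\dom_i=\cod_i=\id$ is a (strong, sufficiently supported) $\omega$-quantale --- all of which you do correctly; the paper's version is self-contained and makes visible exactly which catoid axiom produces which quantale axiom, which is the correspondence the section is trying to exhibit. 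One small point worth making explicit in your write-up: the hypothesis $\id_i\neq\emptyset$ in part (2) is a condition on $\Pow X$ rather than on the value quantale $2$, so it does not literally instantiate the ``sufficient support'' conditions of Theorem~\ref{theorem:globular-corresp2} (which are imposed on $Q$); as you indicate, it is what guarantees that the unit sets, and hence the source and target maps, can be recovered from the atoms, and it only excludes the degenerate case $X=\emptyset$. Your treatment of this is brief but adequate.
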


We present a set-theoretic proof in Appendix~\ref{A:proofs} because it
might provide additional intuition. We emphasise the one-to-one
relationship between laws of $\omega$-catoids and those of
$\omega$-quantales in the proofs in Appendix~\ref{A:proofs} to
highlight the underlying correspondence, including the proof of this
corollary. The extension from $C$ to a modal quantale $\Pow C$ has
been described in Example~\ref{ex:catoid-quantale}, see
also~\cite[Theorem 6.5]{FahrenbergJSZ21a}. In the converse direction,
one can recover a catoid from the atom structure of the powerset
structure; its singleton sets, as discussed in the introduction:
$x\in y\odot z \Leftrightarrow \{x\} \subseteq \{y\}\ast \{z\}$, while
source and target maps correspond to domain and codomain maps on
singleton sets. Note that the relation between $\odot$ and
  $\ast$ fixes also $\subseteq$ as the order on the powerset quantale,
  and it is compatible with the standard definition of a lattice order
  in terms of sups and infs.

\begin{remark}
  A classical result by Gautam~\cite{Gautam57} shows that equations
  extend to the powerset level if each variable in an equation occurs
  precisely once in each side. These results have later been
  generalised by Grätzer and Whitney~\cite{GraetzerW84} (see
  also~\cite{Brink93} for an overview).  It is therefore no surprise
  that all the (unreduced) axioms of $\omega$-catoids extend directly
  to corresponding properties, which we have already derived from the
  $\omega$-quantale axioms in
  Lemma~\ref{lemma:2-quantale-props}. Nevertheless Gautam's result
  does not prima facie cover multioperations, let alone constructions
  of convolution algebras.
\end{remark}

Next we show how groupoids extend to Dedekind quantales and relation
algebras. This mainly reproduces Jónsson and Tarski's
results~\cite{JonssonT52} with groupoids based on catoids. In
addition, we provide explicit extensions for $\dom$ and $\cod$, which
can otherwise be derived in powerset quantales or relation algebras.
In any groupoid $C$, we write $X^\inv$ for $X\subseteq C$. First we
list some properties that are not directly related to the extension.

\begin{lemma}\label{lemma:groupoid-dom-lift}
  Let $C$ be a groupoid. For all $X,Y\in\Pow(C)$ and all $\mathcal X\in\Pow(\Pow(C))$,
  \begin{enumerate}
  \item $(\bigcup \mathcal{X})^\inv = \bigcup\{X^\inv\mid X
    \in\mathcal{X}\}$ and $(X \cup Y)^\inv = X^\inv\cup X^\inv$,
  \item $\src(X) \subseteq XX^\inv$ and $\tgt(X)\subseteq X^\inv X$,
  \item $X\subseteq XX^\inv X$,
  \item $\src(X)= C_0 \cap XX^\inv$ and $\tgt(X) = C_0\cap X^\inv X$,
  \item $\src(X)= C_0\cap X\top$ and $r(X) = C_0\cap \top X$,
    \item $X\top = \src(X)\top$ and $\top X= \top \tgt(X)$.
  \end{enumerate}
\end{lemma}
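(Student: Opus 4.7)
The plan is to work inside the powerset multisemigroup $\Pow C$ with the convention $X^\inv=\{x^\inv\mid x\in X\}$ and $XY=\bigcup\{xy\mid x\in X,\,y\in Y\}$, and to settle each item by a direct element-wise argument that uses the groupoid axioms $xx^\inv=\{\src(x)\}$ and $x^\inv x=\{\tgt(x)\}$ together with the catoid facts from Lemmas~\ref{lemma:mm-props}--\ref{lemma:msg-props2}. In particular, locality of the underlying catoid is available via Lemma~\ref{lemma:lr-mgp-local}(1).

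Items (1) and (2) are immediate. For (1), $x\in(\bigcup\mathcal{X})^\inv$ iff $x^\inv\in X$ for some $X\in\mathcal{X}$ iff $x\in X^\inv$ for some such $X$, and the binary case is analogous. For (2), every $x\in X$ satisfies $\src(x)\in xx^\inv\subseteq XX^\inv$, and the $\tgt$ case is dual. Item (3) then combines (2) with the catoid unit law $x\in\src(x)\,x$ to give $x\in\src(x)\,x\subseteq(xx^\inv)\,x\subseteq XX^\inv X$.

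For (4), the inclusion $\src(X)\subseteq C_0\cap XX^\inv$ is immediate from (2) together with $\src(X)\subseteq C_0$. For the converse, if $u\in C_0\cap XX^\inv$, then $u\in x\cdot y^\inv$ for some $x,y\in X$; Lemma~\ref{lemma:msg-props1}(2) yields $\src(u)=\src(x\cdot y^\inv)=\src(x)$, and since $u\in C_0$ Lemma~\ref{lemma:mm-props}(2) forces $u=\src(u)=\src(x)\in\src(X)$. The $\tgt$ identity follows by opposition. Item (5) is essentially the same: $XX^\inv\subseteq X\top$ supplies one inclusion, while for the other, if $u\in C_0\cap X\top$ is witnessed by $u\in xv$ with $x\in X$ and $v\in C$, Lemma~\ref{lemma:msg-props1}(2) again produces $u=\src(x)\in\src(X)$.

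The main subtle point will be (6). For $X\top\subseteq\src(X)\top$ one rewrites, using $\src(x)\,x=\{x\}$, any $xy$ as $\src(x)\,x\,y\subseteq\src(x)\top\subseteq\src(X)\top$. The reverse inclusion $\src(X)\top\subseteq X\top$ is the only place where the groupoid inverse must be actively inserted: from $\src(x)\in xx^\inv$ one gets $\src(x)\,y\subseteq x\cdot(x^\inv\,y)\subseteq X\top$. The $\tgt$ identity follows by opposition. The main obstacle is thus nothing deeper than spotting this use of $\src(x)\in xx^\inv$ in (6); everything else is bookkeeping on the powerset extension of $\odot$, $\src$, $\tgt$, and $(-)^\inv$.
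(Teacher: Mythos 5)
Your proposal is correct and follows essentially the same route as the paper's proof: (1)--(3) by the unit and inverse laws, (4)--(5) via the weak locality property $\src(x\odot y)=\{\src(x)\}$ for defined compositions together with the fixpoint characterisation of $C_0$, and (6) by inserting $\src(x)x=\{x\}$ in one direction and $\src(x)\in xx^\inv$ in the other. The only difference is cosmetic: you argue element-wise where the paper works at the level of sets (e.g.\ $X\top=\src(X)X\top\subseteq\src(X)\top$ and $\src(X)\top\subseteq XX^\inv\top\subseteq X\top$).
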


A proof can be found in Appendix~\ref{A:proofs}. Properties
(2) and (3) are of course the conditions for involutive quantales from
Section~\ref{S:modular-quantales}. Next we specialise the results of
Section~\ref{S:dedekind-correspondence},  revisiting and
  Example~\ref{ex:rel-dedekind} more formally and extending it to a
  correspondence.

\begin{corollary}\label{corollary:groupoid-lift} ~
  \begin{enumerate}
 \item   Let $C$ be a groupoid. Then $(\Pow C,\odot,\src,\tgt,(-)^\inv)$ is a
   Dedekind quantale.
 \item Let $X$ be a set and $\Pow X$ be a Dedekind quantale in
   which $\id_0\neq \emptyset$ and the atoms are functional. Then $X$
   can be equipped with a groupoid structure.
   \end{enumerate}
 \end{corollary}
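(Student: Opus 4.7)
The plan is to specialise the correspondence results of Section~\ref{S:dedekind-correspondence} to the case where the value quantale is $Q=2$, the boolean quantale from Example~\ref{ex:boolean-quantale}, via the isomorphism $\Pow X \cong 2^X$ that identifies a subset $A$ with its characteristic function $f_A : x \mapsto [x\in A]$. Under this correspondence, convolution on $2^X$ translates to the extended multioperation on $\Pow X$, and the unit $\id_0$ of $2^X$ corresponds to the set $X_0$ of fixed points of source and target.

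For part (1), first observe that $2$ is itself a Dedekind quantale: composition and binary inf both reduce to $\min$, the involution is the identity, and the modular law holds trivially. Binary inf distributes over arbitrary sups in $2$ because $2$ is a complete Heyting (indeed boolean) algebra. Theorem~\ref{theorem:groupoid-dedekindquantale} then yields that $2^C \cong \Pow C$ is a Dedekind quantale. The induced converse agrees with $X \mapsto X^\inv = \{x^\inv \mid x \in X\}$ by unfolding definitions: $f_X^\conv(x) = (f_X(x^\inv))^\conv = f_X(x^\inv) = [x\in X^\inv]$.

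For part (2), apply Theorem~\ref{theorem:conv-quantale-groupoid} with $Q = 2$; the non-degeneracy hypothesis $\bot \neq 1$ is automatic in $2$. Combined with Corollary~\ref{corollary:globular-lift}, which uses $\id_0 \neq \emptyset$, this supplies $X$ with a catoid structure via $x \in y\odot z \Leftrightarrow \{x\} \subseteq \{y\}\cdot \{z\}$, together with source and target maps. Since converse on $\Pow X$ preserves all sups and is involutive, it restricts to a bijection on atoms, so the inversion $x^\inv$ is well-defined as the unique element with $\{x\}^\conv = \{x^\inv\}$. Instantiating the argument of Theorem~\ref{theorem:conv-quantale-groupoid} at singletons yields $\src(x) \in x \odot x^\inv$; the functionality of atoms then promotes this to the equality $x\odot x^\inv = \{\src(x)\}$ required by the groupoid axioms, and the mirror equation follows by opposition.

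The main subtlety is isolating where the atom-functionality hypothesis is actually used: once to make $\odot$ single-valued (so the resulting catoid is functional), and once to strengthen the containment $\src(x) \in x\odot x^\inv$ extracted from Theorem~\ref{theorem:conv-quantale-groupoid} into the equality demanded by the groupoid definition. Everything else is a purely mechanical specialisation of the function-space arguments in Section~\ref{S:dedekind-correspondence} to characteristic functions.
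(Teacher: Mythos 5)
Your proposal is correct and follows essentially the same route as the paper, which presents this corollary precisely as the specialisation of Theorems~\ref{theorem:groupoid-dedekindquantale} and~\ref{theorem:conv-quantale-groupoid} to the value quantale $Q=2$ under the isomorphism $\Pow X\cong 2^X$ (the paper itself only gestures at the proof, deferring to Jónsson--Tarski and an Isabelle check). You correctly identify the two places where atom-functionality is needed, matching the paper's remark that this hypothesis "has been noticed and used by Jónsson and Tarski"; the only quibble is that the catoid structure in part (2) is more directly supplied by Theorem~\ref{theorem:quantale-corresp}(2) than by Corollary~\ref{corollary:globular-lift}.
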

 As mentioned in Example~\ref{ex:rel-dedekind}, the proofs are due to
 Jónsson and Tarski; we have also checked (1) with Isabelle.  The
 functionality requirement on atoms in (2), in particular, has been
 noticed and used by Jónsson and Tarski.

 \begin{remark}
   Jónsson and Tarski have considered relation algebras based on
   boolean algebras instead of complete lattices. They also use the
   residual law mentioned in Section~\ref{S:modular-quantales}
   instead of the modular or Dedekind law. This makes no
   difference. See~\cite{HirschH02,Maddux06} for details.
 \end{remark}
 
 Examples of powerset extensions for specific groupoids have been
 given in Example~\ref{ex:rel-dedekind}. This section provides a
 formal development that subsumes these results.


\section{Modal Operators and their Laws}\label{S:modalities}
 
We have already mentioned that modal semirings and modal quantales
carry their name because modal operators, akin to those of modal
logics, can be defined and related using the domain and codomain
structure. As already mentioned, the main application of the higher
quantales introduced in Section~\ref{S:2-quantales} and the Dedekind
quantales studied in Section~\ref{S:modular-quantales} are proofs in
higher-dimensional rewriting~\cite{CalkGMS20}. These require the modal
operators that can be defined on these quantales.  In this section, we
briefly recall these modal structures and present some new modal laws
that hold in higher quantales. The entire content of this section has
been formalised with Isabelle~\cite{CalkS23,CalkS24}.
 
Modal diamond operators can be defined in modal semirings or
quantales~\cite{DesharnaisS11, FahrenbergJSZ22a} as
 \begin{equation*}
   |\alpha\rangle \beta = \dom(\alpha\beta)\qquad\text{ and }\qquad \langle \alpha| \beta = \cod(\beta\alpha).
 \end{equation*} 
 By domain and codomain locality,
$|\alpha\rangle \beta = |\alpha\rangle \dom(\beta)$ and
$\langle\alpha| \beta = \langle\alpha| \dom(\beta)$.  Accordingly, we
mostly use those modal operators with $\beta$ a domain element.
See~\cite{FahrenbergJSZ22a,FahrenbergJSZ21a} for additional properties
for modal operators on quantales. In particular, we can ``demodalise''
diamonds at left-hand sides of inequalities: for $p,q\in Q_0$,
\begin{equation*}
  |\alpha\rangle p \le q \Leftrightarrow \alpha p\le q\alpha \qquad
  \text{ and }\qquad
  \langle \alpha| p \le q \Leftrightarrow p\alpha\le \alpha q.
\end{equation*}

In boolean modal quantales, and hence in modal powerset quantales, we
can define forward and backward modal box operators
\begin{equation*}
  [\alpha|p = \Sup\{q\mid |\alpha\rangle q \le p\}\qquad\text{ and }\qquad |\alpha]p
  = \Sup\{q\mid \langle \alpha|q \le p\},
\end{equation*}
for all $\alpha\in Q$ and $p\in Q_0$. (In arbitrary modal quantales,
we cannot guarantee that the sups, which feature in the definienda,
are again elements of $Q_0$.)

The modal box and diamond operators are then adjoints in the Galois
connections
\begin{equation*}
  |\alpha\rangle p \le q \Leftrightarrow p\le |\alpha]q\qquad \text{ and }\qquad
  \langle \alpha | p\le q \Leftrightarrow p \le [\alpha|q,
\end{equation*}
for all $\alpha\in Q$ and $p,q\in Q_0$. The additional demodalisation
laws $p \le [\alpha|q\Leftrightarrow \alpha p\le q\alpha$ and
$p \le |\alpha]q \Leftrightarrow p\alpha\le \alpha q$ are helpful in
deriving them. Further, in boolean modal quantales, boxes and diamonds
are related by De Morgan duality,
\begin{equation*}
  |\alpha]p = -|\alpha\rangle -p,\qquad |\alpha\rangle p = -|\alpha]-p,\qquad [\alpha|p = -
  \langle \alpha| -p,\qquad \langle \alpha|p = -[\alpha|-p.
\end{equation*}
Finally,  in any modal Dedekind quantale, $\langle \alpha| =|\alpha^\conv \rangle$,
$|\alpha\rangle = \langle \alpha^\conv |$, $[\alpha| = |\alpha^\conv ]$ and
$|\alpha]=[\alpha^\conv|$.

Modal diamond operators satisfy the following module-style laws.

\begin{lemma}\label{lemma:fdia-module}
  In every modal quantale,
  \begin{enumerate}
  \item $|\alpha\beta\rangle p = |\alpha\rangle |\beta\rangle p$,
  \item $|\alpha\sup \beta\rangle p = |\alpha\rangle p \sup |\beta\rangle p$,
  \item
    $|\alpha\rangle (p\sup q) = |\alpha\rangle p \sup |\alpha\rangle
    q$ and $|\alpha\rangle \bot = \bot$,
  \item $|\bot\rangle p = \bot$, $|1\rangle p = p$ and $|\alpha\rangle 1 =
    \dom(\alpha)$.
    \end{enumerate}
  \end{lemma}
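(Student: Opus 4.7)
The plan is to unfold the definition $|\alpha\rangle p = \dom(\alpha p)$ in each item and reduce the identities to direct consequences of the domain quantale axioms recalled in Section~\ref{S:modal-quantales}, combined with the fact that composition in a quantale preserves arbitrary sups in each argument. No induction or fixpoint reasoning is needed.

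For (1) I would chain three rewrites:
\[
|\alpha\beta\rangle p = \dom(\alpha\beta p) = \dom(\alpha \dom(\beta p)) = |\alpha\rangle \dom(\beta p) = |\alpha\rangle |\beta\rangle p,
\]
where the middle equality is the locality axiom $\dom(\alpha\dom(\gamma))=\dom(\alpha\gamma)$ applied to $\gamma = \beta p$, and the last uses the locality corollary $|\alpha\rangle q = |\alpha\rangle \dom(q)$ noted just above the lemma. For (2) and the first half of (3), sup-preservation of composition in each argument gives $(\alpha\sup\beta)p = \alpha p \sup \beta p$ and $\alpha(p\sup q) = \alpha p\sup \alpha q$, after which the binary sup axiom for $\dom$ splits the result and the definition of the diamond reassembles it. For the second half of (3), I would observe that composition preserves all sups in its right argument (including the empty sup), so $\alpha\bot = \bot$, and then invoke the bottom axiom $\dom(\bot)=\bot$.

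For (4), each equation is a one-line calculation. $|\bot\rangle p = \dom(\bot\cdot p) = \dom(\bot) = \bot$ by sup-preservation of composition (left argument) and the bottom axiom; $|1\rangle p = \dom(1\cdot p) = \dom(p) = p$ using the quantalic unit and the convention, flagged in the paragraph before the lemma, that diamonds are applied to domain elements $p\in Q_0$ so that $\dom(p)=p$; and $|\alpha\rangle 1 = \dom(\alpha\cdot 1) = \dom(\alpha)$ by the other unit law. The main ``obstacle'' is really just a matter of convention: for $p\notin Q_0$ the identity in (4) reads $|1\rangle p = \dom(p)$ rather than $p$, and all other items should be read with $p$ (and $q$) ranging over $Q_0$ to match the intended module-style interpretation over the domain algebra $Q_\dom$.
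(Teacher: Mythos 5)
Your proof is correct and is exactly the standard argument the paper delegates to the domain semiring literature (the paper omits the proof, noting that these laws already hold in domain semirings \`a la Desharnais--Struth): unfold $|\alpha\rangle p = \dom(\alpha p)$, then apply domain locality for (1), sup-preservation of composition together with the binary sup and bottom axioms for (2)--(3), and the unit laws plus $\dom(p)=p$ for $p\in Q_0$ for (4). Your remark that (4) requires $p$ to be a domain element matches the paper's stated convention in the paragraph preceding the lemma, so no further caveat is needed.
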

  These laws hold already in domain semirings~\cite{DesharnaisS11},
  hence we do not present proofs.  Dual properties hold for backward
  diamonds, and for boxes, if the modal quantale is boolean.

  Beyond these one-dimensional properties, we list some new modal laws
  for the $\omega$-structure.  We write $\langle \alpha\rangle_i$ for
  either $|\alpha\rangle_i$ or $\langle \alpha|_i$ in the following
  lemma.

\begin{lemma}\label{lemma:fdia-glob}
  In every modal $\omega$-quantale $Q$, for all $0 \leq i < j <\omega$,
\begin{enumerate}
\item $\fbDia{\alpha}_i\fDia{\beta}_j \gamma= \fbDia{\alpha}_i(\beta\cdot_j \gamma)$ and
  $\fbDia{\alpha}_i\bDia{\beta}_j \gamma = \fbDia{\alpha}_i(\gamma \cdot_j \beta) $,
\item $\fDia{\alpha}_i\fbDia{\beta}_j \gamma \leq \fDia{\alpha}_i(\dom_i(\beta) \cdot_j
  \dom_i(\gamma))$ and $\bDia{\alpha}_i\bDia{\beta}_j \gamma \leq \bDia{\alpha}_i(\cod_i(\gamma) \cdot_j \cod_i(\beta))$,
\item If $\alpha = \dom_k (\alpha)$ for some $k\leq j$, then 
  \begin{equation*}
    \fDia{\alpha}_i \fDia{\beta}_j \gamma \leq \fDia{\alpha}_i \beta
    \cdot_j \fDia{\alpha}_i \gamma\qquad\text{ and }\qquad
\langle \alpha|_i\langle \beta|_j \gamma \le \langle \alpha| \beta
\cdot_j \langle \alpha| \gamma,
\end{equation*}
\item
  $\fDia{\alpha}_j \fbDia{\beta}_i \gamma \le \fDia{\alpha}_j
  \fbDia{\dom_j(\beta)}_i\gamma$ and
  $\bDia{\alpha}_j\fbDia{\beta}_i \gamma \le
  \bDia{\alpha}_j\fbDia{\cod_j(\beta)}_i\gamma$, and equality holds if
  $Q$ is strong,
\item
  $\dom_i(\alpha) \cdot_i \fbDia{\beta}_j\gamma \leq
  \fbDia{\dom_i(\alpha) \cdot_i \beta}_j( \dom_i(\alpha) \cdot_i
  \gamma)$ and, whenever $Q$ is strong, then
  $ \fbDia{\alpha}_j\beta \cdot_i \dom_i(\gamma) \leq \fbDia{\alpha
    \cdot_i \dom_i(\gamma)}_j( \beta\cdot_i \dom_i(\gamma) )$.
\end{enumerate}
\end{lemma}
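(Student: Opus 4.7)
The plan is to unfold each modal operator through the definitions $|\alpha\rangle\beta = \dom(\alpha\beta)$ and $\langle\alpha|\beta = \cod(\beta\alpha)$, and then to combine the $\omega$-quantale axioms (Lemma~\ref{lemma:2-quantale-props}) with the structural identities collected in Lemma~\ref{lemma:mod-props}. Since $\fbDia{\cdot}$ stands for either $|\cdot\rangle$ or $\langle\cdot|$, each clause splits into two or four essentially symmetric subcases; I would carry out the forward instances in detail and recover the backward ones by the opposition duality that swaps $\dom\leftrightarrow\cod$ and reverses arguments of $\cdot$.

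For (1), each subcase reduces, after unfolding, to an instance of Lemma~\ref{lemma:mod-props}(4) or its $\cod$-dual, used to absorb or insert the inner $\dom_j$ (or $\cod_j$) under $\dom_i(\alpha\cdot_i\,\cdot)$; the subcases with opposite inner/outer orientations additionally invoke the compatibility identities $\dom_i\circ\cod_j = \dom_i$ and $\cod_i\circ\dom_j = \cod_i$, which follow from Lemma~\ref{lemma:2-quantale-props}. For (2), I would first apply (1) to rewrite the inner modality as a plain $\cdot_j$-product and then use locality of $\dom_i$ together with the morphism inequality $\dom_i(\beta\cdot_j\gamma)\le\dom_i(\beta)\cdot_j\dom_i(\gamma)$ and monotonicity of $|\alpha\rangle_i$. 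For (3), the hypothesis $\alpha=\dom_k(\alpha)$ with $k\le j$ and Lemma~\ref{lemma:mod-props}(1) (or standard absorption when $k=j$) give the idempotence $\alpha\cdot_j\alpha = \alpha$; interchange then yields $\alpha\cdot_i(\beta\cdot_j\gamma)\le(\alpha\cdot_i\beta)\cdot_j(\alpha\cdot_i\gamma)$, and applying $\dom_i$ followed by the morphism axiom delivers the distribution.

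For (4), the two subcases in which the inner and outer brackets have opposite orientation follow directly from Lemma~\ref{lemma:mod-props}(5) applied under the outer modality. The two matching subcases reduce to the auxiliary inequality $\dom_i(\beta\cdot_i\gamma)\le\dom_i(\dom_j(\beta)\cdot_i\gamma)$ for $i<j$, which I would establish as
\begin{equation*}
\dom_i(\beta\cdot_i\gamma) = \dom_i(\dom_j(\beta\cdot_i\gamma)) \le \dom_i(\dom_j(\beta)\cdot_i\dom_j(\gamma)) = \dom_i(\dom_j(\beta)\cdot_i\gamma),
\end{equation*}
using $\dom_i\circ\dom_j = \dom_i$, the morphism inequality for $\dom_j$, and Lemma~\ref{lemma:mod-props}(4); the $\cod$-dual follows by opposition. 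In a strong $\omega$-quantale the morphism step becomes an equality, which propagates to give equality in (4).

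For (5), the first (general) inequality follows from the weak distributivity of Lemma~\ref{lemma:mod-props}(6), $\dom_i(\alpha)\cdot_i(\beta\cdot_j\gamma)\le(\dom_i(\alpha)\cdot_i\beta)\cdot_j(\dom_i(\alpha)\cdot_i\gamma)$, combined with locality of $\dom_j$ and the fact that $\dom_i(\alpha)\in Q_i\subseteq Q_j$ is fixed by $\dom_j$. The strong variant sharpens the intermediate morphism steps and in addition yields the symmetric second inequality, which fails in general. The main technical obstacle is the matching-orientation case of (4): because the morphism axiom is only an inequality in the non-strong setting, one must route carefully through Lemma~\ref{lemma:mod-props}(4) and the globular compatibility identities to avoid losing information, and the proof is quite sensitive to the direction in which $\dom_j$ (or $\cod_j$) is substituted inside the modality.
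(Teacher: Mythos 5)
Your route through (1)--(4) is essentially the paper's: unfold the diamonds via their definitions, then combine Lemma~\ref{lemma:mod-props}(4)--(6) with the weak morphism axioms, the commutation and compatibility identities of Lemma~\ref{lemma:2-quantale-props}, and locality of $\dom_i$/$\cod_i$, recovering the backward cases by opposition. Your auxiliary inequality for the matching-orientation subcases of (4) is precisely the paper's displayed chain, merely stated without the outer modality and then pushed inside by monotonicity, and your observation that the opposite-orientation subcases of (4) are direct instances of Lemma~\ref{lemma:mod-props}(5) is correct; likewise your derivation of (3) from $\alpha\cdot_j\alpha=\alpha$ and interchange is just an inlining of Lemma~\ref{lemma:mod-props}(6).

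The gap is in the first inequality of (5). After Lemma~\ref{lemma:mod-props}(6) and monotonicity of $\dom_j$ you still need
\begin{equation*}
  \dom_i(\alpha)\cdot_i\dom_j(\beta\cdot_j\gamma)\;\le\;\dom_j\bigl(\dom_i(\alpha)\cdot_i(\beta\cdot_j\gamma)\bigr),
\end{equation*}
that is, $\dom_j(\dom_i(\alpha))\cdot_i\dom_j(\delta)\le\dom_j(\dom_i(\alpha)\cdot_i\delta)$. This is the \emph{reverse} of the weak morphism inequality for $\dom_j$ over $\cdot_i$, and it is not delivered by the ingredients you name: locality of $\dom_j$ concerns $\cdot_j$ rather than $\cdot_i$, and the fact that $\dom_i(\alpha)$ is fixed by $\dom_j$ only rewrites the left factor. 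Note that $\dom_i(\alpha)\cdot_i\dom_j(\delta)$ need not even be a $\dom_j$-element in a non-strong $\omega$-quantale, since it is only bounded by $1_j\cdot_i 1_j$, which can exceed $1_j$ (Lemma~\ref{lemma:2-quantale-props}(2)). The paper's own proof of this step explicitly invokes a strong morphism axiom (which sits somewhat uneasily with the statement, where strongness is only announced for the second inequality), so you are in good company; but as written your argument does not close this step, and you would need either the strong morphism law or a genuinely different derivation of the displayed inequality.
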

Proofs can be found in Appendix~\ref{A:proofs}.


\section{Higher Semirings}\label{S:globular-semiring}

In this section we weaken our value algebras from quantales to dioids,
which are additively idempotent semirings. Convolution semirings have
been studied widely in mathematics and computer science; the higher
globular semirings in~\cite{CalkGMS20} form a starting point for the
investigations in this article. Here we generalise from the powerset
structures in~\cite{CalkGMS20} to convolution algebras. As usual in
the construction of formal power series or convolution algebras, some
restriction on the domain algebra or the function space need to be
imposed. Beyond the relationship with~\cite{CalkGMS20}, the
consideration of $\omega$-semirings and $\omega$-Kleene algebras in
this and the following section also sheds some light on the design
space of algebras for higher-dimensional rewriting. While
$\omega$-quantales are certainly more expressive, $\omega$-semirings
and $\omega$-Kleene algebras might lead to stronger computational
properties such as decision procedures. But an exploration of this
design space in rewriting applications remains beyond the scope of
this article.

In convolution algebras on $S^C$, where $S$ is an additively
idempotent semiring, the complete lattice structure of the value
quantale is replaced by a semilattice.  To compensate for the lack of
infinite sups in convolutions
\begin{equation*}
  (f\ast g)(x)= \Sup_{x\in y\odot z} f(y)\cdot g(z),
\end{equation*}
when infinitely many pairs $(y,z)$ satisfy the ternary relation
$x\in y\odot z$, we require \emph{finitely decomposable} catoids
$C$. That is, for each $x \in C$ the fibre
$\odot^{-1}(x) = \{(y,z)\mid x \in y \odot z\}$ must be finite.  Now
algebras with finite sups such as additively idempotent semirings
suffice.  This is standard, for instance, for the incidence algebras
in combinatorics~\cite{Rota64}, where such a finiteness condition is
imposed on intervals on the real line or a poset. In our context this
means that modal value quantales can be replaced by modal value
semirings~\cite{DesharnaisS11} if catoids are finitely
decomposable. Similar adaptations have been made for concurrent
quantales and concurrent semirings~\cite{CranchDS21}.

\begin{example}[\cite{CranchDS21}]
  In the $Q$-valued shuffle language $2$-quantale from
  Example~\ref{ex:cat-quantale}, the fibres
  $\{(u,v) \mid w= uv\}$ and $\{(u,v) \mid w= u\| v\}$ are finite for
  every word $w$. It therefore suffices that $Q$ is a $2$-semiring, as
  defined below.
\end{example}

The main results in this section show how finitely decomposable
$\omega$-catoids extend to convolution $\omega$-semirings, where the
$\omega$-structure is defined precisely as for $\omega$-quantales, and
to (modal) semirings with a converse structure. These results prepare
for the study of $\omega$-Kleene algebras and (modal) Kleene
algebras with converse in the following section, and in particular for
a comparison with the higher globular Kleene algebras previously
introduced~\cite{CalkGMS20}.

We start with a summary of the structure of $\alpha$-dioids. In a
nutshell, their overall structure is the same as for
$\alpha$-quantales, except that the complete lattice in the latter is
replaced by a semilattice in the former.
\begin{itemize}
\item A \emph{dioid} is an additively idempotent semiring
  $(S,+,\cdot,0,1)$. Thus $(S,+,0)$ is a semilattice with least element
  $0$. We write $\le$ for its order.
\item A \emph{modal semiring}~\cite{DesharnaisS11} is an additively
  idempotent semiring (or \emph{dioid}) equipped with operations
  $\dom,\cod:S\to S$ that satisfy precisely the modal quantale axioms
  for domain and codomain up to notational differences: we
  write $\dom(0)=0$, replacing $\bot$ by $0$,
  $\dom(\alpha+\beta)=\dom(\alpha)+\dom(\beta)$, replacing $\sup$ by
  $+$, and likewise for $\cod$.

\item An \emph{$\alpha$-semiring} is a structure
  $(S,+,0,\cdot_i,1_i,\dom_i,\cod_i)_{0\le i < \alpha}$, for an
  ordinal $\alpha \in \{0,1,\dots,\omega\}$, such that the
  $(S,+,0,\cdot_i,1_i,\dom_i,\cod_i)$ are modal semirings and we
  impose the same interchange and domain/codomain axioms as for
  $\alpha$-quantales.

\item An $\alpha$-semiring is \emph{strong} if it satisfies the same
  domain/codomain axioms as for strong $\alpha$-quantales.
\end{itemize}
A formal list of axioms can be found in our Isabelle
theories~\cite{CalkS24} and their proof document.  Most properties
derived for $\omega$-quantales in previous sections are already
available in $\omega$-semirings, in
particular Lemmata~\ref{lemma:2-quantale-props}, \ref{lemma:mod-props},
\ref{lemma:fdia-module} and \ref{lemma:fdia-module}, the results on
chains of $C_i$ cells and $\omega$-quantales as filtrations, and the
construction of $n$-quantales by truncation.

Along the lines of Section~\ref{S:modular-quantales} we can also 
define a converse structure $(-)^\conv:S\to S$ on a dioid $S$.
\begin{itemize}
\item In an \emph{involutive dioid} $S$ we impose the axioms
  $\alpha^{\conv\conv} = \alpha$,
  $(\alpha+\beta)^\conv = \alpha^\conv + \beta^\conv$ and
  $(\alpha\beta)^\conv = \beta^\conv \alpha^\conv$.
\item In a \emph{dioid with converse} we also require the strong
  Gelfand law $\alpha\le \alpha\alpha^\conv \alpha$.
\item \emph{Modal involutive semirings} and
  \emph{modal semirings with converse} are then defined in the obvious way.
\item A \emph{modal semiring with strong converse} is a
  modal involutive semiring in which $\dom(\alpha)\le \alpha\alpha^\conv$ and
 $\cod(\alpha)\le \alpha^\conv \alpha$.  
\end{itemize}
The involutive dioid axioms imply $0^\conv = 0$ and $1^\conv = 1$, but
are too weak to relate $\dom$ and $\cod$ in the modal case. Dioids
with converse have been introduced in~\cite{BloomES95}. These provide
the domain/codomain interaction expected.

\begin{lemma}\label{lemma:dom-cod-conv}
  In every modal semiring with converse,
  \begin{enumerate}
  \item $\dom(\alpha)^\conv= \dom(\alpha)$ and $\cod(\alpha)^\conv = \cod(\alpha)$.
    \item $\dom(\alpha^\conv) = \cod(\alpha)$ and $\cod(\alpha^\conv)
      = \dom(\alpha)$.   
  \end{enumerate}
 \end{lemma}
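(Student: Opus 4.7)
The proof proceeds in two stages, corresponding to the two items. The key observation is that (1) is needed to move converses freely past domain and codomain elements when establishing (2), so I would prove (1) first and then feed it into the argument for (2).

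For (1), fix $\alpha$ and set $p=\dom(\alpha)$. The plan is to combine the strong Gelfand law with the subidentity axiom. Monotonicity of converse is available on any involutive dioid (from sup-preservation: $\alpha\le\beta$ means $\alpha+\beta=\beta$, hence $\alpha^\conv+\beta^\conv=\beta^\conv$), and the excerpt already notes $0^\conv=0$ and $1^\conv=1$. Therefore $p\le 1$ gives $p^\conv\le 1$, and strong Gelfand yields
\[
p\;\le\; p\cdot p^\conv\cdot p\;\le\; 1\cdot p^\conv\cdot 1\;=\;p^\conv.
\]
Since $p^\conv$ is itself a subidentity, the same chain of inequalities applied to $p^\conv$ in place of $p$ gives $p^\conv\le p^{\conv\conv}=p$, so $p=p^\conv$. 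The proof of $\cod(\alpha)^\conv=\cod(\alpha)$ is identical with $q=\cod(\alpha)$.

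For (2), the plan is to use (1) to expose $\cod(\alpha)$ on the other side of the converse, then collapse the result via compatibility. From the absorption law $\alpha=\alpha\cdot\cod(\alpha)$, contravariance together with (1) yields
\[
\alpha^\conv \;=\; \cod(\alpha)^\conv\cdot\alpha^\conv \;=\; \cod(\alpha)\cdot\alpha^\conv.
\]
Applying $\dom$ and using the easy consequence $\dom(\beta\gamma)\le\dom(\beta)$ of domain locality and $\dom(\gamma)\le 1$, together with compatibility $\dom\circ\cod=\cod$, gives
\[
\dom(\alpha^\conv)\;=\;\dom(\cod(\alpha)\cdot\alpha^\conv)\;\le\;\dom(\cod(\alpha))\;=\;\cod(\alpha).
\]

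The reverse inequality $\cod(\alpha)\le\dom(\alpha^\conv)$ is obtained by running the symmetric argument for $\cod$ and then substituting. Specifically, from $\alpha=\dom(\alpha)\cdot\alpha$ and (1) one gets $\alpha^\conv=\alpha^\conv\cdot\dom(\alpha)$, and applying $\cod$ with the dual facts $\cod(\beta\gamma)\le\cod(\gamma)$ and $\cod\circ\dom=\dom$ yields $\cod(\alpha^\conv)\le\dom(\alpha)$ for every $\alpha$. Substituting $\alpha^\conv$ for $\alpha$ and invoking $\alpha^{\conv\conv}=\alpha$ gives $\cod(\alpha)\le\dom(\alpha^{\conv\conv\conv})=\dom(\alpha^\conv)$. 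Combining the two inequalities proves $\dom(\alpha^\conv)=\cod(\alpha)$, and the identity $\cod(\alpha^\conv)=\dom(\alpha)$ falls out of the same substitution trick. There is no real obstacle beyond recognising that (1) is a genuine prerequisite: without self-converseness of domain/codomain elements, the first step in (2) cannot discard the $(-)^\conv$ on $\cod(\alpha)$, and the chain of inequalities stalls.
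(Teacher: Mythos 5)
Your proof is correct. Part (1) is essentially the paper's own argument: strong Gelfand plus the subidentity bound gives $\dom(\alpha)\le\dom(\alpha)^\conv$, and the reverse inequality follows either by applying the same chain to $\dom(\alpha)^\conv$ (your route) or by applying monotone converse to the first inequality (the paper's route) --- these are interchangeable. For part (2) you take a genuinely different path. The paper proves $\dom(\alpha^\conv)=\cod(\alpha)$ by a single equational round trip: it rewrites $\dom(\alpha^\conv)$ as $\dom(\cod(\alpha)\alpha^\conv)$, applies the \emph{export} equalities $\dom(\dom(\beta)\gamma)=\dom(\beta)\dom(\gamma)$ and $\cod(\beta\cod(\gamma))=\cod(\beta)\cod(\gamma)$ together with compatibility to flip the outer operator from $\dom$ to $\cod$, and then collapses back to $\cod(\alpha)$ via absorption and involution. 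You instead establish the two inequalities $\dom(\alpha^\conv)\le\cod(\alpha)$ and $\cod(\alpha^\conv)\le\dom(\alpha)$ separately, using only the weaker monotonicity consequences $\dom(\beta\gamma)\le\dom(\beta)$ and $\cod(\beta\gamma)\le\cod(\gamma)$, and close the gap by the substitution $\alpha\mapsto\alpha^\conv$ and antisymmetry. Your version needs less machinery (inequalities rather than the export equalities) and is arguably more transparent, at the cost of an extra substitution step; the paper's is shorter once the export laws are in hand. One cosmetic slip: in the substitution step you write $\dom(\alpha^{\conv\conv\conv})$ where $\dom(\alpha^\conv)$ is meant, but since $\alpha^{\conv\conv\conv}=\alpha^\conv$ this is harmless. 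You are also right that (1) is a genuine prerequisite for (2) in both arguments.
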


 See Appendix~\ref{A:proofs} for proofs. Modal semirings with strong
 converse have been proposed in ~\cite{CalkGM21}. The strong Gelfand
 property holds in this setting:
 $\alpha =\dom(\alpha)\alpha\le \alpha\alpha^\conv \alpha$. The strong
 converse axioms can be seen as shadows of the explicit definitions of
 $\dom$ and $\cod$ in Dedekind quantales in the absence of $\inf$.

 For finitely decomposable $\omega$-catoids $C$, where each of the
 $\omega$ underlying catoids has this property, our main
 correspondence triangles from Sections~\ref{S:globular-convolution}
 and \ref{S:dedekind-correspondence} transfer to $\omega$-semirings
 and semirings with converse. Here we consider only the extensions
 from groupoids to convolution algebras. First we present a corollary
 to Theorem~\ref{theorem:globular-corresp1}, but to deal with the sups
 in the definition of $\Dom$ and $\Cod$, we need to impose another
 restriction: a catoid $C$ has \emph{finite valency} if for each
 $x\in C_0$ the sets $\{y\in C\mid s(y)=x\}$ and
 $\{y\in C\mid t(y)= x\}$ are finite. An $\omega$-catoid has finite
 valency if each of the underlying catoids has this property.

\begin{corollary}\label{corollary:globular-lifting-semiring}
  Let $C$ be a finitely decomposable local $\omega$-catoid of finite
  valency and $S$ a (strong)
  $\omega$-semiring. Then $S^C$ is a (strong) $\omega$-semiring.
\end{corollary}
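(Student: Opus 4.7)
The strategy is to transport the proof of Theorem \ref{theorem:globular-corresp1} verbatim, replacing each supremum $\Sup$ over $C$ by a finite sum, and to check that the two finiteness hypotheses on $C$ guarantee that every such sum really is finite in the dioid $S$. First I would observe that finite decomposability of $(C,\odot_i,\src_i,\tgt_i)$ makes the convolution
\[
(f\ast_i g)(x) = \sum_{(y,z)\,:\,x\in y\odot_i z} f(y)\cdot_i g(z)
\]
a finite sum in $S$, so that $\ast_i$ is a well-defined binary operation on $S^C$; the purely one-dimensional part of the conclusion, that $(S^C,+,0,\ast_i,\id_i)$ is a dioid for each $i$, is then the semiring analogue of Theorem \ref{theorem:quantale-corresp}(1). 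The finite valency hypothesis similarly makes
\[
\Dom_i(f)(x) = \sum_{y\,:\,\src_i(y)=x} \dom_i(f(y)), \qquad \Cod_i(f)(x) = \sum_{y\,:\,\tgt_i(y)=x} \cod_i(f(y))
\]
finite sums, so that $\Dom_i$ and $\Cod_i$ are well-defined unary operations on $S^C$.

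With these operations in place, I would replay the calculations of Theorem \ref{theorem:globular-corresp1} for the interchange law, the two weak morphism inequalities $\Dom_j(f\ast_i g)\le \Dom_j(f)\ast_i \Dom_j(g)$ and $\Cod_j(f\ast_i g)\le \Cod_j(f)\ast_i \Cod_j(g)$, and the filtration identity $\Dom_j\circ \Dom_i = \Dom_i$. Each step there rewrites an expression of the form $\Sup_{u\in C}(\cdots)\cdot[P(u)](\cdots)$ by folding the indicator $[P(u)]$ back into a range restriction; under our hypotheses every such restricted index set is finite, so the whole calculation survives the passage from complete lattice sups to finite dioid sums. Finite additivity of the modal operators and finite distributivity of the semiring multiplications, both of which are available in $\omega$-semirings, replace the sup-preservation used in the quantale proof. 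The strong variant follows by the same argument as in Corollary \ref{corollary:globular-corresp1}: when $C$ is strong local and $S$ is a strong $\omega$-semiring, the relevant fourth step of the morphism calculation upgrades to an equality, and the $\Cod_j$ case follows by opposition.

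The only genuinely delicate bookkeeping is the interchange step, where a quadruple sum over $(t,u,v,w)$ subject to $[x\in (t\odot_j u)\odot_i(v\odot_j w)]$ is rearranged into one subject to $[x\in (t\odot_i v)\odot_j(u\odot_i w)]$. One has to check that both index sets are finite. For the first, two applications of finite decomposability, one to $\odot_i$ pinning down the pair $(y_1,y_2)$ with $x\in y_1\odot_i y_2$ and one each to $\odot_j$ decomposing $y_1$ and $y_2$, reduce $(t,u,v,w)$ to a finite range; for the second, the same argument applies with the roles of $i$ and $j$ swapped. The morphism calculation is handled analogously, using finite decomposability on $\odot_i$ to control the inner fibre and finite $j$-valency to control the outer indexing by $\src_j$ (together with Lemma \ref{lemma:omega-cat-props} to replace $\src_j(v\odot_i w)$ by $\{\src_j(v)\}$ whenever $\Delta_i(v,w)$). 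Once this bookkeeping is recorded, no further algebraic content is needed and the corollary drops out.
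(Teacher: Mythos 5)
Your proposal is correct and follows essentially the same route as the paper, whose proof is exactly the one-line observation that the calculations of Theorem~\ref{theorem:globular-corresp1} go through once one checks that finite decomposability keeps all convolution sums finite and finite valency keeps the sums defining $\Dom_i$ and $\Cod_i$ finite; your bookkeeping of the quadruple index sets in the interchange step is precisely the detail the paper leaves as ``routine''. One small slip: your parenthetical appeal to Lemma~\ref{lemma:omega-cat-props} to replace $\src_j(v\odot_i w)$ by $\{\src_j(v)\}$ is misplaced (that lemma concerns $\src_i$ of an $\odot_j$-composite for $i<j$, not the other way around, and the proposed replacement is false in general), but nothing in your finiteness argument actually depends on it, since the outer index set is already controlled by finite valency alone.
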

\begin{proof}
  In the construction of the convolution algebra in the proof of
  Theorem~\ref{theorem:globular-corresp1}, it is routine to check that
  all sups remain finite when $C$ is finitely decomposable. In
  particular, all sups in the definitions of  $\Dom$ and $\Cod$ are
  finite because of finite valency. 
\end{proof}

Next we consider the converse structure. 

\begin{proposition}\label{theorem:groupoid-dioidinv}
  Let $C$ be a finitely decomposable groupoid.
  \begin{enumerate}
  \item If $S$ is a dioid with converse, then so is $S^C$.
  \item If $S$ is a modal semiring with strong converse, then so is
    $S^C$, whenever $C$ has finite valency.
  \end{enumerate}
\end{proposition}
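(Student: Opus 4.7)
The plan is to reduce both statements to their counterparts for the quantale case (Proposition~\ref{prop:groupoid-involquantale} and the ideas used in Theorem~\ref{theorem:conv-quantale-corresp2}) combined with Corollary~\ref{corollary:globular-lifting-semiring}. Finite decomposability of $C$ ensures that the convolution sup defining $(f\ast g)(x) = \Sup_{x\in y\odot z}f(y)\cdot g(z)$ ranges over finitely many pairs, so it is actually a finite join in $S$ and the convolution $\ast$ is well-defined on $S^C$. Likewise, $f^\conv(x) = (f(x^\inv))^\conv$ is clearly well-defined for any involutive dioid $S$ and any groupoid $C$.

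For part (1), I first observe that $S^C$ is a dioid: additive idempotence and the semiring axioms pass from $S$ to $S^C$ by pointwise/convolution arguments, exactly as in the finitely decomposable specialisation of Theorem~\ref{theorem:quantale-corresp}(1). The three involutive-dioid axioms $f^{\conv\conv}=f$, $(f+g)^\conv = f^\conv+g^\conv$ and $(f\ast g)^\conv = g^\conv \ast f^\conv$ are verified by replaying the three computations from the proof of Proposition~\ref{prop:groupoid-involquantale}, replacing the quantalic sups by finite joins; no new ideas are needed, and the contravariance step still relies on $(y\odot z)^\inv = z^\inv\odot y^\inv$ (Lemma~\ref{lemma:lr-gpd-jt}(3)) together with $y^{\inv\inv}=y$ (Lemma~\ref{lemma:lr-groupoid-props}(1)). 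What remains is the strong Gelfand property $f \le f\ast f^\conv \ast f$. This I would prove pointwise: fix $x\in C$; by Lemma~\ref{lemma:lr-gpd-jt}(1) one has $x \in x\odot x^\inv \odot x$, hence
\begin{equation*}
  (f\ast f^\conv\ast f)(x) \ge f(x)\cdot f^\conv(x^\inv)\cdot f(x) = f(x)\cdot f(x)^\conv \cdot f(x) \ge f(x),
\end{equation*}
where the final inequality is the strong Gelfand law in $S$.

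For part (2), Corollary~\ref{corollary:globular-lifting-semiring} (in its obvious $1$-dimensional specialisation, which needs finite valency to make $\Dom$ and $\Cod$ well-defined finite joins) provides the underlying modal semiring structure on $S^C$. Given part (1), it remains only to check the two strong converse axioms $\Dom(f)\le f\ast f^\conv$ and $\Cod(f)\le f^\conv\ast f$. For the first, fix $z\in C$ and compute
\begin{equation*}
  \Dom(f)(z) = \sum_{x\,:\,\src(x)=z}\dom(f(x)).
\end{equation*}
For each such $x$, the groupoid axiom gives $z=\src(x)\in x\odot x^\inv$, so
\begin{equation*}
  (f\ast f^\conv)(z) \ge f(x)\cdot f^\conv(x^\inv) = f(x)\cdot f(x)^\conv \ge \dom(f(x))
\end{equation*}
using strong converse in $S$. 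Taking the finite join over such $x$ yields the desired inequality. The codomain case is symmetric, using $\tgt(x)\in x^\inv\odot x$.

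The only real obstacle is bookkeeping. The delicate points are: verifying that each sup one manipulates is genuinely finite (which is where finite decomposability and, in (2), finite valency are invoked); keeping the re-indexing $w \mapsto w^\inv$ together with the equivalence $x\in y\odot z \Leftrightarrow y\in x\odot z^\inv$ from Lemma~\ref{lemma:lr-groupoid-props}(2) straight in the contravariance argument for $(f\ast g)^\conv$; and noting that no further hypothesis beyond what appears in Proposition~\ref{prop:groupoid-involquantale} is needed for the three pure involution axioms, so that the truly new content is the two pointwise inequalities above, each of which is a one-line use of an appropriate groupoid identity combined with the corresponding axiom in $S$.
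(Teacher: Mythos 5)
Your proposal is correct and follows essentially the same route as the paper: the involution axioms are inherited from Proposition~\ref{prop:groupoid-involquantale} with finite sups, the strong Gelfand law is obtained by selecting the decomposition $x\in x\odot x^\inv\odot x$ and applying the strong Gelfand law in $S$, and the strong converse axioms follow from $\src(x)\in x\odot x^\inv$ together with $\dom(f(x))\le f(x)f(x)^\conv$. The only difference is presentational — you argue pointwise where the paper manipulates the global $\delta$-function expansions — and your explicit accounting of where finite decomposability and finite valency are needed matches the paper's.
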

\begin{proof}
  First note that the infinite sups in the proof of
  Proposition~\ref{prop:groupoid-involquantale} can be replaced by
  finite sups when $C$ is finitely decomposable and has finite
  valency. It remains to extend the strong Gelfand property for (1)
  and the axioms $\dom(\alpha)\le \alpha\alpha^\conv$ and
  $\cod(\alpha)\le \alpha^\conv \alpha$ for (2). In the proofs we
  still write $\Sup$, but tacitly assume that all sups used are finite
  and can therefore be represented using $+$.

  For dioids with converse,
  \begin{align*}
    f\ast (f^\conv \ast f)
         &= \Sup_{w,x,y,z\in C} f(x)f(y^\inv)^\conv g(z) [w \in x\odot
           y\odot z] \delta_w\\
     &= \Sup_{w,x,y,z\in C} f(x)f(y)^\conv g(z) [w \in x\odot
       y^\inv\odot z] \delta_w\\
     &\ge \Sup_{w\in C} f(w)f(w)^\conv g(w) [w \in w\odot
       w^\inv\odot w] \delta_w\\
    &\ge \Sup_{w\in C} f(w) [w\in \src(w)w]\delta_w\\
    &= \Sup_{w\in C} f(w)\delta_w\\
    &= f.
  \end{align*}
  
  For semirings with strong converse,
  \begin{align*}
    \Dom(f)
    &= \Sup_{x\in C} \dom(f(x))\delta_{\src(x)}\\
    &\le \Sup_{x\in C} f(x)f(x)^\conv [\src(x) \in x\odot
      x^\inv]\delta_{x}\\
      &\le \Sup_{x,y,z\in C} f(y)f(z^\inv)^\conv [x \in y\odot
        z^\inv]\delta_{x}\\
    &= \Sup_{x\in C} (f\ast f^\conv) (x)\delta_{x}\\
      &= f\ast f^\conv
  \end{align*}
  and the proof for $\Cod$ follows by opposition.
\end{proof}

Next we present two examples that separate the three converse
structures introduced above in the modal case. 

\begin{example}\label{ex:invol-mod-dom-cod}
  In the involutive modal semiring given by
  \begin{equation*}
    \begin{tikzcd}[column sep = tiny, row sep = tiny]
      & 1 &\\
      a\arrow[ur,-] && b\arrow[ul,-]\\
      &0\arrow[ul,-]\arrow[ur,-] &
    \end{tikzcd}
  \end{equation*}
  with $\cdot$ as inf, $\dom=\id=\cod$ and $a^\conv =b$, $b^\conv = a$
  we have
  \begin{equation*}
    \dom(a^\conv)=\dom(b)=b \neq a =\cod(a)
  \end{equation*}
  and likewise $\cod(a^\conv)\neq \dom(a)$.  Hence it is not a modal
  semiring with converse.
\end{example}
Similar examples show that $\dom(\alpha)^\conv$ and
$\cod(\alpha)^\conv$ need not be equal to $\dom(\alpha)$ and
$\cod(\alpha)$ in involutive modal semirings, respectively.

\begin{example}
  In the modal semiring with converse given by $0<a<1$, $aa=a$,
  $dom(a)=\cod(a)=1$ and $(-)^\conv =\id$, we have
  $\dom(a)= \cod(a) 1 > a=aa^\conv = a^\conv a$. It is therefore not a
  modal semiring with strong converse.
\end{example}

\begin{remark}
  In an involutive domain semiring or domain semiring with converse
  $S$, one can of course define codomain explicitly as
  $\cod(\alpha)=\dom(\alpha^\conv)$, from which
  $\dom(\alpha)=\cod(\alpha^\conv)$ follows, but then $S_\dom$ need
  not be equal to $S_\cod$: $\dom\circ \cod = \cod$ may hold, but
  $\cod\circ \dom=\dom$ may fail. We do not consider such alternative
  definitions of modal semirings with converse any further.
\end{remark}

\begin{remark}
  As in Section~\ref{S:n-structures}, we can consider
  $(\omega,p)$-semirings and $(\omega,p)$-Kleene algebras. We start
  with the former and outline the latter in
  Section~\ref{S:globular-ka}.

  An \emph{$(\omega,p)$-dioid} is an $\omega$-dioid $S$ equipped with
  operations $(-)^{\conv_j}:S \to S$ for all $p< j\leq n $ such that
  $(\alpha)^{\conv_j}\in S_j$ for all $\alpha\in S_j$ and such that
  the involution axioms and $\alpha\le \alpha\alpha^{\conv_j} \alpha$
  hold for $p< j \le n$.

  Corollary \ref{corollary:np-powerset} then specialises further to
  $(\omega,p)$-semirings owing to
  Corollary~\ref{corollary:globular-lifting-ka}. We can still extend
  to $(\omega,p)$-semirings, that is, whenever $C$ is a finitely
  decomposable local $(\omega,p)$-catoid of finite valency and $S$ an
  $(\omega,p)$-semiring, then $S^C$ is an
  $(\omega,p)$-semiring. Truncations at dimension $n$ work as expected
  and the globular $(n,p)$-semirings introduced in~\cite{CalkGMS20}
  arise as special cases.
\end{remark}

Finally, we compare $\omega$-semirings with the previous slightly
different axiomatisation of higher globular semirings~\cite[Definition
3.2.6]{CalkGMS20}. These have been proved sound with respect to a
powerset model of path $n$-categories~\cite[Section 3.3]{CalkGMS20},
which is the basis of higher-dimensional rewriting, but axioms have
been introduced in an ad hoc fashion with a view on higher-dimensional
rewriting proofs. Our correspondence triangles with respect to
$\omega$-catoids and $\omega$-categories yield a more systematic
structural justification of the $\omega$-quantale and
$\omega$-semiring axioms.

One difference between the globular variant and ours is that the
morphism axioms
$\dom_i(\alpha\cdot_j \beta) \le \dom_i(\alpha)\cdot_j \dom_i(\beta)$
and
$\cod_i(\alpha\cdot_j \beta) \le \cod_i(\alpha)\cdot_j \cod_i(\beta)$
are missing in the globular axiomatisation, while they are irredundant
among the $\omega$-semiring and $\omega$-quantale axioms
(Example~\ref{ex:d10-counter}). Yet they can be derived in our
construction of convolution $\omega$-quantales over any
$\omega$-catoid and hence hold in the powerset model of path
$n$-categories. A second difference is that the previous
axiomatisation contains redundant assumptions and axioms, for instance
on the bounded distributive lattice structure of $S_i$ or the
relationships between quantalic units at different dimensions, which
are now derivable.  This discussion can be summarised as follows.

\begin{proposition}\label{P:omega-sr-vs-globular-sr}
  Every strong $\omega$-semiring is a globular
  $\omega$-semiring. Every globular $\omega$-semiring, which
  satisfies
  $\dom_i(\alpha\cdot_j \beta) \le \dom_i(\alpha)\cdot_j
  \dom_i(\beta)$ and
  $\cod_i(\alpha\cdot_j \beta) \le \cod_i(\alpha)\cdot_j
  \cod_i(\beta)$, is a strong $\omega$-semiring.
\end{proposition}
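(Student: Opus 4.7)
The proof is essentially a bookkeeping exercise, matching two axiom systems, so the plan is to work through both directions in the expected way and point out where the earlier lemmas do all the heavy lifting.

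For the forward direction, I would start from a strong $\omega$-semiring $S$ and verify each axiom of a globular $\omega$-semiring à la~\cite[Definition 3.2.7]{CalkGMS20}. The modal semiring axioms at each dimension are immediate since strong $\omega$-semirings are built on modal semirings in every dimension. The interchange inequalities $(\alpha\cdot_j \beta)\cdot_i (\gamma\cdot_j \delta) \le (\alpha\cdot_i \gamma)\cdot_j(\beta\cdot_i \delta)$ are part of both axiomatisations. The chain structure $S_0\subseteq S_1\subseteq\cdots$, the lattice laws on $S_i$, and the various relationships between units $1_i$ (including $1_i\le 1_j$ and $1_j\cdot_i 1_j = 1_j$ in the strong case) are obtained directly from Lemma~\ref{lemma:2-quantale-props}, which was derived in Section~\ref{S:2-quantales} but stated there for the quantale axioms that are already present in semirings. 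The strong form of the Kleene-star axioms needed by the globular axiomatisation, namely $\dom_j(\alpha)\cdot_i\beta^{\ast_j}\le (\dom_j(\alpha)\cdot_i\beta)^{\ast_j}$ and its dual, as well as $(\alpha\cdot_j\beta)^{\ast_i}\le \alpha^{\ast_i}\cdot_j\beta^{\ast_i}$, follow from Lemma~\ref{lemma:2-quantale-star-props}, whose proofs use only finite sups and induction and therefore transfer from quantales to semirings unchanged. The extended export and locality laws needed are supplied by Lemma~\ref{lemma:mod-props}, again purely equationally, so they carry over to semirings.

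For the converse direction, I would take a globular $\omega$-semiring $S$ and additionally assume the two weak morphism inequalities $\dom_i(\alpha\cdot_j\beta)\le \dom_i(\alpha)\cdot_j\dom_i(\beta)$ and $\cod_i(\alpha\cdot_j\beta)\le \cod_i(\alpha)\cdot_j\cod_i(\beta)$. The modal semiring structure at each dimension is given, the interchange inequalities are given, and the axiom $\dom_j\circ\dom_i=\dom_i$ (for $i<j$) is part of the globular package (it appears in the chain structure used to define globular $n$-semirings). What needs checking is the strong morphism law at each dimension, i.e.\ equality in $\dom_j(\alpha\cdot_i\beta) = \dom_j(\alpha)\cdot_i\dom_j(\beta)$ and dually for $\cod_j$ whenever $i<j$. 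For this one uses the added weak morphism inequalities together with the absorption and locality axioms of the modal semirings $S_j$ to derive the reverse inclusion, mimicking the proof of Lemma~\ref{lemma:omega-category-strong} but in equational form: apply Lemma~\ref{lemma:mod-props}(4)--(5) to rewrite $\dom_j(\alpha)\cdot_i\dom_j(\beta)$ as a domain of a compatible product and then compare with $\dom_j(\alpha\cdot_i\beta)$ using locality of $\dom_j$. Finally, one checks that the redundant assumptions in the globular axiomatisation (bounded distributive lattice structure on $S_i$, the unit identities, and so on) are either already present as theorems or derivable using Lemma~\ref{lemma:2-quantale-props} within the semiring fragment.

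The main obstacle, and the only non-mechanical step, is the derivation of the strong morphism law $\dom_j(\alpha\cdot_i\beta) = \dom_j(\alpha)\cdot_i\dom_j(\beta)$ in the converse direction: the weak morphism inequality supplies one direction, and the other needs a careful argument that reduces $\dom_j(\alpha)\cdot_i\dom_j(\beta)$ to $\dom_j$ applied to something below $\alpha\cdot_i\beta$, using the fact that elements of $S_j$ compose in dimension $i$ only when they are compatible. Everything else amounts to quoting Lemmas~\ref{lemma:2-quantale-props}, \ref{lemma:2-quantale-star-props} and~\ref{lemma:mod-props}, noting that their proofs involve no infinite sups and therefore hold verbatim in the semiring setting.
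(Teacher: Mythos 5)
There is a genuine problem in your converse direction. You identify ``the only non-mechanical step'' as deriving the strong morphism equation $\dom_j(\alpha\cdot_i\beta) = \dom_j(\alpha)\cdot_i\dom_j(\beta)$ (for $i<j$) from the two added weak inequalities, by ``mimicking the proof of Lemma~\ref{lemma:omega-category-strong} in equational form''. This step is both unnecessary and impossible. It is unnecessary because the globular axiomatisation of~\cite{CalkGMS20} already contains these equations as axioms --- that is precisely why the \emph{forward} direction needs a \emph{strong} $\omega$-semiring (compare Proposition~\ref{P:omega-quantale-vs-globular-ka}, where strongness is likewise required). It is impossible because the hypotheses you are given concern a different index pattern: the added laws are $\dom_i(\alpha\cdot_j\beta)\le\dom_i(\alpha)\cdot_j\dom_i(\beta)$ with the \emph{lower} domain applied to the \emph{higher} composition, whereas the strong laws apply the higher domain to the lower composition; neither implies the other, the weak laws are irredundant (Example~\ref{ex:d10-counter}), and strong $\omega$-semirings form a proper subclass of $\omega$-semirings. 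The proof of Lemma~\ref{lemma:omega-category-strong} cannot be mimicked here: it uses locality and functionality of a concrete $\omega$-category, which have no counterpart in an abstract semiring (cf.\ Remark~\ref{R:locality}). The intended argument for the converse direction is pure axiom matching: the strong morphism equations, the interchange laws, $\dom_j\circ\dom_i=\dom_i$ and the modal semiring structure are all part of the globular package, and the only $\omega$-semiring axioms missing from it are exactly the two weak inequalities you add as hypotheses.

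Two smaller points on the forward direction. This proposition is about \emph{semirings}, so the Kleene-star axioms do not belong in the verification at all (they are handled separately in Proposition~\ref{P:omega-ka-vs-globular-ka}). Moreover, your claim that Lemma~\ref{lemma:2-quantale-star-props} transfers to semirings because its proofs ``use only finite sups'' is false: those proofs use $\alpha^{\ast_j}=\Sup_{k\ge 0}\alpha^{k_j}$, an infinite sup, which is exactly why the paper must postulate the corresponding star laws as axioms in the Kleene-algebra setting (Remark~\ref{remark:strong-remark}). The rest of your forward direction --- quoting Lemma~\ref{lemma:2-quantale-props} for the unit identities and the lattice structure of the $S_i$, whose proofs are equational and hence valid in semirings --- matches the paper's (informal) argument.
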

The result generalises to $(\omega,p)$-semirings with or without
strong converses.


\section{Higher Kleene Algebras}\label{S:globular-ka}

In this section we extend the results in the previous one from
semirings to Kleene algebras $K$, adding a Kleene star
$(-)^\ast:K\to K$. By contrast to the quantalic Kleene star
$\alpha^\ast = \Sup_{i\ge 0} \alpha^i$ in
Section~\ref{S:modal-quantales}, it is modelled in terms of least
fixpoints. Intuitively, the Kleene star models an unbounded finite
iteration or repetition of a computation or a sequence of rewrite
steps. In the quantale of binary relations in
Example~\ref{ex:catoid-quantale}, for instance, the Kleene star of a
binary relation models its reflexive-transitive closure, in particular
the iterated execution of a rewrite relation. In the same example, the
Kleene star on the quantale of paths models the iterative gluing of
paths in a given set, thus in particular of rewriting sequences. The
same observations can be made about Kleene algebras of relations or
sets of paths.  For evidence that Kleene stars are crucial for
reasoning algebraically about higher-dimensional rewriting properties
see~\cite{CalkGMS20}.

The axioms of $\alpha$-Kleene algebras are not as straightforward as
those for $\alpha$-dioids; additional identities capturing the
interaction between higher stars, domain and codomain are needed. For
convolution algebra constructions, a star on convolution Kleene
algebras $K^C$ has previously been defined only for finitely
decomposable catoids $C$ with a single unit that satisfy a certain
grading condition~\cite{CranchDS21}. This does not cover the
$\alpha$-semirings with multiple units in this paper. Instead of
general convolution algebras, we therefore only consider powerset
extensions in this section, where the finite decomposibility of
catoids is not needed.

A \emph{Kleene algebra} is a dioid $K$ equipped with a
star operation $(-)^\ast :K\to K$ that satisfies the star unfold and
star induction axioms
\begin{gather*}
  1+\alpha\cdot \alpha^{\ast} \le \alpha^{\ast},\qquad \gamma + \alpha\cdot\beta \le \beta
  \Rightarrow \alpha^{\ast} \cdot \gamma \le \beta
\end{gather*}
and their opposites, where the arguments in compositions have been
swapped. A \emph{modal Kleene algebra} is simply a Kleene algebra
which is also a modal semiring.

The star unfold and induction axioms of Kleene algebras thus model
$\alpha^\ast$ in terms of the least pre-fixpoints of the maps
$x\mapsto 1+\alpha \cdot x$ and $x\mapsto 1 + x\cdot \alpha$, and
hence their least fixpoints.  Every quantale is a Kleene algebra: the
quantalic Kleene star defined in Section~\ref{S:modal-quantales}
satisfies the Kleene algebra axioms.

An adaptation of the definition of
$\omega$-quantales to Kleene algebras is straightforward -- except for
the last two axioms.

For an ordinal $\alpha\in \{0,1,\dots,\omega\}$, an
\emph{$\alpha$-Kleene algebra} is an $\alpha$-semiring $K$
equipped with Kleene stars $(-)^{\ast_i}:K\to K$ that satisfy the
usual star unfold and star induction axioms, for all $0\le i< j <\alpha$,
\begin{gather*}
  1_i+\alpha\cdot_i \alpha^{\ast_i} \le \alpha^{\ast_i},\qquad \gamma + \alpha\cdot_i \beta \le \beta \Rightarrow \alpha^{\ast_i} \cdot_i \gamma \le \beta,\\
  1_i+\alpha^{\ast_i}\cdot_i \alpha\le \alpha^{\ast_i},\qquad
  \gamma + \beta\cdot_i \alpha \le \beta \Rightarrow \gamma \cdot_i
  \alpha^{\ast_i} \le \beta,\\
    \dom_i (\alpha) \cdot_i \beta^{\ast_j} \le (\dom_i (\alpha)\cdot_i
  \beta)^{\ast_j},\qquad \alpha^{\ast_j}\cdot_i \cod_i (\beta) \le
  (\alpha\cdot_i \cod_i (\beta))^{\ast_j}.
\end{gather*}
An $\alpha$-Kleene algebra is \emph{strong} if the underlying
$\alpha$-semiring is, and for all $0\le i<j<\alpha$, 
\begin{gather*}
  \dom_j (\alpha) \cdot_i \beta^{\ast_j} \le (\dom_j (\alpha)\cdot_i
  \beta)^{\ast_j},\qquad \alpha^{\ast_j}\cdot_i \cod_j (\beta) \le
  (\alpha\cdot_i \cod_j (\beta))^{\ast_j}.
\end{gather*}

\begin{remark}\label{remark:strong-remark}
  The axioms mentioning domain and codomain are derivable in (strong)
  $\alpha$-quantales. In $\alpha$-Kleene algebras we have neither
  proofs nor counterexamples to show redundancy or irredundancy of
  these axioms with respect to the remaining ones, yet these laws are
  needed for coherence proofs in higher-dimensional
  rewriting~\cite{CalkGMS20}.
\end{remark}

The converse structure on Kleene algebras is inherited from
dioids. This leads immediately to \emph{involutive Kleene algebras},
\emph{Kleene algebras with converse}, their modal variants and
\emph{modal Kleene algebras with strong converse}.

\begin{lemma}\label{lemma:involutive-ka}
  In every involutive Kleene algebra, $\alpha^{\ast\conv} =
 \alpha^{\conv\ast}$. 
\end{lemma}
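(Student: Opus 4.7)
The plan is to derive the equality from two opposite inequalities, each obtained by applying converse to a star unfold axiom and then invoking a star induction axiom.

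First I would record the auxiliary fact that converse is monotone: from $\alpha\le\beta$, i.e.\ $\alpha+\beta=\beta$, involution and additivity give $\alpha^\conv+\beta^\conv=\beta^\conv$, hence $\alpha^\conv\le\beta^\conv$. I would also note $1^\conv=1$ (which follows from $0^\conv=0$ plus additivity and involution, or by observing $1=1^{\conv\conv}=(1\cdot 1)^\conv=1^\conv\cdot 1^\conv$; or more simply: $(1)^\conv = (1\cdot 1^{\conv\conv})^\conv = 1^{\conv}\cdot 1^\conv$, and $1^\conv \le 1\cdot 1 = 1$ together with $\alpha^{\conv\conv}=\alpha$ force $1^\conv=1$). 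These are standard facts in involutive dioids.

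For $\alpha^{\conv\ast}\le\alpha^{\ast\conv}$, I would start from the left unfold $1+\alpha\cdot\alpha^{\ast}\le\alpha^{\ast}$ and apply converse, using contravariance and additivity, to obtain
\begin{equation*}
1+\alpha^{\ast\conv}\cdot\alpha^\conv \le \alpha^{\ast\conv}.
\end{equation*}
Then the right star induction axiom with $\gamma=1$, $\beta=\alpha^{\ast\conv}$ and iterated element $\alpha^\conv$ yields $\alpha^{\conv\ast}\le\alpha^{\ast\conv}$.

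For the reverse inequality $\alpha^{\ast\conv}\le\alpha^{\conv\ast}$, I would symmetrically start from the left unfold at $\alpha^\conv$, namely $1+\alpha^\conv\cdot\alpha^{\conv\ast}\le\alpha^{\conv\ast}$, take converses to get $1+\alpha^{\conv\ast\conv}\cdot\alpha\le\alpha^{\conv\ast\conv}$ (using $\alpha^{\conv\conv}=\alpha$), and apply right induction to deduce $\alpha^\ast\le\alpha^{\conv\ast\conv}$. A final application of converse (monotone) combined with $\alpha^{\conv\ast\conv\conv}=\alpha^{\conv\ast}$ gives the desired bound.

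There is no real obstacle: the argument is a clean two-sided symmetrisation, and the only subtlety is making sure to pair each unfold (left or right) with the induction of the matching side so that the composition order produced by contravariance of $(-)^\conv$ lines up. Alternatively, one could invoke the opposite unfold/induction pair to get one direction without re-applying converse twice, but presenting both directions via the same recipe keeps the proof uniform.
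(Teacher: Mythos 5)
Your proof is correct and follows essentially the same route as the paper's: both directions are obtained by applying converse to a star unfold law and then invoking the star induction axiom of the matching side (your choice of left-unfold/right-induction versus the paper's right-unfold/left-induction in one direction is just the mirror image of the same argument). The auxiliary facts you cite ($1^\conv=1$, monotonicity of converse) are indeed available in involutive dioids, as the paper notes.
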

A proof can be found in Appendix~\ref{A:proofs}, as usual.

We now present a correspondence result for higher Kleene algebras --
in the special case of powerset extensions. As usual, we consider
$\omega$-structures. As every quantale is a Kleene algebra and the
Kleene star on a set is simply a union of powers, it is an immediate
consequence of Corollary~\ref{corollary:globular-lift}.

\begin{corollary}\label{corollary:globular-lifting-ka}~
  \begin{enumerate}
  \item  Let $C$ be a  local $\omega$-catoid. Then $\Pow C$
    is an $\omega$-Kleene algebra.
  \item Let $X$ be a set and $\Pow X$ an $\omega$-Kleene
    algebra in which $\id_i\neq 0$. Then $X$ can be equipped
      with a local $\omega$-catoid structure.
      \item If $C$ is strong, then $\Pow C$ is strong and vice versa.
  \end{enumerate}
\end{corollary}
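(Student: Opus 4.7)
The statement reduces directly to Corollary~\ref{corollary:globular-lift}, as signalled in the sentence immediately preceding it. Two observations drive the reduction: first, on any powerset the quantalic Kleene star $X^{\ast_i}=\bigcup_{k\ge 0}X^{k_i}$ satisfies the Kleene algebra unfold and induction laws, so the distinction between $\omega$-Kleene algebra and $\omega$-quantale collapses on $\Pow C$ up to the definition of the star; second, the recovery of a catoid from a powerset algebra proceeds at the atomic level via singletons and uses only the $\omega$-semiring part of the Kleene algebra structure.

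For (1), I would apply Corollary~\ref{corollary:globular-lift}(1) to produce on $\Pow C$ the structure of an $\omega$-quantale, with each $\odot_i$ extended from $C$ to $\Pow C$, units $C_i$, and $\dom_i,\cod_i$ the direct images of $\src_i,\tgt_i$. Defining $X^{\ast_i} = \bigcup_{k\ge 0}X^{k_i}$, the left and right star unfold and induction axioms at each dimension $i$ follow from the standard quantalic calculation. The remaining $\omega$-Kleene algebra axioms, $\dom_i(\alpha)\cdot_i\beta^{\ast_j}\le(\dom_i(\alpha)\cdot_i\beta)^{\ast_j}$ and its codomain opposite for $0\le i<j<\omega$, are precisely Lemma~\ref{lemma:2-quantale-star-props}(1), which we have already established in every $\omega$-quantale.

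For (2), an $\omega$-Kleene algebra on $\Pow X$ is in particular an $\omega$-semiring, and the catoid structure on $X$ is determined by how atoms interact in $\Pow X$: set $x\in y\odot_i z$ iff $\{x\}\subseteq\{y\}\cdot_i\{z\}$, and let $\src_i(x)$, $\tgt_i(x)$ be the unique elements of the singletons $\dom_i(\{x\})$ and $\cod_i(\{x\})$. The verification that this yields a local $\omega$-catoid proceeds exactly as in the proof of Corollary~\ref{corollary:globular-lift}(2), with the hypothesis $\id_i\neq\emptyset$ ensuring non-empty sets of $i$-cells. The Kleene star plays no role in this recovery, so the $\omega$-semiring fragment of the hypothesis already suffices.

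For (3), the strong case is handled identically, using the strong version of Corollary~\ref{corollary:globular-lift} in both directions and, for the forward direction, also the strong domain/codomain-star laws supplied by Lemma~\ref{lemma:2-quantale-star-props}(2). The only non-routine step in the whole argument is the derivation of the mixed star-domain/codomain axioms on $\Pow C$ from the quantalic definition of the star, and this is encapsulated by Lemma~\ref{lemma:2-quantale-star-props}; everything else is an invocation of Corollary~\ref{corollary:globular-lift}.
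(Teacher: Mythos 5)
Your proposal is correct and follows essentially the same route as the paper, which disposes of the corollary in one sentence: since every quantale is a Kleene algebra with $\alpha^{\ast_i}=\Sup_{k\ge 0}\alpha^{k_i}$ (the union of powers on a powerset), the result is an immediate consequence of Corollary~\ref{corollary:globular-lift}, with the mixed star--domain/codomain axioms supplied by Lemma~\ref{lemma:2-quantale-star-props} exactly as you say. Your write-up merely makes explicit the details the paper leaves implicit (in particular that the recovery of the catoid in (2) ignores the star).
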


Similarly, the following correspondence result is a specialisation of
Corollary~\ref{corollary:groupoid-lift}.

\begin{corollary}\label{corollary:groupoid-lift2} ~
  \begin{enumerate}
  \item Let $C$ be a groupoid. Then $\Pow C$ is a modal Kleene algebra with
    strong converse.
  \item Let $X$ be a set and $\Pow X$ a modal Kleene algebra
    with strong converse in which $\id_0\neq \emptyset$ and all atoms
    are functional. Then $X$ can be equipped with a
    groupoid structure.
   \end{enumerate}
 \end{corollary}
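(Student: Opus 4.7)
Both statements follow by specialising Corollary~\ref{corollary:groupoid-lift} and invoking two elementary facts: every quantale is a Kleene algebra via $\alpha^{\ast}=\Sup_{i\ge 0}\alpha^i$, and by Proposition~\ref{prop:modular-modal} every Dedekind quantale is a modal quantale.

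For (1), apply Corollary~\ref{corollary:groupoid-lift}(1) to obtain a Dedekind quantale structure on $\Pow C$ with composition $\odot$, involution $X\mapsto X^\inv$, unit $C_0$ and the explicitly defined $\dom(X)=C_0\cap XX^\inv$, $\cod(X)=C_0\cap X^\inv X$. Proposition~\ref{prop:modular-modal} makes this a modal quantale and, being a quantale, it carries a Kleene star, so $\Pow C$ is a modal Kleene algebra with involution. The strong converse inequalities $\dom(\alpha)\le\alpha\alpha^\conv$ and $\cod(\alpha)\le\alpha^\conv\alpha$ are immediate from the explicit formulas $\dom(\alpha)=1\inf\alpha\alpha^\conv$ and $\cod(\alpha)=1\inf\alpha^\conv\alpha$ recorded in the closing remark of Section~\ref{S:modular-quantales}.

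For (2), I would follow the template of Theorem~\ref{theorem:conv-quantale-groupoid} in the powerset case, using only the strong converse inequalities in place of the full modular law. First, extract a local functional catoid on $X$ from the underlying modal Kleene algebra: the one-dimensional instance of Corollary~\ref{corollary:globular-lifting-ka}(2), together with $\id_0\neq\emptyset$ and atom functionality, yields a local catoid $(X,\odot,\src,\tgt)$ with $y\odot z$ read off via $x\in y\odot z\Leftrightarrow\{x\}\subseteq\{y\}\cdot\{z\}$ and $\dom(\{x\})=\{\src(x)\}$, $\cod(\{x\})=\{\tgt(x)\}$. Next, transport the involution $(-)^\conv$ of $\Pow X$ to an inversion $x\mapsto x^\inv$ on $X$: since $(-)^\conv$ is an involutive, join-preserving self-map of $\Pow X$, it is monotone with monotone inverse, hence an order automorphism of the powerset lattice, and therefore sends atoms to atoms, so $\{x\}^\conv=\{x^\inv\}$ for a uniquely determined $x^\inv\in X$. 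Finally, apply the strong converse axiom $\dom(\{x\})\le\{x\}\{x\}^\conv$ to the atom $\{x\}$: this produces $\{\src(x)\}\subseteq\{x\}\{x^\inv\}=x\odot x^\inv$, and functionality of atoms forces $x\odot x^\inv$ to be a singleton, so $x\odot x^\inv=\{\src(x)\}$; the dual argument delivers $x^\inv\odot x=\{\tgt(x)\}$, completing the groupoid axioms.

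The main obstacle is the second step, namely verifying that $(-)^\conv$ really does restrict to a permutation of singletons. This is essentially the observation already used by Jónsson and Tarski~\cite{JonssonT52}: an order automorphism of a complete atomic boolean algebra must preserve the set of atoms globally, and involution on $\Pow X$ is such an automorphism. Once atom-preservation is established, the remaining verifications are direct set-theoretic calculations on atoms, and the Kleene star plays no role in the recovery, which explains why the weaker hypothesis ``modal Kleene algebra with strong converse'' suffices in place of ``Dedekind quantale''.
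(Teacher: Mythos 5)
Your proposal is correct and follows essentially the same route as the paper: part (1) specialises the Dedekind-quantale extension of Corollary~\ref{corollary:groupoid-lift} and reads off the strong converse inequalities (the paper cites Lemma~\ref{lemma:groupoid-dom-lift}(2) directly rather than the explicit domain formula, but these are the same fact), and part (2) is the paper's own atom computation $x\odot x^\inv=\{x\}\{x\}^\conv=\dom(\{x\})=\{\src(x)\}$, with the strong converse axiom upgraded to an equation by functionality of atoms. Your explicit justification that the involution permutes atoms (as a sup-preserving order automorphism of an atomic powerset lattice) is a point the paper leaves tacit, and is a welcome clarification rather than a deviation.
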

 \begin{proof}
   For (1), it remains to extend the strong converse axioms, as the
   modular law is not available. This has been done in
   Lemma~\ref{lemma:groupoid-dom-lift}.

   For (2), note that the strong converse axioms are used in the proof
   of Theorem~\ref{theorem:conv-quantale-groupoid} instead of the
   Dedekind or modular law, but the strong Gelfand law would be too
   weak. For a direct proof,
   $x\odot x^\inv = \{x\}\cdot \{x\}^\conv =\dom(\{x\}) = \{\src(x)\}$
   and likewise for the target axiom, using the strong converse axiom
   for domain in the second step (as an equation because atoms are
   functional).
\end{proof}

Typical examples of powerset $\omega$-Kleene algebras are subalgebras
of powerset $\omega$-quantales generated by some finite set and closed
under the operations of $\omega$-Kleene algebras. Languages over a
finite alphabet $\Sigma$, for instance, form a quantale, whereas the
regular languages generated by $\Sigma$ form a Kleene algebra, but not
a quantale.

\begin{remark}
The definition of \emph{$(\omega,p)$-Kleene algebra} is a
straightforward extension of that of $(\omega,p)$-semiring. The fact
that $\alpha^{\conv_ j} = \alpha$ for all $\alpha\in S_{\dom_i}$ (or
$K_j$) with $i< j$ now follows from Lemma~\ref{lemma:dom-cod-conv}.

Corollary \ref{corollary:np-powerset} now specialises to
$(\omega,p)$-Kleene algebras owing to
Corollary~\ref{corollary:globular-lifting-ka}. In particular, an
$(\omega,p)$-Kleene algebra with strong converse is needed to recover
the $(\omega,p)$-catoid among the atoms.  For the more general
convolution algebra constructions, the limitations mentioned in
Section~\ref{S:globular-ka} remain; an extension to Kleene algebras
requires further thought. Truncations at dimension $n$ work as
expected. 
\end{remark}

Finally, we briefly compare $\omega$-Kleene algebras with globular
Kleene algebras~\cite[Definitions 3.2.7]{CalkGMS20}. As both are based
on the higher semirings discussed in the previous section, we focus on
the star axioms. On one hand, as discussed in
Remark~\ref{remark:strong-remark}, we need a strong $\omega$- Kleene
algebra to derive two of the globular $\omega$-Kleene algebra
axioms. On the other hand, two globular $\omega$-Kleene algebra axioms
are derivable in the context of $\omega$-Kleene algebras, as the
following lemma shows.

\begin{lemma}\label{lemma:nKA-star}
  In every $\omega$-Kleene algebra,
  $(\alpha \cdot_j \beta)^{\ast_i} \le \alpha^{\ast_i} \cdot_j
  \beta^{\ast_i}$ for all $0\le i <j <\omega$.
\end{lemma}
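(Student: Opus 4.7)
The plan is to apply the left star induction axiom for $\cdot_i$ with induction target $\alpha^{\ast_i} \cdot_j \beta^{\ast_i}$ and initial witness $1_i$. Thus the goal reduces to verifying the two inequalities
\begin{equation*}
  1_i \le \alpha^{\ast_i} \cdot_j \beta^{\ast_i}
  \qquad\text{and}\qquad
  (\alpha \cdot_j \beta) \cdot_i (\alpha^{\ast_i} \cdot_j \beta^{\ast_i}) \le \alpha^{\ast_i} \cdot_j \beta^{\ast_i},
\end{equation*}
since the induction axiom then yields $(\alpha \cdot_j \beta)^{\ast_i} \cdot_i 1_i \le \alpha^{\ast_i} \cdot_j \beta^{\ast_i}$, and the left-hand side simplifies to $(\alpha \cdot_j \beta)^{\ast_i}$ by the unit axiom for $\cdot_i$.

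For the first inequality, I would invoke the star unfold laws to get $1_i \le \alpha^{\ast_i}$ and $1_i \le \beta^{\ast_i}$, then monotonicity of $\cdot_j$ gives $1_i \cdot_j 1_i \le \alpha^{\ast_i} \cdot_j \beta^{\ast_i}$. Finally, Lemma~\ref{lemma:2-quantale-props}(2) (which the paper notes is available in $\omega$-semirings and hence in $\omega$-Kleene algebras) supplies the key identity $1_i \cdot_j 1_i = 1_i$, closing the step.

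For the second inequality, the main move is the interchange law for $\cdot_i$ and $\cdot_j$, which yields
\begin{equation*}
  (\alpha \cdot_j \beta) \cdot_i (\alpha^{\ast_i} \cdot_j \beta^{\ast_i})
  \le (\alpha \cdot_i \alpha^{\ast_i}) \cdot_j (\beta \cdot_i \beta^{\ast_i}).
\end{equation*}
The right-hand side is then bounded above by $\alpha^{\ast_i} \cdot_j \beta^{\ast_i}$ using the star unfold inequalities $\alpha \cdot_i \alpha^{\ast_i} \le \alpha^{\ast_i}$ and $\beta \cdot_i \beta^{\ast_i} \le \beta^{\ast_i}$ combined with monotonicity of $\cdot_j$.

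I do not expect a serious obstacle here: the interchange law does all the conceptual work, and the only subtlety is remembering that $1_i \cdot_j 1_i = 1_i$, which is exactly the partial Eckmann--Hilton-avoiding identity already on record for $\omega$-semirings. Note also that a symmetric argument with the right star induction axiom would yield the same inequality via $1_i + (\alpha^{\ast_i} \cdot_j \beta^{\ast_i}) \cdot_i (\alpha \cdot_j \beta) \le \alpha^{\ast_i} \cdot_j \beta^{\ast_i}$, so the result does not depend on the choice of side.
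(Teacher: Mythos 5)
Your proposal is correct and follows essentially the same route as the paper: star induction on $\cdot_i$ with target $\alpha^{\ast_i}\cdot_j\beta^{\ast_i}$, the identity $1_i\cdot_j 1_i=1_i$ for the base case, and the interchange law plus star unfold for the inductive step. The only difference is presentational — you spell out the instantiation of the induction axiom and the appeal to Lemma~\ref{lemma:2-quantale-props}(2) more explicitly than the paper does.
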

See Appendix~\ref{A:proofs} for a proof. The following result
summarises this discussion.

\begin{proposition}\label{P:omega-ka-vs-globular-ka}
  Every strong $\omega$-Kleene algebra is a globular
  $\omega$-Kleene algebra. Every
  globular $\omega$-Kleene algebra, which satisfies
  $\dom_i(\alpha\cdot_j \beta) \le \dom_i(\alpha)\cdot_j
  \dom_i(\beta)$ and
  $\cod_i(\alpha\cdot_j \beta) \le \cod_i(\alpha)\cdot_j
  \cod_i(\beta)$, is a strong $\omega$-Kleene algebra.
\end{proposition}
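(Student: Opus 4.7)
The plan is to reduce both directions to Proposition~\ref{P:omega-sr-vs-globular-sr}, which already settles the semiring part of the comparison, and to treat only the Kleene star axioms separately since they are the new ingredients beyond the semiring structure.

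For the forward direction, let $K$ be a strong $\omega$-Kleene algebra. Its underlying strong $\omega$-semiring is a globular $\omega$-semiring by Proposition~\ref{P:omega-sr-vs-globular-sr}, so it remains to derive the star axioms of globular $\omega$-Kleene algebras as specified in~\cite[Definition~3.2.7]{CalkGMS20}. The standard per-dimension unfold and induction laws are in the signature of $\omega$-Kleene algebras by definition. The two ``domain/codomain star'' axioms involving $\dom_i, \cod_i$ are axioms of $\omega$-Kleene algebras, and their strong counterparts involving $\dom_j, \cod_j$ are axioms of strong $\omega$-Kleene algebras, so both hold in $K$. The remaining compatibility law $(\alpha\cdot_j \beta)^{\ast_i}\le \alpha^{\ast_i}\cdot_j \beta^{\ast_i}$ for $0\le i<j<\omega$ is derivable in any $\omega$-Kleene algebra by Lemma~\ref{lemma:nKA-star}. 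Collecting these facts yields the globular axiomatisation.

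For the converse direction, let $K$ be a globular $\omega$-Kleene algebra that additionally satisfies the morphism inequalities $\dom_i(\alpha\cdot_j\beta)\le \dom_i(\alpha)\cdot_j \dom_i(\beta)$ and $\cod_i(\alpha\cdot_j\beta)\le \cod_i(\alpha)\cdot_j\cod_i(\beta)$. By Proposition~\ref{P:omega-sr-vs-globular-sr} applied to the underlying dioid structure, $K$ is a strong $\omega$-semiring. The standard star unfold and induction laws carry over directly, and the strong star-domain/codomain laws are precisely the globular axioms. Hence $K$ is a strong $\omega$-Kleene algebra.

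The main conceptual obstacle is not in either computation but in aligning the two signatures: as noted in Remark~\ref{remark:strong-remark}, we do not know whether the $\dom_j$ and $\cod_j$ star laws are redundant relative to the other $\omega$-Kleene algebra axioms (they are derivable for quantales via Lemma~\ref{lemma:2-quantale-star-props} but this uses infinite suprema unavailable in Kleene algebras). Consequently these laws must be posited as axioms on both sides of the comparison, and the match is direct rather than requiring any subtle derivation. The analogous remark applies to the extra morphism inequalities, whose irredundancy (Example~\ref{ex:d10-counter}) forces their explicit inclusion in the hypotheses of the second implication, exactly as stated.
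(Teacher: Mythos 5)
Your proposal is correct and follows essentially the same route as the paper: the paper's argument is precisely the discussion preceding the proposition, namely reducing the semiring part to Proposition~\ref{P:omega-sr-vs-globular-sr}, matching the $\dom_i/\cod_i$ star axioms directly, invoking strength for the $\dom_j/\cod_j$ star laws (Remark~\ref{remark:strong-remark}), and deriving $(\alpha\cdot_j\beta)^{\ast_i}\le\alpha^{\ast_i}\cdot_j\beta^{\ast_i}$ via Lemma~\ref{lemma:nKA-star}. Nothing is missing.
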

This proposition extends to strong $(\omega,p)$-Kleene algebras and
globular $(\omega,p)$-Kleene algebras, as introduced
in~\cite[Definition 4.4.2]{CalkGMS20} for the case $(n,p)$,
using either the notion of converses or strong converses discussed
in the previous section.


\section{Conclusion}\label{S:conclusion}

This paper combines two lines of research on higher globular algebras
for higher-dimensional rewriting and on the construction of
convolution algebras on catoids and categories. More specifically, we
have introduced $\omega$-catoids and $\omega$-quantales, and
established correspondence triangles between them. These extend and
justify the axioms of higher globular Kleene
algebras~\cite{CalkGMS20}, which have previously been used for
coherence proofs in higher-dimensional rewriting, up to some
modifications. We have also introduced several extensions and
specialisations of these constructions, in particular to
$(\omega,p)$-catoids and $(\omega,p)$-quantales and to variants of
such quantales based on semirings or Kleene algebras. While the
technical focus has been on convolution algebras, which often lead to
interesting applications in quantitative program semantics and
verification in computer science, we currently have no use for
convolution $\omega$-quantales or $(\omega,p)$-quantales with value
quantales different from the quantale of booleans. In the latter case,
however, we can use power set $\omega$-quantales instead of globular
Kleene algebras to reason about higher-dimensional rewriting systems
with greater flexibility and expressive power.  A detailed
introduction to the relationship with higher-dimensional rewriting, to
globular Kleene algebras and to their use in coherence proofs such as
higher Church-Rosser theorems and higher Newman's lemmas can be found
in~\cite{CalkGMS20}.

Our results add a new perspective to higher-dimensional rewriting and
they yield new tools for reasoning with higher categories. But further
work is needed for exploring them in practice. A first line of
research could consider categorical variations of the strict globular
$\omega$-catoids and $\omega$-quantales introduced in this work to
provide a categorical framework for the construction of coherence
proofs and more generally of polygraphic resolutions, which are
cofibrant replacements in higher categories:
\begin{itemize}
\item It is worth considering cubical versions of $\omega$-catoids and
  $\omega$-quantales, as confluence diagrams in higher-dimensional
  rewriting have a cubical shape~\cite{Lucas2017,Lucas20} and proofs
  of higher confluence properties can be developed more naturally in a
  cubical setting.  A single-set axiomatisation of cubical categories
  has recently been developed~\cite{MalbosMassacrierStruth2024}. It
  remains to generalise it to cubical catoids, to introduce cubical
  quantales and to study their correspondences along the lines of the
  globular case.  Cubical catoids and quantales may also be beneficial
  for explicit constructions of resolutions.  Beyond that, we are
  interested in applications to precubical sets and higher-dimensional
  automata, where Kleene algebras or quantales describing their
  languages remain to be
  defined~\cite{FahrenbergJSZ21b,FahrenbergJSZ22b}, and to cubical
  $\omega$-categories with connections~\cite{AlAglBS02}, where
  single-set formalisations might be of interest.
\item To expand the constructions in this article to algebraic
  rewriting such as string, term, linear or diagrammatic rewriting,
  the development of catoids and quantales internal to categories of
  algebras over an operad would be needed.
\item The homotopic properties of algebraic rewriting require
  weakening the exchange law of higher categories
  \cite{Forest21,ForestMimram2022} and, in our context, Gray-variants
  of the $\omega$-catoids and $\omega$-quantales, where interchange law
  holds only up to coherent isomorphism~\cite{GordonPowerRoss95}.
\end{itemize}

A second line of research could investigate the link of the
correspondence triangles between catoids, value quantales and
convolution quantales with duality results for convolution algebras
beyond the Jónsson-Tarski case and similar structural results, and the
categorification of such an approach. Catoids, for instance, are
equivalent to monoids in the monoidal category $\Rel$ with the
standard tensor and unit. Related to this are questions about free
powerset or convolution $\omega$-quantales generated by polygraphs, or
computads, and about coherent rewriting properties of these internal
monoids.

A third line of research could consider the formalisation of higher
category theory, and higher-dimensional rewriting support for these, in the
single-set framework of catoids developed
in~\cite{Struth23,CalkS23,CalkS24} or by other means with proof
assistants such as Coq, Lean or Isabelle. The formalisation of the
theorems in~\cite{CalkGMS20} could serve as stepping stones towards
coherence theorems like Squier's theorem in higher
dimensions~\cite{GuiraudMalbos09,GuiraudMalbos18} or the computation
of resolutions by rewriting in categorical and homological
algebra~\cite{GuiraudMalbos12advances,GuiraudHoffbeckMalbos19,MalbosRen23}.

Finally, the study of domain, codomain and converse in the setting of
quantales leads to interesting questions about Dedekind quantales and
the interplay of these operations in variants of
allegories~\cite{FreydS90}. These will be addressed in successor
articles.


\vspace{\baselineskip}

\noindent \textbf{Acknowledgement} The authors would like to thank
James Cranch, Uli Fahrenberg, Éric Goubault, Amar Hadzihasanovic,
Christian Johanson, Tanguy Massacrier and Krzysztof Ziemia\'nski for
interesting discussions and the organisers of the GETCO 2022
conference and the Nordic Congress of Mathematicians 2023 for the
opportunity to present some of the results in this article. The fourth
author would like to thank the Plume team and the Computer Science
Department of the ENS de Lyon for supporting a short visit at the LIP
laboratory in the final stages of this work.
  

\newpage

\bibliographystyle{alpha}
\bibliography{glob-quantale}

\appendix


\section{Eckmann-Hilton-Style Collapses}\label{A:eckmann-hilton}

Strengthening the weak homomorphism axioms
\begin{equation*}
  \src_i(x\odot_j y) \subseteq \src_i(x) \odot_j \src_i(y)\qquad\text
  { and }\qquad
  \tgt_i(x\odot_j y) \subseteq \tgt_i(x) \odot_j \tgt_i(y),
\end{equation*}
for $0\le i<j<\omega$ to equations in the $\omega$-catoid axioms
collapses the structure.  The equational homomorphism laws for
$\src_j$ and $\tgt_j$ and $\odot_i$ for $i<j$ in $\omega$-categories
are therefore rather exceptional. In this Appendix,
following~\cite{FahrenbergJSZ21a}, we call
\emph{$\src\tgt$-multimagma} a catoid in which associativity of
$\odot$ has been forgotten. We extend this notion to
$\omega$-$\src\tgt$-magmas in the obvious way.

\begin{lemma}\label{lemma:collapse1}
  If the inclusions
  \begin{equation*}
    \src_i(x\odot_j y) \subseteq \src_i(x) \odot_j
    \src_i(y)\qquad\text{ and }
    \qquad\tgt_i(x\odot_j y) \subseteq \tgt_i(x) \odot_j \tgt_i(y),
    \end{equation*}for
  $0\le i<j<\omega$ are replaced by
  equations in the axiomatisation of $\omega$-$\src\tgt$-multimagmas, then
 \begin{enumerate}
 \item $\src_i = \src_j$ and $\tgt_i=\tgt_j$,
   \item $\src_i = \tgt_i$ and $\src_j = \tgt_j$. 
 \end{enumerate}
\end{lemma}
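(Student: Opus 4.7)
The plan is to show that the strengthened morphism axioms, together with the orthogonality of units at higher dimensions and weak locality, yield a key implication of the form
\begin{equation*}
  \src_i(x) = \src_i(y) \Rightarrow \tgt_j(x) = \src_j(y),
\end{equation*}
and then derive the collapse by specializing $y$ to a few carefully chosen expressions. The mechanism is Eckmann-Hilton-like: the equational morphism law is really a definedness constraint in disguise, because $\odot_j$ acts rigidly on $j$-units.

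First I would translate the strengthened equation $\src_i(x\odot_j y) = \src_i(x)\odot_j \src_i(y)$ into this implication. Since $i < j$, the whisker axiom $\src_j\circ \src_i = \src_i$ makes $\src_i(x)$ and $\src_i(y)$ elements of $C_j$, i.e., $j$-units. By Lemma~\ref{lemma:orth-id} applied to the catoid $(C,\odot_j,\src_j,\tgt_j)$, $\src_i(x)\odot_j \src_i(y)$ is nonempty iff $\src_i(x) = \src_i(y)$, and equals $\{\src_i(x)\}$ in that case. Hence if $x\odot_j y = \emptyset$ then $\src_i(x)\neq \src_i(y)$; contrapositively, $\src_i(x) = \src_i(y)$ forces $x\odot_j y\neq\emptyset$, whence $\tgt_j(x)=\src_j(y)$ by weak locality at dimension $j$. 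This gives implication (A). The same argument with the $\tgt_i$-morphism axiom produces implication (B): $\tgt_i(x) = \tgt_i(y) \Rightarrow \tgt_j(x) = \src_j(y)$.

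Next I would exploit (A) and (B) by three specializations. Putting $x = y = z$ in (A) yields $\tgt_j(z) = \src_j(z)$ for every $z$, so $\src_j = \tgt_j$. Putting $y = \src_i(x)$ in (A), the hypothesis $\src_i(\src_i(x)) = \src_i(x)$ is the catoid law $\src_i\circ\src_i=\src_i$ from Lemma~\ref{lemma:mm-props}(1); the conclusion then reads $\tgt_j(x) = \src_j(\src_i(x)) = \src_i(x)$ by the whisker axiom $\src_j\circ\src_i = \src_i$, so $\tgt_j = \src_i$. Combined with $\src_j = \tgt_j$ this gives $\src_i = \src_j$. The dual specialization $y = \tgt_i(x)$ in (B), using $\tgt_i\circ\tgt_i=\tgt_i$ and the whisker axiom $\tgt_j\circ\tgt_i=\tgt_i$, produces $\tgt_j = \tgt_i$. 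Chaining the three identities yields $\src_i = \src_j = \tgt_j = \tgt_i$, which is precisely the content of (1) and (2).

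The only subtle step is the orthogonality-of-units observation that converts the equational morphism law into the implication $(\ast)$ and then into (A); the rest of the argument is a routine specialization. No use of associativity is needed, so the proof goes through for $\omega$-$\src\tgt$-multimagmas as stated.
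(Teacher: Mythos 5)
Your proof is correct and rests on exactly the same mechanism as the paper's: the equational morphism law, combined with the orthogonal-idempotent behaviour of $j$-units (Lemma~\ref{lemma:orth-id} applied at dimension $j$, which needs no associativity), forces $\Delta_j(x,y)$ whenever $\src_i(x)=\src_i(y)$, and weak locality then yields the collapse; the paper runs this argument on the single pair $(\src_j(x),\src_i(\src_j(x)))$ and derives the remaining identities by opposition and the globular commutation laws, whereas you abstract it into the implication (A) and specialise three times, which is a cleaner organisation of the same idea. One trivial slip: in the specialisation $y=\tgt_i(x)$ of (B) the whisker axiom actually invoked is $\src_j\circ\tgt_i=\tgt_i$ rather than $\tgt_j\circ\tgt_i=\tgt_i$, but both are axioms, so nothing breaks.
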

\begin{proof}
  For $\src_i = \src_j$,
  \begin{align*}
    \src_i(\src_j(x) \odot_j \src_i(\src_j(x))) &=
    \src_i(\src_j(x))\odot_j \src_i(\src_i(\src_j(x)))\\
    &= \src_j(\src_i(\src_j(x)))\odot_j \src_i(\src_j(x))\\
    &=\{\src_i(\src_j(x))\},
  \end{align*}
  thus $\src_i(\src_j(x) \odot_j \src_i(\src_j(x))) \neq \emptyset$
  and therefore $\Delta_j(\src_j(x),\src_i(\src_j(x)))$. It follows
  that
  \begin{equation*}
    \src_j(x)= \tgt_j(\src_j(x)) = \src_j(\src_i(\src_j(x)))
    = \src_j(\src_j(\src_i(x)))
    =\src_j(\src_i(x))
    =\src_i(x).
  \end{equation*}
  By opposition, therefore, $\tgt_i=\tgt_j$. Also,
  $\src_i (x)= \tgt_j(\src_i(x)) = \src_i(\tgt_j(x)) = \src_i(\tgt_i(x))
  =\tgt_i(x)$, and $\src_j=\tgt_j$ then follows from the previous properties.
\end{proof}

\begin{lemma}\label{lemma:collapse2}
  If the same replacement is made for $\omega$-categories, then
  $\odot_i=\odot_j$ for $0\le i<j\le \omega$ and both operations commute. 
\end{lemma}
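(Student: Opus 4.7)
The plan is a standard Eckmann-Hilton argument adapted to the partial setting of $\omega$-categories. First, I would invoke Lemma~\ref{lemma:collapse1}: under the hypothesis that the morphism axioms for $\src_i$ and $\tgt_i$ with respect to $\odot_j$ become equations, all four maps $\src_i,\tgt_i,\src_j,\tgt_j$ collapse to a single map, say $s$. Consequently each $x$ has the common two-sided unit $s(x)$ for both operations, i.e.\ $s(x)\odot_k x=\{x\}=x\odot_k s(x)$ for $k\in\{i,j\}$, and locality together with this collapse yields $\Delta_i(x,y)\Leftrightarrow s(x)=s(y)\Leftrightarrow\Delta_j(x,y)$, so the two multioperations share their domain of definition.

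Assume now $s(x)=s(y)$, so that all compositions written below are defined. The core of the proof is to pad $x$ and $y$ by units in two different positions and then apply the interchange axiom
\begin{equation*}
  (w\odot_j x')\odot_i(y'\odot_j z)\subseteq(w\odot_i y')\odot_j(x'\odot_i z).
\end{equation*}
Taking $(w,x',y',z)=(x,s(x),s(y),y)$ gives
\begin{equation*}
  x\odot_i y=(x\odot_j s(x))\odot_i(s(y)\odot_j y)\subseteq(x\odot_i s(y))\odot_j(s(x)\odot_i y)=x\odot_j y,
\end{equation*}
while taking $(w,x',y',z)=(s(x),x,y,s(y))$ gives
\begin{equation*}
  x\odot_i y=(s(x)\odot_j x)\odot_i(y\odot_j s(y))\subseteq(s(x)\odot_i y)\odot_j(x\odot_i s(y))=y\odot_j x.
\end{equation*}
Since $\omega$-categories are functional, each of $x\odot_i y$, $x\odot_j y$ and $y\odot_j x$ is a singleton whenever defined, so the two inclusions are forced to equalities. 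Hence $x\odot_i y=x\odot_j y=y\odot_j x$, which simultaneously yields $\odot_i=\odot_j$ and commutativity of the common operation.

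The main obstacle is the bookkeeping around definedness: one must check that each of the four padded binary compositions in the interchange really is defined, and this relies on the full identification $s_i=t_i=s_j=t_j$ of Lemma~\ref{lemma:collapse1} (so that $s(x)$ serves equally as a left and right unit for both $\odot_i$ and $\odot_j$). Functionality is then essential to upgrade the two inclusions to equalities; without it one would only obtain containments of singletons in possibly larger sets, recovering merely a partial Eckmann-Hilton collapse rather than the full one claimed.
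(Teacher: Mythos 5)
Your proof is correct and follows essentially the same route as the paper's: invoke Lemma~\ref{lemma:collapse1}, pad $x$ and $y$ with units in the two complementary positions, apply the interchange inclusion, and upgrade to equalities by functionality. The only (minor, and arguably cleaner) difference is that you phrase everything in terms of the single collapsed map $s$ and make the coincidence of $\Delta_i$ and $\Delta_j$ explicit via locality, whereas the paper keeps the separate source/target notation and treats the undefined case by a trivial inclusion.
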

\begin{proof}
  If $x\odot_i y= \emptyset$, then $x\odot_i y \subseteq x\odot_j
  y$. Otherwise, if $x\odot_i y\neq \emptyset$, then $\tgt_i(x) =
  \src_i(y)$ and
  \begin{align*}
    \{x \odot_i y\}
    &= (x\odot_j \tgt_j(x))\odot_i(\src_j(y)\odot_j y)\\
    &\subseteq (x\odot_i \src_j(y))\odot_j (\tgt_j(x)\odot_j y)\\
    &= (x\odot_i \src_i(y))\odot_j (\tgt_i(x)\odot_j y)\\
    &= (x\odot_i \tgt_i(x))\odot_j (\src_i(y)\odot_j y)\\
    &= \{x \odot_j y\}
  \end{align*}
  and therefore $x\odot_i y = x\odot_j y$ by functionality.

  Likewise, if $x\odot_i y= \emptyset$, then $x\odot_i y \subseteq y\odot_i
  x$.  If $x\odot_i y\neq \emptyset$, then $\tgt_i(x) =
  \src_i(y)$ and
  \begin{align*}
    \{x\odot_i y\}
    &= (\src_j(x)\odot_j x)\odot_i (y \odot_j \tgt_j(x))\\
    &\subseteq (\src_j(x)\odot_i y)\odot_j (x \odot_i \tgt_j(y))\\
    &= (\tgt_i(x)\odot_i y)\odot_j (x \odot_i \src_i(y))\\
    &= (\src_i(y)\odot_i y)\odot_j (x \odot_i \tgt_i(x))\\
    &= \{y \odot_j x\}.
  \end{align*}
Then $x\odot_i y= y\odot_i x$ and $x\odot_j y= y\odot_j x$ by
functionality and the previous result. 
\end{proof}

Unlike the classical Eckmann-Hilton collapse, the resulting
  structure is \emph{not} an abelian monoid: different elements can
  still have different units and $\odot_i$ (and therefore $\odot_j$)
  need not be total. There are $2$-element counterexamples for
  $\odot_0$ and $\odot_1$ in the case of $2$-categories.

  Finally we obtain a stronger collapse in the presence of an
  equational interchange law in the more general setting of
  multioperations and several units.

\begin{lemma}\label{lemma:collapse3}
  If the inclusions
  $(w\odot_ jx) \odot_i (y \odot_j z)\subseteq (w\odot_i y) \odot_j (x
  \odot_i z)$, for $0\le i<j<\omega$, are replaced by equations in
  the axiomatisation of $\omega$-$\src\tgt$-multimagmas, then
  $\odot_i$ and $\odot_j$ coincide, $\src_i=\src_j=\tgt_i=\tgt_j$ and
  $\odot_i$ (as well as $\odot_j$) is associative and commutative.
\end{lemma}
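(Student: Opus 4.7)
The plan is a multi-unit adaptation of the classical Eckmann-Hilton argument. The starting point is to substitute carefully into the equational interchange $(w\odot_j x)\odot_i(y\odot_j z)=(w\odot_i y)\odot_j(x\odot_i z)$ so that the unit axioms of $\odot_j$ reduce one side to $a\odot_i b$. Setting $(w,x,y,z)=(\src_j(a),a,b,\tgt_j(b))$ gives
\[
a\odot_i b=(\src_j(a)\odot_i b)\odot_j(a\odot_i\tgt_j(b)),
\]
while the dual choice $(a,\tgt_j(a),\src_j(b),b)$ gives
\[
a\odot_i b=(a\odot_i\src_j(b))\odot_j(\tgt_j(a)\odot_i b).
\]

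Next I would derive equality of all source and target maps. Specialising $a$ and $b$ to $i$-cells or $j$-cells, invoking the orthogonal idempotent property (Lemma~\ref{lemma:orth-id}), the whisker axioms, and the principle that an equation of multioperations forces both sides to be simultaneously empty or inhabited, one plays the two dual identities above against weak locality and each other. The emptiness bookkeeping progressively forces $\src_j=\src_i$, $\tgt_j=\tgt_i$, and then $\src_i=\tgt_i$, so that all four maps collapse to a single map $e$ for which $e(x)$ is a two-sided unit of $x$ under both operations.

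Once a common unit map $e$ is established, the Eckmann-Hilton swap applies: with $\tgt_j(x)=e(x)=\src_j(y)$ on any composable pair, substituting $(w,x,y,z)=(x,\tgt_j(x),\src_j(y),y)$ into the interchange gives $x\odot_i y=x\odot_j y$, while the mirror substitution $(\src_j(y),y,x,\tgt_j(x))$ gives $y\odot_i x=x\odot_j y$. Hence $\odot_i=\odot_j$ and the common operation is commutative (on composable pairs; when pairs are not composable both products are empty, so the equality trivially holds). Associativity then follows by one more application of the interchange equation combined with the commutativity just established, reshuffling a triple product using $e$-units inserted between factors.

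The main obstacle is the source/target collapse step. In the multioperational setting, partiality is expressed by emptiness, so the usual cancellation arguments are unavailable; instead one must argue by comparing the emptiness patterns of the two sides of the equational interchange. The two dual identities above, together with the whisker axioms, are precisely what enables the equational version to force the source/target collapse, while the weaker inclusion version of interchange considered in ordinary $\omega$-catoids leaves the structure intact. Care is also needed because the resulting collapsed structure is \emph{not} an abelian monoid in the usual sense (different elements may still have different units), so commutativity and associativity must be understood in the partial multioperational sense throughout.
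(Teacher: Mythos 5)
The paper prints no proof of this lemma (it defers to an Isabelle verification and explicitly leaves the paper proof to the reader), so your proposal can only be judged on its own terms. Its architecture is the right one: collapse the four source/target maps first, then run an Eckmann--Hilton swap to identify $\odot_i$ with $\odot_j$ and obtain commutativity, and finally extract associativity from one more instance of interchange. The swap, the commutativity argument and the emptiness bookkeeping there are essentially correct; for the empty cases of associativity you will additionally need that every $u\in a\odot b$ shares the common unit of $a$ and $b$, which follows from the retained morphism inclusion $\src_j(x\odot_i y)\subseteq \src_j(x)\odot_i\src_j(y)$ together with orthogonality of units.

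The genuine gap is the pivotal step, the derivation of $\src_i=\src_j=\tgt_i=\tgt_j$, which you assert (``the emptiness bookkeeping progressively forces\dots'') but never carry out, and for which the two displayed identities are not the right tool. In $a\odot_i b=(\src_j(a)\odot_i b)\odot_j(a\odot_i\tgt_j(b))$ the informative direction only fires when $a\odot_i b\neq\emptyset$, and even then weak locality applied to the right-hand factors merely reproduces $\tgt_i(a)=\src_i(b)$ via the globular laws; on unit elements both sides reduce to known unit laws, so nothing forces, say, $\src_j(a)=\tgt_i(a)$ for an arbitrary $a$. What is needed is a substitution whose right-hand side is a provably non-empty unit composite while the left-hand side contains a $\odot_j$-composition whose definedness is new information. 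For instance, $(w,x,y,z)=(a,\src_i(a),\tgt_i(a),\tgt_j(a))$ gives right-hand side $(a\odot_i\tgt_i(a))\odot_j(\src_i(a)\odot_i\tgt_j(a))=\{a\}\odot_j\{\tgt_j(a)\}=\{a\}$ by the unit, whisker and globular laws alone, so the left-hand side $(a\odot_j\src_i(a))\odot_i(\tgt_i(a)\odot_j\tgt_j(a))$ must be non-empty; weak locality at level $j$ and the whisker axioms then yield $\tgt_j(a)=\src_j(\src_i(a))=\src_i(a)$ and $\tgt_i(a)=\tgt_j(\tgt_i(a))=\src_j(\tgt_j(a))=\tgt_j(a)$. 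The mirror substitution $(\src_j(a),\src_i(a),\tgt_i(a),a)$ gives $\src_j(a)=\src_i(a)$ and $\src_j(a)=\tgt_i(a)$ in the same way, so all four maps coincide. With this step supplied, the remainder of your argument goes through.
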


We have verified this result with Isabelle in two dimensions, but
leave a proof on paper to the reader.  Once again, the resulting
structure is not automatically an abelian monoid: different elements
can have different units and $\odot_i$ (and therefore $\odot_j$) need
neither be total nor functional. There are again $2$-element
counterexamples.


\section{Proofs}\label{A:proofs}

\begin{proof}[Proof of Lemma~\ref{lemma:lr-mgp-local}(2)-(4)]
  Item (2) is immediate from the axioms. For (3), suppose
  $xy=\{\src(x)\}$. Then $\tgt(x)=\src(y)$ and hence $\{y\}= \tgt(x)y$. Thus
  $y\in x^\inv xy = x^\inv \src(x)=x^\inv \tgt(x^\inv) = x^\inv$ and therefore
  $x^\inv = y$. Finally, (4) is immediate from (3) because
  $\src(x)\tgt(x)=\{\src(x)\}$ and $\tgt(x)\tgt(x)=\tgt(x)$ by
  Lemma \ref{lemma:mm-quiver}(3). 
\end{proof}

\begin{proof}[Proof of Lemma~\ref{lemma:lr-groupoid-props}]
  For (1), $x^\inv x= \{\tgt(x)\} = \{\src (x^\inv)\}$ implies
  $(x^\inv)^\inv = x$ by Lemma~\ref{lemma:lr-mgp-props}(2).
  
  For (2), suppose $x\in yz$. Then $\tgt(y) = \src(z)$ and
  $xz^\inv \subseteq yzz^\inv= y \src(z)= y\tgt(y) = \{y\}$. Moreover, by
  assumption, $\tgt(x)=\tgt(z)$ and therefore $\Delta(x,z^\inv)$ by
  locality. It then follows that $xz^\inv = \{y\}$ and hence
  $y\in xz^\inv$. The converse implication follows from (1).  The
  remaining equivalence follows by opposition.
\end{proof}

\begin{proof}[Proof of Lemma~\ref{lemma:lr-gpd-canc}]
  For (1), suppose $\src(x)=\tgt(z)=\src(y)$ and $zx=zy$. Then $z^\inv zx=
  z^\inv zy$, therefore $\tgt(z)x=\tgt(z)y$, $\src(x)x= \src(y)y$ and finally
  $x=y$.  (2)  follows by opposition. 
\end{proof}

\begin{proof}[Proof of Lemma~\ref{lemma:invol-props}]
  (1) and (2) are immediate consequences of sup-preservation.
  
  For (3),
  \begin{equation*}
    \gamma\le (\alpha\inf \beta)^\conv \Leftrightarrow \gamma^\conv \le \alpha\inf \beta
    \Leftrightarrow \gamma^\conv\le \alpha \land \gamma^\conv \le \beta \Leftrightarrow \gamma
    \le \alpha^\conv \land \gamma \le \beta^\conv \Leftrightarrow \gamma\le \alpha^\conv
    \inf \beta^\conv
  \end{equation*}
  implies the claim for $\inf$, and the proof for $\Inf$ is
  similar.
  
  For (4), $\top^\conv \le \top$ and hence $\top = \top^{\conv\conv}
  \le \top^\conv$, and the other proofs are equally simple.
  
  For (5), $\alpha^\conv \inf \beta =\bot \Leftrightarrow
  (\alpha^\conv \inf \beta)^\conv =\bot \Leftrightarrow \alpha\inf \beta^\conv = \bot$.
  
  For (6), we know that every quantale is a Kleene algebra an can
  use their induction axioms (see Section~\ref{S:globular-ka}). First,
  $\alpha^{\conv\ast\conv} = (1 + \alpha^\conv
  \alpha^{\conv\ast})^\conv = 1 + \alpha\alpha^{\conv\ast\conv}$ and
  hence $\alpha^\ast \le \alpha^{\conv\ast\conv}$ by star
  induction. Thus $\alpha^{\ast\conv} \le\alpha^{\conv\ast}$.  For
  the converse direction,
  $\alpha^{\ast\conv} = (1+\alpha^\ast \alpha)^\conv = 1 +
  \alpha^\conv \alpha^{\ast\conv}$ and therefore
  $\alpha^{\conv\ast} \le \alpha^{\ast\conv}$ by star induction. An
  alternative inductive proof uses the definition of the star in
  quantales.
\end{proof}

\begin{proof}[Proof of Lemma~\ref{lemma:modular-dedekind}]
  Let $Q$ be an involutive quantale. Suppose the modular law holds. We
  derive two auxiliary properties before deriving the Dedekind
  law. First, the modular law can be strengthened to
  \begin{equation*}
    \alpha\beta\inf \gamma  = (\alpha \inf \gamma\beta^\conv)\beta \inf \gamma,
  \end{equation*}
  because
  $\alpha\beta\inf\gamma = \alpha\beta \inf\gamma\inf \gamma \le (\alpha
  \inf \gamma\beta^\conv)\beta \inf \gamma$ using the modular law and
  $(\alpha \inf \gamma\beta^\conv)\beta \inf \gamma \le \alpha\beta\inf
  \gamma$ by properties of infs and order preservation. Second, a dual
  equational modular law
  \begin{equation*}
    \alpha\beta\inf \gamma = \alpha (\beta \inf \alpha^\conv \gamma)
    \inf\gamma
  \end{equation*}
  then follows because
  $\alpha\beta \inf \gamma = (\beta^\conv \alpha^\conv \inf
  \gamma^\conv)^\conv = ((\beta^\conv \inf \gamma^\conv
  \alpha^{\conv\conv})\alpha^\conv \inf \gamma^\conv)^\conv = \alpha
  (\beta \inf \alpha^\conv \gamma) \inf\gamma$ using properties of
  converse and the first equational modular law in the second
  step. Therefore,
  \begin{align*}
    \alpha\beta\inf\gamma
    &= \alpha (\beta \inf \alpha^\conv \gamma) \inf\gamma\\
    &\le (\alpha \inf \gamma (\beta \inf \alpha^\conv
      \gamma)^\conv)(\beta\inf \alpha^\conv \gamma)\\
    &= (\alpha \inf \gamma (\beta^\conv \inf \gamma^\conv
      \alpha))(\beta\inf \alpha^\conv \gamma)\\
    &\le (\alpha \inf \gamma \beta^\conv)(\beta\inf \alpha^\conv \gamma),
  \end{align*}
  using properties of converse and order preservation as well as the
  dual equational modular in the first and the modular law in the second
  step. This proves the Dedekind law.
  
  Finally,
  $\alpha\beta\inf \gamma \le (\alpha \inf \gamma \beta^\conv)(\beta\inf
  \alpha^\conv \gamma) \le (\alpha \inf \gamma \beta^\conv)\beta$ yields
  the modular law from the Dedekind law using properties of inf and
  order preservation.
  
  These proofs are standard in relation algebra.
\end{proof}

\begin{proof}[Proof of Lemma~\ref{lemma:modular-props}]
  For the strong Gelfand property, $\alpha =1\alpha\inf \alpha \le
  \top \alpha \inf \alpha \le (\top\inf \alpha\alpha^\conv)\alpha =
  \alpha\alpha^\conv\alpha$ using the modular law in the third step.

  For Peirce's law, suppose $\alpha\beta\inf\gamma^\circ = \bot$. Then
  $(\alpha\beta\inf\gamma^\circ)\beta^\circ \inf \alpha = \bot$ and
  therefore $\beta\gamma\inf \alpha^\circ =\bot$ using the first
  equational modular law from the proof of
  Lemma~\ref{lemma:modular-dedekind} and properties of convolution.
  The converse implication is similar, using the dual equational
  modular law from Lemma~\ref{lemma:modular-dedekind}.

  For the first Schröder law,
  $\alpha\beta\inf \gamma = \bot \Leftrightarrow \beta \inf
  \alpha^\conv \gamma=\bot$, $\alpha\beta\inf\gamma = \bot$ implies
  $\beta^\conv \alpha^\conv \inf \gamma^\conv = \bot$ by properties of
  converse and therefore $\beta \inf \alpha^\conv \gamma=\bot$ by
  Peirce's law. The proof of the converse direction is similar. 

  The second Schröder law, $\alpha\beta\inf \gamma = \bot \Leftrightarrow \alpha \inf
  \gamma\beta^\conv = \bot$ follows in a similar way from Peirce's
  law, the first Schröder law and properties of converse.

  These proofs are once again standard in relation algebra.
\end{proof}

 \begin{proof}[Proof of Proposition~\ref{prop:modular-modal}]
   The proofs of $\dom(\alpha)\le 1$ and $\dom(\bot)=\bot$ are
   trivial.

   For $\alpha \le \dom(\alpha)\alpha$,
   $\alpha =\alpha\inf \alpha1 \le (1\inf \alpha\alpha^\conv)\alpha =
   \dom(\alpha)\alpha$ by the modular law, and the converse inequality
   follows from the second axiom.

   For $\dom(\alpha\dom(\beta))=\dom(\alpha\beta)$, we have
   $\dom (\alpha\dom(\beta)) = 1 \inf \alpha\dom(\beta)\top = 1\inf
   \alpha\beta\top = \dom(\alpha\beta)$.

   For $\dom(\alpha\sup\beta)=\dom(\alpha)\sup\dom(\beta)$, note that
   $\dom(\dom(\alpha))=\dom(\alpha)$ is immediate from the previous
   axiom. We first show that
   $\dom(\dom(\alpha)\sup\dom(\beta))=\dom(\alpha)\sup\dom(\beta)$. Indeed,
   \begin{align*}
     \dom(\dom(\alpha)\sup\dom(\beta))
     &= 1 \inf
       (\dom(\alpha)\sup\dom(\beta))(\dom(\alpha)\sup\dom(\beta))^\conv\\
     &= 1 \inf
       (\dom(\alpha)\sup\dom(\beta))(\dom(\alpha)\sup\dom(\beta))\\
     &= 1 \inf (\dom(\alpha)\sup\dom(\beta))\inf
       (\dom(\alpha)\sup\dom(\beta))\\
     &= 1 \inf (\dom(\alpha)\sup\dom(\beta))\\
         &= (1 \inf \dom(\alpha))\sup (1\inf \dom(\beta))\\
     &= \dom(\alpha)\sup\dom(\beta),
   \end{align*}
   where the third step uses the fact that multiplication of domain
   elements is commutative and idempotent. Proofs can be found in our
   Isabelle theories. Using this property with the alternative
   definition of domain,
   \begin{align*}
     \dom(\alpha \sup \beta)
     &= 1 \inf (\alpha \sup\beta)\top\\
       &= 1 \inf (\alpha\top \sup\beta\top)\\
     &= 1 \inf (\dom(\alpha)\top\sup\dom(\beta)\top))\\
     &= 1 \inf (\dom(\alpha)\sup\dom(\beta))\top\\
     &= \dom(\dom(\alpha)\sup\dom(\beta))\\
     &=\dom(\alpha)\sup\dom(\beta).
   \end{align*}

   For the compatibility property note that, by the strong Gelfand
   property and previous domain axioms
   $\dom(\alpha)\dom(\alpha)\dom(\alpha)=\dom(\alpha)$, therefore
   $\dom(\alpha)\dom(\alpha)=\dom(\alpha)$ and likewise for
   $\cod(\alpha)$. The proof of $\cod(\dom(\alpha))=\dom(\alpha)$ is
   similar.
\end{proof}

\begin{proof}[Proof of Lemma~\ref{lemma:quantale-dom-props}]
  For (1), $1\inf\alpha\top = 1 \inf \alpha(\top \inf \alpha^\conv 1) = 1 \inf
  \alpha\alpha^\conv = \dom(\alpha)$ using the modular law.
  
  For (2), $\alpha\top = \dom(\alpha)\alpha\top \le \dom(\alpha)\top\top = \dom(\alpha)\top$
  and $\dom(\alpha)\top = (1\inf \alpha\top)\top \le \top \inf \alpha\top\top =
  \alpha\top$.
  
  Item (3) and (4) are  trivial. 
  
  Finally, (4) is immediate from the definition of $\dom$. 
\end{proof}

\begin{proof}[Proof of Lemma~\ref{lemma:omega-cat-props}]
  By the morphism axiom for $\src_i$, $\src_i(x\odot_j y) \subseteq
  \src_i(x)\odot_j\src_i(y)$. The right-hand side must not be empty
  whenever the left-hand side is defined, which is assumed. Hence it
  must be equal to $\{\src_i(x)\}$ because compositions of lower
  cells in higher dimensions are trivial. 
\end{proof}

\begin{proof}[Proof of Lemma~\ref{lemma:dom-conv}]
  By the strong Gelfand property,
  \begin{equation*}
    \dom(\alpha)\le \dom(\alpha)\dom(\alpha)^\conv\dom(\alpha) \le
    1\dom(\alpha)^\conv 1 = \dom(\alpha)^\conv,
  \end{equation*}
  from which $\dom(\alpha)^\conv \le \dom(\alpha)$ follows using the
  adjunction in Remark~\ref{rem:backhouse}. The proof for $\cod$
  follows by duality.
\end{proof}

\begin{proof}[Proof of Proposition~\ref{prop:reduced-props}]
  Before the proof proper, we derive
  $\src_i\circ \tgt_j \circ \src_i = \src_i$ and
  $\tgt_i\circ \src_j \circ \src_i = \src_i$.  We have
  \begin{align*}
    \{\src_i(x)\} &= \src_i(x) \odot_i \src_i(x)\\
                  &= (\src_j(\src_i(x)) \odot_j \src_i(x))\odot_i (\src_i(x)\odot_j \tgt_j(\src_i(x)))\\
                  &\subseteq (\src_j(\src_i(x)) \odot_i \src_i(x))\odot_j (\src_i(x)\odot_i \tgt_j(\src_i(x))),
  \end{align*}
  hence $\Delta_j (\src_j(\src_i(x)) \odot_i \src_i(x),
  \src_i(x)\odot_i \tgt_j(\src_i(x)))$. Thus
  $\Delta_i(\src_j(\src_i(x)),\src_i(x))$ and
  $\Delta_i(\src_i(x),\tgt_j(\src_i(x)))$, and therefore
  $\tgt_i(\src_j(\src_i(x))) = \src_i(\src_i(x))=\src_i(x)$ as well as
  $\src_i(\tgt_j(\src_i(x))) = \tgt_i(\src_i(x))=\src_i(x)$. 
  
  Next we derive the missing $n$-catoid axioms.  First
  we consider $\src_j\circ \src_i = \src_i$, $\src_j\circ \tgt_i = \tgt_i$,
  $\tgt_j\circ \src_i = \src_i$, and $\tgt_j\circ \tgt_i = \tgt_i$.  For the
  first one,
  \begin{align*}
    \{\src_i (x)\} &= \src_i(x)\odot_i \src_i(x) \\
                   &= (\src_j(\src_i(x))\odot_j \src_i(x))\odot_i (\src_i(x)\odot_j
                     \tgt_j(\src_i(x)))\\
                   & \subseteq (\src_j(\src_i(x))\odot_i \src_i(x))\odot_j (\src_i(x)\odot_i
                     \tgt_j(\src_i(x)))\\
                   &= (\src_j(\src_i(x))\odot_i \tgt_i(\src_j(\src_i(x))))\odot_j (\src_i(\tgt_j(\src_i(x)))\odot_i\tgt_i(
                     \tgt_j(\src_i(x))))\\
                   &= \src_j(\src_i(x))\odot_j \tgt_j(\src_i(x))
  \end{align*}
  and therefore $\src_i(x)= \src_j(\src_i(x))$ as well as
  $\src_i(x)= \tgt_j(\src_i(x))$. The remaining identities hold by
  opposition.
  
  Second, we derive the identities
  $\src_i \circ \src_j = \src_j\circ \src_i$,
  $\src_i \circ \tgt_j = \tgt_j\circ \src_i$,
  $\tgt_i\circ \src_i = \src_j \circ \tgt_i$ and
  $\tgt_i\circ \tgt_j = \tgt_j\circ \tgt_i$.  For the first,
  $\{\src_j(x)\} = \src_j(\src_i(x))\odot_i \src_j(x)
  =\src_i(x)\odot_i \src_j(x)$ and therefore
  $\Delta_i(\src_i(x),\src_j(x))$. It follows that
  $\src_j(\src_i(x))=
  \src_i(x)=\tgt_i(\src_i(x))=\src_i(\src_j(x))$. The remaining
  proofs are similar.
  
  Note that none of the proofs so far requires an associativity law.
  
  It remains to derive $\src_i(x\odot_j y) \subseteq \src_i(x) \odot_j\src_i(y)$
  and $\tgt_i(x\odot_j y) \subseteq \tgt_i(x) \odot_j \tgt_i(y)$.  We prove the
  first inclusion by cases. If $\src_i(x\odot_j y) =\emptyset$, then the
  claim is trivial. Otherwise, if $\src_i(x\odot_j y) \neq\emptyset$,
  then $\tgt_j(x)=\src_i(y)$ and thus
  \begin{align*}
    \src_i(x\odot_j y)
    &= \src_i(\src_j(x\odot_j y))\\
    &= \src_i(\src_j(x\odot_j \src_j(y)))\\
    &= \src_i(\src_j(x\odot_j \tgt_j(y)))\\
    &= \src_i(\src_j(x))\\
    &= \{\src_i(x)\}\\
    &=\src_i(x)\odot_j \tgt_j(\src_i(x))\\
    &= \src_i(x)\odot_j \src_i(\tgt_j(x))\\
    &= \src_i(x)\odot_j \src_i(\src_j(x))\\
    &= \src_i(x)\odot_j \src_i(x).
  \end{align*}
  This uses weak locality laws that are available in all catoids if
  the composition under the source operation is defined. See
  Lemma~\ref{lemma:msg-props1} and~\cite{FahrenbergJSZ21a} for
  details.  The second inclusion follows by opposition.
\end{proof}

\begin{proof}[Proof of Lemma~\ref{lemma:2-quantale-props}]
  For (1)
  $\cod_j\circ \dom_i = \cod_j\circ \dom_j\circ \dom_i = \dom_j\circ
  \dom_i=\dom_i$, and the remaining identities in (1) follow by
  opposition.

For (2), $1_j\cdot_i 1_j = d_j(1_j)\cdot_i d_j(1_j) = d_j(1_j\cdot_i
1_j)\le 1_j$ if $Q$ is strong and
\begin{equation*}
  1_j= 1_j\cdot_i 1_i = (1_j\cdot_j 1_j)\cdot_i
(1_i\cdot_j 1_j)\le (1_j\cdot_i 1_i)\cdot_j (1_j\cdot_i 1_j) =
1_j\cdot_j (1_j\cdot_i 1_j) = 1_j\cdot_i 1_j.
\end{equation*}
Further,
$1_i \cdot_j 1_i = d_i(1_i)\cdot_j d_i(1_i) = d_j(d_i(1_i))\cdot_j
d_i(1_i) = d_i(1_i) = 1_i$.

For (3), $1_i = 1_i \cdot_i 1_i = (1_j\cdot_j 1_i) \cdot_i (1_i\cdot_j
1_i) \le (1_j\cdot_i 1_i) \cdot_j (1_i\cdot_i 1_j) = 1_j\cdot_j 1_j =
1_j$.

For (4), $d_j(1_i)= d_j(d_i(1_i)) = d_i(1_i)=1_i$, and $d_i(1_j)\le
1_i$ as well as $1_i = d_i(1_i)\le d_i(1_j)$. The other identities in
(4) then follow by opposition.

For (5), $\dom_i\circ \dom_j = \dom_j \circ  \dom_i$ follows from
\begin{align*}
  \dom_i(\dom_j(\alpha))
  &= \dom_i(\dom_j(\dom_i(\alpha)\cdot_i \alpha))\\
  &= \dom_i(\dom_j(\dom_i(\alpha))\cdot_i \dom_j(\alpha))\\
  &= \dom_i(\dom_i(\alpha)\cdot_i \dom_j(\alpha))\\
  &= \dom_i(\alpha)\cdot_i \dom_i(\dom_j(\alpha))\\
  &= \dom_j(\dom_i(\alpha))\cdot_i\dom_i(\dom_j(\alpha))\\
  &\le  \dom_j(\dom_i(\alpha))\cdot_i 1_i\\
     &=  \dom_j(\dom_i(\alpha))\\
\end{align*}
and
\begin{align*}
  \dom_j(\dom_i(\alpha))
  &= \dom_i(\alpha)\\
  &= \dom_i(\dom_j(\alpha)\cdot_j \alpha)\\
  &\le \dom_i(\dom_j(\alpha))\cdot_j \dom_i(\alpha)\\
  &\le \dom_i(\dom_j(\alpha))\cdot_j 1_i\\
  &\le \dom_i(\dom_j(\alpha))\cdot_j 1_j\\
     &=\dom_i(\dom_j(\alpha)).
\end{align*}
Moreover, $\dom_i\circ \cod_j = \cod_j \circ \dom_i$ follows from
\begin{align*}
\dom_i(\cod_j(\alpha))
  &=\dom_i(\cod_j(\dom_i(\alpha)\cdot_i \alpha))\\
  &=\dom_i(\cod_j(\dom_i(\alpha))\cdot_i \cod_i(\alpha))\\
  &=\dom_i(\dom_i(\alpha)\cdot_i \cod_i(\alpha))\\
  &=\dom_i(\alpha)\cdot_i \dom_i(\cod_i(\alpha))\\
  &=\cod_j(\dom_i(\alpha))\cdot_i \dom_i(\cod_i(\alpha))\\
  &\le \cod_j(\dom_i(\alpha))\cdot_i 1_i\\
    &= \cod_j(\dom_i(\alpha))
\end{align*}
and
\begin{align*}
  \cod_j(\dom_i(\alpha))
  &= \dom_i(\alpha)\\
  &= \dom_i(\alpha \cdot_j \cod_j(\alpha))\\
  &\le \dom_i(\alpha) \cdot_j \dom_i(\cod_j(\alpha))\\
  & \le 1_i\cdot_j \dom_i(\cod_j(\alpha))\\
  & \le 1_j\cdot_j \dom_i(\cod_j(\alpha))\\
  &= \dom_i(\cod_j(\alpha)).
\end{align*}
The remaining two identities in (6) then follow by opposition.

Finally, for (6),
\begin{align*}
  \dom_i(\alpha\cdot_j \beta) &=\dom_i(\dom_j(\alpha\cdot_j \beta))\\
                              &=\dom_i(\dom_j(\alpha\cdot_j \dom_j(\beta)))\\
  &= \dom_i(\alpha\cdot_j\dom_j(\beta))
\end{align*}
and the second identity in (7) then follows by opposition.
\end{proof}

\begin{proof}[Proof of Lemma~\ref{lemma:2-quantale-star-props}]
  For the first property in (1) and $k=i$, we first use induction on
  $k$ to prove
  $\dom_i (\alpha) \cdot_i \beta^{k_1} \le (\dom_i(\alpha)\cdot_i
  \beta)^{k_j}$, where $(-)^{k_j}$ indicates that powers are taken
  with respect to $\cdot_j$. The base case follows from
  $\dom_i(\alpha)\cdot_i 1_j \le 1_i\cdot_i 1_j =1_j$.  For the
  induction step, suppose
  $\dom_i(\alpha) \cdot_i \beta^{k_j} \le (\dom_i(\alpha) \cdot_i
  \beta)^{k_j}$. Then
\begin{align*}
  \dom_i (\alpha) \cdot_i \beta^{(k+1)_j}
  &= \dom_j(\dom_i(\alpha))\cdot_i (\beta\cdot_j \beta^{k_j})\\
  &= (\dom_j(\dom_i(\alpha))\cdot_j \dom_j(\dom_i(\alpha)))\cdot_i (\beta\cdot_j
    \beta^{k_j})\\
  &= (\dom_i(\alpha)\cdot_j \dom_i(\alpha))\cdot_i (\beta\cdot_j \beta^{k_j})\\
  &\le (\dom_i(\alpha)\cdot_i \beta)\cdot_j (\dom_i(\alpha)\cdot_i \beta^{k_j})\\
  &\le (\dom_i(\alpha)\cdot_i \beta)\cdot_j (\dom_i(\alpha)\cdot_i \beta)^{k_j}\\
    &= (\dom_i(\alpha)\cdot_i \beta)^{(k+1)_j}.
\end{align*}
Using this property yields
\begin{align*}
  \dom_i(\alpha)\cdot_i \beta^{\ast_j}
  &= \dom_i(\alpha)\cdot_i\Sup_{k\ge 0} \beta^{k_j}\\
  &=\Sup\{\dom_i(\alpha)\cdot_i \beta^{k_j}\mid k\ge 0\} \\
  &\le \Sup_{k\ge 0} (\dom_i(\alpha)\cdot_i \beta)^{k_j}\\
  &= (\dom_i(\alpha)\cdot_i \beta)^{\ast_j}.
\end{align*}
The proof of the second property in (1) for $k=i$ is dual.

The proofs for (2) are very similar, but in the base case,
$\dom_j(x)\cdot_i 1_j \le 1_j\cdot_i 1_j$ needs to be shown, which
follows from Lemma~\ref{lemma:2-quantale-props}(2) and needs a strong
$\omega$-quantale.

For (3), we first use induction on $k$ to show that $(\alpha\cdot_j
\beta)^{k_i} \le \alpha^{k_i} \cdot_j\beta^{k_i}$. In the base case,
$1_i=1_i\cdot_j 1_i$. For the induction step, suppose $(\alpha\cdot_j
\beta)^{k_i} \le \alpha^{k_i} \cdot_j \beta^{k_i}$. Then
\begin{align*}
  (\alpha\cdot_j \beta)^{(k+1)_i}
  &= (\alpha\cdot_j \beta) \cdot_i  (\alpha\cdot_j \beta)^{k_i}\\
  &\le (\alpha\cdot_j \beta) \cdot_i  \alpha^{k_i}\cdot_j \beta^{k_i}\\
  &\le (\alpha\cdot_i \alpha^{k_i}) \cdot_j (\beta\cdot_i \beta^{k_i})\\
    & = \alpha^{{k+1}_i} \cdot_j \beta^{(k+1)_i}.
\end{align*}
Using this property, we get
$(\alpha\cdot_j \beta)^{k_i} \le \alpha^{\ast_i} \cdot_j \beta^{\ast_i}$ for all
$k\ge 0$ and thus
$(\alpha\cdot_j \beta)^{\ast_i} \le \alpha^{\ast_i} \cdot_j \beta^{\ast_i}$ by
properties of sup.
\end{proof}

\begin{proof}[Proof of Lemma~\ref{lemma:mod-props}]
  For (1),
  \begin{align*}
    \dom_0(\alpha)\cdot_1 \dom_0(\alpha)
    &= \dom_1(\dom_0(\alpha))\cdot_1\dom_1(\dom_0(\alpha))\\
    &=\dom_1(\dom_0(\alpha)\\
    &= dom_0(\alpha)
  \end{align*}
  and the proof for
  codomain follows by opposition.
  
  For (2),
  \begin{align*}
    \dom_i (\alpha \cdot_j \beta)
    &= \dom_i(\dom_j(\alpha\cdot_j \beta))\\
    &=\dom_i(\dom_j(\alpha\cdot_j \dom_j(\beta)))\\
    &= \dom_j(\alpha\cdot_j
    \dom_j(\beta)),
    \end{align*}
  and the proof for codomain follows by opposition.

  The proofs of (3) are similar to those of (2), inserting $\cod_j$
  instead of $\dom_j$ in the first step.

  For (4),
  \begin{align*}
    \dom_i (\alpha \cdot_i \beta)
    &= \dom_i (\alpha \cdot_i \dom_i(\beta))\\
    &= \dom_i (\alpha \cdot_i \dom_j(\dom_i(\beta)))\\
    &=  \dom_i (\alpha\cdot_i \dom_i(\dom_j(\beta)))\\
    &=  \dom_i (\alpha\cdot_i \dom_j(\beta))
  \end{align*}
  and the proof for codomain follows by opposition.

  For (5),
  \begin{align*}
    \dom_i(x\cdot_i y)
    &= \dom_i(\cod_j(x\cdot_i y))\\
    & \le \dom_i(\cod_j(x)\cdot_i \cod_j(y))\\
    &= \dom_i(\cod_j(x)\cdot_i \dom_i(\cod_j(y)))\\
    &= \dom_i(\cod_j(x)\cdot_i \dom_i(y))\\
      &= \dom_i(\cod_j(x)\cdot_i y).
  \end{align*}
  The second step uses the morphism law for $\cod_j$. If the quantale
  is strong, it can be replaced by an equational step. The proofs for
  $\cod_i$ follow by opposition. 

   Item (6) is immediate from (1) and interchange.

   Items (7) and (8) are immediate consequences of the weak
   homomorphism laws and the homomorphism laws, respectively.

   For (9),
   \begin{align*}
     \dom_i(\alpha)\cdot_j \dom_i(\beta)
     &= \dom_i (\dom_i(\alpha)\cdot_j \dom_i(\beta)) \cdot_0 (\dom_i(\alpha)\cdot_j
       \dom_i(\beta))\\
     &\le (\dom_i (\dom_i(\alpha))\cdot_j \dom_i(\dom_i(\beta)))) \cdot_0 (\dom_i(\alpha)\cdot_j
       \dom_i(\beta))\\
     &\le (\dom_i(\alpha)\cdot_j 1_i) \cdot_0 (1_i\cdot_j \dom_i(\beta))\\
     &\le (\dom_i(\alpha)\cdot_j 1_j) \cdot_0 (1_i\cdot_j \dom_j(\beta))\\
             &\le \dom_i(\alpha) \cdot_i\dom_i(\beta),
   \end{align*}
   where the first step uses domain absorption, the second a weak
   homomorphism law, and the remaining steps are straightforward
   approximations. For the converse direction, 
   \begin{align*}
     \dom_i\alpha) \cdot_i \dom_i(\beta)
     &= (\dom_i(\alpha) \cdot_j \dom_i(\alpha))\cdot_i (\dom_i(\beta) \cdot_j
       \dom_i(\beta))\\
       &\le (\dom_i(\alpha) \cdot_i \dom_i(\beta))\cdot_j (\dom_i(\alpha) \cdot_i
         \dom_i(\beta))\\
     &\le (\dom_i(\alpha) \cdot_i 1_i)\cdot_j (1_i \cdot_i \dom_i(\beta))\\
     &= \dom_i(\alpha) \cdot_j \dom_i(\beta),
   \end{align*}
where the first step uses (1), the second step the interchange law,
and the remaining steps are obvious.

For (10), using in particular (9), 
\begin{align*}
  &(\dom_i(\alpha)\cdot_j \dom_i(\beta))\cdot_i (\dom_i(\gamma)\cdot_j\dom_i(\delta))\\
  &= (\dom_i(\alpha)\cdot_i \dom_i(\beta))\cdot_i (\dom_i(\gamma)\cdot_i\dom_i(\delta))\\
  &= (\dom_i(\alpha)\cdot_i \dom_i(\gamma))\cdot_i (\dom_i(\beta)\cdot_i\dom_i(\delta))\\
    &= \dom_i(\dom_i(\alpha)\cdot_i \dom_i(\gamma))\cdot_i
      \dom_i(\dom_i(\beta)\cdot_i\dom_i(\delta))\\
  &= \dom_i(\dom_i(\alpha)\cdot_i \dom_i(
    \gamma))\cdot_j
     \dom_i(\dom_i(\beta)\cdot_i\dom_i(\delta))\\
     &= (\dom_i(\alpha)\cdot_i \dom_i(\gamma))\cdot_j
       (\dom_i(\beta)\cdot_i\dom_i(\delta)).\qedhere
       \end{align*}
     \end{proof}

     \begin{proof}[Proof of Corollary~\ref{corollary:globular-lift}]
       For
  (1), we show explicit proofs for the globular structure, starting
  with the interchange laws.  For $W,X,Y,Z\subseteq C$,
\begin{align*}
  a \in (W \odot_j X) \odot_i (Y \odot_j Z)
& \Leftrightarrow \exists w\in W, x\in X,y\in Y,z\in Z.\ a\in (w\odot_j x)
                                              \odot_i (y\odot_j z)\\
&\Rightarrow \exists w\in W, x\in X,y\in Y,z\in Z.\ a \in (w\odot_i y)
                                              \odot_j (x\odot_i z)\\
& \Leftrightarrow a \in  (W \odot_i X) \odot_j (Y \odot_i Z).
\end{align*}
This requires only the interchange law in $C$.  It remains
to extend the globular laws.

For $\dom_j(x\cdot_i y) \le \dom_j(x)\cdot_i \dom_j(y)$ and
$X,Y\subseteq X$, 
\begin{align*}
  a \in \src_j(X \odot_i Y) 
& \Leftrightarrow \exists b\in X\odot_i Y.\ a = \src_j(b)\\
& \Leftrightarrow \exists b, c\in X, d\in Y.\ a = \src_j(b) \land b \in
  c\odot_i d\\
& \Leftrightarrow \exists c\in X, d\in Y.\ a \in \src_j(c\odot_i d)\\
& \Rightarrow \exists c\in X, d\in Y.\ a \in \src_j(c)\odot_i
                                                                       \src_j(d)\\
& \Leftrightarrow \exists c\in \src_j(X), d\in \src_j(Y).\ a \in
                                                                                   c\odot_i d\\
& \Leftrightarrow a\in \src_j(X), \cdot_i\src_j(Y).
\end{align*}
The implication in the fourth step can be replaced by $\Leftrightarrow$
if $C$ is an $\omega$-category.  The proofs of
$\cod_j(x\cdot_i y) \le \cod_j(x)\cdot_i \cod_j(y)$ and its strong
variant follows by opposition.  These proofs require only the
respective morphism laws in $C$.

For $\dom_i(x\cdot_j y) \le \dom_i(x)\cdot_j \dom_i(y)$, note that
$\src_i(x\odot_j y) \subseteq \src_i(x) \odot_j \src_i(y)$ is
derivable even in the reduced axiomatisation of $n$-catoids
(Proposition~\ref{prop:reduced-props}). The proof is then similar to
the previous one.

Finally, for $\dom_j\circ \dom_i = \dom_i$, note that
$\src_j\circ \src_i = \src_i$ is derivable in the reduced
axiomatisation of $n$-catoids
(Proposition~\ref{prop:reduced-props}). Hence let $X\subseteq C$. Then
\begin{align*}
  a\in \src_j(\src_i(X)) \Leftrightarrow \exists b\in
  X.\ a=\src_j(\src_i(b)) \Leftrightarrow \exists b\in X.\ a=
  \src_i(b)\Leftrightarrow a\in \src_i(X).
\end{align*}

For (2), we consider the morphism laws for $\src_i$ and $\tgt_i$. We
consider these correspondences one by one between quantales and
catoids and thus mention the laws for $\src_i$ and $\tgt_i$ for the
context without locality. As for
Theorem~\ref{theorem:globular-corresp1}, we start with an explicit
proof of interchange for multioperations and powersets.
  \begin{align*}
     (w\odot_j x)\odot_i (y\odot_j z) &= (\{w\}\cdot_j \{x\}) \cdot_i
                                        (\{y\}\cdot_j \{z\}) \\
                                      &\subseteq (\{w\}\cdot_i \{y\}) \cdot_j (\{x\}\cdot_i \{z\})\\
    &= (w\odot_i y)\odot_j (x\odot_i z).
   \end{align*}
For the morphism law for $\src_j$, 
\begin{equation*}
  \src_j(x\odot_i y) = \dom_j(\{x\}\cdot_i \{y\}) \subseteq
  \dom_j(\{x\})\cdot_i \dom_j(\{y\}) = \src_j(x)\odot_i \src_j(y).
\end{equation*}
The inclusion in the second step becomes an equality if $Q$ is strong.
The proofs for the morphism laws for $\tgt_j$ are dual. The
proofs of the laws for $\src_i$ and $\tgt_i$ are very similar.
\end{proof}

\begin{proof}[Proof of Lemma~\ref{lemma:groupoid-dom-lift}]
  The proofs of (1) and (2) are obvious; (3) is immediate from (2):
  $X = \src(X)X\subseteq XX^\inv X$.  For (4),
  \begin{align*}
    a \in C_0\cap XX^\inv
    &\Leftrightarrow \src(a) = a \land \exists b\in X,c\in X^\inv \
      \src(a) \in b\odot c\\
    &\Leftrightarrow \src(a) = a \land \exists b\in X.\
      \src(a) = \src(b)\\
    &\Leftrightarrow a\in \src(X)
  \end{align*}
  and the proof for $\tgt$ is dual. The proofs for (5) are very similar:
   \begin{align*}
    a \in C_0 \cap X\top
    &\Leftrightarrow \src(a) = a \land \exists b\in X,c\in C.\
      \src(a) \in b\odot c\\
    &\Leftrightarrow \src(a) = a \land \exists b\in X.\
      \src(a) = \src(b)\\
    &\Leftrightarrow a\in \src(X)
   \end{align*}
   and likewise for $t$. Finally, for (6),
   $X\top = \src(X)X\top \subseteq \src(X)\top\top\subseteq
   \src(X)\top$ and
   $\src(X)\top \subseteq XX^\inv\top \subseteq X\top\top\subseteq
   X\top$.
 \end{proof}

 \begin{proof}[Proof of Lemma~\ref{lemma:fdia-glob}]

 The equations in (1) follow immediately from
 Lemma~\ref{lemma:mod-props}(2)-(5).
 
 For (2), $\dom_i (\alpha\cdot_i \dom_j (\beta\cdot_j \gamma)) \le \dom_i(\alpha\cdot_i
 (\dom_i(\beta) \cdot_j \dom_i(\gamma)))$ follows from the above using the weak
 morphism axiom, and a dual property holds for $\cod_i$ and
 $\cod_j$.

 For (3),
 \begin{align*}
   &\dom_i(\dom_k(\alpha)\cdot_i \dom_j(\beta\cdot_j \gamma))\\
   & \le dom_i(\dom_k(\alpha)\cdot_i (\dom_i(\beta)\cdot_j\dom_i(\gamma)))\\
   & \le dom_i((\dom_k(\alpha)\cdot_i \dom_i(\beta))\cdot_j (\dom_k(\alpha)\cdot_i
     \dom_i(\gamma)))\\
   & \le dom_i(\dom_k(\alpha)\cdot_i \dom_i(\beta))\cdot_j
     \dom_i(\dom_k(\alpha)\cdot_i \dom_i(\gamma))\\
   & = dom_i(\dom_k(\alpha)\cdot_i \beta)\cdot_j \dom_i(\dom_k(\alpha)\cdot_i \gamma).
 \end{align*}
 The first step uses an approximation from (2), the second
 Lemma~\ref{lemma:mod-props}(6), the third a weak morphism axiom
 and the last domain locality. The proof for codomains is similar. 

 For (4),
 \begin{align*}
\dom_j(\alpha \cdot_j \dom_j(\dom_i(\beta \cdot_i \gamma)))
&= \dom_j(\alpha \cdot_j \dom_i(\dom_j(\beta \cdot_i \gamma)))\\
&\le \dom_j(\alpha \cdot_j \dom_i(\dom_j(\beta) \cdot_i \dom_j(\gamma)))\\
   &= \dom_j(\alpha \cdot_j \dom_i(\dom_j(\beta) \cdot_i \gamma)).
 \end{align*}
 The first step uses Lemma~\ref{lemma:2-quantale-props}(5), the second
 applies a morphism axiom, the third~\ref{lemma:mod-props}(4).  The
 second step becomes an equality if $Q$ is strong.  The remaining
 proofs are similar.
 
 For (5),
 \begin{align*}
   \dom_0(\gamma)\cdot_0 \dom_1 (\alpha \cdot_1 \beta)
   & = \dom_1(\dom_0(\gamma))\cdot_0 \dom_1 (\alpha \cdot_1 \beta)\\
   & = \dom_1(\dom_0(\gamma)\cdot_0 (\alpha \cdot_1 \beta))\\
   &\le \dom_1 ((\dom_0(\gamma)\cdot_0
   \alpha) \cdot_1 (\dom_0(\gamma)\cdot_0 \beta))
 \end{align*}
 The first step uses an axiom, the second a strong morphism axiom and
 the third Lemma~\ref{lemma:mod-props}(6).  The second proof is
 similar.
\end{proof}

\begin{proof}[Proof of Lemma~\ref{lemma:dom-cod-conv}]
  For (1),
  $\dom(\alpha) \le \dom(\alpha)\dom(\alpha)^\conv \dom(\alpha) \le \dom(\alpha)^\conv$, and
  thus $\dom(\alpha)^\conv \le \dom(\alpha)^{\conv\conv} = \dom(\alpha)$ as
  converse is order preserving. The property for $\cod$ then follows by
  opposition.

  For (2),
  \begin{align*}
    \dom(\alpha^\conv)
    &= \dom((\alpha\cod(\alpha))^\conv)\\
    &= \dom(\cod(\alpha)^\conv x^\conv)\\
    &=\dom(\cod(\alpha) \alpha^\conv)\\
    &=\cod(\alpha)\dom(\alpha^\conv)\\
    &= \cod(\alpha\dom(\alpha^\conv))\\
    &= \cod(\alpha^{\conv\conv}\dom(\alpha^\conv)^\conv)\\
    &= \cod((\dom(\alpha^\conv)\alpha^\conv)^\conv)\\
    &= \cod(\alpha^{\conv\conv})\\
&=\cod(\alpha).
  \end{align*}
  The export steps in the fourth and fifth lines of the proof work
  because of the compatibility laws $\dom\circ \cod = \cod$ and
  $\cod\circ \dom =\dom$ of modal semirings.  The second property
  then follows by opposition.
\end{proof}

\begin{proof}[Proof of Lemma~\ref{lemma:involutive-ka}]
  First,
  $\alpha^{\conv\ast\conv} = (1 + \alpha^{\conv\ast} \alpha^\conv)^\conv = 1 + \alpha
  \cdot \alpha^{\conv\ast\conv}$ by star unfold and properties of
  involution. Thus $\alpha^\ast \le \alpha^{\conv\ast\conv}$ by star induction
  and therefore $\alpha^{\ast\conv} \le \alpha^{\conv\ast}$.

  Second,
  $\alpha^{\ast\conv} = (1+\alpha\alpha^\ast)^\conv = 1 + \alpha^\conv
  \alpha^{\ast\conv}$ by star unfold and properties of
  involution. Thus $\alpha^{\conv\ast} \le \alpha^{\ast\conv}$ by star
  induction.
\end{proof}

\begin{proof}[Proof of Lemma~\ref{lemma:nKA-star}]
  It suffices to show that
  $1_i + (\alpha \cdot_j \beta)\cdot_i (\alpha^{\ast_i} \cdot_j
  \beta^{\ast_i}) \le (\alpha^{\ast_i} \cdot_j \beta^{\ast_i})$ by
  star induction.  Then
  $1_i \le (\alpha^{\ast_i} \cdot_j \beta^{\ast_i})$ holds because
  $1_i = 1_i \cdot_j 1_i$ and $1_i\le \alpha^{\ast_i}$,
  $1_i\le \beta^{\ast_i}$ by standard Kleene algebra. Also, we have
  $(\alpha \cdot_j \beta)\cdot_i (\alpha^{\ast_i} \cdot_j
  \beta^{\ast_i}) \le (\alpha^{\ast_i} \cdot_j \beta^{\ast_i})$ by
  interchange and $\alpha\cdot_i \alpha^{\ast_i} \le \alpha^{\ast_i}$
  and $\beta\cdot_i \beta^{\ast_i} \le \beta^{\ast_i}$, again by
  standard Kleene algebra.
\end{proof}

\section{Diagrams for Main Structures}\label{A:diagrams}


\definecolor{cbblue}{RGB}{100,134,255}
\definecolor{cbred}{RGB}{220,37,127}
\definecolor{cborange}{RGB}{225,176,0}

The main structures used in this article are related in the following
two diagrams. The diagrams are drawn by analogy to the Hasse diagrams
of order theory. An increasing \textcolor{cbblue}{blue} line means
that the class at the lower node is a subclass of the class at the
higher node. For instance, every (single-set) category is a local
catoid. An increasing \textcolor{cborange}{orange} line means that
every element in the lower class can be obtained by truncation from an
element of the higher one. For instance, every $2$-catoid can be
obtained by truncation from an $\omega$-catoid. An increasing
\textcolor{cbred}{red} line means that the upper and lower class are
related by correspondence in the sense of Jónsson-Tarski duality. This
the case, for instance, for local $\omega$-catoids and
$\omega$-quantales.

\vspace{\baselineskip}

\begin{center}
  \begin{tikzpicture}[x=1cm, y=1cm]
    \node (C) at (6,0) {$\textcolor{gray}{\mathbf{C}}$};
    \node (2C) at (3,1) {$\textcolor{gray}{2\mathbf{C}}$};
    \node (oC) at (0,2) {$\textcolor{gray}{\omega\mathbf{C}}$};
    \node (Ct) at (6,3) {$\textcolor{gray}{\mathbf{Ct}}$};
    \node (2Ct) at (3,4) {$\textcolor{gray}{2\mathbf{Ct}}$};
    \node (oCt) at (0,5) {$\omega\mathbf{Ct}$};
    \node (lCt) at (7,1) {$\textcolor{gray}{\mathbf{lCt}}$};
    \node (loCt) at (1,3) {$\mathbf{l}\omega\mathbf{Ct}$};
    \node (Q) at (6,5) {$\textcolor{gray}{\mathbf{Q}}$};
    \node (iQ) at (3,6) {$\textcolor{gray}{\mathbf{iQ}}$};
    \node (mQ) at (7,3) {$\textcolor{gray}{\mathbf{mQ}}$};
    \node (oQ) at (1,8) {$\omega\mathbf{Q}$};

    \draw[-, line width=.8pt,cbblue] (C) -- (Ct);
    \draw[-,line width=.8pt,cbblue] (2C) -- (2Ct);
    \draw[-,line width=.8pt,cbblue] (oC) -- (oCt);
    \draw[-,line width=.8pt,cbblue] (C) -- (lCt);
    \draw[-,line width=.8pt,cbblue] (Ct) -- (lCt);
    \draw[-,line width=.8pt,cbblue] (oCt) -- (loCt);
    \draw[-,line width=.8pt,cbblue] (oC) -- (loCt);
    \draw[-,line width=.8pt,cborange] (C) -- (2C);
    \draw[-,line width=.8pt,cborange] (oC) -- (2C);
    \draw[-,line width=.8pt,cborange] (Ct) -- (2Ct);
    \draw[-,line width=.8pt,cborange] (oCt) -- (2Ct);
    \draw[-,line width=.8pt,cborange] (lCt) -- (loCt);
    \draw[-,line width=.8pt,cbblue] (mQ) -- (Q);
     \draw[-,line width=.8pt,cborange] (Q) -- (iQ);
    \draw[-,line width=.8pt,cbred] (2Ct) -- (iQ);
    \draw[-,line width=.8pt,cbred] (Ct) -- (Q);
    \draw[-,line width=.8pt,cbred] (lCt) -- (mQ);
    \draw[-,line width=.8pt,cbred] (loCt) -- (oQ);
    \draw[-,line width=.8pt,cborange] (iQ) -- (oQ);
    \draw[-,line width=.8pt,cborange] (Q) -- (oQ);
    \draw[-,line width=.8pt,cborange] (mQ) -- (oQ);   
  \end{tikzpicture}
\end{center}

\vspace{\baselineskip}

In the diagram above we write $\mathbf{C}$, $2\mathbf{C}$ and
$\omega\mathbf{C}$ for the classes of (single-set) categories, strict
2-categories and strict $\omega$-categories, respectively.  Further
$\mathbf{Ct}$, $\mathbf{lCt}$, $2\mathbf{Ct}$, $\omega\mathbf{Ct}$
$\mathbf{l}\omega\mathbf{Ct}$ stand for the classes of catoids, local
catoids, 2-catoids, $\omega$-catoids and local $\omega$-catoids,
respectively.  Finally, $\mathbf{Q}$, $\mathbf{iQ}$, $\mathbf{mQ}$,
$\omega\mathbf{Q}$, indicate the classes of quantales, interchange
quantales, modal quantales and $\omega$-quantales,
respectively. Structures introduced in this articles are shown in
black, all others are shown in \textcolor{gray}{gray}.

\vspace{\baselineskip}

\begin{center}
  \begin{tikzpicture}[x=1cm, y=1cm]
    \node (G) at (6,0) {$\textcolor{gray}{\mathbf{G}}$};
    \node (C) at (3,.5) {$\textcolor{gray}{\mathbf{C}}$};
    \node (oC) at (0,1) {$\textcolor{gray}{\omega\mathbf{C}}$};
       \node (Ct) at (3,2.5) {$\textcolor{gray}{\mathbf{Ct}}$};
    \node (oCt) at (0,3) {$\mathbf{l}\omega\mathbf{Ct}$};
    \node (oQ) at (0,5) {$\omega\mathbf{Q}$};
    \node (DQ) at (6,4) {$\textcolor{gray}{\mathbf{DQ}}$};
    \node (opC) at (4.5,1.5){$\textcolor{gray}{(\omega,p)\mathbf{C}}$};
    \node (lopCt) at (4.5,3.5) {$\mathbf{l}(\omega,p)\mathbf{Ct}$};
     \node (opQ) at (4.5,5.5) {$(\omega,p)\mathbf{Q}$};

     \draw[-, line width=.8pt,cbblue] (G) -- (C);
     \draw[-, line width=.8pt,cbblue] (C) -- (Ct);
     \draw[-, line width=.8pt,cbblue] (oC) -- (oCt);
     \draw[-, line width=.8pt,cbblue] (opC) -- (lopCt);
     \draw[-, line width=.8pt,cbblue] (oC) -- (opC);
     \draw[-, line width=.8pt,cbblue] (oCt) -- (lopCt);
     \draw[-, line width=.8pt,cbblue] (oCt) -- (lopCt);
     \draw[-, line width=.8pt,cbblue] (oQ) -- (opQ);
     \draw[-, line width=.8pt,cbblue] (G) -- (opC);
     \draw[-, line width=.8pt,cbblue] (DQ) -- (opQ);
     \draw[-, line width=.8pt,cborange] (oC) -- (C);
     \draw[-, line width=.8pt,cborange] (Ct) -- (oCt);
     \draw[-, line width=.8pt,cbred] (oCt) -- (oQ);
     \draw[-, line width=.8pt,cbred] (G) -- (DQ);
      \draw[-, line width=.8pt,cbred] (lopCt) -- (opQ);
  \end{tikzpicture}
\end{center}

\vspace{\baselineskip}

In the second diagram we further write $\mathbf{G}$ for the class of
groupoids, $(\omega,p)\mathbf{C}$ for that of $(\omega,p$-categories,
$\mathbf{l}(\omega,p)\mathbf{Ct}$ for the class of
$(\omega,p)$-catoids, $\mathbf{DQ}$ for that of Dedekind
quantales and
$(\omega,p)\mathbf{Q}$ for that of $(\omega,p)$-quantales.

\end{document}